\newtheorem{conjecture}{Conjecture}[section]
\newtheorem{theorem}{Theorem}[section]
\newtheorem{lemma}[theorem]{Lemma}
\newtheorem{proposition}[theorem]{Proposition}
\newtheorem{corollary}[theorem]{Corollary}
\newtheorem{remark}[theorem]{Remark}
\newtheorem{definition}{Definition}
\newcounter{hypothesis}
 \gdef\xxxmark{%
   \expandafter\ifx\csname @mpargs\endcsname\relax %
     \expandafter\ifx\csname @captype\endcsname\relax %
     \marginpar{\textcolor{red}{\textbf{xxx}}}%
          \else
       \textbf{\textcolor{red}{xxx}} %
    \fi
   \else
     \textbf{\textcolor{red}{xxx}} %
   \fi}
 \gdef\xxx{\@ifnextchar[\xxx@lab\xxx@nolab}
 \long\gdef\xxx@lab[#1]#2{\textbf{[\xxxmark \textcolor{{blue}#2} ---{\sc {\color{blue}#1}}]}}
 \long\gdef\xxx@nolab#1{\textbf{[\xxxmark \textcolor{blue}{#1}]}}
\newcommand{\poly}{\text{poly}}
\newcommand{\eps}{\varepsilon}
\def \R {\mathbb R}
\def \N {\mathbb N}
\newcommand{\klcs}[1][]{\ifthenelse{\equal{#1}{}}{$k$-LCS}{${#1}$-LCS}}
\newcommand{\kwlcs}[1][]{\ifthenelse{\equal{#1}{}}{$k$-WLCS}{${#1}$-WLCS}}
\newcommand{\kNLstC}[1][]{\ifthenelse{\equal{#1}{}}{$k$-NLstC}{${#1}$-NLstC}}
\newcommand{\kELstC}[1][]{\ifthenelse{\equal{#1}{}}{$k$-ELstC}{${#1}$-ELstC}}
\newcommand{\czkc}[1][]{\ifthenelse{\equal{#1}{}}{FZ$k$C}{FZ${#1}$C}}
\newcommand{\czkch}[1][]{\ifthenelse{\equal{#1}{}}{FZ$k$CH}{FZ${#1}$CH}}
\newcommand{\cfkc}[1][]{\ifthenelse{\equal{#1}{}}{F$\mathfrak{f}k$C}{F$\mathfrak{f}{#1}$C}}
\newcommand{\cfkch}[1][]{\ifthenelse{\equal{#1}{}}{F$\mathfrak{f}k$CH}{F$\mathfrak{f}{#1}$CH}}
\newcommand{\pfksat}[1][]{\ifthenelse{\equal{#1}{}}{$\oplus$F$k$SAT}{$\oplus$F${#1}$SAT}}
\newcommand{\ckov}[1][]{\ifthenelse{\equal{#1}{}}{F$k$-OV}{F${#1}$-OV}}
\newcommand{\ckovh}[1][]{\ifthenelse{\equal{#1}{}}{F$k$-OVH}{F${#1}$-OVH}}
\newcommand{\cksum}[1][]{
\ifthenelse{\equal{#1}{}}
{F$k$-SUM}
{F${#1}$-SUM}
}
\newcommand{\cksumh}[1][]{\ifthenelse{\equal{#1}{}}{F$k$-SUMH}{F${#1}$-SUMH}}
\newcommand{\fzkc}[1][]{\ifthenelse{\equal{#1}{}}{FZ$k$C}{FZ${#1}$C}}
\newcommand{\ckxor}[1][]{\ifthenelse{\equal{#1}{}}{F$k$-XOR}{F${#1}$-XOR}}
\newcommand{\ckxorh}[1][]{\ifthenelse{\equal{#1}{}}{F$k$-XORH}{F${#1}$-XORH}}
\newcommand{\ckfunc}[1][]{\ifthenelse{\equal{#1}{}}{F$k$-$\mathfrak{f}$}{F${#1}$-$\mathfrak{f}$}}
\newcommand{\ckfunch}[1][]{\ifthenelse{\equal{#1}{}}{F$k$-$\mathfrak{f}$H}{F${#1}$-$\mathfrak{f}$H}}
\DeclarePairedDelimiter{\ceil}{\lceil}{\rceil}
\DeclarePairedDelimiter{\floor}{\lfloor}{\rfloor}
\newcommand{\bigCeil}[1]{\left\lceil #1 \right\rceil}
\newcommand{\bigFloor}[1]{\left\lfloor #1 \right\rfloor}
\newcommand{\GoodDPolys}[1][]{\ifthenelse{\equal{#1}{}}{Good $d$-Degree Polynomials}{Good ${#1}$-Degree Polynomials}}
\newcommand{\goodDPoly}[1][]{\ifthenelse{\equal{#1}{}}{good $d$-degree polynomial}{good ${#1}$-degree polynomial}}
\newcommand{\OKPolys}[1][]{\ifthenelse{\equal{#1}{}}{OK $d$-Degree Polynomials}{OK ${#1}$-Degree Polynomials}}
\newcommand{\okPoly}[1][]{\ifthenelse{\equal{#1}{}}{ok $d$-degree polynomial}{ok ${#1}$-degree polynomial}}
\newcommand{\kxor}[1][]{\ifthenelse{\equal{#1}{}}{$k$-XOR}{${#1}$-XOR}}
\newcommand{\ksum}[1][]{\ifthenelse{\equal{#1}{}}{$k$-SUM}{${#1}$-SUM}}
\newcommand{\vksum}[1][]{\ifthenelse{\equal{#1}{}}{V$k$-SUM}{V${#1}$-SUM}}
\newcommand{\ukov}[1][]{\ifthenelse{\equal{#1}{}}{U$k$-OV}{U${#1}$-OV}}
\newcommand{\nukov}[1][]{\ifthenelse{\equal{#1}{}}{NU$k$-OV}{NU${#1}$-OV}}
\newcommand{\ukxor}[1][]{\ifthenelse{\equal{#1}{}}{U$k$-XOR}{U${#1}$-XOR}}
\newcommand{\set}[1]{\{#1\}}
\newcommand{\domain}{\mathcal{X}}
\newcommand{\problem}{\mathcal{P}}
\newcommand{\queries}{\mathcal{Q}}
\newcommand{\innerAlg}{\mathcal{A}}
\newcommand{\outerAlg}{\mathcal{B}}
\newcommand{\distribution}{\mathcal{D}}
\newcommand{\calset}{\mathcal{S}}
\newcommand{\given}{\textrm{\xspace s.t. \xspace}}
\newcommand{\andT}{\textrm{\xspace and \xspace}}
\newcommand{\otherwise}{\textrm{\xspace o/w \xspace}}
\newcommand{\permT}{\mathrm{Perm}}
\newcommand{\tO}[1]{\tilde{O} (#1)}
\newcommand{\tto}[1]{\tilde{\tilde{o}} (#1)}
\newcommand{\bigOmega}[1]{\Omega \left(#1\right)}
\DeclareMathOperator{\supp}{supp}
\DeclareMathOperator{\wt}{wt}
\DeclareMathOperator{\uniqueSet}{set}
\DeclareMathOperator{\sensediff}{s_{err}}
\DeclareMathOperator{\cdiff}{c_{err}}
\DeclareMathOperator{\order}{ord}
\DeclareMathOperator{\distT}{\delta}
\newcommand{\pos}{\mathop{pos}}
\newcommand{\striangleF}{\# s \textrm{-} \triangle}
\newcommand{\striangleI}{s \textrm{-} \triangle}
\newcommand{\erclogowrapped}[1]{%
\setlength\intextsep{0pt}%
\begin{wrapfigure}[3]{r}{#1*\real{1.1}}%
\includegraphics[width=#1]{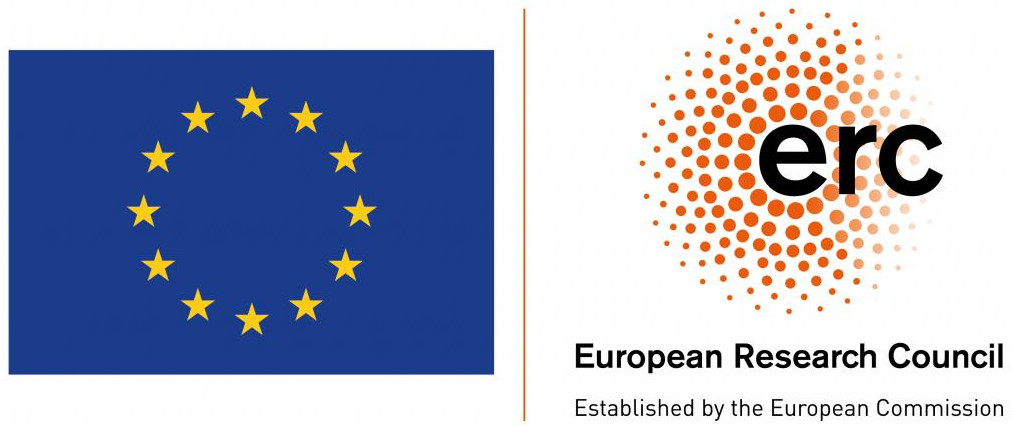}%
\end{wrapfigure}%
}
\title{On the Complexity of Algorithms with Predictions for Dynamic Graph Problems}
\date{July 2023}
\date{}
\author{Monika Henzinger\thanks{
This project has received funding from the European Research Council (ERC) \erclogowrapped{5\baselineskip}
under the European Union's Horizon 2020 research and innovation programme
(Grant agreement No.\ 101019564 ``The Design of Modern Fully Dynamic Data
Structures (MoDynStruct)'' and from the Austrian Science Fund (FWF) project
Z 422-N, and project 
``Fast Algorithms for a Reactive Network Layer (ReactNet)'', P~33775-N, with
additional funding from the \textit{netidee SCIENCE Stiftung}, 2020--2024.}  \and Barna Saha\thanks{University of California Berkeley. This work is supported partly by NSF 1652303, 1909046, and HDR TRIPODS 1934846 grants, and an Alfred P. Sloan Fellowship. This work was also supported by the Simons NTT research fellowship.}\and Martin P. Seybold \and Christopher Ye}
\begin{document}

\maketitle


\begin{abstract}
\emph{Algorithms with predictions} is a new research direction that incorporates machine learning predictions into algorithm design. So far a plethora of works published in recent years have incorporated the power of predictions to improve on worst-case optimal bounds for online problems. 
In this paper, we initiate the study of complexity of dynamic data structures with predictions, including dynamic graph algorithms.
Unlike in online algorithms, the main goal in dynamic data structures is to maintain the solution \emph{efficiently} with every update.

Motivated by prior work in online algorithms, we investigate three natural models of predictions: 
(1) $\varepsilon$-accurate predictions where each predicted request matches the true request with probability at least $\varepsilon$, 
(2) list-accurate predictions where a true request comes from a list of possible requests, and 
(3) bounded delay predictions where the true requests are some (unknown) permutations of the predicted requests. 
For $\varepsilon$-accurate predictions, we show that lower bounds from the non-prediction setting of a problem carry over, up to a $1-\varepsilon$ factor.
Then we give general reductions among the prediction models for a problem, showing that lower bounds for bounded delay imply lower bounds for list-accurate predictions, which imply lower bounds for $\varepsilon$-accurate predictions. 

Further, we identify two broad problem classes based on lower bounds due to the Online Matrix Vector (OMv) conjecture.
Specifically, we show that dynamic problems that are \emph{locally correctable} have strong conditional lower bounds for list-accurate predictions that are equivalent to the non-prediction setting, unless list-accurate predictions are perfect.
Moreover, we show that dynamic problems that are \emph{locally reducible} have a smooth transition in the running time, for the online and the offline setting with bounded delay predictions.
We categorize problems with known OMv lower bounds accordingly and give several upper bounds in the delay model that show that our lower bounds are almost tight, including problems in dynamic graphs.

We note that concurrent work by v.d.Brand et al. [arXiv:2307.09961] and Liu and Srinivas [arXiv:2307.08890] independently study dynamic graph algorithms with predictions, but their work is mostly focused on showing upper bounds.

\end{abstract}

\newpage
{ %
\tableofcontents
}
\newpage
\setcounter{page}{1}%

\section{Introduction}
\label{sec:intro-motivation}
\sloppy

Modern Machine Learning predictions models are surprisingly accurate in practice and exploiting their, seemingly ever improving, accuracy
is a novel direction in theory.
{Algorithms with predictions} have access to an oracle that provide a hint for solving the problem at hand, that is based on learning from some distribution.
Bounds for handling an input object that are sensitive to prediction quality can improve substantially on worst-case optimal bounds.
An algorithm with prediction is called 
\emph{robust} if 
the algorithm does not perform worse than the best known algorithm that does not use predictions, even if the predictions contain errors.
For algorithm design and analysis, prediction oracles are assumed to have a bounded \emph{error measure} or a bounded \emph{accuracy probability}. %
Analysis that is sensitive to the bounded error assumption requires a meaningful notion of distance between predicted and actual inputs.
Depending on the problem, some measures are better suited to establish upper bounds than others (see, e.g., the survey of Mitzenmacher and Vassilvitskii~\cite{MitzenmacherV21}). Moreover, in general, it is not possible to know the exact error apriori. In contrast, predictions with bounded accuracy probability do not assume bounds on a particular error measure and instead assume that every prediction is correct with certain probability \cite{GuptaPSS22}.

In online problems\footnote{We use \emph{online} to denote problems where the input consists of a sequence of operations, which can modify the input or ask a query about the input, and \emph{offline} to denote that all input is given at once. 
}, both bounded-error (e.g.~\cite{AzarPT22}) and bounded-accuracy (e.g.~\cite{GuptaPSS22}) predictions have been studied extensively.
There are a wide range of problems where predictions allow to improve quality over worst-case optimal competitive ratios, such as counting sketches \cite{DBLP:conf/iclr/HsuIKV19, DBLP:journals/corr/abs-1908-05198, DBLP:conf/iclr/EdenINRSW21}, bloom filters \cite{DBLP:conf/sigmod/KraskaBCDP18}, caching/paging \cite{DBLP:conf/soda/Rohatgi20, DBLP:journals/jacm/LykourisV21, DBLP:conf/soda/BansalCKPV22, pmlr-v162-im22a, pmlr-v202-antoniadis23a}, ski rental \cite{DBLP:conf/nips/PurohitSK18, DBLP:conf/nips/BamasMS20, DBLP:conf/nips/AntoniadisCEPS21, pmlr-v202-shin23c}, correlation clustering \cite{DBLP:conf/iclr/SilwalANMRK23}, among many others.\footnote{See for example, \url{https://algorithms-with-predictions.github.io} or the survey~\cite{MitzenmacherV21}.} 
The standard assumption is that the algorithm is given access to the predictions for the whole sequence of operations (or requests) \emph{before} the algorithm has to produce its first output, i.e, during {preprocessing}. 
Dependent on the correctness of the information provided by the prediction, ideally algorithms with prediction should provide a smooth transition between the online and offline problems.

In this paper, we initiate the study of complexity of dynamic data structures and algorithms with predictions. 
Unlike in online algorithms, the main goal in the dynamic setting is to maintain the solution \emph{efficiently} with every update. To the best of our knowledge, investigating the potential of algorithms with prediction in the dynamic setting has only just started with our present work, and the independent, concurrent work of van~den~Brand, Forster, Nazari, and Polak~\cite{arxiv23} and Liu and Srinivas \cite{DBLP:journals/corr/abs-2307-08890}.
Like in online algorithms, we aim for:
1) The algorithm should be \emph{consistent}, achieving the performance of an optimal offline algorithm when the prediction quality is high.
2) The algorithm should be \emph{robust}, matching the performance of an (online) dynamic algorithm regardless of prediction quality.
3) The algorithm's performance should degrade gracefully between the two extremes as prediction quality deteriorates. Given the predictions, we can allow polynomial preprocessing time. When the actual updates arrive, the dynamic data structure must process them fast provided the available information from preprocessing.

Let us consider as an example the Online Matrix Vector (OMv) problem which has been instrumental in developing lower bounds for dynamic data structures \cite{HenzingerKNS15}. 
In this problem, given a Boolean matrix $M$ of size $n \times n$, and an online sequence of $n$ vectors $\vec v_1, \ldots , \vec v_n $, one needs to report $M \vec v_i$ before seeing any $\vec v_t$ with $ t > i$. 
While the OMv conjecture states that the total time needed to process these $n$ vectors cannot be sub-cubic, if these vectors are given apriori as part of predictions, then one could have preprocessed them in $O(n^\omega)$ time using fast matrix multiplication where $\omega<2.373$, and output the results in $O(n)$ time per vector. 
This is often called the offline-online gap. 
Of course, it is unrealistic to assume that predictions are completely accurate. But this already showcases 
ample room for potential improvements in dynamic data structures due to algorithms with predictions.

For dynamic graph problems, a line of work initiated by \cite{HenzingerKNS15} establishes conditional lower bounds on the time trade-offs between updates and queries for a \emph{large number} of dynamic problems, based on the OMv conjecture (see e.g., \cite{DBLP:conf/icalp/Dahlgaard16, DBLP:conf/esa/GoranciHP17a,BergamaschiHGWW21,DBLP:conf/networking/HenzingerP021, DBLP:conf/esa/HenzingerPS22}).
However, these reductions typically consist of update sequences that present pathological and repetitive behaviour, e.g. repeatedly requesting certain updates, asking a query, and then reverting the updates again. 
Algorithms with predictions might have a tremendous potential for improving the running time bounds on such sequences.
Thus, in this work, the central question that we investigate is:
\begin{center}
\emph{Can predictions lead to provably faster dynamic graph algorithms?}
\end{center}

\subsection{Contribution and Paper Outline} \label{sec:contribution}

\paragraph*{Warm Up: OMv with Predictions}

Given the central importance of the OMv Problem, we start by investigating if algorithms with predictions can bypass the bounds of the OMv Conjecture (Section~\ref{sec:omv-EH-pred}). Recall that in this problem an $n \times n$ Boolean matrix $M$ is given initially and can be preprocessed arbitrarily in polynomial time. 
This is called a \emph{round}. Then a sequence of $n$ Boolean vectors $v_1,v_2,\ldots $ arrives and the Boolean product $M v_i$ needs to be output before seeing the next vector $v_{i+1}$. Let us call the predicted vectors as $(\hat v_1,\ldots, \hat v_n)$. A natural measure to quantify prediction errors could be the maximum $\ell_1$ distance or Hamming distance between $\hat v_i$ and $ v_i$ for $i=1,2,\ldots,n$.  In fact, we consider an even more general notion of error called {\em Extended Hamming Distance} which is always upper bounded by Hamming distance, and show a smooth transition in complexity across the offline-online gap for OMv that uses \emph{predictions with bounded Extended Hamming distance} (Theorem~\ref{thm:OMv-hamming}). Moreover, our lower bound (Theorem~\ref{thm:OMv-hamming-LB}) shows that the algorithm is essentially optimal under the OMv conjecture.

Our study then turns to analyzing conditions that allow or prevent obtaining similar positive results for more general dynamic data structures and graph problems with appropriate prediction models.

\paragraph*{Prediction Models for Dynamic Problems and General Lower Bounds}
In Section~\ref{sec:prediction-models}, we propose and analyze three quality measures for prediction accuracy that are suitable for dynamic problems.
These are 
(1) \emph{$\eps$-accurate predictions}, where each predicted request matches the true request with probability at least $\varepsilon$, 
(2) \emph{list-accurate predictions},  where the prediction for each time step is a list of possible requests, and 
(3) \emph{bounded-delay predictions}, where the true requests are some (unknown) permutations of the predicted requests.

The \emph{$\eps$-accurate predictions} \cite{GuptaPSS22} and \emph{bounded-delay predictions} \cite{DBLP:conf/nips/PurohitSK18, DBLP:conf/soda/Rohatgi20, DBLP:journals/jacm/LykourisV21, DBLP:conf/swat/AntoniadisGS22, DBLP:journals/corr/abs-2202-10199} have already been studied in the online algorithms literature . The \emph{list-accurate predictions} are similar to multi-prediction model \cite{DBLP:conf/icml/Anand0KP22, DBLP:conf/nips/DinitzILMV22, DBLP:conf/icml/0001C00S23} where the best prediction from a list needs to be selected at every step. It can also be seen as a 
generalization of \emph{$\eps$-accurate predictions} in a sense by providing a list of possibilities out of which one is the correct update.
For algorithms with \emph{$\varepsilon$-accurate predictions}, where $\varepsilon \in (0,1)$, we show that any lower bound on the time complexity from the non-prediction setting carries over, reduced by a factor of $1-\varepsilon$ (see Proposition~\ref{prop:eps-accuracy-lower-bound}).
Then we give general reductions among the notions for a problem, showing that 
lower bounds for bounded delay imply lower bounds for list-accurate predictions (Corollary~\ref{cor:LB-delay-to-LB-list}), 
which imply lower bounds for $\varepsilon$-accurate predictions (Corollary~\ref{cor:LB-list-to-LB-eps}). 
We believe that our `Alternating Parallel Simulation' technique to show such reductions (Lemma~\ref{lem:parallel-simulation-search}) is of interest for analyzing further prediction models. In particular, this provides a natural hierarchy in the power of prediction models.

\paragraph*{Locally Correctable Problems: Hardness for List-Accurate Predictions}
In Section~\ref{sec:locally-correctable}, we introduce a class of \emph{locally correctable} dynamic problems (defined informally below), and show that OuMv lower bounds for these problems continue to hold for any algorithm with list accurate predictions, unless the predictions are perfect (i.e. the list of each time step has size $1$). The OuMv problem is a slight generalization of the OMv problem where we are given two sets of $n$ vectors $\vec u_1,  \vec u_2, \ldots, \vec u_n$ and $\vec v_1, \vec v_2,  \ldots, \vec v_n$ along with the matrix $M$ of $n \times n$ dimensions. The products $u_i^\top M v_i$ needs to be computed before seeing $\vec u_j, \vec v_j$ with $j=i+1,\ldots,n$. The OuMv conjecture excludes algorithms with total work that is subcubic in $n$, and follows from the OMv conjecture (see Theorem~\ref{thm:oumv}). 

\begin{theorem}[Informal, cf.\ Theorem~\ref{thm:locally-correctable-lb}]
    \label{thm:locally-correctable-lb-informal}
    Let $\varepsilon>0$ be constant.
    Suppose $\problem$ is a locally correctable problem due to an OuMv reduction that uses $u(n)$ many updates and $q(n)$ many queries.
    Then there is no algorithm solving $\problem$ with $2$-list accurate predictions that has update time $U(n)$ and query time $Q(n)$ satisfying
    
    \begin{equation*}
        n \Big( 
        u(n) U(n) + q(n) Q(n) 
        \Big) 
        = \Omega(n^{3-\varepsilon})~,
    \end{equation*}
    if the OuMv conjecture is true.
\end{theorem}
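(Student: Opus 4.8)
The plan is to reduce from OuMv, exploiting the rigid shape of the OuMv reduction underlying a locally correctable problem. Recall how such a reduction works: from the $n\times n$ matrix $M$ one builds an instance of $\problem$ and then runs $n$ rounds, where round $i$ --- after $(\vec u_i,\vec v_i)$ is revealed --- performs $u(n)$ updates that encode $(\vec u_i,\vec v_i)$, then $q(n)$ queries from whose answers $\vec u_i^\top M\vec v_i$ is recovered, and finally another $u(n)$ ``correcting'' updates that restore the canonical state before round $i+1$. The property of local correctability I will use is that the \emph{schedule of update/query slots}, and, for each update slot, the (at most two) candidate updates that may occupy it, together with all queries, are determined by $M$ and the slot index alone --- never by the vectors. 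The pair $(\vec u_i,\vec v_i)$ only selects, slot by slot, \emph{which} of the two candidates is applied, according to a prescribed bit of $(\vec u_i,\vec v_i)$; this is precisely where binary OuMv and ``$2$-list'' match up.

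Suppose, for contradiction, that $\alg$ solves $\problem$ with $2$-list accurate predictions in update time $U(n)$ and query time $Q(n)$ with $n\bigl(u(n)U(n)+q(n)Q(n)\bigr)=O(n^{3-\varepsilon})$ for a constant $\varepsilon>0$. I build an OuMv algorithm. In the preprocessing phase (which the OuMv conjecture grants us for free, as long as that preprocessing is polynomial) I compute, from $M$ alone, the full list of predictions for all $n$ rounds: for each update slot, its list of at most two candidate updates; for each query slot, the singleton list holding its query. I then feed $M$, encoded as a $\problem$-instance, together with these lists into $\alg$'s (polynomial) preprocessing. In the online phase, when $(\vec u_i,\vec v_i)$ arrives I walk through round $i$'s slots, pick from each list the candidate prescribed by the relevant bit of $(\vec u_i,\vec v_i)$ --- always a genuine member of that list --- submit the resulting requests to $\alg$ one by one, read $\vec u_i^\top M\vec v_i$ off the query answers, and output it, all before $(\vec u_{i+1},\vec v_{i+1})$ appears. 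Correctness is immediate: every request I ever submit to $\alg$ lies in its own list, so the predictions supplied are genuinely $2$-list accurate and $\alg$ answers correctly, hence so do I.

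For the online running time, $\alg$ receives $2n\,u(n)$ updates and $n\,q(n)$ queries, so it spends $O\bigl(n(u(n)U(n)+q(n)Q(n))\bigr)=O(n^{3-\varepsilon})$; my own online work is only $O(u(n)+q(n))$ per round for selecting requests from precomputed lists and reading off an answer, which, since $U(n),Q(n)\ge 1$, is dominated by $\alg$'s time. Thus OuMv is solved with polynomial preprocessing and $O(n^{3-\varepsilon})$ online time --- truly subcubic --- contradicting the OuMv conjecture. Hence no such $\alg$ exists, i.e.\ every algorithm for $\problem$ with $2$-list accurate predictions has $n(u(n)U(n)+q(n)Q(n))=\Omega(n^{3-\varepsilon})$. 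The stated categorization of concrete problems then amounts to checking that their known OuMv reductions have the locally correctable shape.

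The main obstacle is not the reduction --- which is essentially bookkeeping once the shape is available --- but isolating and verifying local correctability: that (i) no update slot ever needs more than two candidates (this is where the hypothesis genuinely bites and where ``$2$-list'', rather than ``$k$-list'', is used), (ii) the query slots are vector-oblivious so a round's queries can be fixed in preprocessing, and (iii) the correcting updates restore the canonical state through vector-oblivious slots, so distinct rounds do not interfere. One should also note in passing that $\alg$ cannot evade the bound by ``precomputing the answers'' from the lists: the supplied lists are consistent with $2^{\Omega(n^2)}$ different vector sequences, so no subcubic preprocessing can tabulate them --- though, strictly, this never needs to be argued, since the reduction only ever invokes the correctness of $\alg$ on inputs whose predictions are valid.
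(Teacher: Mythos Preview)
Your high-level strategy matches the paper's exactly: from the structure of the OuMv reduction alone, build a $2$-list accurate prediction for the entire request sequence, feed it to the hypothetical fast algorithm $\alg$ during preprocessing, and thereby solve OuMv in truly subcubic online time. That is the whole argument, and your bookkeeping (polynomial preprocessing, $O(u(n)+q(n))$ overhead per round dominated by $\alg$'s work) is fine.

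Where you diverge is in what ``locally correctable'' actually means and hence in \emph{how} the two candidates per slot arise. You have posited that each update slot intrinsically carries two candidate updates selected by a bit of $(\vec u_i,\vec v_i)$, and that rounds end with ``correcting updates that restore the canonical state.'' Neither is part of the paper's Definition~\ref{def:locally-correctable}. There, the reduction's sequence for round $k$ is merely a \emph{subsequence} $B_k'$ of a universal block $B_k$, so different OuMv instances produce sequences of different lengths and there is no built-in per-slot binary choice. The $2$-list structure is manufactured via two ingredients you omit: an \emph{augmentation} map $f$ that enlarges the instance so that a nonempty set $\domain^*\subseteq\domain(f(D_0))\setminus\domain(D_0)$ of ``junk'' updates exists, and a \emph{correction} map $g$ that recovers the original query answer from the augmented one. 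One fixes a single $x^*\in\domain^*$ and, for position $t$ of the universal sequence $\bar\rho$, predicts the list $\{\bar\rho_t,\,x^*\}$; the actual request is $\bar\rho_t$ if the reduction needs it and $x^*$ otherwise. Condition~\ref{cond:local-correctable:g-map} guarantees that interleaving arbitrary $\domain^*$-requests leaves $g$-corrected query answers intact, so OuMv is still answered correctly on the augmented instance.

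In short: your proof is correct under the definition you invented, and that definition is a reasonable reading of the informal statement, but it is stronger than what the paper assumes. The paper's weaker hypothesis (subsequence-of-universal, not per-slot binary choice) is what necessitates the augmentation/correction machinery; your ``restore canonical state'' clause is neither assumed nor used.
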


Our lower bound follows from a reduction from OuMv such that the set of request sequences admit efficiently computable $2$-list accurate predictions. The basic idea is that for this class of problems a generic ``universal request sequence'' can be efficiently created (without knowledge of the exact reduction sequence arising in the hardness reduction). Now any dynamic algorithm (without prediction)  can efficiently construct this universal request sequence for itself in the preprocessing phase and then execute an algorithm with prediction using this universal request sequence as prediction. Thus, no efficient dynamic algorithm with predictions can exist unless the OuMv conjecture is false.

Roughly speaking, a \emph{locally correctable} problem $\problem$ satisfies the following three properties: 
1) any OuMv instance can be simulated by choosing some \emph{subsequence} of a 
universal request sequence, containing updates needed for answering the query as well as updates that will turn out not to be useful for answering the queries, called ``junk'' updates
2) any $\problem$ instance can be \emph{augmented} efficiently into an instance containing both useful as well as useless updates, and 
3) the answer to any query in the augmented instance can be \emph{corrected} efficiently to answer the corresponding query in the original instance.
We can then construct the following reduction from an OuMv instance.
Typically, an OuMv based reduction encodes the round's query into the problem instance using some subsequence of possible updates to modify the data of $\problem$ only where necessary.
Instead, our reductions perform the specified update where necessary and otherwise insert a ``junk" update. %
As a result, obtaining a $2$-list accurate prediction is simple: The list contains the update itself and an arbitrary ``junk" update.

\paragraph*{Locally Reducible Problems: Hardness for Bounded Delay Predictions}
In Section~\ref{sec:locally-reducible-lb-bounded-delay}, we introduce a class of \emph{locally reducible} dynamic problems, where proving lower bounds against algorithms with bounded delay predictions is possible.
As with list accurate predictions, our lower bounds rely on constructing an OuMv-based reduction such that a bounded delay prediction for the resulting sequence can be constructed efficiently.
Roughly speaking, a \emph{locally reducible} problem satisfies two properties: 
1) any OuMv instance can be simulated by choosing some subsequence of a universal request sequence and 
2) each update, if repeated often enough, say $ord$ times, leaves the dynamic data structure unchanged.\footnote{Alternatively, we could model this by giving each update operation a corresponding ``undo" operation.}
Inspired by algebra, we call such operations \emph{cyclic}.
As before, a typical OuMv reduction proceeds by choosing a subset of possible updates in the problem instance in order to encode a round's query.
Our prediction then simply predicts that in each query vector, every possible update will be necessary.
Denote each set of updates required to encode one query round a \emph{block}.
$\problem$ is $(u, q)$-locally reducible if each block contains $u(n)$ updates and $q(n)$ queries.
Since each block consists of some subset of the universal update set, an update will not occur \emph{before} its predicted block, but it is possible an update occurs \emph{after} its predicted block.
To ensure that a request does not occur too long after its expected block (say a request has not occurred in $ord$ blocks but it was contained in the universal request sequence already $ord$ times), we perform the update $ord$ times, using property 2) to show that the underlying data structure does not change. It follows that an update cannot occur more than $ord$ blocks after its predicted block.
Since this sequence has small delay relative to the predicted sequence, the universal request sequence is a prediction with small delay. 
Our lower bound then follows, as any dynamic algorithm can construct this prediction during the preprocessing phase.

\begin{theorem}[Informal, cf.\ Theorem~\ref{thm:locally-reducible-lower-fully-dynamic} and Theorem \ref{thm:locally-reducible-lower-partially-dynamic}]
    \label{thm:locally-reducible-lower-informal}
    Let $\eps > 0$ be constant.
    Suppose $\problem$ is $(u, q)$-locally reducible from OuMv.
    Then there is no algorithm solving $\problem$ with $O(u(n) + q(n))$ delayed predictions with update time $U(n)$ and query time $Q(n)$ satisfying
    
    \begin{equation*}
        n \Big( u(n) U(n) + q(n) Q(n) \Big) = O(n^{3 - \eps})~,
    \end{equation*}
    if the OuMv conjecture is true.
\end{theorem}

Using similar techniques, we show that the lower bound degrades gracefully as the delay error of the prediction decreases (Theorem~\ref{thm:locally-reducible-lower-d}).

\paragraph*{Examples of Locally Correctable and Locally Reducible Problems}

In Section \ref{sec:further-problems}, we use our frameworks to provide lower bounds against algorithms with predictions for the following problems:
\textsc{Subgraph Connectivity \cite{DBLP:journals/algorithmica/FrigioniI00, DBLP:conf/icalp/Duan10, DBLP:journals/siamcomp/ChanPR11, DBLP:conf/soda/KapronKM13, DBLP:conf/focs/AbboudW14, HenzingerKNS15}, Reachability \cite{DBLP:conf/focs/AbboudW14, HenzingerKNS15}, Shortest Path \cite{DBLP:journals/jacm/EvenS81, DBLP:journals/siamcomp/DorHZ00, DBLP:journals/algorithmica/RodittyZ11, DBLP:journals/siamcomp/RodittyZ12}, Distance Spanners/Emulators \cite{BergamaschiHGWW21}, Maximum Matching \cite{guptapeng2013matchings, DBLP:journals/siamcomp/BaswanaGS18, DBLP:conf/focs/Solomon16, DBLP:conf/icalp/Dahlgaard16, DBLP:conf/soda/KopelowitzPP16}, Maximum Flow \cite{mkadry2011graphs, HenzingerKNS15, DBLP:conf/icalp/Dahlgaard16}, Triangle Detection \cite{HenzingerKNS15}, Densest Subgraph \cite{HenzingerKNS15}, $d$-Failure Connectivity \cite{DBLP:conf/stoc/DuanP10, DBLP:conf/soda/KopelowitzPP16}, Vertex Color Distance Oracle \cite{DBLP:conf/icalp/HermelinLWY11, DBLP:conf/esa/Chechik12, DBLP:conf/stoc/LackiOPSZ15, DBLP:conf/isaac/EvaldFW21}, Weighted Diameter \cite{DBLP:conf/soda/FrischknechtHW12,HenzingerKNS15}, Strong Connectivity \cite{DBLP:conf/focs/AbboudW14, HenzingerKNS15}, Electrical Flows \cite{DBLP:conf/esa/GoranciHP17a}, Erickson's Maximum Value Problem \cite{Patrascu10, HenzingerKNS15}, Langerman's Zero Prefix Sum Problem \cite{Patrascu10, HenzingerKNS15}}.
All turn out to be Locally Reducible Problems.
Additionally, all are Locally Correctable Problems as well, with exception of Erickson's Maximum Value Problem.

\paragraph*{Dynamic Algorithms with Predictions}
In Section \ref{sec:algs-with-predictions}, we give several algorithms with bounded delay predictions for the problems $\striangleF$, Subgraph Connectivity, Transitive Closure, All Pairs Shortest Path, and Erickson's Maximum Value Problem. Some of them can even handle outliers, which are updates that were not at all in the predicted set.
See Table~\ref{tbl:list-of-algorithms-with-prediction}.
These algorithms with predictions are optimal (up to lower order terms) with respect to the prediction quality.
That is, matching our conditional lower bounds for $d$-delayed in either the update or query time. Moreover, none of these algorithms need to know the prediction quality (the error parameter $d$ apriori).

\begin{table}[tb]
    \begin{center}
        \begin{tabular}{ |c|c|c|c|c|c|c| } 
             \hline
             Problem & \multicolumn{3}{|c|}{Upper Bounds} %
             & 
             \multicolumn{3}{|c|}{Lower Bounds} %
             \\
             \hline
             & Update & Query & Reference & Update & Query & Reference \\
             \hline
             $\striangleF$ & $d + k$ & $1$ & Thm.~\ref{thm:striangle-delay-alg} & $d^{1 - \eps}$ & $d^{2 - \eps}$ & Thm.~\ref{thm:locally-reducible-lower-d} \\
             $\striangleF$ & $1$ & $(d + k)^2$ & Thm.~\ref{thm:striangle-delay-alg} & $d^{1 - \eps}$ & $d^{2 - \eps}$ & Thm.~\ref{thm:locally-reducible-lower-d} \\
             Subgraph Connectivity & $1$ & $d^2$ & Thm.~\ref{thm:subgraph-connectivty-delay-alg} & $d^{1 - \eps}$ & $d^{2 - \eps}$ & Thm.~\ref{thm:locally-reducible-lower-d} \\ 
             Transitive Closure & $1$ & $d^2$ & Thm.~\ref{thm:transitive-closure-delay-alg} & $d^{1 - \eps}$ & $d^{2 - \eps}$ & Thm.~\ref{thm:locally-reducible-lower-d} \\ 
             All Pairs Shortest Path & $1$ & $d^2$ & Thm.~\ref{thm:apsp-delay-alg} & $d^{1 - \eps}$ & $d^{2 - \eps}$ & Thm.~\ref{thm:locally-reducible-lower-d} \\ 
             Erickson's Problem & $d + k$ & $1$ & Thm.~\ref{thm:ericksons-delay-alg} & $d^{1 - \eps}$ & $d^{2 - \eps}$ & Thm.~\ref{thm:locally-reducible-lower-d} \\
             Erickson's Problem & $1$ & $(d + k)^2$ & Thm.~\ref{thm:ericksons-delay-alg} & $d^{1 - \eps}$ & $d^{2 - \eps}$ & Thm.~\ref{thm:locally-reducible-lower-d} \\
             \hline
        \end{tabular}
    \end{center}
    \caption{List of algorithms with predictions with $d$ bounded delay.
    Each algorithm has polynomial preprocessing time.
    A running time involving $k$ states that the algorithm can handle predictions that are $d$ delayed with $k$ outliers.
    The lower bounds state that no algorithm with polynomial preprocessing time exists for any constant $\varepsilon>0$ that attains the stated update and query times \emph{simultaneously}, unless the OMv-conjecture fails.
    }
    \label{tbl:list-of-algorithms-with-prediction}
\end{table}

To design our algorithms, we show that the difference between the state of the predicted data and the actual data scales with the delay error of the prediction.
Furthermore, this difference can be maintained efficiently.
In the preprocessing phase, we compute the predicted data structures, extracting useful intermediate values we require from the predicted data structure.
Now, given the online request sequence up to some time step $t$, we show that by making small changes to the predicted data structure (on the order of the predictions delay) we can recover the result of the query on the actual data structure from precomputed values on the predicted data structure.

\paragraph*{Concurrent Work}
Independent work of van den Brand, Forster, Nazari, and Polak \cite{arxiv23} and Liu and Srinivas \cite{DBLP:journals/corr/abs-2307-08890} jointly initiate the study of dynamic graph algorithms with predictions, focusing on upper bounds. 
\cite{arxiv23} gives (among many other things) partially dynamic algorithms with bounded delay predictions for transitive closure and all pairs shortest path (cf.\ Theorem~\ref{thm:transitive-closure-delay-alg} and \ref{thm:apsp-delay-alg}), and show that these algorithms are optimal with a lower bound giving the same result as Theorem~\ref{thm:locally-reducible-lower-partially-dynamic} (cf.\ Appendix~\ref{sec:concurrent-work}).
\cite{DBLP:journals/corr/abs-2307-08890} considers the prediction model where a deletion time is predicted for every inserted edge, which is different from our proposed models.
To the best of our knowledge, the above summarizes any overlapping contribution with \cite{arxiv23} and \cite{DBLP:journals/corr/abs-2307-08890}.

\section{Preliminaries and OMv with Predictions}
\label{sec:prelims}

We define dynamic data structures in general. 

\begin{definition}[Dynamic Data Structure]
\label{def:gen-dynamic-model}
Let $\problem$ be a dynamic problem.
For any instance $x$ of $\cal P$, let $\domain(x)$ be the set of possible {\bf updates} on $x$ and let $\queries(x)$ be the set of possible {\bf queries}.
When the instance is clear, we omit the subscripts and write $\domain, \queries$.
In the pre-processing step, the algorithm receives as input an initial data structure $x_0$.
At each time step $t$, the algorithm receives some {\bf request} $\rho_t \in \domain \cup \queries$.
When given a query $\rho_t \in \queries$, the algorithm must answer the query correctly on the current structure $x_t$, where $x_t$ is obtained by applying request sequence $(\rho_1, \rho_2, \dotsc, \rho_{t - 1})$ to the initial set $x_0$.
The query must be answered before the following request $\rho_{t + 1}$ is revealed.

If the updates $\domain$ do not allow element deletions or do not allow element insertions, the data structure is called {\bf partially dynamic}, otherwise it is {\bf fully dynamic}. 
\end{definition}

For a given request sequence $\rho$, let $\rho_{[a, b]}=(\rho_a,\ldots,\rho_b)$ denote the sub-sequence between the $a$-th and $b$-th time step.
Let $\rho_{\leq t} = \rho_{[1, t]}$ denote the prefix of the first $t$ requests in $\rho$ and let $\rho_{<t}$ denote $\rho_{\leq t - 1}$.

\subsection{The Online Matrix Vector Problem}

Our hardness results are built on the OMv Conjecture of \cite{HenzingerKNS15}.
Recall that, in Boolean Matrix-Vector multiplication the arithmetic plus operation is replaced by logical-OR~$\lor$ and the arithmetic multiplication operation is replaced by logical-AND~$\land$ (see, e.g.,~\cite[Section~3]{Williams07}).

\begin{definition}[OMv and OuMv~{\cite[Def.~2.6]{HenzingerKNS15}}]
An algorithm for the OMv (resp. OuMv) problem is given parameter $n$ as its input.
Next, it is given a Boolean matrix $M$ of size $n\times n$ that can be preprocessed in time $p(n)$.
This is followed by $n$-rounds of processing online input vectors.

An {\bf \em OMv} algorithm is given an online sequence of $n$ vectors $\vec v_1, \ldots , \vec v_n $, one vector after the other, and the task is to report each result of Boolean Matrix-Vector multiplication $M \hat v_t$ before $\hat v_{t+1} $ arrives.

An {\bf \em OuMv} algorithm is given an online sequence of $n$ vector pairs $(\vec u_1, \vec v_1), \ldots , (\vec u_n , \vec v_n )$, one pair after the other, and the task is to report each result of Boolean Vector-Matrix-Vector multiplication
$\vec u_t ^\top M \vec v_t$ before $(\vec u_{t+1}, \vec v_{t+1} )$ arrives.
\end{definition}
We call the processing of each individual input vector  a \emph{round}.
Clearly, every \emph{OMv round} can be solved in $O(n^2)$ time, yielding a trivial $O(n^3)$ algorithm. The OMv conjecture claims that this is basically optimal (cf.~{\cite[Conj.~1.1]{HenzingerKNS15}}):

\begin{conjecture}
    \label{thm:omv-conjecture}
    For any constant $\varepsilon > 0$, there is no algorithm with   $O(n^{3-\varepsilon} )$ total time that solves OMv with an error probability of at most $1/3$.  
\end{conjecture}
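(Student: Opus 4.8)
To be clear, this is a \emph{conjecture}, not a theorem, and no unconditional proof is known --- establishing it outright would be a major breakthrough, as it asks for an unconditional near-cubic running-time lower bound for an explicit problem against \emph{arbitrary} algorithms (fast matrix multiplication, randomization, and arbitrary polynomial preprocessing all allowed). Unconditional lower bounds of the form ``no $O(n^{3-\eps})$ algorithm'' are not available for any problem in this regime, so this proposal describes the only route I see toward a genuine proof --- the cell-probe model --- together with the evidence one relies on in its absence. In the cell-probe model the goal would be to lower bound the total number of word probes any data structure makes across the $n$ rounds and show it is $\tilde\Omega(n^3)$; since no computational-model restriction is needed there, a matching cell-probe bound would essentially settle the conjecture up to polylogarithmic factors.

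The cell-probe plan would run as follows. First, fix a hard distribution --- say $M$ a uniform Boolean matrix and each $\vec v_t$ uniform of Hamming weight $\Theta(\sqrt n)$ --- and argue that the outputs $M\vec v_1,\dots,M\vec v_n$ carry $\Omega(n^2)$ bits of entropy conditioned on the algorithm's memory state right after preprocessing. Second, partition the $n$ rounds into $\Theta(\log n)$ epochs of geometrically increasing length and, for each epoch, run an encoding / information-transfer argument in the style of Fredman--Saks and of Patrascu and Demaine: if the rounds of some epoch were answered with too few probes, the cells written during (or transferred into) that epoch would let one reconstruct that epoch's outputs, compressing the corresponding block of $\vec v_t$'s below its entropy --- a contradiction. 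Summing the per-epoch probe bounds over all $\Theta(\log n)$ epochs would give a total of $n\cdot\Omega(n^2)=\Omega(n^3)$ up to logarithmic factors, hence no $O(n^{3-\eps})$ cell-probe algorithm.

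The main obstacle --- and precisely why the statement remains open --- is the entropy step combined with the power of preprocessing: the adversary may write $\Omega(n^2)$ words in its polynomial-time preprocessing phase, so it is not clear that the outputs retain $\Omega(n^2)$ bits of \emph{fresh} information per round. Indeed, offline one can precompute everything in $O(n^\omega)$ time, so any proof must crucially exploit the \emph{online} arrival order of the $\vec v_t$, and no current technique leverages this ordering strongly enough to reach cubic bounds (the known cell-probe bounds for online matrix--vector and related dynamic problems fall well short of $n^3$). Absent such a breakthrough, the statement is used here as a hardness \emph{assumption}, justified by: (a) no $O(n^{3-\eps})$ \emph{combinatorial} algorithm for OMv is known, and one would refute the long-standing combinatorial Boolean matrix multiplication conjecture; (b) allowing algebraic techniques, the best known running time for OMv is only $n^3/2^{\Omega(\sqrt{\log n})}$ (Larsen and Williams), a sub-polynomial improvement; and (c) OMv is robust, being equivalent to OuMv (Theorem~\ref{thm:oumv}) and to numerous independently posed dynamic problems, so a refutation would simultaneously collapse many separate barriers.
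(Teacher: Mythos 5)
You are right that this statement is a conjecture (imported from Henzinger et al.) and the paper offers no proof of it, only supporting evidence; your proposal correctly identifies this and your cited evidence (the Larsen--Williams $n^3/2^{\Omega(\sqrt{\log n})}$ algorithm, the cell-probe state of the art, and the implication to OuMv) matches what the paper itself discusses. Nothing further is required.
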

This conjecture leads to the following result for the OuMv problem (cf.~{\cite[Thm.~2.7]{HenzingerKNS15}}.).
We simply call this the \emph{OuMv conjecture} even though it is just a consequence of the OMv conjecture and \emph{not} a different conjecture.

\begin{theorem}
\label{thm:oumv}
    For any constant $\varepsilon > 0$, 
    the OMv conjecture implies that there is no algorithm with preprocessing time polynomial in $n$ and total time for all requests of  $O(n^{3-\varepsilon} )$ that solves OuMv with an error probability of at most $1/3$. 
\end{theorem}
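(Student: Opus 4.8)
The plan is to obtain Theorem~\ref{thm:oumv} directly from \cite[Theorem~2.7]{HenzingerKNS15}: the statement is essentially a restatement of their reduction from OMv to OuMv, so the shortest route is to invoke it. To keep the exposition self-contained one would recap that reduction. The starting point is the identity $(M v_t)_i = e_i^\top M v_t$, which says that a single OMv round (the Boolean matrix--vector product $M v_t$) is exactly the collection of the $n$ vector--matrix--vector products $e_1^\top M v_t,\dots,e_n^\top M v_t$. Hence a hypothetical algorithm $\mathcal{B}$ for OuMv with polynomial preprocessing and $O(n^{3-\varepsilon})$ total time can be invoked to answer these vMv products and thereby reconstruct each OMv output vector; run over all $n$ OMv rounds this yields an OMv algorithm whose running time one then argues is $O(n^{3-\varepsilon'})$ for some $\varepsilon'>0$, contradicting Conjecture~\ref{thm:omv-conjecture}.

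Concretely, given an OMv instance $M, v_1,\dots,v_n$, one partitions the rows and columns of $M$ into blocks whose size is tuned to the OuMv parameters, so that each block is an instance matrix for $\mathcal{B}$. Crucially these blocks depend only on $M$, so $\mathcal{B}$'s preprocessing can be run once on each distinct block during the (polynomial) preprocessing phase; then, when $v_t$ arrives online, each needed product $e_i^\top M[\mathrm{rows},\mathrm{cols}]\, v_t|_{\mathrm{cols}}$ is fed to $\mathcal{B}$ as one online round and $(M v_t)_i$ is recovered by OR-ing the block answers. Since $\mathcal{B}$ is invoked only polynomially many times, amplifying each invocation by $O(\log n)$ majority votes drives the total failure probability below $1/3$ via a union bound, matching the error model in the statement.

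The delicate step --- and the one I expect to be the real obstacle --- is the parameter balancing that keeps the overall running time subcubic. The naive implementation (take the $u$'s to be the $n$ unit vectors and restart a fresh $n\times n$ OuMv instance every $n$ rounds) performs $\Theta(n)$ restarts of an $O(n^{3-\varepsilon})$ subroutine, i.e.\ $O(n^{4-\varepsilon})$ work, which is worthless, and a plain $n\times n$ block decomposition gains nothing either. Avoiding this loss is precisely the technical content of \cite[Thm.~2.7]{HenzingerKNS15}, achieved by passing through the appropriate (possibly rectangular, possibly over-long) ``$\gamma$-OuMv'' variants of the problem together with a self-reduction, so that the number of sub-instances times the per-instance cost stays below $n^{3-\varepsilon'}$. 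With that bookkeeping in place, an $O(n^{3-\varepsilon})$ OuMv algorithm with polynomial preprocessing would produce an $O(n^{3-\varepsilon'})$ OMv algorithm, contradicting the OMv conjecture; hence no such OuMv algorithm exists, which is Theorem~\ref{thm:oumv}.
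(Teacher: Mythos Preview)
Your proposal is correct and takes essentially the same approach as the paper: the paper does not prove Theorem~\ref{thm:oumv} at all but simply cites \cite[Thm.~2.7]{HenzingerKNS15}, which is exactly your starting point. Your additional self-contained sketch of the reduction (including the discussion of why naive parameter choices fail and the need to pass through rectangular $\gamma$-OuMv variants) goes beyond what the paper provides but accurately reflects the content of the cited result.
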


\subsection{Upper and Lower Bounds for OMv with Predictions}\label{sec:omv-EH-pred}
We begin by discussing how predictions affect the complexity of the Online Matrix-Vector Problem, leading to bounds that have a \emph{smooth transition} across the offline-online gap in terms of the prediction quality.
We also give conditional lower bounds, showing that this result is optimal.

The extended Hamming distance $EH(s,t)$ of two bit-strings $s,t \in \{0,1\}^n$ is defined as follows. 
Let $\ell \geq 1$ be the largest index with $s_1=s_2=\ldots = s_\ell$ and $t_1=t_2=\ldots = t_\ell$, then 

\begin{equation*}
EH( s_1\ldots s_n,~t_1\ldots t_n) := (s_1 + t_1 \mod 2) + EH(s_{\ell+1}\ldots s_n,~t_{\ell+1}\ldots t_n)~.    
\end{equation*}

Since each block-difference is only counted once, the EH-distance is at most the Hamming distance, where the latter is equal to the $L_1$-distance on $\{0,1\}^n$.
The works of Lingas et al.~\cite{Lingas02,Lingas21} introduced the extended Hamming distance to study Boolean matrix-vector multiplication and showed that, after $\tO{n^2}$ pre-processing, each OMv round can be solved in $\tO{n + mst(M)}$ time.
Here $mst(M)=O(n^2)$ is the weight of a (geometric) Minimum Spanning Tree of the row-vectors in $M$.
That is, each row-vector in $M$ is interpreted as one point in the $n$-dimensional Hamming space $\{0,1\}^n$ and the distance between any two row-vectors is the extended Hamming distance.
Our prediction-based algorithm in this section is due to a small adaptation of the online $O(\log n)$-approximate Minimum Spanning Tree heuristic in~\cite[Section~3.2]{Lingas21}.
Our algorithm shows however that bounded-error predictions allow to bypass query bounds that are sensitive to the, potentially quadratic, weight $mst(M)$.
That is predication-based algorithms yield a smooth transition between the $O(n^3)$ online and $O(n^\omega)$ offline bound.

\begin{theorem}[OMv with Predictions]\label{thm:OMv-hamming}
    Let $M \in \{0,1\}^{n \times n}$ and $\Delta \in [0,n]$.
    Given predictions $(\hat v_1,\ldots, \hat v_n)$ that have $EH(\hat v_i, \vec v_i)\leq \Delta$ for each online input $\vec v_i\in \{0,1\}^n$,
    each arithmetic product $M \vec v_i$ can be computed in time $Q(n,\Delta)=O(n(1+\Delta))$, after preprocessing of $M$ and $(\hat v_1,\ldots \hat v_n)$ in $O(n^\omega)$ time.

    In particular, each Boolean-result vector can be computed in ${O(n(1+\Delta))}$ time.
\end{theorem}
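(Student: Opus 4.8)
The plan is to reduce the computation of $M\vec v_i$ to the precomputed product $M\hat v_i$ plus a cheap correction term that exploits the block structure captured by the extended Hamming distance. The crucial structural observation is that whenever $EH(\hat v_i,\vec v_i)\le\Delta$, the integer difference vector $d_i:=\vec v_i-\hat v_i\in\{-1,0,1\}^n$ is piecewise constant with at most $\Delta$ maximal runs on which it is nonzero. This follows from the recursive definition of $EH$: the greedy block decomposition underlying $EH(\hat v_i,\vec v_i)$ partitions $[n]$ into maximal intervals on each of which both $\hat v_i$ and $\vec v_i$ are constant, so $d_i$ is constant on each block, vanishes exactly on the blocks where the two strings agree, and is nonzero on exactly $EH(\hat v_i,\vec v_i)\le\Delta$ of them; merging adjacent nonzero blocks that happen to carry the same value can only decrease the run count. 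I expect this claim to be the main (and essentially only) delicate point, since it is precisely where the extended Hamming distance, rather than the possibly much larger ordinary Hamming distance, is needed; I would prove it by induction along the block decomposition.

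For preprocessing I would do two things. First, stack the predictions $\hat v_1,\ldots,\hat v_n$ as the columns of a matrix $\hat V\in\{0,1\}^{n\times n}$ and compute $\hat W:=M\hat V$ by fast matrix multiplication in $O(n^\omega)$ time, so that the $i$-th column of $\hat W$ equals $M\hat v_i$. Second, compute and store the vectors $S_k$ equal to the sum of the first $k$ columns of $M$, for $k=0,1,\ldots,n$ (with $S_0$ the zero vector); this takes $O(n^2)$ time and space, well within the $O(n^\omega)$ budget. This is essentially the announced adaptation of the minimum-spanning-tree heuristic: rather than building an MST on the rows of $M$ and propagating scalar products along it, each round has a trivial two-node tree $\{\hat v_i,\vec v_i\}$, and we propagate the full matrix-vector product across its single edge using the column prefix sums $S_k$.

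When the online vector $\vec v_i$ arrives, I would form $d_i=\vec v_i-\hat v_i$ and scan it once, in $O(n)$ time, to extract its nonzero runs $[a_1,b_1],\ldots,[a_r,b_r]$ with values $c_1,\ldots,c_r\in\{-1,+1\}$, where $r\le\Delta$ by the structural fact. Writing $d_i=\sum_{j=1}^{r}c_j\chi_{[a_j,b_j]}$, with $\chi_{[a,b]}$ the $0/1$ vector supported on $\{a,\ldots,b\}$, linearity yields $M d_i=\sum_{j=1}^{r}c_j\bigl(S_{b_j}-S_{a_j-1}\bigr)$, a sum of $r$ vector operations of cost $O(n)$ each, hence $O(rn)=O(n\Delta)$ in total. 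Then $M\vec v_i=\hat W[:,i]+M d_i$ is obtained with one further $O(n)$ vector addition, so each arithmetic product is computed in $O(n)+O(n\Delta)=O(n(1+\Delta))$ time; the Boolean product follows by thresholding each coordinate at $1$ in an extra $O(n)$ time. Note that the algorithm need not know $\Delta$ in advance, since its running time is automatically $O(n(1+r))$ for the actual number $r$ of nonzero runs of $d_i$, and it computes $M\vec v_i$ exactly regardless of whether the promise $EH(\hat v_i,\vec v_i)\le\Delta$ holds.

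Once the structural claim is in place, correctness of the formula for $Md_i$ is immediate from linearity, the time bounds follow from the counts above, and the remaining ingredients — the fast-matrix-multiplication preprocessing, the prefix-sum bookkeeping, and the final thresholding — are entirely routine.
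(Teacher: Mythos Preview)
Your proposal is correct and follows essentially the same approach as the paper: precompute all $M\hat v_i$ via fast matrix multiplication, precompute column prefix sums of $M$, and at query time correct $M\hat v_i$ to $M\vec v_i$ using at most $\Delta$ block-range sums of columns. The only cosmetic difference is that the paper organizes the correction as an outer loop over rows with $O(1)$ prefix-sum lookups per block, whereas you loop over blocks and do $O(n)$ vector operations; both yield the same $O(n(1+\Delta))$ bound.
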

Recall that bounds that are sensitive to EH-distance are stronger than bounds sensitive to $L_1$.
We remark that the following proof can be extended to rectangular matrices $M$ with non-binary entries (e.g. small integers or real values).

\begin{proof}
    We first describe the preprocessing of the algorithm.
    Compute the matrix $(\hat y_1,\ldots,\hat y_n) = M (\hat v_1,\ldots, \hat v_n)$ using fast matrix multiplication.
    Compute a Prefix-Sum array $D_i$ for each row $i$ of $M$.
    That is, $D_i$ allows to obtain, for an block-range $[a,b]$ of column indices, the value $\sum_{k \in [a,b]} M_{i,k}=D_i(b)-D_i(a-1)$ in $O(1)$ time.
    Since each data structure $D_i$ can be computed in $O(n)$ time, preprocessing takes $O(n^2 + n^\omega)$ time.

    To compute the arithmetic result $\vec y = M \vec v$ for the online input $\vec v\in \{0,1\}^n$ of some OMv round, we determine in $O(n)$ time the blocks $b_1,\ldots,b_\Delta$ of index-ranges that contribute to the extended Hamming distance $EH(\hat v,\vec v)$.
    Next, we initialize the result vector $\vec y$ with the precomputed vector $\hat y$.
    Then we iterate, over each row $j \in [n]$ and each block $b_i=[b_i^-,b_i^+]$, and add to the $j$-th entry in $\vec y$ the value $( \vec v_{b_i^+} - \hat v_{b_i^+} ) \sum_{k \in b_i} M_{i,k}$, where $(\vec v_{b_i^+} - \hat v_{b_i^+}) \in\{-1,+1\}$ since $\vec v$ and $\hat v$ are bit-vectors. This requires time $O(1 + \Delta)$.
\end{proof}

Note that, after preprocessing, our algorithm with prediction requires  time $O(n (1 +\Delta))$ per round.
As we show next, this upper bound is also nearly the best possible that can be achieved for algorithms with predictions that have extended Hamming distance at most $\Delta$, unless the OMv-conjecture is false.
Specifically, we show a conditional lower bound stating that no algorithm can achieve query time $\tto{n (1+\Delta)}$ with polynomial improvement (see Definition~\ref{def:tto}).

\begin{restatable}{theorem}{omvhamminglb}
    \label{thm:OMv-hamming-LB}
    Let $\eps \in (0,1)$ be a constant and $\Delta \in [1, n^\eps ]$.
    There is no algorithm with predictions of EH-distance at most $\Delta$ for the OMv problem with amortized time $Q(n,\Delta)= \tto{n \Delta}$ per round, if the OMv-conjecture is true.
\end{restatable}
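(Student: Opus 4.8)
The plan is to show that an algorithm $\alg$ with EH-bounded predictions that beats the bound $\tto{n\Delta}$ per round can be turned into an $O(n^{3-\eps'})$-time OMv algorithm, contradicting the OMv conjecture. The overall strategy is a standard ``batching'' reduction for this kind of statement: we take a genuine OMv instance with matrix $M$ and online vectors $v_1,\dots,v_n$, and we simulate it on the prediction-model algorithm $\alg$ by feeding $\alg$ predictions $\hat v_i$ that are \emph{close in EH-distance} to the actual vectors. The crux is that, if we group the coordinates into $\Delta$ contiguous blocks and allow the real vector to differ from its prediction by one ``block flip'' per block, then \emph{every} Boolean vector is within EH-distance $\Delta$ of a single fixed prediction $\hat v$ provided $v$ is constant on each block. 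So first I would restrict attention (via a reduction internal to OMv) to an OMv instance in which each online query vector is block-constant with respect to a partition of $[n]$ into $\Delta$ blocks; equivalently, the ``effective'' vectors live in $\{0,1\}^\Delta$.

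Concretely, the reduction is: given a true $n\times n$ OMv instance, we do \emph{not} lose generality by replacing $n$ with a parameter where the online vectors are actually $\Delta$-dimensional. So take a hard OMv instance of dimension $m$ where $m$ is chosen so that $n^\eps \asymp \Delta$ and $m \asymp n/\Delta$ — i.e. pad $M$ to an $n\times n$ matrix $M'$ by tiling / zero-padding so that each ``logical'' coordinate of a vector in $\{0,1\}^m$ corresponds to a block of $\approx n/m = \Delta$ consecutive coordinates of an $n$-vector, and within a block all coordinates are equal. Then for each online vector $w_t \in \{0,1\}^m$ of the original instance, define $v_t \in \{0,1\}^n$ as the block-constant extension. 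Now fix the single prediction $\hat v = \hat v_t := \mathbf{0}$ (the all-zeros $n$-vector) for every round $t$; then $EH(\hat v_t, v_t) \le m = O(\Delta)$, since each of the $m$ blocks where $v_t$ is $1$ contributes exactly $1$ to the EH-distance and the runs between them contribute $0$ (up to a constant we can absorb by a mild choice of constants, or by choosing $\hat v_t$ to be itself block-constant matching the run structure). Thus this is a legal input to $\alg$ in the EH-$\Delta$ model, and running $\alg$ on $(M', \hat v_1,\dots,\hat v_n)$ against the online stream $v_1,\dots,v_n$ recovers $M' v_t$, from which $M w_t$ is read off by looking at the relevant coordinates. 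Preprocessing of $\alg$ is polynomial, hence $o(n^3)$-admissible, and the total query time is $n \cdot \tto{n\Delta} = \tto{n^2\Delta} = \tto{n^2 \cdot n^\eps} = \tto{n^{2+\eps}}$. Translating back to the original dimension $m$: since $n \asymp m\Delta$ and $\Delta \asymp n^\eps$, we get $n = \Theta(m^{1/(1-\eps)})$, so $n^{2+\eps} = m^{(2+\eps)/(1-\eps)}$; for $\eps$ small this exponent is $2 + O(\eps) < 3$, so the whole OMv computation of dimension $m$ runs in time $\tto{m^{3-\eps'}}$ for some $\eps'>0$, i.e. with a polynomial improvement over $m^3$, contradicting the OMv conjecture (Conjecture~\ref{thm:omv-conjecture}) since $\tto{\cdot}$ rules out exactly such polynomial-factor improvements.

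The main obstacle, and the step I would be most careful with, is the bookkeeping of how the ``$\tto{\cdot}$ means polynomial improvement'' notion (Definition~\ref{def:tto}, referenced but not shown in the excerpt) interacts with the parameter substitution $n \leftrightarrow m$: one must verify that a $\tto{n\Delta}$ per-round bound with $\Delta \in [1,n^\eps]$ really does, after rescaling, yield an $O(m^{3-\delta})$ (not merely $m^{3-o(1)}$) OMv algorithm for some fixed $\delta>0$ depending on $\eps$, so that the contradiction is with the genuine OMv conjecture and not a weaker statement. The other point needing care is making the EH-distance bound come out as \emph{exactly} $\le \Delta$ rather than $O(\Delta)$: this is handled by choosing the block structure of $M'$ and of $\hat v_t$ so that the number of blocks equals $\Delta$ on the nose and each run of equal coordinates is genuinely constant in both $v_t$ and $\hat v_t$, so that the recursive EH definition charges at most one unit per block boundary. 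Everything else — the padding of $M$, the read-off of $Mw_t$ from $M'v_t$, the accounting of preprocessing time — is routine.
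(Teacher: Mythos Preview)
Your blocking argument has a genuine gap: the EH-distance bound and the time calculation are incompatible. You set up $m = n/\Delta$ blocks, each of size $\Delta$; a block-constant vector $v_t$ then has EH-distance to $\vec 0$ at most the number of blocks, namely $m = n/\Delta$, not $O(\Delta)$. Your step ``$EH(\hat v_t, v_t) \le m = O(\Delta)$'' is therefore asserting $n/\Delta = O(\Delta)$, which forces $\Delta \ge c\sqrt{n}$, i.e.\ $\eps \ge 1/2$. But your time translation $(2+\eps)/(1-\eps) < 3$ holds only for $\eps < 1/4$. No value of $\eps \in (0,1)$ satisfies both, so the reduction yields no contradiction for any $\Delta$. (Earlier you also write ``group the coordinates into $\Delta$ contiguous blocks,'' which is the \emph{other} choice; with $\Delta$ blocks the effective OMv dimension becomes $m=\Delta$, the hard instance needs only $\Delta^3$ time, and $\tto{n^2\Delta}$ is far larger than $\Delta^3$, so again no contradiction.) There is also a round-count mismatch: an $m$-dimensional OMv instance has only $m$ rounds, yet you run $\alg$ for $n$ rounds, and since the guarantee on $\alg$ is amortized, padding with dummies is not free.

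The paper avoids all of this by not changing the dimension. It uses the sparse OMv problem (Conjecture~\ref{thm:sparse-omv-conjecture} and Theorem~\ref{thm:sparse-oumv-hardness}): the matrix stays $n\times n$, there are still $n$ rounds, but each query vector $\vec v_i$ has support contained in a fixed set $S\subset[n]$ of size $\le \Delta$. Then $EH(\vec 0,\vec v_i)\le |\supp(\vec v_i)|\le\Delta$ trivially, so the all-zero prediction is a legitimate input to $\alg$, and a fast $\alg$ would solve $S$-OMv in total time $n\cdot\tto{n\Delta}=\tto{n^2\Delta}$. Theorem~\ref{thm:sparse-oumv-hardness}---proved by splitting each full OMv query into $n/\Delta$ support-restricted pieces and summing the results---rules this out under the OMv conjecture. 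The fix to your approach: do not embed a smaller matrix into a larger one; keep $n$ fixed and restrict the \emph{support} of the query vectors instead.
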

(See Appendix~\ref{sec:omv-pred-lb-proofs} for the proof.)

\section{Dynamic Prediction Models with General Lower Bounds}
\label{sec:prediction-models}
The first question to ask is how to define algorithms with predictions in the dynamic setting.
In this section, we generalize definitions of algorithms with predictions from the static setting and the corresponding error-measures to the dynamic setting and show that some definitions of predictions are so weak that almost the same lower bounds on the time complexity can be shown as in the setting without prediction.

\subsection{$\eps$-Accurate Predictions}

We begin with the simplest, very general formulation that lead to improved algorithms (see e.g.~\cite{GuptaPSS22}).
Informally, a predicted request sequence $\hat{\rho}$ is $\eps$-accurate if each predicted request matches the respective online request with probability at least $\eps$.

\begin{definition}[$\eps$-Accurate Predictions]
    \label{def:eps-accurate-predictions}
    Let $ \eps \in [0,1]$.
    Consider a dynamic problem with update set $\domain$ and query set $\queries$.
    Let $\distribution$ be a distribution over sequences of $T$ requests, i.e. over $(\domain \cup \queries)^T$, and
    $\hat{\rho} = (\hat{\rho}_1, \hat{\rho}_2, \dotsc, \hat{\rho}_T)$ be a sequence of $T$ predicted requests.
    
    Then $\hat{\rho}$ is an {\bf $\eps$-accurate} prediction for $\distribution$, if each $\hat{\rho}_t$ in $\hat{\rho}$ has
        $\displaystyle \Pr_{\rho \sim \distribution}[\rho_t = \hat{\rho}_t] \geq \eps$ .
\end{definition}

This is a natural model of prediction as accuracy is one of the most common metrics to evaluate the performance of a machine learning model. However, we show that it is too weak of a notion to design efficient dynamic algorithms with prediction.
The following proposition shows that for any constant $\eps<1$, a dynamic problem that is hard in the online setting remains hard even if an $\eps$-accurate prediction is available in advance.
For example, even if a dynamic algorithm has a prediction that is correct for $99.9\%$ of future requests, known lower bounds for the online problem still hold.

\begin{proposition}[Request Amplification]
    \label{prop:eps-accuracy-lower-bound}
    Let $\eps \in (0,1)$ be a constant and $f$ a non-constant, non-decreasing function in $n$.
    Suppose there is a dynamic problem $\cal P$ with query set $\queries \neq \emptyset$ such that any algorithm processing $T$ requests on instances of size $n$ requires  worst-case time $\Omega(Tf(n))$.
    Then there exists a distribution $\distribution$ of request sequences from $\cal P$ such that any algorithm $\innerAlg$ with $\varepsilon$-accurate predictions for $\distribution$ requires amortized time $\Omega({(1 - \eps)} f(n))$ per request.
\end{proposition}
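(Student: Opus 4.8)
The plan is to take a problem $\cal P$ that requires worst-case time $\Omega(Tf(n))$ for $T$ requests, and to construct a distribution $\distribution$ over request sequences from $\cal P$ such that $\varepsilon$-accurate predictions give essentially no help. The key idea is \emph{request amplification}: replace each request in a hard sequence by a long run of identical copies, with the number of copies chosen so that with constant probability the ``true'' request is forced to deviate from the prediction. Concretely, fix a worst-case request sequence $\sigma = (\sigma_1, \dotsc, \sigma_T)$ witnessing the $\Omega(Tf(n))$ lower bound, and also fix an arbitrary ``filler'' request $\rho^\star \in \queries$ (nonempty by hypothesis). For a parameter $m$ to be chosen, define the distribution $\distribution$ on sequences of length $mT$ as follows: independently for each block $i \in [T]$, pick one position $j_i \in [m]$ uniformly at random; in block $i$, positions $1, \dotsc, j_i - 1$ get request $\sigma_i$ and positions $j_i, \dotsc, m$ get the filler $\rho^\star$. (Any such scheme works as long as each fixed position is equal to a fixed request with probability at most $1 - \varepsilon$.)

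First I would compute the per-position accuracy bound. For position $(i, j)$ in block $i$, and any fixed predicted request $\hat\rho_{(i,j)}$, the probability that the true request equals $\hat\rho_{(i,j)}$ is at most $\max\{\Pr[\text{pos is }\sigma_i], \Pr[\text{pos is }\rho^\star]\} = \max\{(j-1)/m, (m-j+1)/m\}$, assuming $\sigma_i \neq \rho^\star$ (if they are equal we can perturb the construction). Taking $m$ a constant depending only on $\varepsilon$ — e.g. $m = \lceil 1/(1-\varepsilon)\rceil + 1$ so that every ratio $(j-1)/m$ and $(m-j+1)/m$ is at most $\varepsilon$ — makes the best fixed prediction at each position correct with probability at most $\varepsilon$, hence \emph{any} sequence $\hat\rho$ is an $\varepsilon$-accurate prediction for $\distribution$ (Definition~\ref{def:eps-accurate-predictions}). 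So the adversary is free to pick the worst $\distribution$ of this form and the algorithm gets no usable information from the prediction contract.

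Next I would argue the lower bound on running time. Run any algorithm $\innerAlg$ with its ($\varepsilon$-accurate) prediction on a sample $\rho \sim \distribution$. By construction, $\rho$ restricted to the ``active'' positions (positions $1, \dotsc, j_i - 1$ of each block, interleaved with filler queries) simulates the hard sequence $\sigma$: the filler requests $\rho^\star$ are queries, so they do not change the instance, and the first $j_i - 1$ copies of $\sigma_i$ leave the instance in the same state as a single $\sigma_i$ would (we only need $j_i \geq 1$, which may fail with probability $1/m$ per block; to handle this cleanly, either use a slightly different block scheme guaranteeing $j_i \geq 2$, or absorb the $1/m$ loss into constants). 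Thus $\innerAlg$ must correctly answer the queries of a sequence equivalent to $\sigma$, and by the worst-case hardness hypothesis — applied to the effective subsequence of length $\Theta(T)$ — it spends $\Omega(Tf(n))$ total time. Since the full sequence has length $mT = \Theta(T)$ with $m = \Theta(1/(1-\varepsilon))$, the amortized time per request is $\Omega\big(Tf(n) / (mT)\big) = \Omega\big((1-\varepsilon) f(n)\big)$.

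The main obstacle I anticipate is a definitional subtlety rather than a computational one: the worst-case hardness of $\cal P$ is stated for an adversarial sequence, and I need to route that hardness through a \emph{randomized} sequence on which the prediction is provably weak, and through an algorithm that may be randomized and that sees the prediction. The care needed is (i) ensuring the amplified instance genuinely realizes the hard behavior on a $\Theta(T)$-length subsequence for almost every sample (the $j_i \geq 1$ issue above), and (ii) making sure the ``for all $\hat\rho$, $\hat\rho$ is $\varepsilon$-accurate'' claim is uniform enough that the adversary, not the algorithm, controls which hard $\sigma$ is embedded. Both are handled by choosing the block length $m$ as a constant in $\varepsilon$ and being slightly careful with the block construction; the rest is bookkeeping on constants.
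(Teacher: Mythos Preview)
Your proposal has two genuine gaps.

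First, you have inverted the direction of the $\eps$-accuracy condition. Definition~\ref{def:eps-accurate-predictions} requires $\Pr[\rho_t = \hat\rho_t] \ge \eps$, not $\le \eps$. Your sentence ``makes the best fixed prediction at each position correct with probability at most~$\varepsilon$, hence any sequence $\hat\rho$ is an $\varepsilon$-accurate prediction'' draws exactly the wrong conclusion: if the best prediction is correct with probability at most~$\eps$, then (for $\eps>\tfrac12$) your distribution admits \emph{no} $\eps$-accurate prediction at all --- in your block scheme the middle position $j\approx m/2$ has both outcomes with probability $\approx \tfrac12$ --- and the proposition becomes vacuous. (Your supporting arithmetic is also off: the claim that every ratio $(j-1)/m$ and $(m-j+1)/m$ is at most~$\eps$ fails already at $j=m$, and no choice of $m>1$ makes it hold for all $j\in[m]$.) What is actually needed is to construct $\distribution$ together with one \emph{specific} $\eps$-accurate prediction that is manifestly useless; the paper takes $\hat\rho=(q^*,\dots,q^*)$ and arranges that each position equals $q^*$ with probability $\ge \eps$, so any predictionless algorithm could have produced $\hat\rho$ itself.

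Second, and more structurally, the assertion that ``the first $j_i-1$ copies of $\sigma_i$ leave the instance in the same state as a single $\sigma_i$ would'' is false for general dynamic problems: if $\sigma_i$ is an edge flip, two copies cancel; if it is an increment, three copies add~$3$. Your amplified sequence therefore does not embed the hard sequence $\sigma$, and the $\Omega(Tf(n))$ hypothesis cannot be invoked. The paper sidesteps this entirely by never duplicating an update: it interleaves the hard sequence (each request exactly once) with $aT$ copies of a single fixed \emph{query} $q^*\in\queries$ at uniformly random positions, with $a=\lceil 1/(1-\eps)\rceil$. Queries do not alter the instance, so the hard sequence survives verbatim as a subsequence, the lower bound transfers, and dilution by the factor $a+1=\Theta(1/(1-\eps))$ yields the amortized $\Omega((1-\eps)f(n))$ bound.
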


\begin{proof}
    Let $\rho$ be a worst-case request sequence of length $T$ for the algorithm and let $q^* \in \queries$ be an arbitrary fixed query. 
    We give a randomized reduction that maps  $\rho$  to an
    \emph{augmented request sequence} $\rho'$ as follows:
    Let $a = \ceil{1/(1-\eps)}$. Consider the set of all request sequences of length $(a+1)T$ that contain $\rho$ as a sub-sequence and $aT$ many copies of query $q^*$. The \emph{augmented request sequence} $\rho'$ is drawn uniformly at random from this set. Note that  for every $t \in [1, T]$, the $t$-th position consists of the query $q^*$ with probability 
    $\frac{aT}{(a+1)T} \ge \frac{1/(1-\eps)}{1/(1-\eps) + 1} = \frac{1}{2-\eps} \ge \eps$. Here the first inequality is due to $\frac{x}{x+1}$ having a positive derivative and the second inequality is true as $0\geq 2 \eps -\eps^2-1=-(\eps-1)^2$.
    
    By assumption, any algorithm correctly processing $\rho$ requires $\Omega(T f(n))$ time.
    As the sequence $\rho'$ contains the sequence $\rho$ as a sub-sequence, and the added queries do not change the underlying data of the problem instance, any algorithm correctly processing the request sequence $\rho'$ requires time $\Omega(T f(n))$.

    Now, consider an algorithm $\innerAlg$ with the  prediction $\hat{\rho} = (q^*, \dotsc, q^*)$ of length $(a+1)T$. 
    By construction, this is an $\eps$-accurate prediction.
    Since any dynamic algorithm that has to answer $\rho$ can itself generate $\hat \rho$ in time $O(aT)$ and use it as an (arguably useless) prediction for preprocessing.
    Thus, the total time for such a dynamic algorithm is $O(aT + T_p)$, where $T_p$ is the total time for the algorithm with an $\eps$-accurate prediction to process $\rho'$, but this must be $\Omega(T f(n))$.
    It follows that the amortized time of $\innerAlg$, even given an $\eps$-accurate prediction, must be at least $
        \bigOmega{\frac{T f(n) - aT}{aT}} = \Omega((1 - \eps) f(n))$ per request.    
\end{proof}

Thus, for any constant $\eps<1$, the amortized time per request is at least $\bigOmega{f(n)}$.
This shows that $\eps$-accurate predictions are not particularly powerful for dynamic problems with known, conditional or unconditional, lower bounds.
This motivates the search for alternative stronger models of prediction under which it may be possible to harness the power of efficient offline algorithms.

One shortcoming of $\eps$-accurate predictions are their generality. Regardless of the prediction, the support of the distribution $\distribution$ can be \emph{every} possible request sequence of length $T$, if we assign small enough probability to request sequences that do not match well with the prediction.
We thus investigate also more restrictive models that restrict the possible input sequences $\calset \subset (\domain \cup \queries)^T$ for a given prediction.

\subsection{List Accurate Predictions}
Next we investigate a deterministic model for predictions that does not require a prediction to exactly specify the $t$-th request, but only to reveal `some information' about it.
If each request is represented by a bit-string of $O(\log |\domain\cup\queries|)$ bits, the following prediction model can be thought of as revealing a subset of these bits, i.e. the prediction $\hat \rho_t$ for step $t$ is a set, of size at most $L$, of possible requests such that the $t$-th request $\rho_t  \in \hat \rho_t$.

\begin{definition}[$L$-List Accurate Predictions]
    \label{def:list-predictions}
    In a dynamic problem with update set $\domain$ and query set $\queries$, 
    let $\calset \subseteq (\domain \cup \queries)^T$ be a set of sequences with $T$ requests.
    
    A sequence of $T$ sets $\hat{\rho} = (\hat{\rho}_1, \hat{\rho}_2, \dotsc, \hat{\rho}_T)$, where each $\hat{\rho}_t \subseteq \domain \cup \queries$, is called an {\bf $L$-list accurate} prediction for $\calset$,
    if each set $\hat{\rho}_t$ contains at most $L$ elements and we have for each sequence $\rho \in \calset$ and all $t \in [T]$ that $\rho_t \in \hat{\rho}_t$.
\end{definition}

Clearly, having an $L$-list accurate prediction with $L = 1$ is a perfect prediction.
Also note that there is always a $|\domain \cup \queries|$-list accurate prediction for all inputs, i.e. $\calset = (\domain \cup \queries)^T$.
In the OMv-problem for example, we have that the queries are from $\queries = \{0,1\}^n$ and having an $L$-list accurate prediction for preprocessing allows to solve each OMv round in $O(n)$ time, after spending $O(L n^\omega)$ time for preprocessing.

\begin{lemma} 
    \label{lem:list-accuracy-eps-accuracy}
    
    Given an $L$-list accurate prediction $\hat{\rho}=(\hat{\rho}_1,\ldots,\hat{\rho}_T)$ for $\calset$, one can compute in $O(\sum_t |\hat{\rho}_t|)=O(LT)$ time
    an $\frac{1}{L}$-accurate prediction $\hat{\rho}'$ for the uniform distribution on $\calset$. 
\end{lemma}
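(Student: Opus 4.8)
The plan is to build $\hat\rho'$ coordinate by coordinate, setting each $\hat\rho'_t$ to a most-likely element of the list $\hat\rho_t$ under the uniform distribution on $\calset$, and then to verify $\tfrac1L$-accuracy by a one-line pigeonhole argument. The only fact we use about the list-accurate prediction is the guarantee of Definition~\ref{def:list-predictions}: for every $\rho \in \calset$ and every $t$ we have $\rho_t \in \hat\rho_t$, and each list has $|\hat\rho_t| \le L$ elements.

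Concretely, for a fixed $t$ write $p_t(a) := \Pr_{\rho \sim \mathrm{Unif}(\calset)}[\rho_t = a]$ for $a \in \hat\rho_t$. Because every sequence in $\calset$ has its $t$-th entry inside $\hat\rho_t$, the events $\{\rho_t = a\}_{a \in \hat\rho_t}$ cover $\calset$, hence $\sum_{a \in \hat\rho_t} p_t(a) = 1$. This is a sum of at most $L$ nonnegative reals equal to $1$, so some $a^\star \in \hat\rho_t$ satisfies $p_t(a^\star) \ge 1/|\hat\rho_t| \ge 1/L$; take $\hat\rho'_t := a^\star$. Doing this for every $t$ produces a sequence of $T$ single requests that, by construction, has $\Pr_{\rho \sim \mathrm{Unif}(\calset)}[\rho_t = \hat\rho'_t] \ge 1/L$ for all $t$, i.e. an $\tfrac1L$-accurate prediction for $\mathrm{Unif}(\calset)$ in the sense of Definition~\ref{def:eps-accurate-predictions}. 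The time to emit $\hat\rho'$ is just the time to scan the lists and pick one element from each, namely $O(\sum_t |\hat\rho_t|) = O(LT)$.

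I expect the only real subtlety is the computational claim rather than the accuracy bound: selecting the exact per-step maximizer $a^\star$ for an arbitrary $\calset$ seems to need access to $\calset$ itself, which may be large. Two remedies keep the $O(LT)$ bound honest. First, one may simply pick $\hat\rho'_t$ uniformly at random from $\hat\rho_t$; then $\Pr[\rho_t = \hat\rho'_t] = 1/|\hat\rho_t| \ge 1/L$ over the joint randomness, with no need to inspect $\calset$. Second, for the canonical ``worst-case'' witness set $\calset = \hat\rho_1 \times \cdots \times \hat\rho_T$ (the largest $\calset$ for which $\hat\rho$ is $L$-list accurate), $\mathrm{Unif}(\calset)$ makes $\rho_t$ uniform on $\hat\rho_t$, so $p_t(a) = 1/|\hat\rho_t|$ for all $a$ and \emph{any} choice — e.g. the first element of each list — already works. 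Either way the argument is short; the content is entirely in the pigeonhole step together with the $\rho_t \in \hat\rho_t$ guarantee.
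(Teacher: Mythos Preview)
Your proposal is correct. The paper's proof is exactly your ``first remedy'': it picks $\hat\rho'_t$ uniformly at random from $\hat\rho_t$ and observes that $\Pr[\rho_t=\hat\rho'_t]\ge 1/L$ for every $\rho\in\calset$, in $O(|\hat\rho_t|)$ time per step. Your primary route via the deterministic pigeonhole argument is a small detour; as you correctly note, locating the per-step maximizer under $\mathrm{Unif}(\calset)$ would need access to $\calset$ and so does not fit the stated $O(LT)$ bound without the randomized fix. In short, you end up at the same construction as the paper, with a more explicit discussion of the computational subtlety.
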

\begin{proof}
Given an $L$-list accurate prediction $\hat{\rho}$, we can choose an element from the $t$-th set $\hat{\rho}'_t \in \hat{\rho}_t$ uniformly at random in time $O(|\hat{\rho}_t|)$, for $t=1$, $t=2$, and so forth.
Thus, computing $\hat{\rho}'$ takes $O(LT)$ time.
Since $\hat{\rho}$ is an $L$-list accurate prediction for $\calset$, we have that $\Pr[ \rho_t = \hat{\rho}'_t] \geq 1/L$ for all $\rho \in \calset,~ t \in [T]$.
\end{proof}
The lemma's reduction, from $L$-list accurate to $\eps$-accurate algorithms, shows that lower bounds against algorithms with $L$-list accurate predictions also yield lower bounds against algorithms with $\eps$-accurate predictions for a problem.

\begin{corollary}\label{cor:LB-list-to-LB-eps} 

Suppose there is a dynamic problem $\problem$ such that any algorithm with $L$-list accurate prediction requires worst-case  total time $\Omega(Tf(n,L))$ for the request sequences in some $\calset\subseteq(\domain \cup \queries)^T$ and $f(n,L)=\Omega(L)$, then any algorithm with $\eps$-accurate prediction for the uniform distribution on $\calset$ requires $g(n,\eps)=\Omega(f(n,L))$ amortized time per request, for any $\eps \geq 1/L$.
\end{corollary}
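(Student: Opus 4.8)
The plan is to argue by contraposition: I would assume the existence of an efficient algorithm with $\eps$-accurate predictions and use it, together with Lemma~\ref{lem:list-accuracy-eps-accuracy}, to build an efficient algorithm with $L$-list accurate predictions, contradicting the hypothesized lower bound $\Omega(Tf(n,L))$. Concretely, suppose $\innerAlg_\eps$ solves $\problem$ with $\eps$-accurate predictions for the uniform distribution on $\calset$ in amortized time $g(n,\eps)$ per request, i.e. total time $O(T\cdot g(n,\eps))$. I would then describe the algorithm $\innerAlg_L$ with $L$-list accurate predictions as follows: given an $L$-list accurate prediction $\hat\rho$ for $\calset$, first invoke the construction of Lemma~\ref{lem:list-accuracy-eps-accuracy} to produce, in $O(LT)$ time, a $\tfrac1L$-accurate prediction $\hat\rho'$ for the uniform distribution on $\calset$; then run $\innerAlg_\eps$ with prediction $\hat\rho'$ on the incoming request sequence.

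The correctness of $\innerAlg_L$ is immediate since $\innerAlg_\eps$ correctly answers all queries on any input sequence (in particular those in $\calset$), and every request sequence $\rho\in\calset$ is a valid input consistent with the $\tfrac1L$-accurate prediction $\hat\rho'$ by the lemma. For the running time, the preprocessing of $\innerAlg_L$ costs the preprocessing of $\innerAlg_\eps$ plus the additive $O(LT)$ from the lemma, and the per-request processing of $\innerAlg_L$ equals that of $\innerAlg_\eps$. Hence the total time of $\innerAlg_L$ on a sequence of length $T$ is $O(LT) + O(T\cdot g(n,\eps))$. Since the hypothesis guarantees that this total time must be $\Omega(Tf(n,L))$, we get $O(LT) + O(T\cdot g(n,\eps)) = \Omega(Tf(n,L))$, and dividing by $T$ yields $g(n,\eps) = \Omega(f(n,L) - O(L)) = \Omega(f(n,L))$, where the last step uses the assumption $f(n,L)=\Omega(L)$ to absorb the $O(L)$ term. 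Finally, since $\hat\rho'$ is $\tfrac1L$-accurate, it is also $\eps$-accurate for any $\eps \le 1/L$; equivalently, the argument applies to any $\eps \ge 1/L$ by noting that an $\eps$-accurate algorithm is in particular a $\tfrac1L$-accurate algorithm when $\eps \ge 1/L$, so the lower bound on $g(n,\eps)$ holds in that range.

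The only mild subtlety — and the step I would be most careful about — is the bookkeeping of which direction the accuracy parameter goes: a prediction that is $\tfrac1L$-accurate is automatically $\eps$-accurate for every $\eps\le 1/L$, so an algorithm that works for $\eps$-accurate predictions with $\eps\ge 1/L$ need not directly accept $\hat\rho'$; one should instead observe that the $\tfrac1L$-accurate $\hat\rho'$ is a fortiori an input on which any $\eps$-accurate algorithm with $\eps \le 1/L$ must be correct, and phrase the lower-bound transfer for exactly that range. Everything else is a routine substitution, and the $f(n,L)=\Omega(L)$ hypothesis is exactly what makes the additive $O(LT)$ overhead harmless.
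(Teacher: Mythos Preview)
Your proposal is correct and is precisely the paper's intended argument: the corollary is stated as an immediate consequence of Lemma~\ref{lem:list-accuracy-eps-accuracy}, and you have filled in the contrapositive reduction and the running-time bookkeeping (including the role of $f(n,L)=\Omega(L)$ in absorbing the $O(LT)$ conversion cost) exactly as needed. Your caution about the direction of the accuracy parameter is well placed: the reduction naturally yields the bound for $\eps\le 1/L$, and your attempted flip to $\eps\ge 1/L$ in the penultimate sentence does not work (an algorithm promised accuracy $\eps\ge 1/L$ need not accept a merely $\tfrac1L$-accurate input), so the range should indeed be taken as $\eps\le 1/L$, as you yourself conclude in the final sentence.
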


Though this shows that $L$-list predictions are a stronger notion than $\eps$-accurate predictions, 
will show in Section~\ref{sec:striangle-lb-list} that for a wide range of problems, there called \emph{locally correctable} problems%
, strong lower bounds (similar to Proposition~\ref{prop:eps-accuracy-lower-bound}) hold: Any algorithm using $L$-list accurate predictions is subject to the same conditional lower bounds as a prediction-less online algorithm, unless the list accurate predictions are perfect, i.e. $L = 1$.

We thus seek to investigate even more powerful prediction models in the following.

\subsection{Bounded Delay Predictions}

Unlike $\eps$-accurate and $L$-list accurate predictions that aimed at predicting every individual time step $t\in[T]$, we further investigate
predictions that know all $T$ requests in advance, though the actual order of a request sequence may have various `small deviations' from the predicted sequence of requests.
That is, we will measure prediction accuracy by a notion of closeness for permutations.
We consider a permutation $\pi \in \permT(T)$ as a bijective map $\pi: [T] \rightarrow [T]$ on the integers $[T]$,
e.g. $\pi(1)$  is the first element of the permutation.

\begin{definition}
    \label{def:max-permutation-distance}
    Let $d \geq 0$ and $\pi, \sigma \in \permT(T)$.
    We call $\pi$ and $\sigma$ {\bf $d$-close}, denoted $|\pi - \sigma|_\infty \leq d$, if 
    $
        \left| \pi^{-1}(t) - \sigma^{-1}(t) \right| \leq d 
    $ for all $t \in [T]$.
    Further, $|\pi-\sigma|_1 = \sum_t |\pi^{-1}(t) - \sigma^{-1}(t) |  $ is called the {\bf total-distance} of $\pi$ and $\sigma$.
\end{definition}
To simplify exposition, we overload the notation of a permutation $\pi \in \permT(T)$ to yield a reordering of a request sequence of length $T$.
\begin{definition}%
    \label{def:request-permutation}
    For a request sequence $\rho \in (\domain\cup\queries)^T$ and $\pi \in \permT(T)$, %
    let $\pi(\rho) = \left(\rho_{\pi(1)}, \dotsc, \rho_{\pi(T)}\right)$ be the request sequence obtained by {\bf reordering the requests} in $\rho$ according to~$\pi$.
\end{definition}

Next we formalize what it means for predicted request sequence $\hat{\rho} \in (\domain \cup \queries)^T$ to be a bounded delay prediction for a set of input request sequences.

\begin{definition}[Bounded Delay Predictions]
    \label{def:delayed-predictions}
    Let $\calset \subseteq (\domain\cup \queries)^T$ be a set of request sequences of length $T$ and $\hat{\rho} = \left(\hat{\rho}_1, \hat{\rho}_2, \dotsc, \hat{\rho}_T\right)$ a given sequence of $T$ predicted requests.

    Then $\hat{\rho}$ has at most $d$ delay for $\calset$, called {\bf $d$-delayed} for $\calset$, if for all $\rho \in \calset$, there exists some permutation $\pi$ with $\pi(\rho)=\hat{\rho}$, and $\pi$ is $d$-close $|\pi-{\sf id}|_\infty\leq d$ to the identity permutation ${\sf id}$.

    Further, $\hat{\rho}$ has at most $d$ total-delay for $\calset$, called {\bf $d$-total-delayed} for $\calset$, if for all $\rho \in \calset$, there exists some permutation $\pi$ with $\pi(\rho)=\hat{\rho}$, and $\pi$ has at most $d$ total-distance $|\pi-{\sf id}|_1\leq d$ to the identity permutation ${\sf id}$.
\end{definition}

Clearly, every $d$-delayed prediction for $\calset$ has at most $dT$ total-delay for $\calset$.
Next, we show that bounded delay predictions are a stronger notion than list-accurate predictions, which were a stronger notion than $\eps$-accurate predictions (see Lemma~\ref{lem:list-accuracy-eps-accuracy}).

\begin{lemma} \label{lem:delay-list-accuracy}
    Given an integer $d\geq 0$ and a request sequence $\hat{\rho} \in (\domain\cup \queries)^T$, one can compute in $\Theta((d+1)T)$ time a $(2d+1)$-list accurate prediction $\hat{\rho}'=(\hat{\rho}'_1,\ldots,\hat{\rho}'_T)$ for all request sequence sets $\calset \subseteq (\domain \cup \queries)^T$ for which $\hat{\rho}$ is $d$-delayed.
\end{lemma}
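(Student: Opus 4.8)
The plan is to define the $t$-th prediction set as the window of predicted requests that could possibly land at position $t$ under a $d$-close permutation, namely $\hat{\rho}'_t := \{\hat{\rho}_s : |s - t| \le d\}$, where out-of-range indices are simply ignored. First I would bound the size: for each $t$, the index set $\{s \in [T] : |s-t| \le d\}$ has at most $2d+1$ elements, so $|\hat{\rho}'_t| \le 2d+1$, giving the claimed list size. For the running time, each $\hat{\rho}'_t$ is obtained by reading off at most $2d+1$ consecutive entries of $\hat{\rho}$; doing this for all $t \in [T]$ takes $\Theta((d+1)T)$ time (and one needs $\Omega((d+1)T)$ just to write down the output, so the bound is tight). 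The only mild subtlety here is making sure duplicates and boundary effects near $t \le d$ or $t > T-d$ don't inflate the count, but truncation only shrinks the sets, so the bound $2d+1$ is safe.

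The heart of the argument is correctness: I must show that for every $\calset$ for which $\hat{\rho}$ is $d$-delayed, every $\rho \in \calset$, and every $t \in [T]$, we have $\rho_t \in \hat{\rho}'_t$. By Definition~\ref{def:delayed-predictions}, since $\hat\rho$ is $d$-delayed for $\calset$ there is a permutation $\pi$ with $\pi(\rho) = \hat\rho$ and $|\pi - {\sf id}|_\infty \le d$. Unwinding the notation of Definition~\ref{def:request-permutation}, $\pi(\rho) = \hat\rho$ means $\hat\rho_j = \rho_{\pi(j)}$ for all $j$; equivalently, writing $j = \pi^{-1}(t)$, we get $\rho_t = \rho_{\pi(\pi^{-1}(t))} = \hat\rho_{\pi^{-1}(t)}$. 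So $\rho_t$ equals the predicted request at index $s := \pi^{-1}(t)$. The $d$-closeness condition $|\pi - {\sf id}|_\infty \le d$ says precisely $|\pi^{-1}(t) - {\sf id}^{-1}(t)| = |\pi^{-1}(t) - t| \le d$, i.e. $|s - t| \le d$. Hence $s$ lies in the window defining $\hat\rho'_t$, so $\rho_t = \hat\rho_s \in \hat\rho'_t$, as required.

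The step I expect to require the most care is not any single estimate but getting the index bookkeeping exactly right — in particular being consistent about whether $\pi$ or $\pi^{-1}$ governs the displacement of the request originally at position $t$, since Definition~\ref{def:max-permutation-distance} is phrased in terms of $\pi^{-1}$ while Definition~\ref{def:request-permutation} applies $\pi$ to the sequence. The computation above resolves this: the request at true position $t$ surfaces in $\hat\rho$ at position $\pi^{-1}(t)$, and that index is within $d$ of $t$ by hypothesis, which is exactly what the window-of-width-$2d+1$ construction captures. Once this is pinned down, the two size/time bounds are immediate, and the lemma follows.
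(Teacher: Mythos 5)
Your proposal is correct and uses exactly the same construction as the paper: take $\hat{\rho}'_t = \{\hat{\rho}_\tau : \tau \in [t-d, t+d]\}$, which is clearly of size at most $2d+1$ and computable in $\Theta((d+1)T)$ time. The paper's proof is terser (it just asserts that with delay at most $d$, the predicted request at position $t$ can only land in $\rho_{[t-d,t+d]}$, which is the same observation viewed from the other side), whereas you spell out the index bookkeeping with $\pi$ and $\pi^{-1}$ carefully; both arguments are the same in substance.
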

\begin{proof}
    Since request $\hat{\rho}_t$ can appear in a sequence $\rho \in \calset$, with delay at most $d$, only in $\rho_{[t-d,t+d]}$, we can compute a list-prediction $\hat{\rho}'$ from $\hat{\rho}$ for a given $d$ by taking as list prediction for the $t$-th request the list $\hat{\rho}'_t:= \{ \hat{\rho}_{\tau} : \tau \in [t-d,t+d] \}$.

    Given $d \geq 0$ and $\hat{\rho}=(\hat{\rho}_1,\ldots,\hat{\rho}_T)$, this takes $\Theta((d+1)T)$ time.
\end{proof}

Though the lemma requires that integer $d\geq 0$ is given as input, as opposed to the reduction in Lemma~\ref{lem:list-accuracy-eps-accuracy}, we can still reduce, from bounded delay to $L$-list accurate algorithms, using an `Alternating Parallel Simulation' that allows to search for a $2$-approximation of a minimum $d$ value for an online input $\rho$, in a way that is efficient in the amortized sense.

\begin{lemma}[Alternating Parallel Simulation]\label{lem:parallel-simulation-search}
    Let $\rho, \hat{\rho} \in (\domain \cup \queries)^T$ and $d^* \in [0,T]$ be minimal such that prediction $\hat{\rho}$ has delay at most $d^*$ for the request sequence $\rho$.
    Suppose there is an algorithm $\cal A'$ that solves, given an $L$-list accurate prediction, request sequences of length $T$ of a dynamic problem $\cal P$ in time $O(T f'(n,L))$ after at most $P'(n,L)$ preprocessing time.
    If $LT + P'(n,L)= O(Tf'(n,L)) $,
    then there is an algorithm $\cal A$ that solves $\rho$, given the delay prediction $\hat{\rho}$, in 
    $O(f'(n,4d^*+1) \log T)$ 
    amortized time per request, after $O(T)$ time for preprocessing of $\hat{\rho}$.
\end{lemma}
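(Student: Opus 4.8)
The plan is a doubling search for the unknown minimal delay $d^*$, realised by turning each guess into a list-accurate prediction via Lemma~\ref{lem:delay-list-accuracy} and restarting $\cal A'$ whenever a guess is exposed as too small. Algorithm $\cal A$ runs in phases whose guesses range over $d \in \{0, 1, 2, 4, 8, \dotsc\}$. In the phase with guess $d$ it uses the prediction $\hat\rho^{(d)}$ given by $\hat\rho^{(d)}_t := \{\hat\rho_\tau : \tau \in [t-d, t+d]\}$, whose $t$-th list is computed on demand in $O(d+1)$ time; by Lemma~\ref{lem:delay-list-accuracy} this is a $(2d+1)$-list accurate prediction for every sequence for which $\hat\rho$ is $d$-delayed, in particular for $\rho$ as soon as $d \geq d^*$. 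While each arriving request $\rho_t$ lies in the predicted list $\hat\rho^{(d)}_t$, $\cal A$ forwards it to a fresh instance of $\cal A'$ that was preprocessed on $x_0$ together with $\hat\rho^{(d)}$, and outputs $\cal A'$'s answer — which is correct, because $\cal A'$ has then been supplied with a genuinely valid list prediction for the whole prefix $\rho_{\leq t}$ processed so far. The first time some $\rho_t \notin \hat\rho^{(d)}_t$, the phase fails: $\cal A$ advances to the next larger guess $d'$, starts a new instance of $\cal A'$ with $\hat\rho^{(d')}$, and re-simulates by feeding it $\rho_1, \dotsc, \rho_t$ in order before producing the (now correct) answer for step $t$. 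Since the lists grow with the guess, $\hat\rho^{(d)}_s \subseteq \hat\rho^{(d')}_s$ for $d \leq d'$ and all $s$, so every request before step $t$ is still in its list and the re-simulation can only fail at step $t$ itself; if it does, $\cal A$ keeps increasing the guess and re-simulating until the current list contains $\rho_t$, which happens at the latest once the guess reaches $d^*$.

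Correctness is immediate from the invariant that, whenever $\cal A'$ emits an answer, it has been fed a valid list prediction for the prefix seen so far. For the running time, note that a phase is only left when its guess is strictly below $d^*$, so the guesses tried form a prefix $0, 1, 2, 4, \dotsc, 2^k$ of the sequence above with $2^{k-1} < d^*$ (the isolated guess $0$ alone is used when $\rho = \hat\rho$, matching $4d^*+1 = 1$); hence every $\cal A'$ instance is invoked with list size $L \leq 2 \cdot 2^k + 1 \leq 4d^* + 1$, and there are only $k+2 = O(\log d^*) = O(\log T)$ phases. A single $\cal A'$ instance, in the phase with guess $d$ and list size $L = 2d+1$, processes at most $T$ requests in total (the re-simulated prefix plus the requests it handles normally before the next failure); so by the hypothesis $LT + P'(n,L) = O(Tf'(n,L))$, the cost of that phase — the run of $\cal A'$, its preprocessing $P'(n,L)$, and the $O(LT)$ time spent computing the on-demand lists — is $O(Tf'(n,L))$. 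Summing over the $O(\log T)$ phases, using $L \leq 4d^*+1$ and that we may assume $f'(n,\cdot)$ non-decreasing (an $L$-list prediction is also an $L'$-list prediction for any $L' \geq L$), the total time is $O(Tf'(n,4d^*+1)\log T)$; adding the $O(T)$ spent at the start to store $\hat\rho$ for random access yields amortized time $O(f'(n,4d^*+1)\log T)$ per request.

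The main point requiring care is not a single difficult step but the bookkeeping around restarts: one must verify that a failed guess is detected \emph{before} any incorrect answer is output, so that answers only ever come from an $\cal A'$ that saw a valid prefix prediction, and that monotonicity of the lists in the guess lets the re-simulation inherit the correctness of the earlier steps without itself triggering a spurious failure at an earlier position. A secondary point is that a restart performs a large chunk of work at one time step; this is harmless since only amortized time is claimed, and the restart costs form $O(\log T)$ terms each bounded by $O(Tf'(n,4d^*+1))$. If one insisted on also bounding per-step time, the restart work could instead be interleaved with ordinary processing at a constant rate — this is the ``alternating'' variant — without changing the amortized bound.
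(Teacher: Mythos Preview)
Your proof is correct and takes a genuinely different route from the paper. You run a sequential doubling search: maintain one active instance of ${\cal A}'$ for the current guess $d$, detect that $d$ is too small via the explicit membership test $\rho_t \notin \hat\rho^{(d)}_t$, and upon failure restart ${\cal A}'$ from scratch with the next larger guess, re-feeding the stored prefix. The paper instead runs the guesses $d \in \{0,1,2,4,\dots\}$ as \emph{parallel} copies of ${\cal A}'$, spawned lazily so that the $\Theta(dT)$ cost of building a large list prediction never dominates, and interleaves their execution with a scheduler that keeps each copy's total runtime within a constant factor of the fastest-progressing one; queries are answered by whichever copy currently has the best progress. Your approach is the more elementary one---just the classic doubling trick---and its correctness argument is fully explicit: an answer is only ever emitted by a copy of ${\cal A}'$ whose list prediction has been verified valid on the whole prefix seen so far, which (as you note) can always be extended to a full sequence the prediction covers. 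The paper's parallel-simulation scheme is pitched as a more general technique, portable to reductions between prediction models where a clean local failure test like $\rho_t \in \hat\rho^{(d)}_t$ may be unavailable, at the price of more scheduling machinery and a less direct correctness argument.
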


Note that the $LT + P'(n,L)= O(Tf'(n,L)) $ condition on $\cal A'$ is very mild, i.e.\ its preprocessing of a list prediction of size $O(LT)$ takes not more time than solving a request sequence for which the prediction is $L$-list accurate.

\begin{proof} 
    We consider $O(\log T )$ values to find an approximation $d$ with $d^* \in [d/2,d]$, i.e. $d\in \{0,1,2,4,\ldots\}$.
    The algorithm $\cal A$ for problem $\cal P$ will spawn copies of algorithm $\cal A'$ and selectively pause/resume their computation.
    To spawn ${\cal A}'_d$ for a $d$ value, we first compute a list prediction of size $L=(2d+1)$ using Lemma~\ref{lem:delay-list-accuracy}.
    With this list-prediction, the copy ${\cal A}'_d$ starts its preprocessing and its computation for all online requests $(\rho_1, \rho_2,\ldots )$ that arrived thus far.
    For each copy, we track the total time ${\cal T}_d$ spent thus far as $\cal A$ chooses to pause/resume individual copies.
    (Note that a copy ${\cal A}'_i$ may be paused/resumed several times while still in its preprocessing phase.)
    We define for each copy ${\cal A}'_d$ its \emph{progress}, which is the number of completed requests from the online sequence $(\rho_1,\rho_2,\ldots)$ divided by ${\cal T}_d$, i.e.\ total execution time (including preprocessing and computation) spend thus far.

    Initially, algorithm $\cal A$ spawns only one copy of ${\cal A}'_0$ for $d=0$.
    To avoid that the runtime $\Theta(Td)$ of Lemma~\ref{lem:delay-list-accuracy} dominates overall execution for $d\gg d^*$, algorithm $\cal A$ delays starting the computation of the list-prediction for the next larger value $d$ until at least one of the spawned copies $\{{\cal A}'_i\}$ has spent a total execution time ${\cal T}_i$ of at least $Td$ thus far.
    If this threshold is met, then $\cal A$ pauses the parallel simulation of all spawned copies, executes the computation of Lemma~\ref{lem:delay-list-accuracy}, and resumes the parallel simulation of all spawned copies (including the new copy) afterwards.
    Further, the parallel simulation of algorithm $\cal A$ pauses/resumes any one of the spawned copies $\{ {\cal A}'_i\}$, if total execution time is not within a constant factor of the total execution time of the \emph{fastest progressing copy}, i.e. the copy with maximal progress.
    Note that the fastest progressing copy is not paused by the parallel simulation that $\cal A$ performs, and that all spawned copies (are allowed to) spent at least as much total execution time as the fastest progressing copy.
    To answer the $t$-th online request $\rho_t$ in case it is a query, $\cal A$ simply takes the result from the fastest progressing copy.
    This completes the description of algorithm $\cal A$.

    From the $O(\log T)$ values, let $d'$ be the value that minimizes the total execution time of $\cal A'$ on the request sequence $(\rho_1,\ldots,\rho_T)$.
    (Note that $d'$ is the fastest progressing copy when at the last online request $\rho_T$.)
    We will show next that the total execution time of $\cal A$ to finish all online requests in $\rho$ is, amortized over the $T$ requests, at most

\[
    O\left(\frac{Tf'(n,2d'+1) + P'(n,2d'+1)+Td'}{T}~\log T \right) = O(f'(n,4d^*+1)\log T)~.
\] 
    Clearly, the number of spawned copies is $O(\log T)$ at all times.
    Since none of the copies spends more than a constant of the total execution time of the fastest progressing copy, the sum of the total execution times of all spawned copies $\sum_i {\cal T}_i$ is bounded within a $O(\log T)$-factor of the time of the copy that uses the value $d'$, which has ${\cal T}_{d'}=O(Tf'(n,2d'+1)+P'(n,2d'+1))$.
    It remains to argue for the runtime cost due to executing Lemma~\ref{lem:delay-list-accuracy}.
    Since the size of the input list prediction is upper bounded by the total execution time for solving with this list prediction, i.e. $(2d'+1)T = O(Tf'(n,2d'+1))$, algorithm $\cal A$ must spawn the copy that has value $d'$.
    Further, since any copy with a specific $d$ value is only spawned if $Td < \max_i {\cal T}_i$, we have that the cost $\Theta(dT)$ is negligible in the amortized sense.
    Finally, $f'(n,2d'+1)=O(f'(n,4d^*+1)$ since $d^* \in [d'/2,d']$.
\end{proof}

Note that the `alternating parallel simulation' technique to show the reduction in the previous lemma is quite general, though we only use it to reduce from algorithms with bounded delay to algorithms with list-prediction (i.e. taking Lemma~\ref{lem:delay-list-accuracy}).
The reduction in the previous lemma immediately yields the following general, lower bounds.

\begin{corollary} \label{cor:LB-delay-to-LB-list}
    Suppose there is a dynamic problem $\cal P$ such that any algorithm with $d$-delayed prediction processes $T$ requests on instances of size $n$ requires time $\Omega(T f(n,d))$.
    Then any algorithm with $L$-list predictions for $\cal P$ requires amortized time $\tilde{\Omega}(f(n,(d-1)/4))$ per request.
\end{corollary}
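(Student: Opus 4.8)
The plan is to prove the contrapositive: assume there is an algorithm $\cal A'$ with $L$-list predictions that processes $T$ requests on size-$n$ instances in amortized time $f(n,L)$ per request (with mild preprocessing, as in the hypothesis of Lemma~\ref{lem:parallel-simulation-search}), and derive an algorithm $\cal A$ with $d$-delayed predictions running in $o(T f(n,d))$ total time, contradicting the assumed lower bound. The key observation is that Lemma~\ref{lem:parallel-simulation-search} already does exactly this conversion: given an online request sequence $\rho$ and a predicted sequence $\hat\rho$ that is $d$-delayed for $\rho$, the minimal delay value $d^*$ satisfies $d^* \le d$, so the alternating parallel simulation produces an algorithm solving $\rho$ in amortized time $O(f'(n, 4d^*+1)\log T) = O(f'(n,4d+1)\log T)$ per request, after $O(T)$ preprocessing of $\hat\rho$. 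Here $f'(n,L) = f(n,L)$ by our assumption on $\cal A'$.

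First I would set $L$ so that $4d^*+1 \le L$, i.e. it suffices to run the simulation parametrized to match whatever list size the given $\cal A'$ uses; equivalently, start from a lower bound stated for some fixed $L$ and choose $d$ with $4d+1 \le L$, i.e. $d \le (L-1)/4$. Since $f(n,d)$ is (implicitly) non-decreasing in $d$, it is enough to obtain the bound for the largest such $d$, namely $d = \lfloor (L-1)/4 \rfloor$, which is the source of the $(d-1)/4$-type expression in the statement. Then I would invoke Lemma~\ref{lem:parallel-simulation-search} to get total time $O(T f(n, (L-1)/4)\log T)$ for $\cal A$ on every input sequence for which $\hat\rho$ is a valid $L$-list prediction (in particular on a worst-case $d$-delayed family, noting that $d$-delayed predictions yield $(2d+1)$-list accurate predictions via Lemma~\ref{lem:delay-list-accuracy}, so the classes line up). Comparing with the hypothesized lower bound $\Omega(T f(n,d))$ against $d$-delayed algorithms — applied with this same $d$ — gives $T f(n,d) = O(T f(n,(L-1)/4)\log T)$, and since $d = \Theta(L)$ this forces $f(n,(L-1)/4) = \tilde\Omega(f(n,d)) = \tilde\Omega(f(n,(d-1)/4))$ after re-indexing, i.e. every $L$-list algorithm needs amortized time $\tilde\Omega(f(n,(d-1)/4))$ per request.

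The main subtlety — and the step I would be most careful about — is bookkeeping the relationship between the delay parameter $d$ in the target lower bound, the list size $L$ in the source algorithm, and the $4d^*+1$ that Lemma~\ref{lem:parallel-simulation-search} actually delivers, together with the $\log T$ slack; all of these must be aligned so that the constant-degradation from $d$ to $(d-1)/4$ (rather than, say, $(d-1)/2$) is exactly what comes out, and so that the $\tilde\Omega$ absorbs the $\log T$ factor. The only other thing to check is that the mild condition $LT + P'(n,L) = O(T f'(n,L))$ required by Lemma~\ref{lem:parallel-simulation-search} is inherited from the standing assumptions on $\cal A'$ in the corollary's hypothesis; this is immediate since any algorithm witnessing the lower-bound scenario reads its $L$-list prediction, which already costs $\Omega(LT)$, so $f(n,L) = \Omega(L)$ and the condition holds. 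No genuinely hard estimate is involved — the content is entirely in chaining Lemma~\ref{lem:delay-list-accuracy} and Lemma~\ref{lem:parallel-simulation-search} and then contraposing.
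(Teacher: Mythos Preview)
Your proposal is correct and takes essentially the same approach as the paper: the paper simply states that the corollary ``immediately'' follows from Lemma~\ref{lem:parallel-simulation-search}, and your argument is precisely the contrapositive application of that lemma with the parameter bookkeeping ($L = 4d^*+1 \le 4d+1$, hence $d = (L-1)/4$, and $\log T$ absorbed into $\tilde\Omega$) spelled out in detail. Your observation that the mild preprocessing condition $LT + P'(n,L) = O(Tf'(n,L))$ is automatic because any list algorithm must at least read its prediction is the right way to discharge that hypothesis.
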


Clearly, there are sets of request sequences that do not admit delay predictions with small~$d$.
In Section~\ref{sec:locally-reducible-lb-bounded-delay}, we will however show that for many problems with OMv-based lower bounds, it is possible to construct sets of request sequences $\calset$ admitting bounded predictions while being simultaneously powerful enough to express an arbitrary OMv instance.
Concretely, for the class of locally reducible dynamic problems (Definitions \ref{def:fully-dynamic-locally-reducible} and \ref{def:partially-dynamic-locally-reducible}) we will show lower bounds for bounded delay predictions (even with no outliers) in Section~\ref{sec:locally-reducible-problems}.

\section{Extensions of the OMv Conjecture} 
\label{sec:omv-extensions}
Before discussing our lower bounds against general dynamic problems, we revisit generalizations and extensions of the OMv conjecture.
The OMv and OuMv conjectures generalize to  non-square dimensions (i.e. Definition~2.1 and 2.6 in~\cite{HenzingerKNS15}). 
To state this, we need to introduce the $\tilde{\tilde{o}}$-notation for multivariate functions  (cf.~\cite[Definition~1.2]{HenzingerKNS15}).

\begin{definition}[polynomially lower $\tilde{\tilde{o}}$-notation]
    \label{def:tto}
    For $f: \N^3 \to \N$ and any constants $c_1,c_2,c_3 \geq 0$, we write  
     $f(n_1, n_2, n_3) = \tto{n_1^{c_1} n_2^{c_2} n_3^{c_3}}$  
    if and only if 
    there exist constants $\eps, N, C > 0$ such that 
    $f(n_1,n_2,n_3)\leq C(n_1^{c_1 - \eps} n_2^{c_2} n_3^{c_3} + n_1^{c_1} n_2^{c_2 - \eps} n_3^{c_3} + n_1^{c_1} n_2^{c_2} n_3^{c_3 - \eps})$ for all $n_1,n_2,n_3 > N$. 
    
    We use the analogous definition for functions with one or two parameters.
\end{definition}

Recall that the standard $\tilde{O}$ and $\tilde{o}$-notation suppresses factors that are \emph{polylogarithmic} in the problem size. %

\begin{definition}[Rectangular $\gamma$-OMv and $\gamma$-OuMv]
Let $\gamma > 0$ be a fixed constant.
An algorithm for the $\gamma$-OMv (resp. $\gamma$-OuMv) problem is given parameters $n_2, n_3$ as its input.
Next, it is given a Boolean matrix $M$ of size $n_1 \times n_2$ that can be preprocessed, where $n_1 := \floor{n_2^{\gamma}}$.
This is followed by $n_3$-rounds of processing online input vectors.

A {\bf \em $\gamma$-OMv} algorithm is given an online sequence of $n_3$ vectors $\vec v_1, \ldots , \vec v_{n_3}$, one vector after the other, and the task is to report each result of Boolean Matrix-Vector multiplication $M \vec v_t$ before $\vec v_{t+1}$ arrives.

A {\bf \em $\gamma$-OuMv} algorithm is given an online sequence of $n_3$ vector pairs $(\vec u_1, \vec v_1), \ldots , (\vec u_{n_3} , \vec v_{n_3} )$, one pair after the other, and the task is to report each result of Boolean Vector-Matrix-Vector multiplication
$(\vec u_t )^\top M \vec v_t$ before $(\vec u_{t+1}, \vec v_{t+1} )$ arrives.

The {\bf \em $\gamma$-uMv} problem is the special case of $\gamma$-OuMv with $n_3 = 1$.
\end{definition}

Clearly, the OMv and OuMv problems are the special cases of $\gamma$-OMv and $\gamma$-OuMv with $\gamma=1$ and $n_1 = n_2 = n_3 = n$.
The OMv conjecture implies an analogous lower bound for the $\gamma$-OuMv problem (cf. Theorem 2.2 and 2.7 in~\cite{HenzingerKNS15}).

\begin{theorem}[Hardness of $\gamma$-OMv and $\gamma$-OuMv]
    \label{thm:hardness-gen-oumv}
    For any constant $\gamma > 0$, the OMv conjecture implies that there is no algorithm for $\gamma$-OMv with parameters $n_2, n_3$ that has preprocessing time $P( n_2) = \poly(n_2)$, total running time for all requests of $\tto{n_1 n_2 n_3}$, where $n_1 = \floor{n_2^{\gamma}}$, and error probability at most $1/3$.
    
    For any constant $\gamma$, the OuMv conjecture implies that there is no algorithm for $\gamma$-OuMv with parameters $n_2, n_3$ that has preprocessing time $P( n_2) = \poly(n_2)$, total running time for all requests of $\tto{n_1 n_2 n_3}$, and error probability at most $1/3$.
\end{theorem}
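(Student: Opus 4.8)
The plan is to adapt the block-decomposition reductions behind Theorems~2.2 and~2.7 of~\cite{HenzingerKNS15}, reducing the \emph{square} OMv (resp.\ OuMv) problem to the rectangular $\gamma$-OMv (resp.\ $\gamma$-OuMv) problem. For the OuMv part, suppose towards a contradiction that there is a $\gamma$-OuMv algorithm $\alg'$ with $\poly(n_2)$ preprocessing, $\tto{n_1 n_2 n_3}$ total time, and error at most $1/3$; I would use it to solve square OuMv on an $N \times N$ matrix $M$ with $N$ online pairs $(\vec u_t,\vec v_t)$ in total time $\tto{N^3}$ with $\poly(N)$ preprocessing, contradicting Theorem~\ref{thm:oumv}. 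Set $n_3 := N$, and choose $n_2 := N$ if $\gamma \le 1$ and $n_2 := \floor{N^{1/\gamma}}$ if $\gamma > 1$; in both cases $n_1 = \floor{n_2^\gamma}$ and $n_2$ are at most $N$ while being $N^{\Omega(1)}$, so $\poly(n_2) = \poly(N)$ and $n_1 n_2 = N^{\Omega(1)}$. After padding $M$ and the online vectors with zeros so that the row- and column-dimensions become multiples of $n_1$ and of $n_2$ respectively (a constant-factor increase of $N$ that preserves every answer), partition $M$ into the $(N/n_1)(N/n_2)$ blocks $M^{(i,j)} \in \{0,1\}^{n_1 \times n_2}$, and let $\vec u_t^{(i)} \in \{0,1\}^{n_1}$ and $\vec v_t^{(j)} \in \{0,1\}^{n_2}$ be the corresponding restrictions of the online vectors.

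The reduction then rests on the Boolean identity $\vec u_t^\top M \vec v_t = \bigvee_{i,j} (\vec u_t^{(i)})^\top M^{(i,j)} \vec v_t^{(j)}$: I would run, in lockstep, one copy of $\alg'$ per block $(i,j)$ on matrix $M^{(i,j)}$ with online sequence $\bigl((\vec u_1^{(i)},\vec v_1^{(j)}),(\vec u_2^{(i)},\vec v_2^{(j)}),\dots\bigr)$, and in each round $t$ answer the square instance by OR-ing the $(N/n_1)(N/n_2)$ outputs of that round. Accounting: there are $N^2/(n_1 n_2)$ copies, each with $\poly(N)$ preprocessing and $\tto{n_1 n_2 N}$ online time, so the online work is $(N^2/(n_1 n_2)) \cdot \tto{n_1 n_2 N} = \tto{N^3}$, the combining work is $O(N^3/(n_1 n_2)) = \tto{N^3}$ since $n_1 n_2 = N^{\Omega(1)}$, and the total preprocessing is $\poly(N)$. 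To beat the per-copy error $1/3$ I would run $O(\log N)$ independent repetitions of each copy and take per-round majorities, so that a union bound over the $\le N^2$ blocks and $N$ rounds drives the overall error below $1/3$; the $O(\log N)$ factor is absorbed by $\tto{\cdot}$, which suppresses polylogarithmic factors. This contradicts Theorem~\ref{thm:oumv}, hence the OuMv conjecture. The $\gamma$-OMv statement then follows: either the identical reduction works with $\gamma$-OMv copies that combine length-$n_1$ output vectors via $(M\vec v_t)|_{\text{row-block }i} = \bigvee_j M^{(i,j)}\vec v_t^{(j)}$, contradicting the OMv conjecture in its standard polynomial-preprocessing-robust form, or, more cheaply, $\gamma$-OMv hardness is inherited from $\gamma$-OuMv hardness via the trivial reduction that solves one $\gamma$-OuMv round by computing $M\vec v_t$ with one $\gamma$-OMv round and then taking a single dot product in $O(n_1)$ extra time.

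I expect the genuinely routine part to be the exponent bookkeeping (checking $N^{1-\gamma}\cdot\tto{N^{2+\gamma}} = \tto{N^3}$ when $\gamma\le 1$, and $N^{1-1/\gamma}\cdot\tto{N^{2+1/\gamma}} = \tto{N^3}$ when $\gamma>1$, together with the combining and preprocessing costs staying polynomially below $N^3$), and the two points that need care are the case split on $\gamma$ --- for $\gamma\le1$ one splits only the rows of $M$ and keeps $n_2=N$, whereas for $\gamma>1$ one must split the columns so that $n_1=\floor{n_2^\gamma}$ does not exceed $N$ --- and the error amplification, since a per-block error of $1/3$ cannot survive a union bound over $\poly(N)$ blocks without it. Everything else is a direct transcription of the argument in~\cite{HenzingerKNS15}.
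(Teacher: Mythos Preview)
Your proposal is correct and follows exactly the block-decomposition reduction of Theorems~2.2 and~2.7 in~\cite{HenzingerKNS15}, which is precisely what the paper defers to (the paper states the theorem with a citation and gives no independent proof). Your treatment of the case split on $\gamma$, the padding, the error amplification via $O(\log N)$ repetitions and a union bound, and the exponent bookkeeping are all in line with that source; the only cosmetic remark is that for $\gamma>1$ one has $n_1=\lfloor\lfloor N^{1/\gamma}\rfloor^\gamma\rfloor$, which can fall slightly below $N$, so in general both row and column splitting occur --- but your generic $(N/n_1)(N/n_2)$-block setup already covers this.
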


It is possible to solve the OMv problem faster than $\Theta(n^3)$.
Green~Larsen and Williams~\cite{LarsenW17} gave a non-combinatorial OMv algorithm that runs in $O(n^3/2^{\Omega(\sqrt{\log n})})$ time.
Williams~\cite{Williams07} gave a combinatorial algorithm that, after $O(n^{2+\eps})$ preprocessing, solves any OMv round in $O(n^2/\log^2 n)$ time.
Chakraborty, Kamma and Larsen~\cite{ChakrabortyKL18} settled the cell probe complexity, showing that any data structure storing $r\in (n,n^2)$ bits must have a query time $t$, i.e. the number of reads from memory cells, with $r\cdot t = \Omega(n^3)$ and that this lower bound is tight, by giving an algorithm with $r=t=\tO{n^{3/2}}$ cell probes.

\subsection{Sparse OMv Conjecture}

We show in this section that the difficulty of the OMv and OuMv problem ``degrades gracefully'' with increased sparsity of query vectors.
For any integer $n$, let $[n]$ denote the set $\set{1, 2, \dotsc, n}$.

    The {\bf \em support} of a vector $\vec{v} \in \R^n$ is the set of indices where $\vec{v}$ is non-zero, i.e. 
    
    \begin{equation}
        \supp(\vec{v}) = \set{~i \in [n] ~:~ \vec{v}[i] \neq 0~}~,
    \end{equation}
    
    and the {\bf \em restriction} $\vec{v} \mid_{S}$ of $\vec{v}$ to an index-subset $S$ is the vector from $\R^n$ that has in the $k$-th component
    
    \begin{equation}
        \left(\vec{v} \mid_S\right)[k] = \begin{cases}
            \vec{v}[k] & k \in S \\
            0 & k \not\in S
        \end{cases} \quad.
    \end{equation}

Next we define the problem variants that have sparse input vectors, with respect to fixed sets of indices.

\begin{definition}[Sparse $S$-$\gamma$-OMv and $S$-$\gamma$-OuMv]
    The \emph{$S$-$\gamma$-OMv} problem differs from the $\gamma${-OMv} problem only by having an additional input $S_2 \subseteq [n_2]$
    of size $|S_2| \leq {n_2^t}$ for some $t \in (0,1]$, which is given during the preprocessing phase. %
    In the online phase, each of the $n_3$ query vectors $\vec{v}_i$ must fulfill support $\supp(\vec{v}_i) \subseteq S_2$.

    The \emph{$S$-$\gamma$-OuMv} problem differs from the $\gamma$-OuMv problem only by having an additional input $S_1 \subseteq [n_1]$ and $S_2 \subseteq [n_2]$ of size $|S_1| \leq {n_1^{t}}$ and $|S_2| \leq {n_2^t}$ for some $t \in (0,1]$, which are given during the preprocessing phase. %
    In the online phase, each of the $n_3$ pairs of query vectors $(\vec{v}_i,\vec{u}_i)$ has support $\supp(\vec{u}_i) \subseteq S_1$ and $\supp(\vec{v}_i) \subseteq S_2$.

\end{definition}

Clearly, each $S$-$\gamma$-OMv query round can be answered in $O(n_2^{\gamma+t})$ time and each $S$-$\gamma$-OuMv query round can be answered in $O(n_2^{\gamma t+t})$ time.

\begin{conjecture}[$S$-$\gamma$-OMv and $S$-$\gamma$-OuMv]
    \label{thm:sparse-omv-conjecture}
    Let $n_2, n_3, t, \gamma$ be parameters for the $S$-$\gamma$-OMv and $S$-$\gamma$-OuMv problem.
    
    There is no $\tto{n_2^{\gamma+t} n_3}$-time algorithm that solves the $S$-$\gamma$-OMv problem with an error probability of at most $1/3$ after processing in time  polynomial in $n_2$.

    There is no $\tto{n_2^{\gamma t + t} n_3}$-time algorithm that solves the $S$-$\gamma$-OuMv problem with an error probability of at most $1/3$ after processing in time  polynomial in $n_2$.

\end{conjecture}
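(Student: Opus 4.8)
I would prove both claims by padding reductions from the rectangular hardness results of \Cref{thm:hardness-gen-oumv}, using that — as far as the (vector-)matrix-vector product is concerned — a sparse instance with ambient dimension $n_2$ but with query support confined to a fixed set of size $n_2^t$ is nothing but a \emph{dense rectangular} instance on the smaller dimensions. Concretely, the plan is to show that a $\tto{n_2^{\gamma+t}n_3}$-time algorithm for $S$-$\gamma$-OMv would yield an algorithm for $(\gamma/t)$-OMv that beats \Cref{thm:hardness-gen-oumv}, and that a $\tto{n_2^{\gamma t+t}n_3}$-time algorithm for $S$-$\gamma$-OuMv would yield an algorithm for $\gamma$-OuMv that beats \Cref{thm:hardness-gen-oumv}; since that theorem follows from the OMv conjecture (via \Cref{thm:oumv}), this establishes \Cref{thm:sparse-omv-conjecture}. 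The two cases differ in one respect: sparsifying only the query side rescales the rectangularity exponent from $\gamma$ to $\gamma/t$, whereas sparsifying \emph{both} sides by the same $t$ preserves $\gamma$.

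\textbf{The OMv case.} Given a $\gamma'$-OMv instance with $\gamma':=\gamma/t$ — an $m_1\times m_2$ Boolean matrix $M'$ with $m_1=\floor{m_2^{\gamma'}}$ and online vectors $\vec w_1,\dots,\vec w_{m_3}\in\{0,1\}^{m_2}$ — I would set the ambient dimension $n_2$ to be (essentially) the smallest integer with $n_2^t\ge m_2$, adding a constant amount of slack so that also $n_1:=\floor{n_2^\gamma}\ge m_1$; then $n_1=\Theta(n_2^\gamma)=\Theta(m_2^{\gamma/t})=\Theta(m_1)$ and $n_2^t=\Theta(m_2)$. Fixing $S_2\subseteq[n_2]$ of size exactly $m_2$, identified with $[m_2]$, I would build the $n_1\times n_2$ matrix $M$ by copying $M'$ into rows $[m_1]$, columns $S_2$, and zeros elsewhere, and feed each $\vec w_i$ (which has support in $S_2$) as the $i$-th query vector; then the first $m_1$ coordinates of $M\vec w_i$ are exactly $M'\vec w_i$. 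Building $M$ costs $\poly(n_2)=\poly(m_2^{1/t})=\poly(m_2)$ preprocessing, and answer extraction costs $O(m_1 m_3)$ total online — both negligible. Since $n_2^{\gamma+t}=n_2^\gamma\cdot n_2^t=\Theta(m_1 m_2)$ and $n_3=m_3$, a $\tto{n_2^{\gamma+t}n_3}$ total time would become, under $n_2=\Theta(m_2^{1/t})$, a $\tto{m_1 m_2 m_3}$-time algorithm for $\gamma'$-OMv (a polynomial saving $n_2^{-\eps}$ turns into the polynomial saving $m_2^{-\eps/t}$ because $1/t$ is a fixed constant, and an $n_3^{-\eps}$ saving is unchanged), contradicting \Cref{thm:hardness-gen-oumv}.

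\textbf{The OuMv case.} I would proceed the same way but sparsify both sides by $t$. From a $\gamma$-OuMv instance ($M'$ of size $m_1\times m_2$, $m_1=\floor{m_2^\gamma}$, online pairs $(\vec a_i,\vec b_i)\in\{0,1\}^{m_1}\times\{0,1\}^{m_2}$), I would pick $n_2$ (again with slack) so that $n_2^t\ge m_2$; writing $n_1=\floor{n_2^\gamma}$ one has $n_1^t=\Theta((n_2^t)^\gamma)=\Theta(m_2^\gamma)=\Theta(m_1)$, so I can also ensure $n_1^t\ge m_1$. Taking $S_1\subseteq[n_1]$ with $|S_1|=m_1\le n_1^t$ and $S_2\subseteq[n_2]$ with $|S_2|=m_2\le n_2^t$, I would build $M$ with $M|_{S_1\times S_2}=M'$ (the rest irrelevant, being never probed), and feed $(\vec a_i,\vec b_i)$ as the $i$-th query pair, obtaining $\vec u_i^\top M\vec v_i=\vec a_i^\top M'\vec b_i$. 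Since $n_2^{\gamma t+t}=(n_2^t)^{\gamma+1}=\Theta(m_2^{\gamma+1})=\Theta(m_1 m_2)$ and $n_3=m_3$, a $\tto{n_2^{\gamma t+t}n_3}$ total time would become a $\tto{m_1 m_2 m_3}$-time algorithm for $\gamma$-OuMv, contradicting the OuMv part of \Cref{thm:hardness-gen-oumv}, hence the OMv conjecture.

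\textbf{Main obstacle.} The reductions are almost transparent; the only subtlety I expect is the bookkeeping around \Cref{def:tto} under the nonlinear substitution $n_2\leftrightarrow m_2^{1/t}$ combined with the two independent floors $n_1=\floor{n_2^\gamma}$ and $m_1=\floor{m_2^{\gamma'}}$ (resp.\ $m_1=\floor{m_2^\gamma}$). I would need to fix a choice of $n_2$ for which $n_2^t\ge m_2$, $n_1\ge m_1$ (resp.\ $n_1^t\ge m_1$) and simultaneously $n_1,n_1^t,n_2^t$ are all within constant factors of $m_1,m_1,m_2$ — an off-by-one of the type $(m_2^{\gamma/t}-1)^t$ versus $m_2^\gamma$ is exactly what forces the constant slack — and then verify that a $\Theta(\cdot)$ change of base still converts a $\tto{\cdot}$-improvement into a genuinely polynomial (not merely super-polylogarithmic) improvement, which is where the hypothesis that $t$ and $\gamma$ are fixed constants gets used. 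Constant factors, the $\poly(m_2)$ cost of writing down the zero-padded $M$, and the $O(m_1 m_3)$ answer extraction are all absorbed by $\tto{\cdot}$, so I do not anticipate any further difficulty.
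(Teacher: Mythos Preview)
Your proposal is correct, but it takes a genuinely different route from the paper's proof of \Cref{thm:sparse-oumv-hardness}. The paper argues by \emph{decomposition}: given a dense $\gamma$-OMv (resp.\ $\gamma$-OuMv) instance on the \emph{same} ambient dimensions, it partitions $[n_2]$ (resp.\ both $[n_1]$ and $[n_2]$) into $O(n_2^{1-t})$ (resp.\ $O(n_1^{1-t}\cdot n_2^{1-t})$) blocks of size $\lfloor n_2^t\rfloor$, solves one sparse sub-instance per block using the hypothetical fast sparse solver, and recovers the dense answer by summing. Your argument instead \emph{pads}: it takes a single smaller dense instance (with exponent $\gamma/t$ for OMv, the same $\gamma$ for OuMv), embeds it into one sparse instance with a blown-up ambient dimension $n_2\approx m_2^{1/t}$, and reads off the answer.

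Both work. The paper's approach keeps $\gamma$ fixed throughout and avoids all floor juggling, at the price of invoking the sparse solver $\Theta(n_2^{1-t})$ (or $\Theta(n_1^{1-t}n_2^{1-t})$) times and needing the trivial observation that the summing overhead and the $\ell$-fold preprocessing are absorbed. Your approach calls the sparse solver once and needs no recombination, but pushes the work into the parameter translation: choosing $n_2$ with enough slack that simultaneously $n_2^t\ge m_2$, $n_1\ge m_1$ (resp.\ $n_1^t\ge m_1$), and all of $n_1,n_2^t,n_1^t$ stay within constant factors of their targets, then verifying that a $\tto{\cdot}$-improvement survives the substitution $n_2\mapsto\Theta(m_2^{1/t})$. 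You have correctly identified this as the only real obstacle, and your sketch of why it goes through (the saving $n_2^{-\eps}$ becomes $m_2^{-\eps/t}$ with $\eps/t$ still a positive constant) is right.
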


For the $S$-OMv and $S$-OuMv problems setting $n_1 = n_2 = n_3 = n$, the above conjecture is equivalent to saying there is no $\tto{n^{2 + t}}$ algorithm for the $S$-OMv problem and no $\tto{n^{1 + 2t}}$ algorithm for the $S$-OuMv problem.

\begin{theorem}
    \label{thm:sparse-oumv-hardness}
    Conjecture \ref{thm:sparse-omv-conjecture} for $S$-$\gamma$-OMv ($S$-$\gamma$-OuMv) is true if Conjecture \ref{thm:omv-conjecture} for $\gamma$-OMv ($\gamma$-OuMv) is true.
\end{theorem}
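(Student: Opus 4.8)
The plan is to prove both parts by the same \emph{coordinate‑partitioning} reduction, run in contrapositive form: a too‑fast algorithm for the sparse problem, simulated in parallel on polynomially many disjoint coordinate blocks, would solve the dense $\gamma$-OMv (resp. $\gamma$-OuMv) problem faster than Theorem~\ref{thm:hardness-gen-oumv} permits, contradicting the OMv conjecture. So it suffices to build, from a hypothetical fast $S$-$\gamma$-OMv (resp. $S$-$\gamma$-OuMv) algorithm, a $\gamma$-OMv (resp. $\gamma$-OuMv) algorithm with polynomial preprocessing and total online time $\tto{n_1 n_2 n_3}$, $n_1=\floor{n_2^\gamma}$.

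For $S$-$\gamma$-OMv, suppose there are $t\in(0,1]$ and an algorithm $\alg$ for $S$-$\gamma$-OMv with polynomial preprocessing, total online time $\tto{n_2^{\gamma+t}n_3}$, and error $\le 1/3$. Given a $\gamma$-OMv instance with $M\in\{0,1\}^{n_1\times n_2}$, I would partition the columns $[n_2]$ into $m=\Theta(n_2^{1-t})$ blocks $S^{(1)},\dots,S^{(m)}$ of size $\le n_2^t$ and, in preprocessing, instantiate one copy $\alg^{(j)}$ of $\alg$ per block on the \emph{same} matrix $M$ with support set $S_2=S^{(j)}$ (still $\poly(n_2)$ preprocessing, since $m$ is polynomial). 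On each online query $\vec v_i$, feed the restriction $\vec v_i \mid_{S^{(j)}}$ (a legal $S$-$\gamma$-OMv query, as $\supp(\vec v_i\mid_{S^{(j)}})\subseteq S^{(j)}$) to $\alg^{(j)}$ and output $M\vec v_i=\bigvee_{j\in[m]}M(\vec v_i\mid_{S^{(j)}})$; the answer is produced before $\vec v_{i+1}$, so the reduction is online‑compatible. The copies together take $m\cdot\tto{n_2^{\gamma+t}n_3}=\tto{n_2^{\gamma+1}n_3}=\tto{n_1 n_2 n_3}$ — the sparsity exponent $t$ cancels exactly — and the per‑round restriction/OR overhead $O(n_2+n_1 m)$ summed over $n_3$ rounds is $\tto{n_1 n_2 n_3}$ as well. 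This contradicts Theorem~\ref{thm:hardness-gen-oumv}.

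For $S$-$\gamma$-OuMv the reduction is identical with a two‑sided partition: split $[n_1]$ into $\Theta(n_1^{1-t})$ blocks and $[n_2]$ into $\Theta(n_2^{1-t})$ blocks, and spawn one copy of the assumed $\tto{n_2^{\gamma t+t}n_3}$-time algorithm for each of the $\Theta(n_1^{1-t}n_2^{1-t})=\Theta(n_2^{(\gamma+1)(1-t)})$ block pairs $(S^{(a)},T^{(b)})$, with support sets $S_1=S^{(a)}$, $S_2=T^{(b)}$. On query $(\vec u_i,\vec v_i)$ one returns $\vec u_i^\top M\vec v_i=\bigvee_{a,b}(\vec u_i\mid_{S^{(a)}})^\top M(\vec v_i\mid_{T^{(b)}})$, which holds since $\vec u^\top M\vec v=\bigvee_{k,\ell}\vec u[k]M[k,\ell]\vec v[\ell]$ and the block pairs partition the index pairs $(k,\ell)$. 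The copies take $\Theta(n_2^{(\gamma+1)(1-t)})\cdot\tto{n_2^{(\gamma+1)t}n_3}=\tto{n_2^{\gamma+1}n_3}=\tto{n_1 n_2 n_3}$ (again $t$ cancels), the $O(n_3(n_1+n_2+n_1^{1-t}n_2^{1-t}))$ overhead is dominated, preprocessing stays polynomial, and this contradicts the $\gamma$-OuMv hardness of Theorem~\ref{thm:hardness-gen-oumv}, which follows from the OMv conjecture via Theorem~\ref{thm:oumv}.

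I expect the one genuinely delicate part to be the bookkeeping, which I would handle via two observations. First, for the running‑time accounting: multiplying a polynomially‑improved bound $\tto{n_2^{\gamma+t}n_3}$ by the fixed polynomial $m=\Theta(n_2^{1-t})$ (number of blocks) is still polynomially improved in the sense of Definition~\ref{def:tto}, and this is precisely why the exponent $t$ must cancel to give $\tto{n_2^{\gamma+1}n_3}$. Second, for the error probability: since there are $\poly(n_2)$ copies and $n_3\le\poly(n_2)$ rounds, I would first amplify each copy to error $1/\poly(n_2)$ by $O(\log n_2)$ independent repetitions plus majority vote — costing only a $\polylog$ factor, hence still $\tto{n_1 n_2 n_3}$ — and then take a union bound over all copies and rounds to keep the overall error $\le 1/3$. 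Everything else (block counting up to constant factors from floors, and the restriction/combination steps) is routine simulation.
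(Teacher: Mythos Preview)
Your proposal is correct and takes essentially the same coordinate-partitioning approach as the paper: partition $[n_2]$ (and, for OuMv, also $[n_1]$) into $\Theta(n_2^{1-t})$ blocks of size $\floor{n_2^t}$, run the assumed sparse algorithm on each block, and combine. The paper's proof is terser and omits the error-amplification step you include; your $O(\log n_2)$-repetition-and-majority argument is a reasonable addition that the paper leaves implicit.
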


\begin{proof}
    Consider an $\gamma$-OMv instance with the parameters $n_1, n_2, n_3$, matrix $M$ and vectors $\set{\vec{v}_k}$.
    Recall that $n_1=\floor{n_2^\gamma}$.
    
    For contradiction, suppose there exists a $t \in (0, 1)$  %
    where there is an algorithm solving $S$-$\gamma$-OMv instances in time $\tto{n_1 n_2^t n_3}$.
    Partition the set $[n_2]$ into $\ell = O(n_2^{1 - t})$ sets, where $\ell - 1$ have size $\floor{n_2^t}$ and one has size at most $\floor{n_2^t}$.
    Label each set $S_1, S_2, \dotsc, S_{\ell}$.
    Then, for each query $\vec{v}_k$, construct $\ell$ restricted vectors $\vec{v}_{k, i} := \vec{v}_k \mid_{S_i}$ by taking only the non-zero entries of indices in $S_i$.
    Since we can answer the $\gamma$-OMv query $\vec{v}_k$ by summing the $\ell$ restricted results, i.e.
    
    \begin{equation*}
        M \vec{v}_k = \sum_{i = 1}^{\ell} M \vec{v}_{k, i}~,
    \end{equation*}
    we can solve the entire $\gamma$-OMv instance in time
        $O(n_2 n_3)+\ell \cdot\tto{ n_1 n_2^t  n_3} = \tto{n_1 n_2 n_3}$, 
    since (1) solving with the assumed algorithm takes $\ell \cdot \tto{n_1 n_2^t n_3 }$ time and (2) computing those sums takes $O(n_2)$ time in each round and $O(n_2 n_3) = \tto{n_1 n_2 n_3}$ since $\gamma>0$. 
    This contradicts the hardness of $\gamma$-OMv under the OMv conjecture. 

    Consider now an $\gamma$-OuMv instance.
    Partition $[n_1]$ into $\ell_1 = O(n_1^{1 - t})$ sets where $\ell_1 - 1$ have size $\floor{n_1^t}$ and one has size at most $\floor{n_1^t}$.
    Partition $[n_2]$ into $\ell_2 = O(n_2^{1 - t})$ sets where $\ell_2 - 1$ have size $\floor{n_2^t}$ and one has size at most $\floor{n_2^t}$.
    Analogously, we construct $\ell_1$ many vectors $\vec{u}_{k, i} := \vec{u}_k \mid_{S'_i}$ and $\ell_2$ many vectors $\vec{v}_{k,j} := \vec{v}_k \mid_{S_j}$ given the OuMv vectors $(\vec{u}_k, \vec{v}_k)$.
    Then, we can compute,
    
    \begin{equation*}
        \vec{u}^\top_k M \vec{v}_k = \sum_{i, j} \vec{u}_{k, i}^\top M \vec{v}_{k, j}~,
    \end{equation*}
    
    thus solving the $\gamma$-OuMv instance in time
        $O(\ell_1 \ell_2 n_3) + \ell_1 \ell_2 \tto{ n_1^t n_2^t n_3} = \tto{n_1 n_2 n_3}$,   %
    contradicting the hardness of $\gamma$-OuMv under the OuMv conjecture.
\end{proof}

\section{Locally Correctable Problems: Lower Bounds for List-accurate Predictions}
\label{sec:locally-correctable}
We show in this section that certain problems, which we formally define in Definition~\ref{def:locally-correctable}, allow for remarkably strong lower bounds, in contrast to our general reductions in Section~\ref{sec:prediction-models}.

\subsection{Preliminaries: Edge Updates in Dynamic Graphs}

We will primarily focus on dynamic graphs with edge updates.
Formally, a special case of Definition \ref{def:gen-dynamic-model} is the classic edge update model for problems in dynamic $n$-vertex graphs.

\begin{definition}[Edge-Updates and Queries in Dynamic Graphs]
    \label{def:edge-flipping-model}
    Let $\problem$ be a dynamic graph problem.
    Let $V$ be a set of $n$ vertices.
    Let $\domain(V) = \set{(u, v) \in V \times V \given u \neq v}$ denote the set of possible edge flip updates.
    In an undirected graph, $\domain$ contains all unordered pairs of vertices, while in a directed graph $\domain$ contains all ordered pairs.
    $\queries(V)$ denotes the set of queries that are possible for $\cal P$.
    When the underlying graph is clear, we omit $V$ and write $\domain, \queries$.
    In the pre-processing step, the algorithm receives as input an initial graph $G_0$ on vertices $V$.
    At each time step $t$, the algorithm receives some request $\rho_t \in \domain \cup \queries$.
    When given a query $\rho_t \in \queries$, the algorithm must answer the query correctly on the current graph $G_t$, obtained by applying request sequence $(\rho_1, \rho_2, \dotsc, \rho_{t - 1})$ to the initial graph $G_0$.
    The query must be answered before the following request $\rho_{t + 1}$ is revealed.
\end{definition}

In the Maximum Matching problem for example, the query set $\queries$ consists of a single element $q$, resembling \emph{`What is the size of a maximum matching in the current graph?'}.
In above's edge update model of dynamic graphs, a sequence of $T$ requests (updates or queries) arrive in an online manner, one request after the other.
To study the potential of algorithms with predictions for the offline-online gap of dynamic problems, we assume that the algorithm is given in advance, i.e. for pre-processing, some form of prediction for the $T$ requests in the online phase.

For a dynamic graph $G$ on $n$ vertices and an update sequence $\rho = (\rho_1, \rho_2, \dotsc, \rho_T)$, we denote with $G_0 = (V, E_0)$ the  initial graph and with $G_t(\rho) = (V, E_t(\rho))$ the graph after applying the $t$-th request of $\rho$, i.e. $E_t(\rho)$ is the edge set after applying all updates in the first $t$ requests to $E_0$.
When the request sequence is clear, we omit $\rho$ and write $G_t = (V, E_t)$.

We also use the notion of an \emph{edge flip}: An edge flip of edge $e$ inserts $e$ if it is currently not in the graph and removes it otherwise.

\subsection{An OuMv Reduction for the \texorpdfstring{$\striangleF$}{$s$-triangle} Problem}
\label{sec:striangle-lb-list}
We now motivate our definition of locally correctable problems by the example of a simple, conditional lower bound construction for the $\striangleF$ problem.

In the $\striangleF$ problem, each query asks to report the number of triangles in a dynamic $n$ vertex graph that contain a fixed vertex $s$ (cf. \cite{HenzingerKNS15}).
As a warm up, we give a lower bound in for the online setting (without predictions).

\begin{figure}[b]
    \centering
    \includegraphics[width=0.7\textwidth]{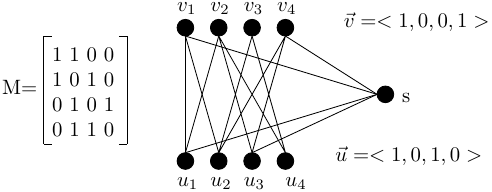}
    \caption{A small example of the OuMv lower bound construction for the $\striangleF$ problem. %
    The diagram shows that a $1$ in the matrix or in vectors corresponds to an edge existing in the graph.}
    \label{fig:sTriangle}
\end{figure}

\begin{theorem}
\label{thm:striangle-lb-worst}
There is no algorithm solving the $\striangleF$ problem in dynamic $n$ vertex graphs with update time $U(n)$, query time $Q(n)$, and pre-processing time $P(n)= poly(n)$, satisfying
\begin{equation*}
    n^2 U(n) + nQ(n) = \tto{n^{3}} ~,
\end{equation*}
if the OuMv conjecture (Conjecture \ref{thm:omv-conjecture}) is true. 
\end{theorem}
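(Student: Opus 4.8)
The plan is to give an explicit reduction from OuMv to the $\striangleF$ problem in the edge-update model, using a tripartite graph construction that is essentially the standard OMv-reduction pattern of Henzinger et al.\ adapted so that $s$-triangles count vector-matrix-vector products. First I would fix an OuMv instance: a Boolean matrix $M \in \{0,1\}^{n \times n}$ and online pairs $(\vec u_t, \vec v_t)_{t\in[n]}$. I would build a graph on vertex set $\{s\} \cup A \cup B$, where $A = \{a_1,\dots,a_n\}$ and $B = \{b_1,\dots,b_n\}$ are two independent sets of $n$ vertices each, plus the distinguished vertex $s$. The matrix $M$ is encoded once and for all by the bipartite edges between $A$ and $B$: include edge $\{a_i, b_j\}$ iff $M_{ij} = 1$. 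This uses only the polynomial preprocessing budget $P(n)$ since it is at most $n^2$ edge insertions performed on the initial graph $G_0$ (alternatively, $G_0$ already contains these edges). The vectors are encoded by the edges incident to $s$: for round $t$, edge $\{s, a_i\}$ is present iff $\vec u_t[i] = 1$, and edge $\{s, b_j\}$ is present iff $\vec v_t[j] = 1$. Then a triangle through $s$ is exactly a pair $(a_i, b_j)$ with $\{s,a_i\}, \{s,b_j\}, \{a_i,b_j\}$ all present, i.e.\ $\vec u_t[i] = 1 = \vec v_t[j]$ and $M_{ij} = 1$; hence the number of $s$-triangles is $\sum_{i,j} \vec u_t[i]\, M_{ij}\, \vec v_t[j]$, and this is nonzero iff $\vec u_t^\top M \vec v_t = 1$ in the Boolean sense. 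So one $\striangleF$ query answers one OuMv round (after thresholding at zero).

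The next step is the bookkeeping on update counts. For each round $t$, I must transition the $s$-incident edges from the configuration for round $t-1$ to the configuration for round $t$; in the worst case this is $2n$ edge flips (at most $n$ on the $A$-side and $n$ on the $B$-side), and I can always realize the change in at most $2n$ edge-flip updates by flipping exactly the coordinates where $\vec u_{t-1}, \vec u_t$ differ and where $\vec v_{t-1}, \vec v_t$ differ (or just clearing all $s$-edges and re-inserting, which is also $O(n)$ flips). After these updates I issue one query. Summing over all $n$ rounds gives $u(n) = O(n)$ updates and $q(n) = O(1)$ queries per round, so $u(n)\cdot n = O(n^2)$ total updates and $q(n)\cdot n = O(n)$ total queries across the whole instance, issued on a graph with $n' = \Theta(n)$ vertices. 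Therefore an algorithm with update time $U(n')$ and query time $Q(n')$ solves the OuMv instance in total time $P(n') + O(n^2)\cdot U(n') + O(n)\cdot Q(n') + O(n^2)$, where the last term accounts for reading $M$ and the vectors and for constructing the update sequence. If $n^2 U(n) + n Q(n) = \tto{n^3}$ and $P(n) = \poly(n)$, this total time is $\tto{n^3}$, contradicting Theorem~\ref{thm:oumv} (the OuMv conjecture). Hence no such algorithm exists.

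I would present this as follows: (1) describe the graph construction and the preprocessing; (2) state and prove the correctness claim that the number of $s$-triangles after the round-$t$ updates equals $\sum_{i,j}\vec u_t[i] M_{ij} \vec v_t[j]$, which is a one-line case analysis on which of the three edges of a potential triangle $\{s,a_i,b_j\}$ are present; (3) bound the update and query counts as $O(n)$ and $O(1)$ per round; (4) assemble the timing contradiction with Theorem~\ref{thm:oumv}, noting that the OuMv conjecture is phrased with $n_1=n_2=n_3=n$ and that $n' = \Theta(n)$ so the asymptotics are unaffected, and taking care that the one-sided error probability $1/3$ is preserved since the reduction is deterministic. A minor point to handle cleanly is that the $\striangleF$ answer is a count whereas OuMv wants a Boolean bit; this is resolved by outputting $1$ iff the reported count is positive.

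The main obstacle, such as it is, is not conceptual but a matter of getting the accounting exactly right: one must make sure the $n^2$ matrix edges are charged to preprocessing (not to the online update budget), that the per-round edge flips are genuinely $O(n)$ and not more, and that the final inequality lines up with the $\tilde{\tilde{o}}$-notation of Definition~\ref{def:tto} — specifically that $n^2 U(n) + nQ(n) = \tto{n^3}$ implies, for the $n$ rounds, a total online time that is polynomially below $n^3$, which combined with polynomial preprocessing still contradicts Theorem~\ref{thm:oumv}. There is also the small subtlety that the statement uses $U(n), Q(n), P(n)$ as functions of the graph's vertex count while the OuMv parameter is also called $n$; since the graph has $\Theta(n)$ vertices this is harmless, but it should be stated explicitly so the substitution $U(\Theta(n)) = \tto{\cdot}$ is justified. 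This construction is exactly the one depicted in Figure~\ref{fig:sTriangle}, and it is the prototype that motivates the later abstract notion of a locally correctable problem, where the single edge-flip update is replaced by "perform the real update or insert a junk update."
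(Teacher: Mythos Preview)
Your proposal is correct and essentially identical to the paper's proof: the same tripartite graph on $\{s\}\cup A\cup B$ with $M$ encoded as $A$--$B$ edges and each round's vectors encoded by the $s$-incident edges, yielding $O(n)$ updates and one query per round. Your write-up is in fact slightly more careful than the paper's about the $n' = \Theta(n)$ substitution and the Boolean thresholding of the count.
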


\begin{proof}
We design a sequence of $\striangleF$ updates such that any algorithm correctly answering all queries solves the OuMv problem.

{\bf Setup Phase.}
Consider an $1$-OuMv instance with parameters $n_1, n_2, n_3 = n$.
We construct a graph $G_0$ on $2n + 1$ vertices $\set{u_1, \dotsc, u_{n}} \cup \set{v_1, \dotsc, v_{n}} \cup \set{s}$.
$G_0$ contains no edges, beside edges of form $\set{u_i, v_j} \in E$ if and only if $M[i][j] = 1$.

{\bf Dynamic Phase.}
In each round $k$, we are given vectors $\vec{u}_k, \vec{v}_k$.
We use at most $2n$ updates to ensure that $\set{s, u_i} \in E$ if and only if $\vec{u}_k[i] = 1$ and $\set{s, v_j} \in E$ if and only if $\vec{v}_k[j] = 1$.
For any edge $\set{u, v}$, let $\chi(u, v)$ be the indicator for whether $\set{u, v} \in E$.
To answer the query of the OuMv round, we observe that 

\begin{equation*}
    \vec{u}_k^\top M \vec{v}_k = \sum_{i, j \in [n]} \vec{u}_k[i] \cdot M[i][j] \cdot \vec{v}_k[j] = \sum_{i, j \in [n]} \chi(u_i, s) \cdot \chi(u_i, w_j) \cdot \chi(w_j, s) = \striangleF~.
\end{equation*}
Thus, the number of triangles containing $s$ is $\striangleF > 0$ if and only if the OuMv round must report the $1$-bit.
Overall, this takes $n$ queries and $O(n^2)$ updates to answer the OuMv rounds.
Thus, we cannot have for the $\striangleF$ problem that

\begin{equation*}
    n^2 U(n) + n Q(n) = \tto{n^{3}}~,
\end{equation*}
if Conjecture~\ref{thm:omv-conjecture} is true.
\end{proof}

Next, we show a lower bound against algorithms with list-accurate predictions.
Unless the predictions are perfect, i.e. $2^0$-list accurate, any algorithm solving the $\striangleF$ problem is still subject to the same OuMv-based lower bound of Theorem \ref{thm:striangle-lb-worst}.

\begin{theorem}
    \label{thm:striangle-list-pred-lb}
    Suppose there is an algorithm $\innerAlg$ that, given $2^1$-list accurate predictions, solves the $\striangleF$ problem in polynomial preprocessing time $P(n)$, update time $U(n)$, and query time $Q(n)$.
    Then, $U(n)$ and $Q(n)$ cannot satisfy
    
    \begin{equation*}
        n^2 U(n) + n Q(n) = \tto{n^3}~,
    \end{equation*}
    if the OuMv conjecture (Conjecture \ref{thm:omv-conjecture}) is true.
\end{theorem}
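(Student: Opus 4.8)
The plan is to follow the blueprint sketched in the introduction for locally correctable problems, instantiated concretely for $\striangleF$. The key idea is that the OuMv-reduction of Theorem~\ref{thm:striangle-lb-worst} uses, in each round, only a \emph{subset} of the edges $\{s,u_i\}$ and $\{s,v_j\}$ — namely those for which the corresponding coordinate of $\vec u_k$ or $\vec v_k$ is $1$. I would replace this by a \emph{universal request sequence} that, in each round, touches \emph{all} $2n$ of these potential edges: whenever coordinate $\vec u_k[i]=1$ we perform the true edge flip $(s,u_i)$, and whenever $\vec u_k[i]=0$ we instead perform a harmless ``junk'' update — concretely a flip of an edge $\{u_i, v'\}$ to a fresh isolated dummy vertex $v'$ (or a flip of a designated edge that provably cannot be part of any triangle through $s$), and symmetrically for the $v_j$'s. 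Then I would insert a cleanup sub-block after the query that flips back the edges we changed, so that the data structure returns to the ``all edges off'' state before the next round. This yields a fixed pattern: in every round the algorithm sees, for each $i$, one of exactly two possibilities — either $(s,u_i)$ or the junk update at position $i$ — hence a $2$-list accurate prediction is trivially constructible \emph{without knowing} the OuMv instance: the list at each step is just $\{\text{true edge flip},\ \text{junk update}\}$.

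The second step is the self-reduction argument. Suppose $\innerAlg$ solves $\striangleF$ with $2^1$-list accurate predictions in polynomial preprocessing $P(n)$, update time $U(n)$, query time $Q(n)$. Given an arbitrary OuMv instance, an algorithm \emph{without} predictions can, in $\mathrm{poly}(n)$ time during its own preprocessing phase, (i) build $G_0$ encoding $M$, (ii) write down the universal request template and the associated $2$-list accurate prediction (which depends only on $n$, not on the $\vec u_k, \vec v_k$), and (iii) run $\innerAlg$'s preprocessing. Then, as each vector pair $(\vec u_k,\vec v_k)$ arrives online, it instantiates the round's block by picking, at each of the $2n$ update positions, the true flip if the coordinate is $1$ and the junk flip otherwise, feeding these to $\innerAlg$; it reads off $\vec u_k^\top M \vec v_k$ from $\innerAlg$'s answer to the triangle query exactly as in Theorem~\ref{thm:striangle-lb-worst}; and it replays the cleanup block. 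The junk updates must be shown to add no triangle through $s$ — that is the ``locally correctable'' property here, and it is immediate since the dummy vertices $v'$ are never adjacent to $s$. The total running time of this prediction-less OuMv algorithm is $P(n) + n\cdot\big(O(n)\cdot U(n) + Q(n)\big) + \mathrm{poly}(n)$, which under the hypothesis $n^2U(n) + nQ(n) = \tto{n^3}$ is $\tto{n^3}$ overall (the $\mathrm{poly}(n)$ preprocessing is absorbed), contradicting Theorem~\ref{thm:oumv}/Conjecture~\ref{thm:omv-conjecture}.

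The step I expect to be the main obstacle — or at least the one requiring care — is the design of the junk updates so that three things hold simultaneously: (a) the junk update at position $i$ in round $k$ is genuinely distinct from the true flip $(s,u_i)$, so that the $2$-element list is well-defined and contains the actual request; (b) performing junk updates never changes the count $\striangleF$, so the correction in step (iii) is trivial (no subtraction needed); and (c) the cleanup block faithfully restores the state so rounds do not interfere. Using a pool of dummy vertices of size proportional to the round length and flipping edges between the $u_i/v_j$ vertices and these dummies handles (a) and (b) cleanly, but one must check the bookkeeping that the number of updates per round stays $O(n)$ and queries stay $1$ per round, so that the reduction preserves the $n^2U(n)+nQ(n)$ budget. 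A minor subtlety is that the prediction must be \emph{list}-accurate for the \emph{set} of all request sequences the reduction can produce; since the template and the two-element lists are fixed in advance and every actual round-block is coordinate-wise a selection from these lists, this holds by construction. Once these points are nailed down, the contradiction with the OuMv conjecture follows exactly as in the worst-case bound.
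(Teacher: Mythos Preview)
Your proposal is correct and follows essentially the same approach as the paper: build a universal $2n$-long update block per round in which each position carries either the ``real'' flip to $s$ or a harmless junk flip, so that a $2$-list prediction can be written down without seeing the OuMv instance, and then run $\innerAlg$ as a subroutine. The paper's instantiation is slightly cleaner than yours --- it uses a single dummy vertex $t$ (so the junk update at position $i$ is just $\{t,u_i\}$ or $\{t,v_i\}$) and, instead of a cleanup block, decides at each step whether to flip the real edge or the junk edge based on whether the current state already matches $\vec u_k[i]$, letting state carry over between rounds --- but these are cosmetic differences that do not affect the argument or the bound.
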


\begin{proof}
    We will use the structure of the request sequences from the previous proof to show that there is \emph{one} generic $2$-list accurate prediction that is suitable for all OuMv instances.
    
    Consider an OuMv instance of size $n$ with matrix $M$ and query vectors $\set{(\vec{u}_k,\vec{v}_k)}$.
    We will reduce to a $\striangleF$ instance with $2n + 2$ vertices $\set{u_1,\ldots,u_n}\cup\set{v_1,\ldots,v_n}\cup \set{s,t}$.
    Unlike in the last proof, there are \emph{two} special vertices, $s$ and $t$, in this graph.
    \paragraph{Setup Phase.}
    We choose the initial graph to only contain edges that encode the matrix $M$, that is edge $\set{u_i,v_i}$ is present if and only if $M[i][j] = 1$.
    \paragraph{Dynamic Phase.}
    For the $k$-th OuMv query $(\vec{u}_k, \vec{v}_k)$, we proceed as follows.
    For each index $i \in [n]$, we ensure that $\set{s, u_i}$ is an edge if and only if $\vec{u}_k[i] = 1$, and that $\set{s, v_i}$ is an edge  if and only if $\vec{v}_k[i] = 1$.
    To do so, we perform exactly $n$ flips to satisfy the condition for $\vec{u}$, followed by exactly $n$ flips to satisfy the condition for $\vec{v}$.
    Whenever the condition is already satisfied, we flip $\set{t, u_i}$ instead of $\set{s, u_i}$ and $\set{t, v_i}$ instead of $\set{s, v_i}$, so that the flip has no influence on the number of $\striangleI$'s in graph $G$.
    For the queries we again have that,
    
    \begin{equation*}
        \vec{u}_k^\top M \vec{v}_k = \sum_{i, j \in [n]} \vec{u}_k[i] \cdot M[i][j] \cdot \vec{v}_k[j] = \sum_{i, j \in [n]} \chi(u_i, s) \cdot \chi(u_i, v_j) \cdot \chi(w_j, s) = \striangleF~,
    \end{equation*}
    where $\chi(u, v)$ is the indicator variable for the existence of edge $\set{u, v}$.
    The $\striangleF$ queries again exactly answer the OuMv rounds, proving the correctness of our reduction.
    
    For each OuMv round, this requires $O(n)$ updates. Thus, requiring $O(n^2)$ edge flips over all $n$ rounds.
    By the OuMv conjecture, any dynamic algorithm, with polynomial preprocessing time, correctly answering all queries on this $\striangleF$ request sequence cannot satisfy
    
    \begin{equation*}
        n^2 U(n) + n Q(n) = \tto{n^3}~,
    \end{equation*}
    as desired.

    To show the desired lower bound, we show that there is one, generic $2$-list accurate prediction that is suitable for all online OuMv request sequences:
        For each OuMv round $k \in [n]$, we use the prediction
        $\set{\set{s, u_i}, \set{t, u_i}}$ for each $i \in [n]$, followed by
        $\set{(s, v_i), (t, v_i)}$ for each $i \in [n]$, followed by one query.
    By our above discussion, this prediction is a $2$-list accurate prediction, regardless of the OuMv request sequence, and 
    can be constructed in $O(n^2)$ time during preprocessing.
    Therefore, no algorithm solving the $\striangleF$ problem with $2$-list accurate predictions can have update time $U(n)$ and query time $Q(n)$ that satisfy
    
    \begin{equation*}
        n^2 U(n) + n Q(n) = \tto{n^3}~,
    \end{equation*}
     if the OuMv conjecture (Conjecture \ref{thm:omv-conjecture}) is true.
\end{proof}

\subsection{Locally Correctable Dynamic Problems}
Next, we formalize this approach of OuMv lower bounds for algorithms with list accurate predictions in our definition of \emph{locally correctable} problems.

In the previous proof, we observed that every pair of query vectors can be simulated by taking a \emph{subsequence} of a universal update request sequence
    $(s, u_1), \dotsc, (s, u_n), (s, v_1), \dotsc, (s, v_n)$,
followed by one query, where we only flip the edges necessary to ensure the edges $(s, u_i)$ correctly encode $\vec{u}$ and the edges $(s, v_i)$ correctly encode $\vec{v}$.
However, in the general reduction (Theorem~\ref{thm:striangle-lb-worst}), the actual subsequence depends heavily on the specific instance. %
By \emph{augmenting} the graph with one dummy vertex $t$ that remains non-adjacent to $s$, we restricted the number of edge flips needed at any one time step, without affecting the query result, to a list of \emph{two} update requests.
That is flipping $(t, u_i)$ or $(s, u_i)$ for the bits in $\vec{u}$ and flipping $(t, v_i)$ or $(s, v_i)$ for the bits in $\vec v$.
In our $\striangleF$ example, the query computation in the augmented instance immediately yields the correct answer for the non-augmented instance, without further computations needed for \emph{correcting} the query results.

Next, we formally define the locally correctable problems and then prove the generalization of the technique in our reduction below.

\begin{definition}[Locally Correctable Dynamic Problem]
    \label{def:locally-correctable}
    Let $\problem$ be a dynamic problem.
    
    Suppose there is no algorithm for $\problem$ with update time $U(n)$ and query time $Q(n)$ satisfying
    \begin{equation*}
        n_3 \Big( 
        u(n_1, n_2) U(n) + q(n_1, n_2) Q(n) 
        \Big) 
        = \tto{n_1 n_2 n_3}
    \end{equation*}
    if the OuMv conjecture is true, where %
     $n_1, n_2, n_3$ are integers with $n_1 = \floor{n_2^{\gamma}}$ for some constant $\gamma > 0$, functions
     $u, q: \N \times \N \rightarrow \N$, and $n = n(n_1, n_2)$ is the size of the $\problem$ instance in the reduction.
    
    Then, $\problem$ is {\bf locally correctable} if there exists a universal sequence $\bar{\rho} = \bar{\rho}(n_1, n_2)$ of requests from $\domain \cup \queries$, an augmentation function $f$ for pre-processing, and a correction function $g$, satisfying:
    \begin{enumerate}
        \item The sequence $\bar{\rho}$ can be partitioned into $n_3$ subsequences $\bar{\rho} = B_1 \circ \ldots \circ B_{n_3}$, where block $B_k$ contains $u(n_1, n_2)$ updates and $q(n_1, n_2)$ queries.
        \label{cond:local-correctable:partition}
        \item For any $\gamma$-OuMv instance with $n_1 \times n_2$ matrix $M$ and query vector pairs $\set{(\vec{u}_k, \vec{v}_k)}_{k = 1}^{n_3}$ the reduction constructs initial data $D_0$ of problem $\problem$ and a request sequence $\rho$ satisfying:
        \label{cond:local-correctable:reduction}
        \begin{enumerate}
            \item $\rho$ is the concatenation of request sequences $B_1' \circ B_2' \circ \dotsc \circ B_{n_3}'$, where each $B_k'$ is a subsequence of $B_k$. 
            \item For each $\gamma$-OuMv request $(\vec{u}_k, \vec{v}_k)$, the result bit $\vec{u}_k^\top M \vec{v}_k$ can be computed in $O(q(n_1, n_2))$ time based on the answers given to the queries in $B_1' \circ B_2' \circ \dotsc \circ B_k'$.
        \end{enumerate}
        \item {\bf Augmentability:} $f(D_0)$ is an instance of size $O(n)$ and $\domain(D_0) \subset \domain(f(D_0))$.
        \label{cond:local-correctable:f-map}
        \item {\bf Correctability:} There is a non-empty subset $\domain^* \subseteq \domain(f(D_0)) \setminus \domain(D_0)$ such that, for all time steps $t$ and queries $q$, the function $g$ yields $g(q(Y), Y) = q(D_t)$, where $D_t$ is the data structure after request sequence $\rho_{\leq t}$, and $Y$ is the result of request sequence $\rho_{\leq t}$ with arbitrary requests from $\domain^*$ inserted, applied to $f(D_0)$.
        \label{cond:local-correctable:g-map}
        \item $f$ is computable in polynomial time and $g$ is computable in $\tO{1}$ time.
        \label{cond:local-correctable:efficiency}
    \end{enumerate}
\end{definition}

The request sequence $\bar{\rho}$ is universal in the sense that it does not depend on any specific $\gamma$-OuMv instance.
However, the sequence $\bar{\rho}$ depends on the dynamic problem $\problem$ and the reduction from $\gamma$-OuMv to $\problem$.
Specifically, $\bar{\rho}$ consists of all updates that \emph{might} be necessary in the reduction from $\gamma$-OuMv to encode a vector update $(\vec{u}_k, \vec{v}_k)$ into a $\problem$ instance.

The following theorem shows that all problems $\problem$, that have a reduction from OuMv satisfying Definition \ref{def:locally-correctable}, $2$-list accurate predictions offer no improvement over a dynamic algorithm with no predictions.

\begin{theorem}
    \label{thm:locally-correctable-lb}
    Suppose $\problem$ is a dynamic locally correctable problem. 
    Then there is no algorithm solving $\problem$ with $2$-list accurate predictions with update time $U(n)$ and query time $Q(n)$ satisfying
    \begin{equation*}
        n_3 \Big( 
        u(n_1, n_2) U(n) + q(n_1, n_2) Q(n) 
        \Big) 
        = \tto{n_1 n_2 n_3}
    \end{equation*}
    if the OuMv conjecture is true.
\end{theorem}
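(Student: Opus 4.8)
The plan is to mirror the $\striangleF$ argument of Theorem~\ref{thm:striangle-list-pred-lb} in the abstract setting, using the universal sequence $\bar{\rho}$ as the prediction. First I would observe that the definition of locally correctable already bundles the reduction itself (conditions~\ref{cond:local-correctable:reduction}, \ref{cond:local-correctable:f-map}, \ref{cond:local-correctable:g-map}): from a $\gamma$-OuMv instance $(M, \{(\vec{u}_k,\vec{v}_k)\})$ we obtain initial data $D_0$, apply the augmentation $f$ to get $f(D_0)$ of size $O(n)$, and run the request sequence $\rho = B_1' \circ \dots \circ B_{n_3}'$ but \emph{inside} $f(D_0)$, replacing every position of $B_k$ not used by the subsequence $B_k'$ with an arbitrary ``junk'' update from $\domain^*$ (which is nonempty by condition~\ref{cond:local-correctable:g-map}). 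Call this filled-in sequence $\rho^{\mathrm{aug}}$; by construction it visits, in order, exactly the positions of $\bar{\rho} = B_1 \circ \dots \circ B_{n_3}$, so each of its update positions is either the ``true'' request $\bar{\rho}_t$ prescribed by $\bar{\rho}$ or a fixed junk request from $\domain^*$, and its queries coincide with those in $\bar{\rho}$.

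Second I would build the prediction. For the $t$-th position of $\bar{\rho}$: if $\bar{\rho}_t$ is a query, predict the singleton $\{\bar{\rho}_t\}$; if it is an update, predict the $2$-element list $\{\bar{\rho}_t, \delta\}$ for one fixed $\delta \in \domain^*$. This is a $2$-list accurate prediction for the set $\calset$ of all possible sequences $\rho^{\mathrm{aug}}$ arising from $\gamma$-OuMv instances, and — crucially — it does \emph{not} depend on the particular instance, only on $\problem$ and the reduction, so condition~\ref{cond:local-correctable:partition} guarantees it has length $n_3(u(n_1,n_2)+q(n_1,n_2))$ and, by condition~\ref{cond:local-correctable:efficiency} together with the fact that $\bar\rho$ is fixed, it is constructible in polynomial (indeed $O(|\bar\rho|)$) time during preprocessing. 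Then I would run the hypothetical algorithm $\innerAlg$ with $2$-list accurate predictions on input $(f(D_0), \rho^{\mathrm{aug}})$ using this prediction; by condition~\ref{cond:local-correctable:g-map} each query answer $q(Y)$ returned by $\innerAlg$ can be corrected in $\tO{1}$ time via $g$ to the answer $q(D_t)$ on the non-augmented run, and by condition~\ref{cond:local-correctable:reduction}(b) the $k$-th bit $\vec{u}_k^\top M \vec{v}_k$ is recoverable in $O(q(n_1,n_2))$ additional time from the corrected answers in $B_1'\circ\dots\circ B_k'$. Hence the whole $\gamma$-OuMv instance is solved.

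Third I would tally the running time to derive the contradiction. Preprocessing of $\innerAlg$ is polynomial; building $D_0$, applying $f$, and building the prediction is polynomial; the online phase costs $n_3\big(u(n_1,n_2)U(n) + q(n_1,n_2)Q(n)\big)$ for $\innerAlg$, plus $n_3 \cdot q(n_1,n_2)\cdot \tO{1}$ for the corrections, plus $n_3 \cdot O(q(n_1,n_2))$ for the final extraction of the result bits. If the displayed bound $n_3(u(n_1,n_2)U(n)+q(n_1,n_2)Q(n)) = \tto{n_1 n_2 n_3}$ held, the total would be $\tto{n_1 n_2 n_3}$ (the correction/extraction overhead is subsumed since it is $\tO{n_3 q(n_1,n_2)}$ and $q(n_1,n_2)Q(n) \ge q(n_1,n_2)$ as query time is at least constant), contradicting Theorem~\ref{thm:hardness-gen-oumv} under the OuMv conjecture. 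I would close by noting the key point that a single polynomial-time preprocessing step lets an ordinary dynamic algorithm fabricate this universal prediction for itself, so $\innerAlg$ cannot do better than a prediction-less one.

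The main obstacle, and the step I would write most carefully, is verifying that the run on $f(D_0)$ with junk updates genuinely falls within the scope of condition~\ref{cond:local-correctable:g-map}: I must check that inserting \emph{arbitrary} requests from $\domain^*$ at exactly the non-subsequence positions produces precisely the ``$Y$'' object that $g$ is promised to correct, i.e.\ that $\rho^{\mathrm{aug}}_{\le t}$ equals ``$\rho_{\le t}$ with arbitrary $\domain^*$-requests inserted, applied to $f(D_0)$'' for every $t$, and that the junk updates never collide with or invalidate the encoding of $M$ or of $(\vec u_k,\vec v_k)$ — this is exactly the abstraction of the ``flip $(t,u_i)$ instead of $(s,u_i)$'' trick and is where the definition's hypotheses must be invoked in the right order. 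A secondary subtlety is the bookkeeping that the prediction's length matches $T = |\bar\rho|$ exactly so that $\innerAlg$'s interface is respected, and that all the $\tO{1}$ and $O(q)$ overheads are dominated; both are routine once the main reduction is set up.
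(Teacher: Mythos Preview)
Your proposal is correct and essentially identical to the paper's proof: both fill the non-subsequence positions of each $B_k$ with a fixed junk request $x^*\in\domain^*$ on the augmented instance $f(D_0)$, use the $2$-list prediction $\{\bar\rho_t,x^*\}$ (which is instance-independent and polynomial-time constructible), correct query answers via $g$, and derive the contradiction with Theorem~\ref{thm:hardness-gen-oumv}. The only cosmetic difference is that you predict singletons at query positions while the paper uniformly uses the pair $\{\bar\rho_t,x^*\}$ everywhere; both are $2$-list accurate and the argument is unchanged.
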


\begin{proof}
    We begin by constructing a reduction from $\gamma$-OuMv to $\problem$ that admits efficiently computable $2$-list accurate predictions.
    If an efficient algorithm with predictions exists, then we can design a dynamic algorithm without predictions as follows.
    First, we compute the efficiently computable predictions, and then run the algorithm with predictions as a sub-routine, violating the lower bound based on the OuMv conjecture.

    Consider a $\gamma$-OuMv instance with $n_1 \times n_2$ matrix $M$ and vector updates $\set{(\vec{u}_k, \vec{v}_k)}_{k = 1}^{n_3}$.
    By assumption, there is an initial data structure $D_0$ and request sequence $\rho' = B_1' \circ B_2' \circ \dotsc \circ B_{n_3}'$ such that each $\gamma$-OuMv request $(\vec{u}_k, \vec{v}_k)$ can be computed in $O(q(n_1, n_2))$ time given the answers to the queries in $B_1' \circ B_2' \circ \dotsc \circ B_k'$.
    We construct a new request sequence $\rho^* = B_1^* \circ B_2^* \circ \dotsc \circ B_{n_3}^*$ on the augmented initial data structure $f(D_0)$.
    Fix an arbitrary update $x^* \in \domain^*$.
    For each update in $B_k$, $B_k^*$ will contain $\rho_t$ if $\rho_t \in B_k'$ and $x^*$ otherwise.
    Consider a query at time step $t$.
    Let $Y_t$ be the current state of the data structure, that is $f(D_0)$ with request sequence $\rho^*_{\leq t}$ applied.
    By assumption, $g(q(Y_t)) = q(D_{t'})$ where $D_{t'}$ is $D_0$ with $\rho'_{\leq t'}$ applied and $\rho'_{\leq t'} \subset \rho'$ is the longest prefix such that $\rho'_{\leq t'} \subseteq \rho^*_{\leq t}$.
    Then, given the query computations after $B_1^* \circ B_2^* \circ \dotsc \circ B_k^*$, we can compute $\vec{u}_k^\top M \vec{v}_k$ in $\tO{q(n_1, n_2)}$ time as $g$ is efficiently computable and $B_1' \circ B_2' \circ \dotsc \circ B_k' \subseteq B_1^* \circ B_2^* \circ \dotsc \circ B_k^*$ contains all the queries required to compute $\vec{u}_k^T M \vec{v}_k$.

    Next, we claim that there is an efficient prediction for the above reduction.
    That is, consider the prediction $\hat{\rho} = \set{(\rho_t, x^*)}_{t \in [T]}$.
    Since $\rho^*$ contains either $\rho_t$ or $x^*$ at the $t$-th position, this is a $2$-list accurate prediction.
    Furthermore $\hat{\rho}$ is efficiently computable.
    
    Therefore, suppose there is an efficient algorithm with $2$-list accurate predictions.
    Then, given a $\gamma$-OuMv instance, we compute $f(D_0)$ and $\hat{\rho}$ in the preprocessing phase in polynomial time, providing this as the initial input to the algorithm with predictions.
    Then, we compute each vector update using the appropriate query computations from $\rho^*$, therefore obtaining a dynamic algorithm for the $\gamma$-OuMv instance.
    Thus, the update and query times must not satisfy
    
    \begin{equation*}
        n_3 \Big( 
        u(n_1, n_2) U(n) + q(n_1, n_2) Q(n) 
        \Big) 
        = \tto{n_1 n_2 n_3}~,
    \end{equation*}
    if the OuMv conjecture is true.
\end{proof}

\section{Locally Reducible Problems: Lower Bounds for Delay Predictions}
\label{sec:locally-reducible-lb-bounded-delay}
In this section, we will provide a framework for proving trade-off conditional lower bounds against algorithms with bounded delay predictions given a conditional lower bound against online algorithms (without predictions).
To do so, we introduce the notion of \emph{locally reducible} dynamic problems (Definitions \ref{def:fully-dynamic-locally-reducible} and \ref{def:partially-dynamic-locally-reducible}) and show that for this large class of problems, the OuMv-based lower bounds carry through to the setting of algorithms with $d$-delayed predictions.
For each problem, we show that there is a delay threshold (roughly the number of updates and queries used to process one round in the OuMv problem) below which predictions offer no benefit over a generic online algorithm. 
The basic idea is that, when allowed sufficient delay, every OuMv  request sequence can be generated by simply \emph{reordering} one generic request sequence. %
Thus, even with prediction, a locally reducible dynamic problem is still powerful enough to solve any $\gamma$-OuMv instance, and is, thus, still difficult to compute.
This implies that a prediction algorithm does not only need to know \emph{what} operations will happen, but also \emph{when} the operations will happen.
We also show that the lower bound degrades gracefully as the prediction quality surpasses this threshold.

\subsection{Locally Reducible Dynamic Problems}
\label{sec:locally-reducible-problems}

We now define the class of locally reducible dynamic problems. 
Then we show in Theorems \ref{thm:locally-reducible-lower-fully-dynamic} and \ref{thm:locally-reducible-lower-partially-dynamic} that for any locally reducible problem, OuMv-based lower bounds extend to dynamic algorithms with bounded delay predictions.
For a multi-set $S$, let $\uniqueSet(S)$ denote the set of elements that occur in $S$ at least once.
For two request sequences $\rho_1, \rho_2$, let $\rho_1 \circ \rho_2$ denote the concatenation of the two request sequences.
We define also the notion of a cyclic update.

\begin{definition}
    \label{def:cyclic-order}
    Consider a dynamic problem $\problem$ with updates $\domain$ and queries $\queries$.
    An update $x \in \domain$ has {\bf cyclic order} $\order(x)$ if for any request sequence $\rho$, inserting or removing exactly $\order(x)$ identical copies of $x$ into the sequence $\rho$ between indices $i_0, i_1$ does not change the results of any queries
    that are not part of $\rho$ between the indices $i_0$ and $i_1$. 
    We say $x$ is {\bf cyclic} if $1 \leq \order(x) = O(1)$.
\end{definition}

For example, consider an edge insertion and an edge deletion to not be two separate operations, but instead consider it one operation called \emph{(edge)  flipping}  (which inserts the edge if it exists and deletes it if it does not exist). 
Now note that edge flipping has cyclic order $2$.
In order to place strong bounds on the positions of each individual update and query, we will insert redundant updates so that the positions of requests are more predictable.
For example, if at the current time, an edge was predicted to be flipped 2 more times than it has already been flipped, we can flip the edge twice without changing the dynamic graph to correct this prediction error.
Having updates of small cyclic order therefore allow us to insert redundant updates without significantly blowing up the size of the problem instance and therefore weakening our lower bounds.

In a bit more detail, consider a typical OMv-based lower bound.
Let $\problem$ be a dynamic problem for which there is an OuMv lower bound.
First, for some fixed $\gamma$, there is a generic reduction from any arbitrary $\gamma$-OuMv instance with arbitrary parameters $n_1, n_2, n_3$ to an instance of $\problem$ of size $n = n(n_1, n_2)$.
In this reduction, each of the $n_3$ rounds of the $\gamma$-OuMv instance is simulated separately, i.e. for each round there are $u(n_1, n_2)$ updates and $q(n_1, n_2)$ queries for the problem instance $\problem$.
To simulate a single OuMv round, we choose some subset of necessary updates from a universal request block to correctly encode the $\gamma$-OuMv instance into the $\problem$ instance.
To construct our prediction, we predict for each block that the whole universal request set occurs.
Since the request sequence is a subset of the universal set, an update cannot occur in a block \emph{before} it is predicted to.
However, this prediction could very well predict a request to occur in a block long \emph{after} it is predicted to.
For example, in the reduction of Theorem~\ref{thm:striangle-lb-worst} if there is an index $i_0$ such that $\vec{u}_k[i_0] = 0$ for all $k$, then the edge $(s, u_{i_0})$ will never be flipped in the original request sequence.
To solve this, we add (after the query for the vector update $(\vec{u}_k, \vec{v}_k)$ has arrived and before the next vector update arrives)  $\order(x)$ copies of update $x$ whenever an update $x$ has occurred in fewer than $k - \order(x)$ request blocks when simulating the vector update $(\vec{u}_k, \vec{v}_k)$. These redundant requests do not change the result of any query computation and ensure that a request cannot occur more than $\order(x)$ blocks later than it is predicted to.

We now give separate definitions for fully dynamic and partially dynamic problems, beginning with the fully dynamic setting.

\begin{definition}[Fully Dynamic Locally Reducible]
    \label{def:fully-dynamic-locally-reducible}
    Let $\problem$ be a fully dynamic problem with update set $\domain$ and query set $\queries$.
    Suppose there is no algorithm for $\problem$ with update time $U(n)$ and query time $Q(n)$ satisfying
    
    \begin{equation*}
        n_3 \Big( 
        u(n_1, n_2) U(n) + q(n_1, n_2) Q(n) 
        \Big) 
        = \tto{n_1 n_2 n_3}~,
    \end{equation*}
    if the OuMv conjecture is true, where $n_1, n_2, n_3$ are integers satisfying $n_1 = \floor{n_2^{\gamma}}$ for some constant $\gamma > 0$, functions $u, q: \N \times \N \rightarrow \N$, and $n = n(n_1, n_2)$ is the size of the $\problem$ instance in the reduction.
    
    $\problem$ is {\bf $(u, q)$-locally reducible from $\gamma$-OuMv} if there exists a universal request sequence $\bar{\rho} = \bar{\rho}(n_1, n_2)$ of requests satisfying the following properties:
    \begin{enumerate}
        \item The universal request sequence $\bar{\rho}$ consists of $n_3$ identical copies of the sequence $B = B(n_1, n_2)$, indexed $B_1, B_2, \dotsc, B_{n_3}$.
        The request sequence $B$ contains $u(n_1, n_2)$ updates and $q(n_1, n_2)$ queries.
        \label{cond:local-reducible:partition}
        \item For any $\gamma$-OuMv instance with $n_1 \times n_2$ matrix $M$ and vector updates $\set{(\vec{u}_k, \vec{v}_k)}_{k = 1}^{n_3}$ the reduction constructs an initial data structure $D_0$ and request sequence $\rho$ satisfying:
        \label{cond:local-reducible:omv}
        \begin{enumerate}
            \item $\rho$ is the concatenation of request sequences $B_1' \circ B_2' \circ \dotsc \circ B_{n_3}'$ where $B_k' = \pi_k(B_k'')$ is an ordering of $B_k''$ for some $B_k'' \subseteq B$.
            \item Each $\gamma$-OuMv request $(\vec{u}_k, \vec{v}_k)$, $\vec{u}_k^\top M \vec{v}_k$ can be computed in $O(q(n_1, n_2))$ time based on the answers given to the queries in $B_1' \circ B_2' \circ \dotsc \circ B_k'$.
        \end{enumerate}
        \item Every update $x \in \bar{\rho}$ in the universal sequence $\bar{\rho}$ is cyclic. 
        \label{cond:local-reducible:cyclic}
    \end{enumerate}
\end{definition}

\begin{remark}
    Our proof assumes that each update has some small finite cyclic order.
    For example, the edge flip operation has order 2.
    Alternatively, we can view an update in a fully dynamic algorithm to have an inverse operation.
    For example, removal and insertion of the same edge are inverse operations.
    The proof of Theorem \ref{thm:locally-reducible-lower-fully-dynamic} for locally reducible fully dynamic problems follows in this case as well.
    Whenever we insert two edge flips in the proof of Theorem \ref{thm:locally-reducible-lower-fully-dynamic}, we can insert an operation and its inverse operation counterpart.
    Both of these sequences of two updates have the desired effect of leaving the underlying data structure unmodified.
\end{remark}

Let us compare the Definitions \ref{def:locally-correctable} and \ref{def:fully-dynamic-locally-reducible}.
In both cases, the universal request sequence $\bar{\rho}$ depends only on the reduction from $\gamma$-OuMv to the dynamic problem $\problem$.
It is universal in the sense that $\bar{\rho}$ is \emph{independent} of any specific $\gamma$-OuMv instance.

In Definition \ref{def:fully-dynamic-locally-reducible}, each block of the request sequence $\rho$ does not have to respect the order of $\bar{\rho}$. 
Each block of the request sequence in the reduction to a locally correctable problem must be a subsequence of the corresponding block in the universal request sequence.
In the reduction to a locally reducible problem, we may instead arbitrarily permute a subsequence of a block of the universal request sequence.
To see why this is the case, observe that a list accurate prediction imposes the constraint that a certain update can occur only at $O(1)$ time steps in each block (since only $2$ updates can occur at a given time step).
Instead, bounded delay predictions allow the update to be placed at any point within a range of the predicted update.
We are therefore free to order the subset of block of requests without being forced to adhere to the original order in the universal sequence.

Furthermore, instead of requiring that any instance can be efficiently augmented to a larger instance with an efficient ``correction" function to the query computations (Conditions \ref{cond:local-correctable:f-map}, \ref{cond:local-correctable:g-map}, and \ref{cond:local-correctable:efficiency}), we now require that each update is cyclic (Condition \ref{cond:local-reducible:cyclic}).
We now give the definition for partially dynamic locally reducible problems.

\begin{definition}[Partially Dynamic Locally Reducible]
    \label{def:partially-dynamic-locally-reducible}
    Let $\gamma > 0$ be a constant and $n_1, n_2, n_3$ be integers satisfying $n_1 = \floor{n_2^{\gamma}}$.
    Let $u, q: \N \times \N \rightarrow \N$.
    Let $\problem$ be an incremental (resp. decremental) dynamic problem with update set $\domain$ and query set $\queries$.

    Suppose there is no algorithm for $\problem$ with update time $U(n)$ and query time $Q(n)$ satisfying,
    \begin{equation*}
        n_3 \Big( u(n_1, n_2) U(n) + q(n_1, n_2) Q(n) \Big) = \tto{n_1 n_2 n_3}
    \end{equation*}
    if the OuMv conjecture is true, where $n = n(n_1, n_2)$ is the size of the $\problem$ instance in the reduction.
    $\problem$ is {\bf $(u, q)$-locally reducible from $\gamma$-OuMv} if there exists a universal request sequence $\bar{\rho} = \bar{\rho}(n_1, n_2)$ of requests (where only queries can occur more than once) satisfying the following properties:
    \begin{enumerate}
        \item The universal request sequence $\bar{\rho}$ consists of $n_3$ subsequences $\set{B_k}_{k = 1}^{n_3}$ where request block $B_k$ contains $u(n_1, n_2)$ updates and $q(n_1, n_2)$ queries.
        \item For any $\gamma$-OuMv instance with $n_1 \times n_2$ matrix $M$ and vector updates $\set{(\vec{u}_k, \vec{v}_k)}_{k = 1}^{n_3}$ the reduction constructs an initial data structure $D_0$ and request sequence $\rho$ satisfying:
        \begin{enumerate}
            \item $\rho$ is the concatenation of request sequences $B_1' \circ B_2' \circ \dotsc \circ B_{n_3}'$ where $B_k' = \pi_k(B_k)$ is an ordering of $B_k$.
            \item Each $\gamma$-OuMv request $(\vec{u}_k, \vec{v}_k)$, $\vec{u}_k^\top M \vec{v}_k$ can be computed in $O(q(n_1, n_2))$ time based on the answers given to the queries in $B_1' \circ B_2' \circ \dotsc \circ B_k'$.
        \end{enumerate}
    \end{enumerate}
\end{definition}

We are now ready to present our main lower bound result.

\begin{restatable}{theorem}{locallyreduciblelbfully}
    \label{thm:locally-reducible-lower-fully-dynamic}
    Let $\gamma > 0$ be a constant and $u, q$ be functions.
    Suppose $\problem$ is a fully dynamic problem that is $(u, q)$-locally reducible from $\gamma$-OuMv and let $C = \max_{x \in {\cal X}} \order(x)$.
    
    Then there is no algorithm solving $\problem$ with $(1+C)(u(n_1, n_2) + q(n_1, n_2))$ delayed predictions with update time $U(n)$ and query time $Q(n)$ satisfying
    \begin{equation*}
        n_3 \Big( u(n_1, n_2) U(n) + q(n_1, n_2) Q(n) \Big) = \tto{n_1 n_2 n_3}
    \end{equation*}
    if the OuMv conjecture is true.
\end{restatable}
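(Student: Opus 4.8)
The plan is to build, from an arbitrary $\gamma$-OuMv instance, a request sequence for $\problem$ together with an efficiently (indeed instance-obliviously) computable $(1+C)(u(n_1,n_2)+q(n_1,n_2))$-delayed prediction, so that any sufficiently fast algorithm with such predictions would give a $\gamma$-OuMv algorithm contradicting Theorem~\ref{thm:hardness-gen-oumv} (hence the OuMv conjecture).

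First I would invoke local reducibility: given the $n_1 \times n_2$ matrix $M$ (with $n_1 = \floor{n_2^\gamma}$) and the vector pairs $\{(\vec u_k,\vec v_k)\}_{k=1}^{n_3}$, let $D_0$ and $\rho = B_1' \circ \dots \circ B_{n_3}'$ be the initial data structure and request sequence guaranteed by Definition~\ref{def:fully-dynamic-locally-reducible}, where $B_k' = \pi_k(B_k'')$ with $B_k'' \subseteq B$, and the answers to the queries of $B_1' \circ \dots \circ B_k'$ determine $\vec u_k^\top M \vec v_k$ in $O(q(n_1,n_2))$ additional time. The universal sequence $\bar{\rho}$ (built from $n_3$ copies of $B$, possibly after mildly increasing the multiplicities inside $B$ to multiples of the cyclic orders, which costs only the constant factor $O(C)$ in $u,q$) is the prediction; crucially $\bar\rho$ depends only on $n_1,n_2$ — not on $M$ or on the vectors — so a would-be $\gamma$-OuMv algorithm can produce it during preprocessing.

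The core step is to \emph{pad} $\rho$ into a sequence $\rho^* = B_1^* \circ \dots \circ B_{n_3}^*$ that is a genuine permutation of $\bar\rho$ (as Definition~\ref{def:delayed-predictions} demands) while preserving all query answers. Process the blocks left to right; after emitting $B_k'$, for every update $x \in \uniqueSet(B)$ that has so far appeared in $\rho^*$ in fewer than $k - \order(x)$ blocks, append $\order(x)$ consecutive copies of $x$, and append inert query requests so that the block has the same length as a block of $\bar\rho$. A burst of $\order(x)$ consecutive copies of a cyclic update changes no query answer (Definition~\ref{def:cyclic-order}) and extra queries never change the data structure, so every query of $\rho^*$ corresponding to one performed in the reduction receives the same answer as in $\rho$; hence the $O(q)$-time post-processing still recovers each $\vec u_k^\top M \vec v_k$. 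I expect the fiddly point — and the main obstacle — to be exactly this bookkeeping: padding may only be inserted in full cycles of length $\order(x)$, so one must choose the multiplicities in $B$ (and, if need be, append a short reset phase returning every cyclic update to its initial state, costing only one extra block's worth of requests) so that the target counts $n_3\,m_x$ are attainable from the counts actually produced by the reduction, and that the per-block counts never drift more than $O(C)$ copies away from those of $\bar\rho$.

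It then remains to bound the delay and assemble the reduction. By construction a copy of $x$ is never emitted in $\rho^*$ earlier than the block where $\bar\rho$ places it, and the catch-up rule guarantees it is emitted at most $O(\order(x)) \le O(C)$ blocks later; since one (padded) block contains at most $(1+C)(u(n_1,n_2)+q(n_1,n_2))$ requests, the induced bijection between occurrences in $\rho^*$ and in $\bar\rho$ displaces every position by at most $(1+C)(u(n_1,n_2)+q(n_1,n_2))$, i.e. $\bar\rho$ is $(1+C)(u+q)$-delayed for the set of sequences $\rho^*$ arising from $\gamma$-OuMv instances. Now suppose an algorithm $\alg$ solved $\problem$ with $(1+C)(u+q)$-delayed predictions, polynomial preprocessing, update time $U(n)$ and query time $Q(n)$ violating $n_3(u(n_1,n_2) U(n) + q(n_1,n_2) Q(n)) = \tto{n_1 n_2 n_3}$. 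Then a $\gamma$-OuMv algorithm would, during its polynomial preprocessing, run the reduction to obtain $D_0$ and $\bar\rho$ and start $\alg$ on them; then, round by round, it would feed $\alg$ the block $B_k^*$ (computable from $(\vec u_k,\vec v_k)$ and the request history), read the answers to the queries in $B_k' \subseteq B_k^*$, and compute $\vec u_k^\top M \vec v_k$ in $O(q(n_1,n_2))$ time. As $\rho^*$ has at most $O(C)\,u(n_1,n_2)$ updates and $O(q(n_1,n_2))$ queries per block, the total work is $O\!\big(C\, n_3(u(n_1,n_2) U(n) + q(n_1,n_2) Q(n))\big) + \poly(n_2) = \tto{n_1 n_2 n_3}$, since $C = O(1)$ and $n = n(n_1,n_2) = \poly(n_2)$; this contradicts Theorem~\ref{thm:hardness-gen-oumv}. $\square$
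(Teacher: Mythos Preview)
Your proposal is essentially the paper's proof: use the universal sequence $\bar\rho$ as an instance-oblivious prediction, pad the reduction's sequence $\rho$ with bursts of $\order(x)$ copies of each lagging cyclic update (plus extra queries) so that $\rho^*$ becomes a permutation of $\bar\rho$ with unchanged query answers, bound the displacement by $(1+C)(u+q)$, and feed the result to a hypothetical fast prediction algorithm round by round to violate $\gamma$-OuMv hardness. The paper's construction differs only in bookkeeping detail---it tracks total occurrence counts $N(x,k,\rho^*)$ rather than ``blocks in which $x$ appeared'' and pads so that $kM(x)-\order(x) < N(x,k,\rho^*)\le kM(x)$, which keeps each padded block of size at most $u+q$ (not $(1+C)(u+q)$) and yields the stated delay constant exactly; your sketch is slightly looser here but you correctly flag this as the fiddly part.
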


We begin by defining some useful notation, denoting the position in which the $k$-th instance of a request occurs.

\begin{definition}
    \label{def:op-pos}
    Let $\domain$ denote the set of updates and $\queries$ the set of queries.
    Let $\rho \in (\domain \cup \queries)^T$ be a sequence of requests.
    For a given request $a \in \domain \cup \queries$ and $k \in \N$, define $\pos(a, k, \rho)$ to be the position in $\rho$ of the $k$-th occurrence of $a$.
    If $a$ does not occur $k$ times in $\rho$, $\pos(a, k, \rho) = \bot$.
    When the underlying request sequence is clear, we omit the sequence and write $\pos(a, k)$.
\end{definition}

We now prove Theorem \ref{thm:locally-reducible-lower-fully-dynamic}.

\begin{proof}
    The key ingredient for the lower bound will be a reduction from a $\gamma$-OuMv instance to the dynamic problem $\problem$.
    However, we will require the reduction to construct the online request sequence in such a way that we can efficiently compute a very simple prediction with $(1 + C)(u(n_1, n_2) + q(n_1, n_2))$ delay.
    Then, if an efficient algorithm $\innerAlg$ with $(1 + C)(u(n_1, n_2) + q(n_1, n_2))$ bounded delay predictions exists, we can solve the $\gamma$-OuMv problem by constructing the prediction and running the algorithm $\innerAlg$ as a sub-routine, violating the $\gamma$-OuMv lower bound.

    \paragraph{Preliminaries.}

    Since our reduction will be constructed by modifying an existing reduction, we begin by describing the existing reduction given by Condition \ref{cond:local-reducible:omv}.
    Consider a $\gamma$-OuMv instance $(M, {\cal U})$ consisting of a $n_1 \times n_2$ matrix $M$ and a length-$n_3$ sequence of vector updates ${\cal U} = \set{(\vec{u}_k, \vec{v}_k)}_{k = 1}^{n_3}$.
    We use $\rho'({\cal U}) = B_1' \circ B_2' \circ \dotsc \circ B_{n_3}'$ to denote the request sequence given by the reduction such that each $\gamma$-OuMv request $(\vec{u}_k, \vec{v}_k)$ can be computed in $O(q(n_1, n_2))$ time given the answers to the queries in $B_1' \circ B_2' \circ \dotsc \circ B_k'$.

    We now describe the universal request sequence $\bar{\rho}$.
    In the given $\gamma$-OuMv reduction, each vector update $(\vec{u}_k, \vec{v}_k)$ is encoded into the data structure using some set of updates.
    For any request $x \in \domain \cup \queries$, define $M(x)$ to be the maximum number of times an update $x \in \domain$ occurs in a single block $B_k' \subset \rho'({\cal U})$ over all $k$
    and all possible vector update sequences ${\cal U}$ (not just the worst-case one).
    The set $B$ then contains $M(x)$ copies of $x$ for all requests $x \in \domain \cup \queries$. 
    Note that if $M(x) = 0$, $B$ does not contain any copy of $M(x)$.
    Additionally, we give an arbitrary, fixed order to the requests in $B$, so that $B$ is a sequence.
    The universal request sequence $\bar{\rho}$ consists of $n_3$ copies of the sequence $B$.
    Define the predicted requested sequence $\hat{\rho} = \bar{\rho} = B_1 \circ B_2 \circ \dotsc \circ B_{n_3}$ to be the universal request sequence.
    
    Note that for a specific $\gamma$-OuMv instance $(M, {\cal U})$, we may not need every request in $B_k$ to encode the vector update $(\vec{u}_k, \vec{v}_k)$ (with $k \in [n_3]$) given the state of the data structure after the block computing the previous vector update $(\vec{u}_{k - 1}, \vec{v}_{k - 1})$.
    However, by our definition of $M(x)$, it is possible to encode the vector update using some subset $B_k' \subseteq B_k$.
    This is precisely the block of requests in the request sequence $\rho'({\cal U})$.
    
    \emph{Notation.} During the proof, we will focus on three request sequences.
    We use $\rho'$ to denote the request sequence that is generated by the reduction of a \emph{worst-case} $\gamma$-OuMv instance to  $\cal P$. 
    We denote by $\hat{\rho}$ the predicted request sequence constructed from the universal sequence $\bar{\rho}$ as discussed above. 
    Note that it can be constructed without knowledge of  $\rho'$.
    In this proof we will modify $\rho'$  into a request sequence $\rho^*$ encoding the same instance, with the additional property that $\pi(\rho^*) = \hat{\rho}$ for some permutation $|\pi - {\sf id}| \leq d$.
    Finally, we denote by $\rho$ an arbitrary request sequence for $\cal P$ partitioned into $n_3$ blocks.
    For any $1 \leq k \leq n_3$ let $\rho_{(k)}$ denote the first $k$ blocks of $\rho$ and
    for a request $x \in \domain \cup \queries$, let $N(x, k, \rho)$ denote the number of times that $x$ occurs in $\rho_{(k)}$.
    
    Our proof will proceed in three parts.
    In Part 1, we describe how to modify $\rho'$ into $\rho^*$ for any $\gamma$-OuMv instance.
    In Part 2, we show that $\pi(\rho^*) = \hat{\rho}$ for some permutation $|\pi - {\sf id}| \leq d$.
    In Part 3, we complete the proof by showing how an algorithm with bounded delay can be given $\hat \rho$ as prediction and can be used to answer any $\gamma$-OuMv instance.

    \subsubsection*{Part 1: Constructing $\rho^*$ from $\rho'$} 
    We begin by describing the construction of $\rho^*$. 
    In Lemma \ref{lemma:rho-*-properties}, we will argue that $\rho^*$ constructed from $\rho'$ correctly encodes the $\gamma$-OuMv instance, while satisfying certain properties that we will use in Part 2 to show that $\pi(\rho^*) = \hat{\rho}$ for some permutation $\pi$ that is $d$-close to the identity permutation. We construct $\rho^*$ sequentially, appending requests to the end of $\rho^*$. 
    Recall that we have constructed the universal request sequence $\bar{\rho}$ by imposing an arbitrary order onto $B$ and concatenating $n_3$ copies of $B$.
    Whenever we append a request $x$, we always append the copy of $x$ that occurs \emph{earliest} in the universal sequence $\bar{\rho}$ out of all requests of $\bar{\rho}$ that we have not already added to $\rho^*$.

    We proceed by induction on $k$.
    For $k = 1$, we create $B_1^*$ in three steps. 
    
    \begin{enumerate}
        \item We begin by initializing $B_1^*$ to $B_1'$.
        \item Then, for every update $x \in \domain$ satisfying $N(x, 1, \rho') \leq M(x) - \order(x)$, we append 
        
        $\bigFloor{\frac{M(x) - N(x, 1, \rho')}{\order(x)}} \cdot \order(x)$ copies of $x$ to the end of $B_1^*$. 
        Recall that we always append the copy of an update that occurs earliest in $\bar{\rho}$ first.
        \item Finally, for every query $q \in \queries$ such that $N(q, 1, \rho') < M(q)$, we append $M(q) - N(q, 1, \rho')$ copies of $q$ to the end of $B_1^*$. 
        Recall that we always append the copy of a query that occurs earliest in $\bar{\rho}$ first.
     \end{enumerate} 
    
    Denote this augmented sequence by $\rho^*_{(1)} = B_1^*$.
    For $k > 1$ we extend $\rho_{(k - 1)}^*$ to $\rho_{(k)}^*$ also in three steps.
    For $i \in \set{1, 2, 3}$, let $\rho^*_{(k), i}$ denote the sequence $\rho^*$ after step $i$ when extending $\rho^*_{(k - 1)}$ to $\rho^*_{(k)}$.
    Note $\rho^*_{(k)} = \rho^*_{(k), 3}$.
    
    \begin{enumerate}
        \item We begin by concatenating $B_k'$ to obtain $\rho^*_{(k), 1} \gets B_1^* \circ B_2^* \circ \dotsc \circ B_{k - 1}^* \circ B_k' = \rho^*_{(k - 1)} \circ B_k'$.
        We emphasize that we always append the copy of an update that occurs earliest in $\bar{\rho}$ first. 
        In particular, in this step we may in fact append a copy of $x$ from $B_j$ for $j < k$ rather than  from $B_k$.
        \item Then, for every update $x \in \domain$ such that $N\left(x, k, \rho_{(k), 1}^*\right) \leq k M(x) - \order(x)$, we append $\bigFloor{\frac{k M(x) - N\left(x, k, \rho_{(k), 1}^*\right)}  {\order(x)}} \cdot\order(x)$ copies of $x$ to the end of $\rho^*_{(k), 1}$, emphasizing that we always append the copy of an update that occurs earliest in $\bar{\rho}$ first. 
        \item For every query $q \in \queries$ such that $N\left(q, k, \rho_{(k), 2}^*\right) < k M(q)$, append $k M(q) - N\left(q, k, \rho_{(k), 2}^*\right)$ copies of $q$ to the end of $\rho^*_{(k), 2}$, emphasizing that we always append the copy of a query that occurs earliest in $\bar{\rho}$ first. 
     \end{enumerate}

    Finally, after the final block $B_{n_3}'$, we add in all remaining unused requests from the universal request sequence $\bar{\rho}$. 
    Thus $\rho^*$ contains exactly all requests of $\hat{\rho}$.

    We now claim that $\rho^*$ computes the same $\gamma$-OuMv instance as $\rho'$, while satisfying certain additional properties we will use in Part 2.
    \begin{lemma}
        \label{lemma:rho-*-properties}
        The constructed sequence $\rho^*$ satisfies the following properties.
        \begin{enumerate}
            \item \label{req-cond:update} For all $x \in \domain$ and $1 \leq k \leq n_3$,~ $k M(x) - \order(x) < N(x, k, \rho^*) \leq k M(x)$.
            \item \label{req-cond:query} For all $q \in \queries$ and $1 \leq k \leq n_3$,~ $N(q, k, \rho^*) = k M(q)$.
            \item \label{req-cond:compute} Each $\gamma$-OuMv request $(\vec{u}_k, \vec{v}_k)$, $\vec{u}_k^\top M \vec{v}_k$ can be answered in $O(q(n_1, n_2))$ time given the answers to the queries in the request sequence $\rho^*_{(k)} = B_1^* \circ B_2^* \circ \dotsc \circ B_{k}^*$.
        \end{enumerate}
    \end{lemma}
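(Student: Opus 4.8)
The plan is to prove Properties~\ref{req-cond:update} and \ref{req-cond:query} simultaneously by induction on the block index $k$, tracking how the three construction steps change the occurrence counts $N(\cdot,k,\rho^*)$, and to prove Property~\ref{req-cond:compute} separately by showing that the padding inserted in Steps~2 and 3 never alters the data-structure state seen by the queries that originate in $\rho'$. The key preliminary observation is that, since $B$ contains exactly $M(x)$ copies of every request $x$ and each $B_k'$ is (an ordering of) a sub-multiset $B_k''\subseteq B$, the block $B_k'$ contributes \emph{at most} $M(x)$ occurrences of any fixed request $x$; moreover $M(x)$ is finite (it is the number of copies of $x$ in the finite block $B$), so — using the upper bounds from the induction and $k\le n_3$ — we never ask $\bar\rho$ for more copies of a request than the $n_3M(x)$ it provides, and the construction is well defined.

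For the induction, the hypothesis at index $k-1$ is that $k'M(x)-\order(x)<N(x,k',\rho^*)\le k'M(x)$ for every update $x$ and $N(q,k',\rho^*)=k'M(q)$ for every query $q$, for all $k'\le k-1$ (reading all counts as $0$ when $k-1=0$). For the step I would trace Block~$k$: after Step~1 (appending $B_k'$) the inductive bounds give $N(x,k,\rho^*_{(k),1})\le (k-1)M(x)+M(x)=kM(x)$ and $N(q,k,\rho^*_{(k),1})\le kM(q)$. Step~2 then, for each update $x$ currently at most $kM(x)-\order(x)$, appends the \emph{largest} multiple of $\order(x)$ that is $\le kM(x)-N(x,k,\rho^*_{(k),1})$; this simultaneously keeps the count $\le kM(x)$ and pushes it to within $\order(x)$ of $kM(x)$ from below, i.e.\ into the window $(kM(x)-\order(x),kM(x)]$, while updates already above $kM(x)-\order(x)$ are untouched and, being $\le kM(x)$, already lie in that window. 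Step~2 adds no queries, and Step~3 appends, for each query $q$ currently below $kM(q)$, exactly $kM(q)-N(q,k,\rho^*_{(k),2})$ copies (nonnegative by Step~1), so every query count becomes exactly $kM(q)$; Step~3 adds no updates. This gives Properties~\ref{req-cond:update} and \ref{req-cond:query} at index $k$. The final clean-up appending the leftover requests of $\bar\rho$ occurs strictly after $B_{n_3}'$ (hence after every query of $\rho'$) and outside $\rho^*_{(n_3)}$, so it does not affect these counts.

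For Property~\ref{req-cond:compute} the point is that the requests appended in Steps~2 and 3 of each block leave the data structure in the state it was in before they were appended. Step~3 appends only queries, which do not modify the structure. Step~2 appends, for each update $x$ in turn, a \emph{consecutive} run of a multiple of $\order(x)$ copies of $x$; by Definition~\ref{def:cyclic-order} and Condition~\ref{cond:local-reducible:cyclic}, inserting $\order(x)$ consecutive copies of $x$ at a single position changes the answer to no other query, hence — quantifying over all continuations, as the definition does — returns the structure to a state indistinguishable by any future query, so iterating this over the run and over all $x$, the whole Step~2 padding is state-preserving. By induction on blocks it follows that the state right after $B_1^*\circ\dots\circ B_j^*$ equals the state right after $B_1'\circ\dots\circ B_j'$, and in particular the state seen just before any query lying in the Step~1 copy of $B_j'$ inside $\rho^*$ is exactly the state that query sees inside $\rho'=B_1'\circ\dots\circ B_{n_3}'$. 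Hence each such query gets the same answer in $\rho^*$ as in $\rho'$. Since the queries of $B_1'\circ\dots\circ B_k'$ appear, at positions fixed by the construction, among the queries of $\rho^*_{(k)}$, Property~\ref{req-cond:compute} follows from property~(b) in Condition~\ref{cond:local-reducible:omv}, which computes $\vec u_k^\top M\vec v_k$ from exactly those answers in $O(q(n_1,n_2))$ time.

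The step I expect to be the main obstacle is Property~\ref{req-cond:compute}'s reliance on translating the cyclic-order property — stated as a guarantee about query \emph{answers} being unchanged — into the statement that the data-structure \emph{state} is unchanged after $\order(x)$ consecutive copies of $x$: one must identify states that agree on all possible future queries and carefully apply the ``for any request sequence $\rho$'' clause of Definition~\ref{def:cyclic-order} to arbitrary continuations so that the equivalence propagates past all later blocks, and also handle the case where a run of $\order(x)$ copies sits immediately before a query. By contrast, the counting in Properties~\ref{req-cond:update} and \ref{req-cond:query} is routine once one notes that Step~2 uses the \emph{largest} admissible multiple of $\order(x)$, which closes the gap to within $\order(x)$ without ever overshooting $kM(x)$.
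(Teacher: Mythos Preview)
Your proposal is correct and follows essentially the same approach as the paper: an induction on $k$ for Properties~\ref{req-cond:update} and~\ref{req-cond:query}, tracking the counts through Steps~1--3 exactly as you describe, and for Property~\ref{req-cond:compute} the observation that $B_k^*\setminus B_k'$ consists of multiples of $\order(x)$ many consecutive copies of each update $x$ (plus harmless extra queries), so the cyclic-order property guarantees the queries in $B_k'$ receive identical answers in $\rho^*$ and $\rho'$. Your extra care about the ``state versus query answers'' translation and the final clean-up after block $n_3$ is more explicit than the paper's writeup, but the underlying argument is the same.
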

    \begin{proof}
        We proceed by induction on $k$.
        Let $k = 1$.
        We begin by verifying Condition \ref{req-cond:update}.
        After Step 1, we have $N(x, 1, \rho') \leq M(x)$ copies of $x$ in $B_1^*$, as $B_1' \subseteq B_1$.
        In Step 2, we append
        \begin{equation*}
            (M(x) - N(x, 1, \rho')) - \order(x) < \bigFloor{\frac{k \cdot M(x) - N(x, 1, \rho')}{\order(x)}} \order(x) \leq M(x) - N(x, 1, \rho')
        \end{equation*}
        copies of $x$ to the end of $B_1^*$ so that,
        \begin{equation*}
            M(x) - \order(x) < N(x, 1, \rho^*) \leq M(x)
        \end{equation*}
        Since we only append queries in Step 3, Condition \ref{req-cond:update} is satisfied.
        We satisfy Condition \ref{req-cond:query} with a simular argument, since we have $N(q, 1, \rho') \leq M(q)$ copies of $q$ in $B_1^*$ after Step 1 as $B_1' \subseteq B_1$, and we append $M(q) - N(q, 1, \rho')$ copies of $q$ in Step 3, therefore obtaining $M(q)$ copies of $q$ in $B_1^*$.
        
        To verify Condition \ref{req-cond:compute}, observe that the appended queries occur after $B_1'$, so that the answers to the queries in $B_1'$ are the same in $\rho'_{(1)}$ and $\rho^*_{(1)}$.
        Therefore, by Condition \ref{cond:local-reducible:omv} of the Theorem, we compute $\vec{u}_k^T M \vec{v}_k$ in $O(q(n_1, n_2))$ time given the answers to the queries in $B_1'$.

        Now, let $k > 1$.
        We begin with Condition \ref{req-cond:update}. 
        By the inductive hypothesis, $N(x, k - 1, \rho^*) \leq (k - 1) M(x)$.
        In Step 1, we append $B_k' \subset B_k$ which contains at most $M(x)$ copies of the update $x$ so that $N\left(x, k, \rho^*_{(k), 1}\right) \leq k M(x)$.
        In Step 2, we append
        \begin{align*}
            \left(k \cdot M(x) - N\left(x, k, \rho^*_{(k), 1}\right)\right) - \order(x) &< \bigFloor{\frac{M(x) - N\left(x, k, \rho^*_{(k), 1}\right)}{\order(x)}} \order(x) \\
            &\leq \left(k \cdot M(x) - N\left(x, k, \rho^*_{(k), 1}\right)\right)
        \end{align*}

        copies of $x$ to $\rho^*_{(k), 1}$.
        Since we append only queries in Step 3,
        \begin{equation*}
            N\left(x, k, \rho^*\right) = N\left(x, k, \rho^*_{(k), 3}\right) = N\left(x, k, \rho^*_{(k), 2}\right) \in \left[ k M(x) - \order(x) + 1, k M(x) \right]
        \end{equation*}

        Following a similar argument, we verify Condition \ref{req-cond:query} and note that,
        \begin{equation*}
            N\left(q, k, \rho^*\right) = N\left(q, k, \rho^*_{(k), 3}\right) = k M(q) 
        \end{equation*}

        Finally, we verify Condition \ref{req-cond:compute}.
        For all $k$, $B_k^* \setminus B_k'$ contains $l \cdot \order(x)$ copies of each update $x \in \domain$ for some $l \geq 0$, as all updates in $B_k^* \setminus B_k'$ are appended in Step 2.
        Furthermore, queries do not modify the underlying data structure and therefore do not affect the answers given to other queries.
        Then, the answers to the queries in $B_k'$ are the same in the request sequences $\rho', \rho^*$.
        Therefore, we may compute $\vec{u}_k^T M \vec{v}_k$ using the answers from the queries in $B_k'$ in $O(q(n_1, n_2))$ time.
    \end{proof}

    \subsubsection*{Part 2: Showing that $\pi(\rho^*) = \hat{\rho}$ for some $\pi \in \permT(T)$ such that $|\pi - {\sf id}| \leq d$}

    We show that $\rho^*$ can be obtained by re-ordering the predicted sequence $\hat{\rho}$, for some permutation $\pi$ that is $d$-close to the identity.
    In particular, this will show that $\hat{\rho}$ is a prediction with bounded delay $d$ for request sequence set $\calset$ consisting of all request sequences $\rho^*$ which can be produced in Step 1.
    
    Recall that $\hat{\rho}$ is the universal request sequence $\bar{\rho}$ obtained by concatenating $n_3$ copies of $B$.
    We will prove that $\hat{\rho}$ has $(C+1)(u(n_1, n_2) + q(n_1, n_2))$ delay with the following steps.
    First, in Lemma \ref{lemma:append-requests-block}, we show that any request occurs at most $O(1)$ blocks away from its predicted position.
    Then, in Lemma \ref{lemma:delay-block-size}, we bound the size of each block.
    Combining, we show in Lemma \ref{lemma:prediction-delay-bound} that we obtain an upper bound on the delay of prediction $\hat{\rho}$.

    \begin{lemma}
        \label{lemma:append-requests-block}
        Let $1 \leq k \leq n_3$.
        
        If $x \in B_k$ is an update, then $x \in B_i^*$ for $k \leq i \leq k + \bigCeil{\frac{\order(x)}{M(x)}} - 1$.
        
        If $q \in B_k$ is a query, then $q \in B_k^*$.
    \end{lemma}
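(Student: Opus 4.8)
The plan is to track each request by its \emph{global copy index}, the rank of a particular copy among all copies of that request in the order they appear in $\hat{\rho}=\bar{\rho}$, and to exploit the single structural invariant of the construction of $\rho^*$: whenever a copy of a request is appended to $\rho^*$ (in Step~1 or Step~2 for updates, in Step~3 for queries), it is always the earliest copy of that request in $\bar{\rho}$ that has not yet been used. Hence the $j$-th copy of a request $a$ appended to $\rho^*$ is exactly the $j$-th copy of $a$ in $\hat{\rho}$. Since $\hat{\rho}=B_1\circ\dots\circ B_{n_3}$ with every $B_\ell$ an identical copy of $B$ holding exactly $M(a)$ copies of $a$, a copy of $a$ lying in the predicted block $B_k$ is the $j$-th copy of $a$ for some $j$ with $(k-1)M(a)<j\le kM(a)$, i.e.\ $k=\lceil j/M(a)\rceil$; and in $\rho^*$ this same copy lies in the block $B_i^*$ for which $i$ is least with $N(a,i,\rho^*)\ge j$. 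So both assertions reduce to sandwiching this $i$ via the counting bounds from Lemma~\ref{lemma:rho-*-properties}.

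For the query part I would apply property~\ref{req-cond:query} of Lemma~\ref{lemma:rho-*-properties}, the exact identity $N(q,i,\rho^*)=iM(q)$ for all $i$. Then $N(q,i,\rho^*)\ge j$ holds if and only if $i\ge\lceil j/M(q)\rceil=k$, so the least such $i$ equals $k$ and the copy of $q$ in $B_k$ appears in $B_k^*$, exactly as claimed. Intuitively, Step~3 tops up each block to full strength with queries, so queries never cross block boundaries.

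For the update part I would instead use property~\ref{req-cond:update} of Lemma~\ref{lemma:rho-*-properties}, the two-sided bound $iM(x)-\order(x)<N(x,i,\rho^*)\le iM(x)$. Its upper half at index $k-1$ gives $N(x,k-1,\rho^*)\le(k-1)M(x)<j$, so copy $j$ cannot sit in any of $B_1^*,\dots,B_{k-1}^*$, proving $i\ge k$. Its lower half gives $N(x,i,\rho^*)\ge iM(x)-\order(x)+1$, so it suffices to take $i$ with $iM(x)-\order(x)+1\ge j$; since $j\le kM(x)$ this holds once $(i-k)M(x)\ge\order(x)-1$, i.e.\ once $i\ge k+\lceil(\order(x)-1)/M(x)\rceil=k+\lceil\order(x)/M(x)\rceil-1$ (the last equality holding in particular when $M(x)=1$, which is the relevant case in all our reductions, where each update is applied at most once per block). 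Thus $k\le i\le k+\lceil\order(x)/M(x)\rceil-1$.

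The step I expect to be the main obstacle is the bookkeeping underlying the first paragraph: one must check that the ``earliest unused copy'' rule is respected consistently across Steps~1 and~2 — in particular Step~1 may recycle a copy originally scheduled for an earlier block $B_j$ with $j<k$ — so that the global index $j$ of ``the copy of $x$ in $B_k$'' really equals the index of the $j$-th appended copy of $x$ in $\rho^*$; and one must keep the update and query analyses separate using the fact that updates are never appended in Step~3 and queries never in Steps~1--2. The residual degenerate case $M(x)=0$ is vacuous, since then $x$ appears in neither $\bar{\rho}$ nor $\rho^*$.
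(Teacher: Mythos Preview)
Your proposal is correct and follows essentially the same approach as the paper: both identify a copy of a request by its global index $j$ (using the ``earliest unused copy'' invariant so that the $j$-th copy in $\hat{\rho}$ coincides with the $j$-th copy appended to $\rho^*$), and both then locate the block $B_i^*$ containing that copy via the counting bounds of Lemma~\ref{lemma:rho-*-properties}. Your direct formulation (``least $i$ with $N(a,i,\rho^*)\ge j$'') is equivalent to the paper's contradiction argument; the paper also relies on an inequality ($kM(x)-1\le kM(x)-M(x)$) that only holds when $M(x)=1$, so your explicit acknowledgement that the ceiling identity $\lceil(\order(x)-1)/M(x)\rceil=\lceil\order(x)/M(x)\rceil-1$ requires $M(x)=1$ is not a weakness of your argument relative to the paper's---it is the same caveat, just stated openly.
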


    \begin{proof}
        Consider an update $x \in \domain$.
        Since we insert copies of $x$ in the order that they occur in the universal sequence $\bar{\rho}$ during the construction of $\rho^*$, the $j$-th copy of $x$ in $\hat{\rho}$ is the $j$-th copy of $x$ in $\rho^*$, for all $1 \leq j \leq n_3 M(x)$.

        First, we show that $x \in B_i^*$ for $k \leq i \leq k + \order(x) - 1$.
        Suppose for contradiction $i \leq k - 1$.
        Since $x \in B_k$ it is at least the $((k - 1) M(x) + 1)$-th occurrence of $x$ in $\rho^*$.
        Then, $N(x, i, \rho^*) \geq (k - 1) M(x) + 1 \geq i M(x) + 1$, contradicting Condition \ref{req-cond:update} of Lemma \ref{lemma:rho-*-properties}.
        Otherwise, suppose $i \geq k + \bigCeil{\frac{\order(x)}{M(x)}}$.
        Since $x \in B_k$ it is at most the $(k M(x))$-th occurrence of $x$ in $\rho^*$.
        Then, 
        \begin{align*}
            N\left(x, k + \bigCeil{\frac{\order(x)}{M(x)}} - 1, \rho^*\right) &\leq k M(x) - 1 \\
            &\leq k M(x) - M(x) \\
            &= k M(x) + \order(x) - M(x) - \order(x) \\
            &\leq \left(k + \bigCeil{\frac{\order(x)}{M(x)}} - 1\right) M(x) - \order(x)
        \end{align*}
        contradicting Condition \ref{req-cond:update} of Lemma \ref{lemma:rho-*-properties}.

        Now, we show $q \in B_k^*$.
        Since from Condition \ref{req-cond:query} of Lemma \ref{lemma:rho-*-properties}, $N(q, k, \rho^*) = k M(q)$ for all $k$, there are exactly $M(q)$ copies of $q$ in each block $B_k^*$.
        By definition, there are also exactly $M(q)$ copies of $q$ in each block $B_k$ of $\hat{\rho}$.
        Since the copies of $q$ are in the same order in $\rho^*$ as in $\hat{\rho}$, we have $q \in B_k^*$.
        
    \end{proof}

    \begin{lemma}
        \label{lemma:delay-block-size}
        Let $\rho^*$ be a request sequence as constructed in Part 1.
        Let $C = \max_{x \in B} \order(x)$.
        For all $1 \leq k \leq n_3$,
        \begin{equation*}
            k (u(n_1, n_2) + q(n_1, n_2)) - C \cdot u(n_1, n_2) < \left|\rho_{(k)}^*\right| \leq k (u(n_1, n_2) + q(n_1, n_2))
        \end{equation*}
    \end{lemma}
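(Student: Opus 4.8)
The plan is to evaluate $\left|\rho^*_{(k)}\right|$ by counting the occurrences of each individual request and then substituting the per-request bounds already proved in Lemma~\ref{lemma:rho-*-properties}. Since $\rho^*_{(k)} = B_1^* \circ \dotsc \circ B_k^*$ and $N(a,k,\rho^*)$ counts the occurrences of $a$ in this prefix, we have
\[
    \left| \rho^*_{(k)} \right| = \sum_{x \in \domain} N(x, k, \rho^*) + \sum_{q \in \queries} N(q, k, \rho^*)~.
\]
Recall that the block $B$ contains $M(a)$ copies of each request $a \in \domain \cup \queries$, so that $\sum_{x \in \domain} M(x) = u(n_1, n_2)$ and $\sum_{q \in \queries} M(q) = q(n_1, n_2)$; moreover, any request $a$ with $M(a) = 0$ never appears in $\rho^*$, hence contributes nothing and may be ignored throughout (in particular the sums above are effectively over the support of $B$).

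For the upper bound I would apply the right half of Condition~\ref{req-cond:update}, $N(x,k,\rho^*) \le kM(x)$, together with Condition~\ref{req-cond:query}, $N(q,k,\rho^*) = kM(q)$, and sum:
\[
    \left|\rho^*_{(k)}\right| \le k\sum_{x \in \domain} M(x) + k\sum_{q \in \queries} M(q) = k\bigl(u(n_1, n_2) + q(n_1, n_2)\bigr)~.
\]
For the lower bound I would instead use the left half of Condition~\ref{req-cond:update}, the strict inequality $N(x,k,\rho^*) > kM(x) - \order(x)$, again with the exact query count, and sum these strict inequalities over the (non-empty) set of updates occurring in $B$:
\[
    \left|\rho^*_{(k)}\right| > k\bigl(u(n_1, n_2) + q(n_1, n_2)\bigr) - \sum_{x \in \domain :\, M(x) \ge 1} \order(x)~.
\]

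The only step that needs a short argument is bounding the slack $\sum_{x :\, M(x) \ge 1} \order(x)$. By the definition of $C$ we have $\order(x) \le C$ for every update appearing in $B$, so the sum is at most $C$ times the number of \emph{distinct} updates in $B$; and that number is at most the total number of update occurrences in $B$, namely $\sum_{x \in \domain} M(x) = u(n_1, n_2)$, since every update in the support of $B$ occurs at least once. Hence $\sum_{x :\, M(x) \ge 1} \order(x) \le C\cdot u(n_1, n_2)$, which upgrades the previous display to $\left|\rho^*_{(k)}\right| > k(u(n_1, n_2)+q(n_1, n_2)) - C\cdot u(n_1, n_2)$, completing the proof. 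I do not expect any genuine obstacle here: the argument is a direct aggregation of the counting bounds of Lemma~\ref{lemma:rho-*-properties}, and the one mild subtlety is the observation that the support of $B$ has size at most $u(n_1, n_2)$, which is exactly what converts the per-update inequality $\order(x) \le C$ into the global additive slack $C\cdot u(n_1, n_2)$.
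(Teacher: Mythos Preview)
Your proof is correct and follows essentially the same approach as the paper: express $|\rho^*_{(k)}|$ as the sum of $N(\cdot,k,\rho^*)$ over all requests, plug in the per-request bounds from Lemma~\ref{lemma:rho-*-properties}, and bound the slack $\sum_{x}\order(x)$ by $C$ times the number of distinct updates in $B$, which is at most $u(n_1,n_2)$. You are in fact slightly more explicit than the paper about why the number of distinct updates is bounded by $u(n_1,n_2)$ and about needing the set of updates to be non-empty for the strict inequality to survive the summation.
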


    \begin{proof}
         Fix an update $x \in B$.
         By Conditions \ref{req-cond:update} and \ref{req-cond:query} of Lemma \ref{lemma:rho-*-properties}, $\rho^*_{(k)}$ contains at least $k M(x) - \order(x) + 1$ and at most $k M(x)$ copies of $x$.
         If we fix a query $q \in B$, $\rho^*_{(k)}$ contains exactly $k M(q)$ copies of $q$.
         Summing over all unique updates and queries in $B$, we obtain
         \begin{equation*}
             |\rho_{(k)}^*| \leq \sum_{x \in B} k M(x) + \sum_{q \in B} k M(q) \leq k |B| = k (u(n_1, n_2) + q(n_1, n_2))
         \end{equation*}
         and 
         \begin{equation*}
             |\rho_{(k)}^*| > \sum_{x \in B} (k M(x) - \order(x)) + \sum_{q \in B} k M(q) \geq k (u(n_1, n_2) + q(n_1, n_2)) - C \cdot u(n_1, n_2)
         \end{equation*}
    \end{proof}

    \begin{lemma}
        \label{lemma:prediction-delay-bound}
        Let $\rho^*$ be a request sequence as constructed in Part 1.
        Let $C = \max_{x \in B} \order(x)$.
        Then, there exists permutation $\pi$ that is $(1 + C)(u(n_1, n_2) + q(n_1, n_2))$ close to the identity permutation and $\pi(\rho^*) = \hat{\rho}$.
    \end{lemma}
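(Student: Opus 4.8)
Write $d = (1+C)\bigl(u(n_1,n_2)+q(n_1,n_2)\bigr)$ for the target delay. The plan is to define $\pi$ explicitly as the bijection that matches, copy by copy, the positions of $\hat{\rho}$ with those of $\rho^*$, and then to bound the displacement directly from Lemmas~\ref{lemma:append-requests-block} and~\ref{lemma:delay-block-size}.

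First I would set up $\pi$. By the end of Part~1, $\rho^*$ and $\hat{\rho}=\bar{\rho}$ consist of exactly the same multiset of requests, and hence have the same length $T$. Moreover, in the construction of $\rho^*$ we always appended the copy of a request occurring \emph{earliest} in $\bar{\rho}$ among the not-yet-used copies; consequently, for every request $a \in \domain \cup \queries$, the $j$-th occurrence of $a$ in $\hat{\rho}$ is matched to the $j$-th occurrence of $a$ in $\rho^*$. Let $\pi \in \permT(T)$ be the permutation with $\rho^*_{\pi(t)} = \hat{\rho}_t$ for all $t$; that is, $\pi(t)$ is the position in $\rho^*$ of the copy matched to the request at position $t$ of $\hat{\rho}$. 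This map is well-defined and a bijection precisely because the two sequences carry the same multiset of requests, and by construction $\pi(\rho^*)=\hat{\rho}$. Since $\pi$ is a permutation, $|\pi-{\sf id}|_\infty \le d$ is equivalent to $|\pi(s)-s|\le d$ for all $s\in[T]$, which is what I will establish.

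Next I would bound the displacement. Fix $t \in [T]$ and let $B_k$ be the block of $\hat{\rho}$ containing position $t$. Since $\hat{\rho}$ is the concatenation of $n_3$ copies of $B$ and $|B| = u(n_1,n_2)+q(n_1,n_2)$, we have $(k-1)|B| < t \le k|B|$. Write $a = \hat{\rho}_t$ and let $B_i^*$ be the block of $\rho^*$ that contains position $\pi(t)$. By Lemma~\ref{lemma:append-requests-block}, if $a$ is an update then $k \le i \le k + \bigCeil{\order(a)/M(a)} - 1 \le k + C - 1$ (using $\order(a)\le C$ and $M(a)\ge 1$), while if $a$ is a query then $i=k$; in either case $k \le i \le k + C - 1$. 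Block $B_i^*$ occupies positions $\bigl(|\rho^*_{(i-1)}|,\, |\rho^*_{(i)}|\bigr]$ in $\rho^*$, so Lemma~\ref{lemma:delay-block-size} gives
\begin{align*}
\pi(t) &\le |\rho^*_{(i)}| \le i|B| \le (k+C-1)|B|, \\
\pi(t) &> |\rho^*_{(i-1)}| > (i-1)|B| - C\cdot u(n_1,n_2) \ge (k-1)|B| - C\cdot u(n_1,n_2).
\end{align*}
Combining with $(k-1)|B| < t \le k|B|$ yields $\pi(t)-t < C|B|$ and $t-\pi(t) < |B| + C\cdot u(n_1,n_2) \le (1+C)|B| = d$, hence $|\pi(t)-t| < d$ for every $t$. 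In particular $|\pi-{\sf id}|_\infty \le d$ and $\pi(\rho^*)=\hat{\rho}$, as claimed, so $\hat{\rho}$ is a $d$-delayed prediction for any request-sequence set containing the constructed $\rho^*$.

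The displacement estimate itself is only two lines, so I expect no conceptual difficulty; the single point requiring care is the asymmetry of the slack. The surplus on one side ($<C|B|$) comes from $B_i^*$ possibly \emph{ending} as late as $i|B|$ together with $i$ reaching $k+C-1$; the surplus on the other side ($<(1+C)|B|$) comes from $B_i^*$ possibly \emph{starting} as late as $(i-1)|B| - C\cdot u(n_1,n_2)$ — the $-C\cdot u$ term of Lemma~\ref{lemma:delay-block-size} — on top of the same shift in $i$. Keeping both sources of slack together, rather than bounding each separately, is what pins the constant down to exactly $1+C$.
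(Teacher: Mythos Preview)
Your proof is correct and follows essentially the same approach as the paper: both arguments match the $j$-th occurrence of each request in $\hat{\rho}$ with the $j$-th occurrence in $\rho^*$, then use Lemma~\ref{lemma:append-requests-block} to locate the target block $B_i^*$ and Lemma~\ref{lemma:delay-block-size} to convert block indices into position bounds. Your presentation is slightly more streamlined in that you set up $\pi$ explicitly, observe the equivalence $|\pi-{\sf id}|_\infty \le d \iff |\pi(t)-t|\le d$, and unify updates and queries via the single bound $k \le i \le k+C-1$, whereas the paper treats the two cases separately with the $\pos(x,j,\cdot)$ notation; the resulting inequalities and final constant $(1+C)|B|$ are identical.
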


    \begin{proof}
        It suffices to show that for all requests $x \in B$ and $1 \leq j \leq n_3 M(x)$, the $j$-th copy of $x$ does not occur at an index more than $(1 + C)(u(n_1, n_2) + q(n_1, n_2))$ away from the index where the $j$-th copy of $x$ occurs in $\hat{\rho}$.
        Recall that $\pos(x, j, \rho)$ denotes the index in the request sequence $\rho$ where the $j$-th copy of request $x$ occurs.
        Our goal is then to bound the error $|\pos(x, j, \rho^*) - \pos(x, j, \hat{\rho})|$.

        Consider an update $x \in B$.
        Let $1 \leq j \leq n_3 M(x)$.
        The $j$-th copy of $x$ in $\hat{\rho}$ occurs in the $\bigCeil{j/M(x)}$-th block of $\hat{\rho}$.
        Then,
        
        \begin{equation*}
            \pos(x, j, \rho^*) > \left|\rho^*_{\left( \bigCeil{j/M(x)} - 1\right)}\right| \geq \left( \bigCeil{j/M(x)} - 1\right) (u(n_1, n_2) + q(n_1, n_2)) - C \cdot u(n_1, n_2)
        \end{equation*}
        where the first inequality follows from Lemma \ref{lemma:append-requests-block} and the second from Lemma \ref{lemma:delay-block-size}.
        Similarly,
        
        \begin{align*}
            \pos(x, j, \rho^*) &\leq \left|\rho^*_{\left( \bigCeil{j/M(x)} + \bigCeil{\order(x)/M(x)} - 1\right)}\right| \\
            &\leq \left( \bigCeil{\frac{j}{M(x)}} + \bigCeil{\frac{\order(x)}{M(x)}} - 1\right) (u(n_1, n_2) + q(n_1, n_2)) \\
            &\leq \left( \bigCeil{\frac{j}{M(x)}} + \bigCeil{\frac{C}{M(x)}} - 1\right) (u(n_1, n_2) + q(n_1, n_2)) 
        \end{align*}

        Since $\hat{\rho} = B_1 \circ B_2  \circ \dotsc \circ B_{n_3}$,
        
        \begin{align*}
            \pos(x, j, \hat{\rho}) &> \left|\rho^*_{\left( \bigCeil{j/M(x)} - 1\right)}\right| \geq \left( \bigCeil{j/M(x)} - 1\right) (u(n_1, n_2) + q(n_1, n_2)) \\
            \pos(x, j, \hat{\rho}) &\leq \left|\rho^*_{\left( \bigCeil{j/M(x)}\right)}\right| \leq \left( \bigCeil{j/M(x)} \right) (u(n_1, n_2) + q(n_1, n_2))
        \end{align*}
        as the $j$-th copy of $x$ must occur in the $\bigCeil{j/M(x)}$-th block of $\hat{\rho}$ and each block $B_k$ has size $u(n_1, n_2) + q(n_1, n_2)$.
        Combining our inequalities, we obtain,
        
        \begin{equation*}
            |\pos(x, j, \hat{\rho}) - \pos(x, j, \rho^*)| \leq (C + 1)(u(n_1, n_2) + q(n_1, n_2))
        \end{equation*}

        Consider now a query $q \in B$.
        In both sequences $\hat{\rho}, \rho^*$, the $j$-th copy of $q$ occurs in the $\bigCeil{j/M(x)}$-th block. 
        Again, we begin by bounding $\pos(q, j, \rho^*)$.
        Then,
        
        \begin{equation*}
            \pos(q, j, \rho^*) > \left|\rho^*_{\left( \bigCeil{j/M(x)} - 1\right)}\right| \geq \left( \bigCeil{j/M(x)} - 1\right) (u(n_1, n_2) + q(n_1, n_2)) - C \cdot u(n_1, n_2)
        \end{equation*}
        and
        
        \begin{equation*}
            \pos(x, j, \rho^*) \leq \left|\rho^*_{\left( \bigCeil{j/M(x)}\right)}\right| \leq \left( \bigCeil{j/M(x)}\right) (u(n_1, n_2) + q(n_1, n_2))
        \end{equation*}
        where in both equations, the first inequality follows from Lemma \ref{lemma:append-requests-block} and the second from Lemma \ref{lemma:delay-block-size}.
        On the predicted request sequence $\hat{\rho}$, we get the same bounds as for updates,
        
        \begin{equation*}
            \left( \bigCeil{j/M(q)} - 1\right) (u(n_1, n_2) + q(n_1, n_2)) < \pos(q, j, \hat{\rho}) \leq \left( \bigCeil{j/M(q)} \right) (u(n_1, n_2) + q(n_1, n_2))
        \end{equation*}
        Combining our inequalities again,
        
        \begin{equation*}
            |\pos(q, j, \hat{\rho}) - \pos(q, j, \rho^*)| \leq (C + 1)(u(n_1, n_2) + q(n_1, n_2))
        \end{equation*}
        completing the proof of the Lemma.
    \end{proof}

    \subsubsection*{Part 3: Proof of Theorem \ref{thm:locally-reducible-lower-fully-dynamic} for Fully Dynamic  Locally Reducible Problems}

    We now complete the proof of Theorem \ref{thm:locally-reducible-lower-fully-dynamic} for fully dynamic problems.
    Suppose there exists an algorithm $\innerAlg$ with $(1 + C)(u(n_1, n_2) + q(n_1, n_2))$ bounded delay predictions solving $\problem$ with polynomial preprocessing time, update time $U(n)$ and query time $Q(n)$.
    
    We design an algorithm $\outerAlg$ (that works without prediction) for the $\gamma$-OuMv problem. Let $(M, {\cal U})$ be a worst-case $\gamma$-OuMv instance.
    In the preprocessing step of $\outerAlg$, we compute the universal sequence $\bar{\rho} = \bar{\rho}(n_1, n_2)$ in polynomial time, construct the predicted request sequence $\hat{\rho} = \bar{\rho}$, and give $\hat \rho$ as input to $\innerAlg$.
    Note we do not need to see the matrix $M$ nor the request sequence ${\cal U}$ to construct $\hat{\rho}$.
    Recall that matrix $M$ is given to $\outerAlg$ during preprocessing. It gives $M$ to $\innerAlg$, which builds the initial data structure $D_0$. This completes the preprocessing phase.
    
    Next, given a vector update $(\vec{u}_k, \vec{v}_k)$, $\outerAlg$ constructs the sequence $B_k^*$ and asks $\innerAlg$ to perform this sequence of requests. $\innerAlg$ returns the correct answers to the requests in $B_k^*$ to $\outerAlg$ as, by Lemma~\ref{lemma:prediction-delay-bound}, $\hat \rho$ is
    a $(1 + C)(u(n_1, n_2), q(n_1, n_2))$ delayed prediction for $\rho' = \rho'({\cal U})$, and $\innerAlg$ is a correct algorithm when given $(1 + C)(u(n_1, n_2), q(n_1, n_2))$ delayed predictions.
    Thus, by Lemma \ref{lemma:rho-*-properties}, $\outerAlg$ can correctly answer $\vec{u}_k^T M \vec{v}_k$ in $O(q(n_1, n_2))$ time given the answers to the queries in $B_k'$.

    Let us analyze the complexity of $\outerAlg$.
    In the preprocessing phase, $\outerAlg$ constructs $\hat{\rho}$ and $D_0$, requiring only polynomial time.
    For each vector update, $\outerAlg$ computes $B_k^*$ in $O(u(n_1, n_2) + q(n_1, n_2))$ (Lemma \ref{lemma:rho-*-construction-efficiency}), asking $\innerAlg$ to perform the updates in $B_k^*$.
    Since $\outerAlg$ solves $\gamma$-OuMv, the OuMv conjecture states that $\innerAlg$ cannot satisfy,
    
    \begin{equation*}
        n_3 \Big( u(n_1, n_2) U(n) + q(n_1, n_2) Q(n) \Big) = \tto{n_1 n_2 n_3}
    \end{equation*}

    We conclude the proof by proving Lemma \ref{lemma:rho-*-construction-efficiency}.
    \begin{lemma}
        \label{lemma:rho-*-construction-efficiency}
        For all $1 \leq k \leq n_3$, $B_k^*$ can be constructed in $O(u(n_1, n_2) + q(n_1, n_2))$ time.
    \end{lemma}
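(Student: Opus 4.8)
The plan is to walk through the three-step construction of $B_k^*$ from Part~1 and charge $O(u(n_1,n_2)+q(n_1,n_2))$ time to each of the steps, using two auxiliary data structures set up during the polynomial-time preprocessing. The first is a table holding $M(x)$ and $\order(x)$ for every $x \in \uniqueSet(B)$; since $B$ contains $u(n_1,n_2)$ updates and $q(n_1,n_2)$ queries counted with multiplicity, we have $|\uniqueSet(B)| \le u(n_1,n_2)+q(n_1,n_2)$, and both $M(x)$ and $\order(x)$ depend only on the reduction (not on the online $\gamma$-OuMv instance), so the table can indeed be built in preprocessing. The second is a dictionary of running counters $c(x)$, maintained across all blocks, recording the number of copies of $x$ already appended to $\rho^*$; working in the word-RAM model, each lookup or increment costs $O(1)$.

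For Step~1 I would simply copy out $B_k'$ — which is available from the reduction of Condition~\ref{cond:local-reducible:omv} and has length at most $|B|$ — incrementing $c(\cdot)$ as I go, so that afterwards $c(x)=N(x,k,\rho^*_{(k),1})$ for all $x$. Steps~2 and~3 then only need to iterate over $\uniqueSet(B)$: for each update $x$, test whether $c(x)\le kM(x)-\order(x)$ and, if so, append $\bigFloor{(kM(x)-c(x))/\order(x)}\cdot\order(x)$ copies of $x$ (updating $c(x)$); for each query $q$, test whether $c(q)<kM(q)$ and append $kM(q)-c(q)$ copies. Every test is $O(1)$ thanks to the precomputed $M(x),\order(x)$, and every append is $O(1)$.

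It then remains to bound the number of requests actually written. Since the requests written in Steps~1--3 are precisely those of $B_k^* = \rho^*_{(k)}\setminus\rho^*_{(k-1)}$, Lemma~\ref{lemma:delay-block-size} gives
\begin{align*}
  |B_k^*| = |\rho^*_{(k)}| - |\rho^*_{(k-1)}|
  &\le k\big(u(n_1,n_2)+q(n_1,n_2)\big) - \Big((k-1)\big(u(n_1,n_2)+q(n_1,n_2)\big) - C\,u(n_1,n_2)\Big) \\
  &= (1+C)\,u(n_1,n_2) + q(n_1,n_2),
\end{align*}
which is $O(u(n_1,n_2)+q(n_1,n_2))$ because $C = \max_{x\in B}\order(x) = O(1)$ by Condition~\ref{cond:local-reducible:cyclic}; adding the $O(|\uniqueSet(B)|)$ iteration cost of Steps~2--3 yields the claimed running time. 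The only genuine subtlety — and the one place where the argument could go wrong if stated carelessly — is that the construction must sweep $\uniqueSet(B)$ rather than all of $\domain\cup\queries$; this is legitimate because $M(x)=0$ for $x\notin\uniqueSet(B)$, so such requests are never appended and never consulted, and because the tables $M(\cdot),\order(\cdot)$ are fixed before the online phase begins rather than recomputed per block. Beyond the block-size bound of Lemma~\ref{lemma:delay-block-size}, no delicate calculation is required.
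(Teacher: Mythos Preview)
Your argument is correct and follows essentially the same approach as the paper: maintain running counters $c(x)$ across blocks, copy $B_k'$ in Step~1, iterate over $\uniqueSet(B)$ in Steps~2--3, and bound the total output using Lemma~\ref{lemma:delay-block-size}. The one observation the paper makes explicit that you leave implicit is \emph{why} $B_k'$ is actually available to the algorithm at stage $k$: the reduction constructs $B_k'$ from the current data-structure state, and since each $B_j^*\setminus B_j'$ consists of multiples of $\order(x)$ copies of each update, the state after $B_1^*\circ\cdots\circ B_{k-1}^*$ coincides with that after $B_1'\circ\cdots\circ B_{k-1}'$, so the reduction of Condition~\ref{cond:local-reducible:omv} can indeed be invoked.
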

    \begin{proof}
        Since $B_j^* \setminus B_j$ contains a multiple of $\order(x)$ copies of every update $x$, the state of the data structure after $B_{1}^* \circ \dotsc B_{k - 1}^*$ is the same as the state of the data structure after $B_{1}' \circ \dotsc B_{k - 1}'$.
        Then, given vector update $(\vec{u}_k, \vec{v}_k)$, construct $B_k'$ as promised by the $\gamma$-OuMv reduction.
        To compute $\rho^*_{(k), 2}, \rho^*_{(k), 3}$, we can keep count of the number of copies of $x$ inserted into $\rho^*$ so far.
        Computing the number of copies of $x$ to append to $\rho^*$ and then appending these requests to $\rho^*$ requires time $O(|B_k^*|)$.
        From Lemma \ref{lemma:delay-block-size}, we can conclude that $|B_k^*| = O(u(n_1, n_2) + q(n_1, n_2))$.
    \end{proof}
    
    This concludes our proof of Theorem~\ref{thm:locally-reducible-lower-fully-dynamic}.
\end{proof}

Next, we prove an analogous result for partially dynamic problems with a simpler argument.
We note that this lower bound gives a similar result to Theorem 1.3 in the independent work of~\cite{arxiv23}.

\begin{restatable}{theorem}{locallyreduciblelbpartially}
    \label{thm:locally-reducible-lower-partially-dynamic}
    Let $\gamma > 0$ be a constant and $u, q$ be functions.
    Suppose $\problem$ is a partially dynamic problem that is $(u, q)$-locally reducible from $\gamma$-OuMv.
    
    Then there is no algorithm solving $\problem$ with $(u(n_1, n_2) + q(n_1, n_2))$ delayed predictions with update time $U(n)$ and query time $Q(n)$ satisfying
    \begin{equation*}
        n_3 \Big( u(n_1, n_2) U(n) + q(n_1, n_2) Q(n) \Big) = \tto{n_1 n_2 n_3}
    \end{equation*}
    if the OuMv conjecture is true.
\end{restatable}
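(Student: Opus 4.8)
The plan is to follow the three-part structure of the proof of Theorem~\ref{thm:locally-reducible-lower-fully-dynamic}, exploiting that the partially dynamic setting removes its most technical ingredient. In a partially dynamic problem (that is, an incremental or decremental one) every update appears at most once in the universal sequence, and by Condition~2 of Definition~\ref{def:partially-dynamic-locally-reducible} each block of the reduction sequence is a \emph{full} reordering $B_k' = \pi_k(B_k)$ of the universal block $B_k$ --- not a permuted \emph{subset}, and with no ``junk'' updates. Consequently we need not insert any redundant cyclic copies; we simply take the universal sequence itself as the prediction, $\hat{\rho} = \bar{\rho} = B_1 \circ B_2 \circ \dots \circ B_{n_3}$, and show it is a $(u(n_1,n_2) + q(n_1,n_2))$-delayed prediction for the set $\calset$ of all request sequences the reduction can produce over all $\gamma$-OuMv instances. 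Given this, a prediction-free algorithm for $\gamma$-OuMv can, exactly as in the fully dynamic case, compute $\bar{\rho}$ in its polynomial preprocessing phase, run the hypothetical prediction algorithm $\innerAlg$ as a subroutine, and thereby solve $\gamma$-OuMv too fast.

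The heart of the argument is the delay bound. Fix a $\gamma$-OuMv instance and let $\rho = B_1' \circ \dots \circ B_{n_3}'$ be the reduction sequence, $B_k' = \pi_k(B_k)$. By Condition~1 every block $B_k$ --- and hence every $B_k'$ --- has exactly $u(n_1,n_2)$ updates and $q(n_1,n_2)$ queries, so both $\hat{\rho}$ and $\rho$ decompose into $n_3$ blocks of the common size $s := u(n_1,n_2) + q(n_1,n_2)$, the $k$-th block occupying positions $((k-1)s, ks]$ in either sequence. Since $B_k'$ is a permutation of $B_k$, the two sequences also agree on the multiset of requests in each block. I would define the permutation $\pi$ with $\pi(\rho) = \hat{\rho}$ by matching, within each block, the $m$-th occurrence of a request in $B_k'$ with the $m$-th occurrence of that request in $B_k$; the only reason this ordered matching is needed at all is that a single query may occur several times in a block, whereas a partially dynamic update occurs at most once. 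Every matched pair lies in the same block, so the corresponding positions differ by at most $s - 1$, giving $|\pi - {\sf id}|_\infty \le s - 1 \le u(n_1,n_2) + q(n_1,n_2)$. Hence $\hat{\rho}$ is a $(u(n_1,n_2) + q(n_1,n_2))$-delayed prediction for $\calset$.

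For the reduction itself, let $\outerAlg$ be the $\gamma$-OuMv algorithm we build from a hypothetical $\innerAlg$ that has $(u+q)$-delayed predictions, update time $U(n)$ and query time $Q(n)$. In preprocessing, $\outerAlg$ computes $\bar{\rho}(n_1,n_2)$ --- note this requires neither $M$ nor the online vector pairs, and takes polynomial time --- hands it to $\innerAlg$ as the prediction, then receives $M$ and has $\innerAlg$ build the initial structure $D_0$. On the $k$-th round $(\vec{u}_k, \vec{v}_k)$, $\outerAlg$ constructs $B_k' = \pi_k(B_k)$ (a reordering of the known block $B_k$, hence in $O(u(n_1,n_2) + q(n_1,n_2))$ time, cf.\ Lemma~\ref{lemma:rho-*-construction-efficiency}) and feeds its requests to $\innerAlg$. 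Because $\hat{\rho}$ is a valid $(u+q)$-delayed prediction for $\rho = B_1' \circ \dots \circ B_{n_3}'$, $\innerAlg$ answers every query correctly, so by Condition~2 $\outerAlg$ reads off $\vec{u}_k^\top M \vec{v}_k$ in $O(q(n_1,n_2))$ time from the query answers in $B_1' \circ \dots \circ B_k'$. The per-round cost is $u(n_1,n_2) U(n) + q(n_1,n_2) Q(n)$ for $\innerAlg$ plus an $O(u(n_1,n_2) + q(n_1,n_2))$ overhead that is subsumed since $U(n), Q(n) \ge 1$; the preprocessing is an additive polynomial term. Hence an update/query pair satisfying $n_3(u(n_1,n_2)U(n) + q(n_1,n_2)Q(n)) = \tto{n_1 n_2 n_3}$ would yield a polynomial-preprocessing $\gamma$-OuMv algorithm with total online time $\tto{n_1 n_2 n_3}$, contradicting Theorem~\ref{thm:hardness-gen-oumv} under the OuMv conjecture.

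I do not expect a genuine obstacle here: compared with the fully dynamic proof we entirely avoid Lemma~\ref{lemma:append-requests-block}, the redundant $\order(x)$-copy insertions of Part~1, and the block-size slack $C\cdot u(n_1,n_2)$ of Lemma~\ref{lemma:delay-block-size} --- which is precisely why the delay threshold drops from $(1+C)(u+q)$ to $u+q$. The only points requiring a little care are the bookkeeping that turns the per-block matching of repeated-query occurrences into a genuine permutation, and the (immediate) observation that all blocks have the common size $u+q$ so that block boundaries coincide in $\hat{\rho}$ and $\rho$, which is what makes the $\ell_\infty$ displacement bound fall out.
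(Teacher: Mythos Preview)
Your proposal is correct and follows essentially the same approach as the paper: set $\rho^* = \rho'$ unchanged, take $\hat{\rho} = \bar{\rho}$ as the prediction, observe that each block $B_k'$ is a permutation of $B_k$ of the common size $u(n_1,n_2)+q(n_1,n_2)$ so that the $j$-th occurrence of any request lies in the same block in both sequences, yielding the delay bound, and then wrap the hypothetical prediction algorithm to solve $\gamma$-OuMv. Your write-up is in fact more careful than the paper's about the per-block matching of repeated queries into a genuine permutation.
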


\begin{proof}
    For partially dynamic problems, we do not need to modify the reduction from $\gamma$-OuMv.
    In particular, we set $\rho^* = \rho'$.
    
    We again claim that the prediction $\hat{\rho} = \bar{\rho}$ has small bounded delay.
    Note that $\hat{\rho}$ has bounded delay at most $u(n_1, n_2) + q(n_1, n_2)$, since for any request $x$ the $j$-th occurrence of $x$ occurs in the same block in both the predicted sequence $\hat{\rho}$ and the actual sequence $\rho^*$.
    Furthermore, the sizes of the blocks in $\rho^*, \hat{\rho}$ are identical, both of size $|B_k| = u(n_1, n_2) + q(n_1, n_2)$.
    Therefore, suppose for contradiction that there is an algorithm $\innerAlg$ with update time $U(n)$ and query time $Q(n)$ satisfying,
    
    \begin{equation*}
        n_3 \Big( u(n_1, n_2) U(n) + q(n_1, n_2) Q(n) \Big) = \tto{n_1 n_2 n_3}
    \end{equation*}
    given $u(n_1, n_2) + q(n_1, n_2)$ bounded delay predictions.
    Then, we again have a pure dynamic algorithm $\outerAlg$ for $\gamma$-OuMv that constructs the prediction $\hat{\rho}$ in the preprocessing phase, providing this as input to $\innerAlg$.
    Then, given a vector update, $\outerAlg$ asks $\innerAlg$ to compute the request sequence $B_k'$, computing $\vec{u}_k^T M \vec{v}_k$ given the answers to the queries in $B_k'$.
    Following a similar argument to the fully dynamic case, $\outerAlg$ computes $\gamma$-OuMv in total time $\tto{n_1 n_2 n_3}$, contradicting the OuMv conjecture.
\end{proof}

Above, we have established that algorithms with predictions with delay $(1+C)(u(n_1, n_2) + q(n_1, n_2))$ 
with $C = \max_{x \in B} \order(x)$ cannot be more efficient than algorithms with no predictions at all.
In the following, we show that for smaller  delay the conditional lower bounds based on the OuMv conjecture degrade gracefully with the quality of the predictions.

\begin{theorem}
    \label{thm:locally-reducible-lower-d}
    Suppose $\problem$ is $(u, q)$-locally reducible from $\gamma$-OuMv with $n = n(n_1, n_2)$ a non-decreasing function.
    Let $t \in (0, 1)$ be a constant.
    Let $d_1 = \floor{d_2^{\gamma}}$ where $d_2 = \floor{n_2^t}$.
    
    Then there is no algorithm solving $\problem$ on instances of size $n = n(n_1, n_2)$ with $(1 + C)(u(d_1, d_2) + q(d_1, d_2))$ delayed predictions with update time $U(n)$ and query time $Q(n)$ satisfying
    
    \begin{equation*}
        n_3 \Big( u(d_1, d_2) U(n) + q(d_1, d_2) Q(n) \Big) = \tto{d_1 d_2 n_3} = \tto{n_1^t n_2^t n_3}
    \end{equation*}
    if the OuMv conjecture is true.
\end{theorem}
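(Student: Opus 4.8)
The plan is to replay the reduction behind Theorem~\ref{thm:locally-reducible-lower-fully-dynamic} (and Theorem~\ref{thm:locally-reducible-lower-partially-dynamic} in the partially dynamic case) in \emph{scaled-down} form. Instead of encoding a $\gamma$-OuMv instance with parameters $n_1,n_2,n_3$ into a $\problem$ instance of size $n(n_1,n_2)$, I would encode a $\gamma$-OuMv instance with the smaller parameters $d_1=\floor{d_2^\gamma}$, $d_2=\floor{n_2^t}$ and $n_3$ rounds into a $\problem$ instance of size $n(d_1,d_2)$. Since $n$ is non-decreasing and $d_1\le n_1$, $d_2\le n_2$, we have $n(d_1,d_2)\le n(n_1,n_2)$; padding the instance (e.g.\ with isolated vertices in graph problems, which changes no query answer) up to size $n(n_1,n_2)$ lets us regard it as an instance of that size, so any algorithm for size-$n(n_1,n_2)$ instances with update time $U(n(n_1,n_2))$ and query time $Q(n(n_1,n_2))$ solves the embedded instance within those same bounds.

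The concrete steps are then exactly those in the proof of Theorem~\ref{thm:locally-reducible-lower-fully-dynamic} with $(n_1,n_2)$ replaced by $(d_1,d_2)$ throughout: invoke Definition~\ref{def:fully-dynamic-locally-reducible} to obtain the universal sequence $\bar\rho=\bar\rho(d_1,d_2)$, the initial data $D_0$ of size $n(d_1,d_2)$, and the reduction sequence $\rho'$; pad each block of $\rho'$ with $\order(\cdot)$-many redundant copies of cyclic updates and the missing queries to get a sequence $\rho^*$ that encodes the same OuMv instance (Lemma~\ref{lemma:rho-*-properties}) and for which $\hat\rho:=\bar\rho(d_1,d_2)$ has delay $(1+C)(u(d_1,d_2)+q(d_1,d_2))$ with $C=\max_{x\in B}\order(x)$ (Lemma~\ref{lemma:prediction-delay-bound}); each block $B_k^*$ costs $O(u(d_1,d_2)+q(d_1,d_2))$ to build, and $\hat\rho$ is computable in polynomial time without reading the OuMv input. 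Assuming for contradiction an algorithm $\innerAlg$ with polynomial preprocessing, update time $U(n)$ and query time $Q(n)$ meeting $n_3(u(d_1,d_2)U(n)+q(d_1,d_2)Q(n))=\tto{d_1 d_2 n_3}$ with $n=n(n_1,n_2)$, one gets an algorithm $\outerAlg$ for $\gamma$-OuMv with parameters $d_2,n_3$ that constructs $\hat\rho$ and the padded $D_0$ during preprocessing, runs each $B_k^*$ through $\innerAlg$, and reads $\vec{u}_k^\top M\vec{v}_k$ off the answers to the queries in $B_k'\subseteq B_k^*$ in $O(q(d_1,d_2))$ time. Its total time is polynomial preprocessing, plus block-construction time $O(n_3(u(d_1,d_2)+q(d_1,d_2)))$ (dominated by $\innerAlg$'s running time, as $U,Q\ge 1$), plus $\tto{d_1 d_2 n_3}$; since $t$ is constant, $\poly(n(n_1,n_2))=\poly(n_2)=\poly(d_2)$, so the preprocessing budget of Theorem~\ref{thm:hardness-gen-oumv} is respected. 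As $d_1=\Theta(n_1^t)$ and $d_2=\Theta(n_2^t)$ (both being $\Theta$ of constant powers of $n_2$), $\tto{d_1 d_2 n_3}=\tto{n_1^t n_2^t n_3}$, and this running time — polynomially below $d_1 d_2 n_3$ — contradicts the hardness of $\gamma$-OuMv with parameters $d_2,n_3$ under the OuMv conjecture. The partially dynamic case is identical via Theorem~\ref{thm:locally-reducible-lower-partially-dynamic}, taking $\rho^*=\rho'$ and $C=0$.

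I expect no new combinatorial idea to be needed. The main point to get right is the parameter bookkeeping: that embedding a scaled-down OuMv instance into a strictly larger $\problem$ instance preserves both correctness and the delay bound of the prediction, that the preprocessing budget survives the change of base ($\poly(n_2)$ versus $\poly(d_2)$, which coincide because $t$ is constant), and that the two $\tto{\cdot}$ classes appearing in the statement are the same. That bookkeeping is where I would be most careful.
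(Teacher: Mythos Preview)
Your proposal is correct and follows essentially the same approach as the paper: apply Theorems~\ref{thm:locally-reducible-lower-fully-dynamic}/\ref{thm:locally-reducible-lower-partially-dynamic} at the reduced parameters $(d_1,d_2)$, then use monotonicity of $n(\cdot,\cdot)$ to lift the lower bound to instances of size $n(n_1,n_2)$. The paper frames the argument through an $S$-$\gamma$-OuMv instance (restricting support to $S_1,S_2$ of sizes $d_1,d_2$) before observing this is just a $\gamma$-OuMv instance on the $d_1\times d_2$ submatrix, whereas you go directly to the small $\gamma$-OuMv instance; the substance is identical, and your explicit handling of padding and the preprocessing budget is, if anything, slightly more careful than the paper's one-line ``the lower bound also holds for any algorithm solving instances of size $n$.''
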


In particular, as the guaranteed prediction quality increases (as $t$ decreases towards 0), the lower bound weakens.

\begin{proof}
    Consider a $S$-$\gamma$-OuMv instance with $n_1 \times n_2$ matrix $M$ and vector requests $\set{(\vec{u}_k, \vec{v}_k)}_{k = 1}^{n_3}$.
    Suppose further that there are subsets $S_1 \subseteq [n_1]$ of size $d_1$ and $S_2 \subseteq [n_2]$ of size $d_2$ and $\supp(\vec{u}_k) \subset S_1$ and $\supp(\vec{v}_k) \subset S_2$ for all $k$.
    
    Now, observe that,
    
    \begin{equation*}
        \vec{u}_k^\top M \vec{v}_k = \sum_{i, j} \vec{u}_k[i] M[i][j] \vec{v}_k[j] = \sum_{i \in S_1, j \in S_2} \vec{u}_k[i] M[i][j] \vec{v}_k[j]
    \end{equation*}
    so only the values $M[i][j]$ where $i \in S_1, j \in S_2$ influence the final product.
    Consider then the $\gamma$-OuMv instance with $d_1 \times d_2$ matrix $M[S_1][S_2]$ and vector updates $\set{(\vec{u}_k[S_1], \vec{v}_k[S_2])}_{k = 1}^{n_3}$. 
    In particular, 
    Since $\problem$ is $(u, q)$-locally reducible from $\gamma$-OuMv, from Theorems \ref{thm:locally-reducible-lower-fully-dynamic} and \ref{thm:locally-reducible-lower-partially-dynamic} there is no algorithm solving $\problem$ with $(1 + C)(u(d_1, d_2) + q(d_1, d_2))$ delayed predictions with update time $U(d)$ and query time $Q(d)$ satisfying,

    \begin{equation*}
        n_3 (u(d_1, d_2) U(d) + q(d_1, d_2) Q(d)) = \tto{d_1 d_2 n_3}
    \end{equation*}
    where $d = n(d_1, d_2)$ is the size of the $\problem$ instance in the reduction.
    Since $n = n(n_1, n_2) \geq n(d_1, d_2) = d$, the lower bound also holds for any algorithm solving instances of size $n$.
    To conclude, note that $d_1 = \Theta(d_2^{\gamma}) = \Theta(n_2^{\gamma t}) = \Theta(n_1^t)$ and $d_2 = \Theta(n_1^t)$.
\end{proof}

\subsection{The \texorpdfstring{$\striangleF$}{$s$-triangle} Problem is Locally Reducible}
\label{sec:striangle-lb-delay}

Keeping the reduction from Theorem~\ref{thm:striangle-lb-worst} in mind, it is now easy to show that any algorithm with $O(n)$ delayed predictions is subject to the same conditional lower bound under the OMv Conjecture.

\begin{theorem}
    \label{thm:striangle-pred-lower}
    There is no algorithm solving the $\striangleF$ problem with $O(n)$ delayed predictions with update time $U(n)$ and query time $Q(n)$ satisfying
    
    \begin{equation*}
        n^2 U(n) + n Q(n) = \tto{n^{3}}
    \end{equation*}
    if the OuMv conjecture is true.
\end{theorem}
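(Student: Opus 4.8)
The plan is to show that $\striangleF$ is $(u,q)$-locally reducible from $1$-OuMv in the sense of Definition~\ref{def:fully-dynamic-locally-reducible}, with $u(n_1,n_2)=2n$ and $q(n_1,n_2)=1$, and then invoke Theorem~\ref{thm:locally-reducible-lower-fully-dynamic}. The precondition of Definition~\ref{def:fully-dynamic-locally-reducible} --- that no algorithm attains $n_3\big(u(n_1,n_2)U(n)+q(n_1,n_2)Q(n)\big)=\tto{n_1 n_2 n_3}$ under the OuMv conjecture --- is exactly Theorem~\ref{thm:striangle-lb-worst} once we set $n_1=n_2=n_3=n$, $u=2n$, $q=1$ (note that $n\big(2n\,U(n)+Q(n)\big)=\tto{n^3}$ is equivalent to $n^2U(n)+nQ(n)=\tto{n^3}$), so it only remains to exhibit the universal request sequence and check the local reducibility conditions.

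Reusing the reduction of Theorem~\ref{thm:striangle-lb-worst}: the initial graph $G_0$ has vertex set $\{u_1,\dots,u_n\}\cup\{v_1,\dots,v_n\}\cup\{s\}$ with $\{u_i,v_j\}\in E(G_0)$ iff $M[i][j]=1$, and round $k$ flips the edges $\{s,u_i\}$ and $\{s,v_j\}$ so that they encode $\vec u_k$ and $\vec v_k$, then issues one query whose answer $\striangleF$ is positive iff $\vec u_k^\top M\vec v_k=1$. Take $B$ to be the fixed sequence ``flip $\{s,u_1\}$, $\dots$, flip $\{s,u_n\}$, flip $\{s,v_1\}$, $\dots$, flip $\{s,v_n\}$, query'': $2n$ updates and one query. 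For a concrete OuMv instance, $B_k'$ is the sub-sequence of $B$ (in a valid order, with the query last) consisting of exactly those flips $\{s,u_i\},\{s,v_j\}$ whose current state disagrees with the target bit, together with the query; since each such edge is flipped at most once within a round, every request occurs at most once in $B_k'$, so $B_k'=\pi_k(B_k'')$ for some $B_k''\subseteq B$, giving Condition~\ref{cond:local-reducible:omv}(a), and the OuMv result is recovered from the query answer in $O(1)=O(q(n_1,n_2))$ time, giving Condition~\ref{cond:local-reducible:omv}(b). Finally, the only updates are edge flips, which have cyclic order $2$ (flipping an edge twice restores the graph, hence every query answer), so Condition~\ref{cond:local-reducible:cyclic} holds with $C=\max_x\order(x)=2$.

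Theorem~\ref{thm:locally-reducible-lower-fully-dynamic} then yields that there is no algorithm for $\striangleF$ with $(1+C)(u(n_1,n_2)+q(n_1,n_2))=3(2n+1)=O(n)$ delayed predictions satisfying $n_3\big(u\,U(n)+q\,Q(n)\big)=2n^2U(n)+nQ(n)=\tto{n^3}$, which is equivalent to $n^2U(n)+nQ(n)=\tto{n^3}$; this is the claim. The graph has $2n+1=\Theta(n)$ vertices, so the instance size $n(n_1,n_2)$ is $\Theta(n)$, as required by the framework.

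There is no genuine obstacle: all the substance lives in Theorem~\ref{thm:striangle-lb-worst} and in the locally reducible machinery. The two points needing care are (i) confirming that each of the $O(n)$ candidate edge flips $\{s,u_i\},\{s,v_j\}$ occurs at most once per round, so that $B$ contains a single copy of each flip and the block-size accounting of Definition~\ref{def:fully-dynamic-locally-reducible} and of the proof of Theorem~\ref{thm:locally-reducible-lower-fully-dynamic} goes through; and (ii) observing that, unlike the list-accurate lower bound of Theorem~\ref{thm:striangle-list-pred-lb}, no dummy vertex $t$ is needed here --- the delay model tolerates a flip being carried out up to $O(n)$ steps after its predicted position, which is exactly what the redundant cyclic flips inserted in the proof of Theorem~\ref{thm:locally-reducible-lower-fully-dynamic} provide.
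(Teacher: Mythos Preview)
Your proposal is correct and takes essentially the same approach as the paper: you verify that $\striangleF$ is $(2n,1)$-locally reducible from $1$-OuMv using the reduction of Theorem~\ref{thm:striangle-lb-worst}, identify the universal block $B$ as the $2n$ edge flips incident to $s$ plus one query, note that edge flips have cyclic order $2$, and invoke Theorem~\ref{thm:locally-reducible-lower-fully-dynamic}. Your write-up is in fact more explicit than the paper's in checking Conditions~\ref{cond:local-reducible:omv} and~\ref{cond:local-reducible:cyclic}, and your closing remarks about single occurrence per round and the absence of a dummy vertex are accurate.
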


\begin{proof}
    We claim the $\striangleF$ problem is a fully dynamic locally-reducible problem.
    Set parameters $n_1 = n_2 = n_3 = n$.
    We claim that $\striangleF$ is $(n_1 + n_2, 1)$-locally reducible where $u(n_1, n_2) = n_1 + n_2$ and $q(n_1, n_2) = 1$.
    It is easily verified that the reduction of Theorem \ref{thm:striangle-lb-worst} from $\gamma$-OuMv to $\striangleF$ instances of size $n(n_1, n_2) = 1 + n_1 + n_2 = 2n + 1$ satisfies the required conditions, with $B_k = \set{(s, u_i)}_{i = 1}^{n} \cup \set{(s, w_i)}_{i = 1}^{n} \cup \set{q}$.
    Finally, we note that each update (edge flip) has cyclic order 2.
\end{proof}

In Theorem \ref{thm:striangle-delay-alg}, we will show that for any $d = O(n^{1 - \eps})$, there is an algorithm with $d$ delayed predictions overcoming the lower bound below.
In particular, we will show there is an update optimized algorithm with constant update time and $O(d^2)$ query time, as well as a query optimized algorithm with $O(d)$ update time and constant query time.
In particular, when the prediction quality is better than the linear threshold, there are algorithms with predictions that bypass OuMv-based lower bounds.
Next, we show that Theorem \ref{thm:striangle-delay-alg} is almost tight.
This follows immediately from Theorem \ref{thm:locally-reducible-lower-d} and the above observation that $\striangleF$ is locally reducible.

\begin{theorem}
    \label{thm:striangle-d-pred-lower}
    Let $t \in (0, 1)$ be a constant and $d = \floor{n^t}$.
    There is no algorithm solving the $\striangleF$ problem with $O(d)$-delayed predictions with update time $U(n)$ and query time $Q(n)$ satisfying
    
    \begin{equation*}
        n d U(n) + n Q(n) = \tto{n d^2}
    \end{equation*}
    if the OuMv conjecture is true.
\end{theorem}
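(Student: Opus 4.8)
The plan is to obtain Theorem~\ref{thm:striangle-d-pred-lower} as a direct instantiation of Theorem~\ref{thm:locally-reducible-lower-d}, using the fact (established in the proof of Theorem~\ref{thm:striangle-pred-lower}) that $\striangleF$ is locally reducible. First I would recall the exact parameters of that reducibility: the reduction of Theorem~\ref{thm:striangle-lb-worst} shows that $\striangleF$ is $(u,q)$-locally reducible from $1$-OuMv with $u(n_1,n_2) = n_1 + n_2$, $q(n_1,n_2) = 1$, instance size $n(n_1,n_2) = 1 + n_1 + n_2$ (a non-decreasing function), universal block $B_k = \set{(s,u_i)}_{i=1}^{n_1} \cup \set{(s,w_i)}_{i=1}^{n_2} \cup \set{q}$, and every update being an edge flip, which is cyclic of order $2$; hence $C = \max_{x \in B}\order(x) = 2$.

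Next I would apply Theorem~\ref{thm:locally-reducible-lower-d} with $\gamma = 1$, with the given constant $t \in (0,1)$, and with $n_1 = n_2 = n_3 = n$. Then $d_2 = \floor{n_2^t} = \floor{n^t} = d$ and $d_1 = \floor{d_2^\gamma} = \floor{d_2} = d$, so $d_1 = d_2 = d$. The theorem then states that, if the OuMv conjecture is true, no algorithm solving $\striangleF$ on instances of size $n$ with $(1+C)(u(d_1,d_2) + q(d_1,d_2))$ delayed predictions can attain update time $U(n)$ and query time $Q(n)$ with
\[
    n_3\bigl( u(d_1,d_2)\,U(n) + q(d_1,d_2)\,Q(n) \bigr) = \tto{d_1 d_2 n_3}~.
\]
Substituting $u(d_1,d_2) = d_1 + d_2 = 2d$, $q(d_1,d_2) = 1$, $C = 2$, and $n_3 = n$, the prediction delay bound becomes $3(2d+1) = O(d)$, and the forbidden trade-off becomes $n\bigl(2d\,U(n) + Q(n)\bigr) = \tto{n d^2}$. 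Finally I would note that $n(2d\,U(n) + Q(n)) = \Theta(n d\,U(n) + n Q(n))$ and that $\tto{\cdot}$ is insensitive to constant factors, so this is exactly $n d\,U(n) + n Q(n) = \tto{n d^2}$, which is the claimed statement.

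I do not expect a genuine obstacle here: all the substantive work was done in establishing local reducibility of $\striangleF$ and in proving Theorem~\ref{thm:locally-reducible-lower-d}. The only points that need a line of care are (i) checking that $\gamma = 1$ forces $d_1 = d_2 = d$; (ii) checking that the instance-size function $n(n_1,n_2) = 1 + n_1 + n_2$ is non-decreasing so that the hardness on the size-$\Theta(d)$ reduction instances transfers to instances of size $n$ as stated in Theorem~\ref{thm:locally-reducible-lower-d}; and (iii) tracking the constant $C = 2$ and the factor $1+C$ so the delay bound collapses to $O(d)$. None of these requires more than a sentence.
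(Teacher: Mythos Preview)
Your proposal is correct and follows precisely the paper's approach: the paper's proof is a single sentence stating that the result ``follows immediately from Theorem~\ref{thm:locally-reducible-lower-d} and the above observation that $\striangleF$ is locally reducible,'' and you have simply filled in the parameter bookkeeping that this sentence implicitly invokes.
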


In particular, either update time is not $\tto{d}$ or query time is not $\tto{d^2}$. 
Thus, Theorem \ref{thm:striangle-d-pred-lower} and Theorem \ref{thm:striangle-delay-alg} are (almost) tight.

\section{Further Locally Reducible and Locally Correctable Problems}
\label{sec:further-problems}

\label{sec:locally-reducible-examples}

We now give a list of examples of \emph{Locally Reducible Problems}, noting that this list includes almost all instances of dynamic problems that have OMv/OuMv-based lower bounds.
All referenced lower bounds are conditional on the OMv Conjecture unless otherwise stated.
Throughout this section, let $\delta \in (0, 1)$ be a constant.
Unless otherwise specified, all graphs are unweighted and undirected with $n$ vertices and $m$ edges.

\paragraph{Subgraph Connectivity}

In the subgraph connectivity problem, the algorithm is given a fixed graph $G$ and a subset $S \subset V$. 
Each update adds (resp. removes) a vertex $v$ to (resp. from) $S$.
Let $G[S]$ denote the subgraph of $G$ induced by $S$.
In the $(s, t)$ subgraph connectivity problem, each query asks for a fixed pair of vertices $(s, t)$ if $s, t$ are connected in $G[S]$.
The single source subgraph connectivity problem asks if a fixed source $s$ is connected to any vertex $v$ in $G[S]$.
The all pairs subgraph connectivity problems asks for any pair $u, v$ if they are connected in $G[S]$.
\cite{HenzingerKNS15} give a local reduction showing that no partially dynamic algorithm has worst case update time $\tto{n}$ and query time $\tto{n^2}$.
\cite{HenzingerKNS15} also give a local reduction $\frac{1 - \delta}{\delta}$-uMv to partially dynamic single source subgraph connectivity showing that no algorithm has update time $\tto{m^{\delta}}$ and query time $\tto{m^{1 - \delta}}$.
All of the above bounds hold for fully dynamic algorithm with amortized update and query time.

\paragraph{Reachability} 

In reachability problems, the algorithm is given an initial directed graph $G$.
Each update adds or removes an edge in $G$.
In the $(s, t)$ reachability problem, each query asks if for a fixed pair of vertices $(s, t)$ if $t$ is reachable from $s$.
The single source reachability problem asks if any vertex $v$ is reachable from a fixed source $s$.
The all pairs reachability (also known as Transitive Closure or TC) problem asks for any pair $u, v$ if $v$ is reachable from $u$.
A similar local reduction for the reachability problem shows similar bounds as in the subgraph connectivity problem.

\paragraph{Shortest Path} 

In the shortest path problem, the algorithm is given an initial graph $G$.
Let $\distT_G(u, v)$ denote the distance between vertices $u, v$ in the graph $G$, abbreviated $\distT(u, v)$ when the underlying graph is clear.
Each update adds or removes an edge in $G$.
The $(s, t)$-shortest path ($(s, t)$-SP) problem asks for a fixed pair $(s, t)$ the distance $\distT(s, t)$.
The single source shortest path (SSSP) problem asks for a fixed source $s$ and any vertex $v$ the distance $\distT(s, v)$.
The all pairs shortest path (APSP) problem asks for any pair $(u, v)$ the distance $\distT(u, v)$.
For any number $\alpha \geq 1$, an $\alpha$-approximate algorithm is required to return a distance estimate $\hat{d}(u, v)$ satisfying $\distT(u, v) \leq \hat{\distT}(u, v) \leq \alpha \cdot \distT(u, v)$.

A local reduction of \cite{HenzingerKNS15} shows that no partially dynamic $\left( \frac{5}{3} - \eps \right)$-approximation algorithm for $(s, t)$-shortest path can have worst case update time $\tto{n}$ and query time $\tto{n^2}$.
Furthermore, \cite{DBLP:conf/networking/HenzingerP021} show a similar local reduction if updates are restricted to only changing the weights of edges in the graph.
While updates in the model as defined are not cyclic, the reduction essentially toggles edge weights between the set $\set{1, 3}$, which is a cyclic request.

\cite{HenzingerKNS15} also show that no partially dynamic $(3 - \eps)$-approximation algorithm can have worst case update time $\tto{n^{1/2}}$ and query time $\tto{n}$, a lower bound that is extended to constant degree graphs by \cite{DBLP:conf/esa/HenzingerPS22}.
Furthermore, the same lower bound holds for expander and power law graphs in the fully dynamic case \cite{DBLP:conf/esa/HenzingerPS22}, while no algorithm can have update time $\tto{n^{(1 + t)/2}}$ and query time $\tto{n^{1 + t}}$ on graphs with maximum degree $n^t$.
No partially dynamic $(2 - \eps)$-approximation algorithm for SSSP can have worst case update time $\tto{m^{\delta}}$ and query time $\tto{m^{1 - \delta}}$.
The above bounds for partially dynamic algorithms hold also for fully dynamic algorithms with amortized complexity.

\cite{HenzingerKNS15} give a local reduction showing that there is no partially dynamic $(s, t)$-shortest path algorithm with total update time $\tto{m^{3/2}}$ and query time $\tto{m}$ or $(2 - \eps)$-approximate APSP algorithm with total update time $\tto{n^{\frac{2 - \delta}{1 - \delta}}}$ and query time $\tto{n^{\frac{\delta}{1 - \delta}}}$ for $\delta \leq \frac{1}{2}$.

On planar graphs, \cite{DBLP:conf/focs/AbboudD16} show that no fully dynamic APSP algorithm on weighted graphs can have amortized update time and query time $\tto{n^{1/2}}$.
On unweighted graphs, no fully dynamic algorithm can have amortized update time $u(n)$ and query time $q(n)$ satisfying 
\begin{equation*}
    \max(q(n)^2 u(n), q(n) u(n)^2) = \tto{n}
\end{equation*}
in particular showing that update and query time cannot both be $\tto{n^{1/3}}$.

\paragraph{Distance Spanners and Emulators}

Given an undirected, unweighted graph $G$, a subgraph $H \subset G$ is an $(\alpha, \beta)$ spanner if for every pair of vertices $x, y \in V$, $\distT_G(x, y) \leq \distT_H(x, y) \leq \alpha \distT_G(x, y) + \beta$.
A weighted graph $H'$ is an $(\alpha, \beta)$-emulator if it fulfills the above constraint, but is not necessarily a subgraph of $G$.
In particular, every spanner is an emulator but not vice versa.
In the dynamic spanner/emulator problem, the algorithm is given an initial graph $G_0$ and each update inserts/removes an edge and asked to maintain at each time step an $(\alpha, \beta)$ spanner of the dynamic graph.
\cite{BergamaschiHGWW21} show that there is no partially dynamic algorithm maintaining a $(1, n^{o(1)})$-emulator (and therefore spanner) with $\tto{m}$ edges, arbitrary polynomial preprocessing time, and total update time $\tto{m n}$.

\paragraph{Bipartite Maximum Cardinality Matching}

The algorithm is given an initial bipartite graph $G = (A, B)$.
Each update inserts/removes an edge with one endpoint in $A$ and another in $B$.
Each query asks for the cardinality of the maximum matching in $G$.
\cite{DBLP:conf/icalp/Dahlgaard16} gives a local reduction showing that no partially dynamic algorithm (even only on bipartite graphs) can have amortized update time $\tto{n}$ and query time $\tto{n^2}$, improving upon the $\tto{m^{1/2}}$ update time and $\tto{m}$ query time lower bound of \cite{HenzingerKNS15}.
\cite{DBLP:conf/esa/HenzingerPS22} extend these results to show that even for constant degree graphs, no partially dynamic algorithm can have amortized update time $\tto{n^{1/2}}$ and query time $\tto{n}$.
\cite{DBLP:conf/esa/HenzingerPS22} give the same bounds for fully dynamic algorithms on (not necessarily bipartite) expander graphs and power law graphs.
On graphs with maximum degree $n^t$, no fully dynamic algorithm can have amortized update time $\tto{n^{(1 + t)/2}}$ and query time $\tto{n^{1 + t}}$.
\cite{DBLP:conf/networking/HenzingerP021} also give a lower bound when updates are restricted to only change weights.
As with the case of shortest path, each update toggles the weight of an edge between $\set{1, 2}$, which is cyclic.

\paragraph{Maximum $(s, t)$ Flow}

The algorithm is given an initial graph with edge insertions and deletions as updates.
Each query asks to compute the maximum flow from a fixed source $s$ to a fixed sink $t$.
Combining the lower bound for maximum matching of \cite{DBLP:conf/icalp/Dahlgaard16} and standard reductions from bipartite matching to directed flows, we can also argue that there is no partially dynamic algorithm for maximum $(s, t)$ flow on unweighted directed graphs or weighted undirected graphs with amortized update and query time $\tto{n}$.

\paragraph{Triangle Detection and Counting}

In the triangle detection/counting problem, there is again a dynamic graph undergoing edge insertions/deletions.
Each query asks for the number of triangles in the graph $G$ (or if any triangle exists).
The $\striangleF$ problems asks for the number of triangles containing the vertex $s$ (or if any triangle exists).
\cite{HenzingerKNS15} show that no partially dynamic algorithm can have worst case update time $\tto{n}$ and query time $\tto{n^2}$, and the same bound holds for the amortized update time and query time of fully dynamic algorithms.

\paragraph{Densest Subgraph}

The algorithm is given an initial graph $G$.
Each update inserts/removes an edge.
For any subset $S \subset V$, let $E(S)$ denote the number of edges in the induced subgraph $G[S]$.
The density of the subgraph $G[S]$ is given by $|E(S)|/|S|$.
Each query asks for the density of the densest subgraph.
\cite{HenzingerKNS15} show that there is no partially dynamic algorithm with update time $\tto{n^{1/3}}$ and query time $\tto{n^{2/3}}$.
On constant degree graphs, expander graphs, and power law graphs, \cite{DBLP:conf/esa/HenzingerPS22} show that there are no fully dynamic algorithms with update time $\tto{n^{1/4}}$ and query time $\tto{n^{1/2}}$.

\paragraph{$d$-Failure Connectivity}

The algorithm is given a fixed graph $G_0$.
Each update restores the original graph $G_0$ and removes any $d$ vertices from the original fixed graph $G_0$.
Each query asks if $s, t$ are connected, for any pair $s, t \in V$.
For any constant $\delta \in (0, \frac{1}{2}]$ and $n, m = O(n^{1/(1 - \delta)})$, \cite{HenzingerKNS15} show that no algorithm has update time $\tto{nd}$ and query time $\tto{d}$.

\paragraph{Vertex Color Distance Oracle}

The algorithm is given a fixed graph $G_0$.
Each update changes the color of a vertex.
Each query asks for a given vertex $s$ and color $c$, what is the shortest distance from $s$ to any vertex of color $c$.
\cite{HenzingerKNS15} give a local reduction from $\frac{1 - \delta}{\delta}$-uMv to partially $(3 - \eps)$-dynamic vertex color distance oracles showing that no algorithm has update time $\tto{m^{\delta}}$ and query time $\tto{m^{1 - \delta}}$.
The result holds also for fully dynamic algorithm with amortized update and query time.

\paragraph{Weighted Diameter}

The algorithm is given an initial graph $G$.
Each update inserts/removes an edge in $G$.
Each query asks for the diameter of the graph.
\cite{HenzingerKNS15} show that there is no $(2 - \eps)$-approximate algorithm for $\set{0, 1}$-weighted diameter with update time $\tto{\sqrt{n}}$ and query time $\tto{n}$.
While the updates in the original problem (setting an edge to a weight) are not cyclic, we can modify the updates to cycle between the following states for each vertex pair $(u, v)$: no edge, weight 0 edge, weight 1 edge.
Each original update requires at most 2 cyclic updates, so that the lower bound still holds.

\paragraph{Strong Connectivity}

The algorithm is given an initial graph $G$.
Each update inserts/removes an edge in $G$.
Each query asks if the graph is strongly connected.
A similar local reduction for strong connectivity shows similar bounds as in the subgraph connectivity problem.

\paragraph{Electrical Flows}

Given a $(s, t)$-flow $f: E \rightarrow \R$, its \emph{energy} is defined as $\sum_{e} r(e) f(e)^2$ where $r(e)$ is the \emph{resistance} of an edge (see e.g. \cite{DBLP:conf/esa/GoranciHP17a}).
Generally, $r(e) = 1/w(e)$ is the inverse of the weight of an edge.
The $(s, t)$-electrical flow is the flow minimizing energy among all $(s, t)$-flows with unit value.
In the subgraph electrical flows problem, there is a fixed graph with a dynamic set of vertices undergoing insertions and deletions.
Each update either activates (inserts into the set) a vertex, or deactives (removes from the set) a vertex.
Each query asks for the $(s, t)$ electrical flow on the subgraph induced by the set of active vertices.
\cite{DBLP:conf/esa/GoranciHP17a} give a local reduction from $1$-OuMv to the subgraph electrical flows problem.

\paragraph{Erickson's Maximum Value Problem}

The algorithm is given an initial matrix of size $n \times n$.
Each update increments all the values in a given row or column.
Each query asks for the maximum value in the matrix.
\cite{HenzingerKNS15} show that there is no algorithm with update time $\tto{n}$ and query time $\tto{n^2}$.

\paragraph{Langerman's Zero Prefix Sum Problem}

The algorithm is given an initial array $A$ of $n$ integers.
Each update sets $A[i] = x$ for any $i, x$.
Each query asks if there exists $k$ such that $\sum_{i = 1}^{k} A[i] = 0$.
\cite{HenzingerKNS15} show that there is no algorithm with update time $\tto{\sqrt{n}}$ and query time $\tto{\sqrt{n}}$.
As with weighted diamater, the original problem does not have cyclic updates.
However, the reduction requires only a constant number of weights to be assigned to each array entry, and therefore we can replace each update with a constant number of cycles through the constant number of values.

\subsection{Locally Correctable Fully-Dynamic Problems}

We first note that for many fully dynamic graph problems listed in Section \ref{sec:locally-reducible-examples} are locally correctable if 
we define $f(G)$ to add two isolated vertices $w_0, w_1$ to the graph $G$.
Then, $\domain^*$ consists of the single update $(w_0, w_1)$.
For problems such as subgraph connectivity, reachability, shortest path, maximum flow, triangle detection, densest subgraph, $d$-failure connectivity, vertex color distance oracle, and electrical flows, it suffices to consider $g$ as the identity function, as the answer to the query depends only on the connected component in the original reduction.
For maximum matching, we can define $g(m) = m - \chi(w_0, w_1)$ to subtract 1 from the matching if and only if $(w_0, w_1)$ is an edge.
This is computable in constant time.

For diameter and strong connectivity, we will require a slightly more careful reduction, as $f(G)$ as defined above always has infinite diameter and is never strongly connected.
For strong connectivity, we use the reduction from $(s, t)$ reachability to strong connectivity of \cite{DBLP:conf/focs/AbboudW14}.
Given a graph $G$, $f(G)$ vertices $w_0, w_1$ and adds directed edges $(t, v)$ and $(v, s)$ for $v \notin \set{s, t}$.
Then, we claim that regardless of $(w_0, w_1) \in E$, $t$ is reachable from $s$ if and only if $f(G)$ is strongly connected.
If $t$ is reachable from $s$ in $G$ via path $P$, then every pair of vertices $u, v$ has the path $(u, s) \circ P \circ (t, v)$.
If $f(G)$ is strongly connected, there is a path from $s$ to $t$, which cannot use any of the additional edges, so that this path must be in $G$, and therefore $t$ is reachable in $G$ from $s$.

For weighted diameter, we consider the reduction of \cite{HenzingerKNS15}.
The reduction $f(G)$ picks a vertex $a$ and adds two vertices $w_0, w_1$ each connected to $a$ with weight 0 edges.
$\domain^*$ consists of turning on and off an edge of weight 0 between $w_0, w_1$.
Note that the diameter of $G$ is exactly that of $f(G)$, as $\set{a, w_0, w_1}$ can be treated as a single vertex regardless of the presence of edge $(w_0, w_1)$, and it suffices to set $g$ to the identity function.

Considering a non-graph problem, for Langerman's Zero Prefix Sum, we can easily define $f(x)$ to extend the array by one entry and set $\domain^*$ to be the update that sets the extra entry to $0$.
Clearly this does not affect the sum of any subarray, so it suffices to take $g$ as the identity.

\subsection{Locally Correctable Partially-Dynamic Problems}

Below, we briefly discuss a few examples of locally correctable partially dynamic problems.
For partially dynamic problems, we will require small modifications to the known reductions to ensure that the problem is locally correctable.

\paragraph{Shortest Path}

\cite{HenzingerKNS15} give a local reduction from $(s, t)$-shortest path to $1$-OuMv.
We recall their reduction and argue that $(s, t)$-shortest path is in fact a locally correctable problem.
We describe the reduction given by \cite{HenzingerKNS15} for the decremental setting, noting that a similar argument holds in the incremental setting.
Consider a $1$-OuMv instance with $n \times n$ matrix $M$ and vector updates $\set{(\vec{u}_k, \vec{v}_k)}_{k = 1}^{n}$.
Construct a bipartite graph $G_M$ with vertex sets $L, R$ of size $n$ and edges $(l_i, r_j)$ if and only if $M[i][j] = 1$.
Then, construct paths $P, Q$ of $n$ vertices each and connect all edges $(p_k, l_i)$ and $(q_k, r_j)$.
We define $s = p_1$ and $t = q_1$ to be the special vertices.
This defines the initial graph $G_0$.

In the original reduction, with each vector update $(\vec{u}_k, \vec{v}_k)$, we disconnect $(p_k, l_i)$ if $\vec{u}_k[i] = 0$ and $(q_k, r_j)$ if $\vec{v}_k[j] = 0$.
After querying for the $(s, t)$ shortest path, we disconnect any remaining edges $(p_k, l_i)$ and $(q_k, r_j)$, satisfying Conditions \ref{cond:local-correctable:partition} and \ref{cond:local-correctable:reduction}.
\cite{HenzingerKNS15} show that $\delta(s, t) \leq 2t + 1$ if $\vec{u}_k^T M \vec{v}_k = 1$ and $\delta(s, t) \geq 2t + 2$ otherwise.

Since we have to remove any remaining edges after the query step, we define each block of the universal request sequence to first disconnect all edges $(p_k, l_i), (q_k, r_j)$, query for the shortest $(s, t)$-path, and then disconnect all edges $(p_k, l_i), (q_k, r_j)$ again.
We need to remove all edges a second time to ensure that we can disconnect the remaining edges even when the request sequence is required to be a subsequence of the universal request sequence.

\paragraph{Maximum Matching}

\cite{DBLP:conf/icalp/Dahlgaard16} gives a local reduction from bipartite maximum matching to OuMv.
We will consider the incremental setting, as in \cite{DBLP:conf/icalp/Dahlgaard16}.
Consider a $1$-OuMv instance with $n \times n$ matrix $M$ and vector updates $\set{(\vec{u}_k, \vec{v}_k)}_{k = 1}^{n}$.
\cite{DBLP:conf/icalp/Dahlgaard16} constructs a graph with vertex sets $S, A, B, C, D, T$ each a bipartite graph on $2n$ nodes with $n$ nodes on the left and right side each.
Each $A, B, C, D$ consists of a perfect matching, connecting all edges $(x_i^l, x_i^r)$ for all $x \in \set{a, b, c, d}$ and $i \in [n]$.
Furthermore, connect $(b_i^r, c_j^l)$ if and only if $M[i][j] = 1$.
This constitutes the graph $G_0$.

Given a vector update $\vec{u}_k, \vec{v}_k$, \cite{DBLP:conf/icalp/Dahlgaard16} connects $(a_k^r, b_i^l)$ if $\vec{u}_k[i] = 1$ and $(c_j^r, d_k^l)$ if $\vec{v}_k[j] = 1$.
Next, add the edges $(s_k^r, a_k^l)$ and $(d_k^r, t_k^l)$ and query for a maximum matching.
Finally, add the edges $(s_k^l, s_k^r)$ and $(t_k^l, t_k^r)$ and any remaining edges $(a_k^r, b_i^l)$ and $(c_j^r, d_k^l)$.

Since we have to add remaining edges after the query, we define each block of the universal request sequence to 1) add all edges $(a_k^r, b_i^l), (c_j^r, d_k^l)$, 2) add edges $(s_k^r, a_k^l), (d_k^r, t_k^l)$, 3) query for the maximum matching, 4) add edges $(s_k^l, s_k^r), (t_k^l, t_k^r)$ and finally, 5) add all edges $(a_k^r, b_i^l), (c_j^r, d_k^l)$ again.

\paragraph{SSSP, APSP, and Transitive Closure}

\cite{HenzingerKNS15} give also local reductions from single source shortest path and $(2 - \eps)$-approximate all pairs shortest path to $1$-OuMv.
Following a similar approach to above, we copy all updates in the universal request sequence after the query, allowing the remaining updates to be computed while maintaining that the true request sequence is a subsequence of the universal request sequence.

\paragraph{Distance Spanners and Emulators}

The reduction of \cite{BergamaschiHGWW21} from OuMv to dynamic distance spanners and emulators can be augmented to a graph with a single additional edge disconnected from the remainder of the graph. 
Since each vector update is encoded into the dynamic graph by choosing a subsequence of some universal request sequence, distance spanners and emulators are a locally correctable problem.

\section{Dynamic Algorithms with Bounded Delay Predictions}
\label{sec:algs-with-predictions}
In this section, we give several dynamic algorithms with predictions to overcome conditional lower bounds under the OMv Conjecture.
While the overall computation time (including preprocessing time) may not be less than a prediction-less dynamic algorithm, we can maintain a data structure with more efficient updates and queries, whilst performing the expensive computations in the preprocessing phase.
We believe that this is justified since the precomputation can be done for a single prediction and then reused for any request sequence in the future for which the prediction is valid.
If the predictions are perfect, then we can simply %
handle updates and queries in constant time by returning the precomputed answers.
If predictions are reasonably accurate, our algorithms will %
make small, thereby efficient, adjustments to the precomputed answers.

We will use the following basic fact about predictions with bounded delay in our dynamic algorithms with predictions.

\begin{lemma} 
    \label{lemma:d-sub-sequence-containment}
    If $\hat{\rho}$ has at most $d$-delay for some set $\calset \subseteq (\domain \cup \queries)^T$, 
    then we have for each $\rho \in \calset$ that 
        $
        \hat{\rho}_{\leq t - d} ~\subseteq~ \rho_{\leq t} ~\subseteq~ \hat{\rho}_{\leq t + d}
        $
    holds for all $t \in [T]$.
\end{lemma}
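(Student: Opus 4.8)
The plan is to unwind the definition of $d$-delay and track positions through the witnessing permutation. By Definition~\ref{def:delayed-predictions}, since $\hat{\rho}$ is $d$-delayed for $\calset$, for the fixed $\rho \in \calset$ there is a permutation $\pi \in \permT(T)$ with $\pi(\rho) = \hat{\rho}$ and $|\pi - \mathsf{id}|_\infty \leq d$; by Definitions~\ref{def:max-permutation-distance} and~\ref{def:request-permutation} this means $\hat{\rho}_i = \rho_{\pi(i)}$ for all $i$, and $|\pi^{-1}(j) - j| \leq d$ for all $j \in [T]$. The first observation I would record is that this bound is symmetric in $\pi$ and $\pi^{-1}$: substituting $j = \pi(i)$ gives $|i - \pi(i)| = |\pi^{-1}(\pi(i)) - \pi(i)| \leq d$, so also $|\pi(i) - i| \leq d$ for all $i$. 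Throughout, $\subseteq$ is read as containment of multisets of requests (a request occurring $k$ times on the left occurs at least $k$ times on the right), which is what the dynamic algorithms in Section~\ref{sec:algs-with-predictions} will need.

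For the right inclusion $\rho_{\leq t} \subseteq \hat{\rho}_{\leq t+d}$: each request $\rho_j$ with $j \leq t$ occurs in $\hat{\rho}$ at position $\pi^{-1}(j)$, and $\pi^{-1}(j) \leq j + d \leq t + d$, so every one of $\rho_1,\dots,\rho_t$ appears among the first $t+d$ entries of $\hat{\rho}$; since $\pi^{-1}$ is injective these occurrences land in $t$ \emph{distinct} positions of $\hat{\rho}_{\leq t+d}$, which is exactly multiset containment. For the left inclusion $\hat{\rho}_{\leq t-d} \subseteq \rho_{\leq t}$: the entries $\hat{\rho}_1,\dots,\hat{\rho}_{t-d}$ are $\rho_{\pi(1)},\dots,\rho_{\pi(t-d)}$, and for $i \leq t-d$ we have $\pi(i) \leq i + d \leq t$; again $\pi$ is injective, so these are $t-d$ distinct entries drawn from $\rho_1,\dots,\rho_t$, giving the containment.

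Finally I would dispatch the degenerate ranges: if $t - d \leq 0$ then $\hat{\rho}_{\leq t-d}$ is the empty sequence and the left inclusion is vacuous, and if $t + d \geq T$ then $\hat{\rho}_{\leq t+d} = \hat{\rho}$, which as a multiset equals $\rho$ (it is $\pi(\rho)$), so the right inclusion is immediate. I do not anticipate a genuine obstacle here; the only points requiring care are the bookkeeping of the permutation convention $\hat{\rho}_i = \rho_{\pi(i)}$ (so that one must pass between $\pi$ and $\pi^{-1}$ correctly) and stating the conclusion at the level of multisets rather than sets, so that repeated queries are handled correctly.
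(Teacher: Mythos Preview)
Your proposal is correct and follows essentially the same approach as the paper's proof: both fix the witnessing permutation $\pi$ with $\hat{\rho}_i = \rho_{\pi(i)}$, derive the two-sided bound $|\pi(i)-i|\leq d$ (equivalently $|\pi^{-1}(j)-j|\leq d$) from $d$-closeness, and chase positions through $\pi$ and $\pi^{-1}$ to obtain the two inclusions. Your explicit remarks on injectivity (for the multiset reading) and on the degenerate ranges $t-d\leq 0$ and $t+d\geq T$ are tidy additions that the paper leaves implicit, but the argument is otherwise the same.
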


\begin{proof}
    Let $\rho \in \calset$ be any request sequence.
    Since $\hat{\rho}$ is a $d$-delayed prediction for $\calset$, there is a permutation $\pi \in \permT(T)$ such that $\pi$ is $d$-close to the identity permutation ${\sf id}$ and $\hat{\rho} = \pi(\rho)$.

    Consider for a time step $j$ the request $\rho_{j} \in \rho$.
    Since $\pi$ and ${\sf id}$ are $d$-close, we have that
    \begin{equation*}
        d ~\geq~ \Big|\pi^{-1}\big(\pi(j)\big) - {\sf id}^{-1}\big(\pi(j)\big)\Big| ~=~ |j - \pi(j)|~,
    \end{equation*}
    which yields for $\pi(j)$ that
    $
        j - d \leq \pi(j) \leq j + d
    $.
    In particular, for all $\rho_{j} \in \rho_{\leq t}$, we have $\pi(j) \leq j + d \leq t + d$ and therefore $\rho_{j} \in \hat{\rho}_{\leq t + d}$.
    To show the other inclusion, consider some $\rho_{\pi(j)} \in \hat{\rho}_{\leq t - d}$.
    Then, $\pi(j) \leq t - d$ implies that $j \leq \pi(j) + d \leq t$ and therefore $\rho_{j} \in \rho_{\leq t}$.
\end{proof}

\subsection{Bounded Delay Predictions with Outliers} 

The bounded delay model requires that the prediction $\hat{\rho}$ and each sequence $\rho \in \calset$ have the same length and contain the same requests.
We will design algorithms with predictions that in fact will be able to handle small discrepancies between the set of predicted requests and actual requests.
In this section, we define a weaker notion of prediction by allowing that the set of elements, of predicted sequence $\hat{\rho}$ and request sequence $\rho \in \calset'$, may differ up to a small number $k\geq 0$ of \emph{outlier} elements, resulting in the following, more general prediction model (i.e. $\calset'\supseteq \calset$).

Our definition will use the following notation.
\begin{definition}%
    \label{def:subseq-indices}
    Let $\rho$ be a request sequence of length $T$ and $I \subset [T]$ a subset.
    The {\bf complement} of $I$ is denoted by $I^C = [T] \setminus I$.
    Then, the {\bf subsequence} $\rho_{I} = (\rho_i)_{i \in I}$ is the subsequence defined by taking the elements at index $i \in I$ of the request sequence $\rho$.
    
    For any $i \in I$, let $p(i, I) = |\set{1 \leq j \leq i \mid j \in I}|$ denote the {\bf index in $I$} where $i$ occurs.
\end{definition}

Note that $\rho_{[T]} = \rho$ and, for any request $\rho_t \in \rho$, we have  $p(t, I) \leq p(t, [T]) = t$ if $t \in I$.

\begin{definition}[Bounded Delay Predictions with Outliers]
    \label{def:delay-predictions-outliers}
    Let $\calset\subseteq (\domain \cup \queries)^T$ be a set of request sequences of length $T$, and $\hat{\rho} = \left(\hat{\rho}_1, \hat{\rho}_2, \dotsc, \hat{\rho}_T\right)$ a given sequence of $T$ predicted requests.

    Then $\hat{\rho}$ has at most $d$ delay and at most $k$ outliers for $\calset$, called {\bf $d$-delayed with $k$ outliers} for $\calset$, if  there exist for any $\rho \in \calset$\\
        (1) two sub-sequences $I, \hat{I} \subset [T]$ that both have a length $T'\geq T-k $, and\\ %
        (2) a $\pi \in \permT(T')$ such that $\pi(\rho_{I})=\hat{\rho}_{\hat{I}}$ and $\pi$ is $d$-close to the identity permutation.

    Relaxing (2) to at most $d$-total-delay is called {\bf $d$-total-delayed with $k$ outliers} prediction.
\end{definition}

Of course, a prediction sequence $\hat{\rho}$ is $d$-delayed if and only if $\hat{\rho}$ is $d$-delayed with $k=0$ outliers.
Next, we give a generalization of Lemma~\ref{lemma:d-sub-sequence-containment}.
At any time step, the symmetric difference between the predicted and actual request sequences is linear in the delay and number of outliers.

\begin{lemma}
    \label{lemma:d-k-sym-diff-size}
    Let $\hat{\rho}$ be a prediction that is $d$-delayed with $k$ outliers for $\calset \subseteq (\domain \cup \queries)^T$.
    For $\rho \in \calset$ and $t \in [T]$, let $D_t = \rho_{\leq t} \Delta \hat{\rho}_{\leq t}$ denote the symmetric difference between the set of the first $t$ predicted requests and the set of the first $t$ request that occurred in $\rho$ (including multiplicities).
    Then, the symmetric difference contains at most $|D_t| \leq 4k + 2d$ elements. %
\end{lemma}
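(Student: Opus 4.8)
The plan is to decompose the symmetric difference $D_t = \rho_{\leq t} \,\Delta\, \hat\rho_{\leq t}$ (as a multiset) according to the structure granted by Definition~\ref{def:delay-predictions-outliers}. Fix $\rho \in \calset$ and $t \in [T]$, and invoke the definition to obtain index sets $I, \hat I \subseteq [T]$ with $|I| = |\hat I| = T' \geq T - k$, together with a permutation $\pi \in \permT(T')$ that is $d$-close to the identity and satisfies $\pi(\rho_I) = \hat\rho_{\hat I}$. The first step is to isolate the outlier contribution: the requests of $\rho_{\leq t}$ whose original index lies in $I^C$ number at most $|I^C| = T - T' \leq k$, and similarly the requests of $\hat\rho_{\leq t}$ with index in $\hat I^C$ number at most $k$. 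Removing these from both sides, it suffices to bound the symmetric difference between the ``matched'' portions, i.e.\ between the multiset of $\rho_i$ with $i \in I$, $i \leq t$, and the multiset of $\hat\rho_j$ with $j \in \hat I$, $j \leq t$. Since these two multisets are related by the bijection $\pi$ (a request at position $p(i,I)$ in $\rho_I$ equals the request at position $\pi(p(i,I))$ in $\hat\rho_{\hat I}$), every matched request on the $\rho$-side is matched to exactly one request on the $\hat\rho$-side and vice versa; a request survives into the symmetric difference only if its partner under $\pi$ falls on the other side of the cutoff~$t$.

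The second step is the counting of these ``boundary'' mismatches using the $d$-closeness of $\pi$. The key observation is that if $i \in I$ with $i \leq t$ but its $\pi$-image lands at an index $j = \hat I_{\pi(p(i,I))} > t$ in $\hat\rho$, then both $i$ and $j$ must lie within $d$ positions of the cutoff in the appropriate index spaces — because $\pi$ moves the rank of any element by at most $d$, and passing between the global index $[T]$ and the restricted index $[T']$ only shifts by at most the number of removed (outlier) elements, which is $\leq k$. More carefully: the rank of position $i$ inside $I$ is $p(i,I) \in [i - k,\, i]$, and $\pi$ changes this rank by at most $d$, and then converting back to a global index in $\hat I$ again shifts by at most $k$. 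So $i$ crosses to the other side only if $i \in (t - (2d + k),\, t]$ roughly; the number of indices $i$ in such a window is $O(d + k)$, and symmetrically for the $\hat\rho$-side. Adding up: at most $2k$ elements from the two outlier sets, plus at most $2(2d+k)$-ish boundary elements from the matched portion — but one must be careful about whether $k$ gets counted once or twice in the index-space conversions. Tracking the constants honestly, the matched boundary on each side contributes $\leq d + k$ (the $d$ from the rank shift, the $k$ from the two index-space conversions combining), giving $2(d+k)$ from boundaries plus $2k$ outliers, for a total $\leq 4k + 2d$.

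The main obstacle I anticipate is getting the constants exactly right in the index-space bookkeeping, specifically tracking how the conversion $i \mapsto p(i,I) \mapsto \pi(p(i,I)) \mapsto \hat I_{\pi(p(i,I))}$ inflates a window of width~$0$ at the cutoff into a window whose width is some specific linear combination of $d$ and~$k$. One wants to argue that each of the two restriction maps $I \hookrightarrow [T]$ and $\hat I \hookrightarrow [T]$ contributes an additive slack of at most~$k$ and the permutation contributes~$d$, but these slacks are shared between ``outlier'' accounting and ``boundary'' accounting, so a naive sum would double-count and yield $6k + 2d$ rather than $4k + 2d$. The fix is to account for the outlier elements exactly once — as the $\leq k$ indices in $I^C \cup$ ($\leq k$ in $\hat I^C$) — and then, when bounding the boundary of the matched portion, use a cutoff-window argument phrased directly in the $I$-rank and $\hat I$-rank spaces so that no further $+k$ terms reappear there. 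A clean way to do this is to show directly: if $i \in I$, $i \leq t$, and its partner index $j \in \hat I$ has $j > t$, then $p(i, I) > p(t, I) - d$ (from $d$-closeness, since $\pi$ maps $p(j\text{-related rank})$... — more precisely $p(i,I) \geq \pi(p(i,I)) - d = p(j, \hat I) - d > p(t,\hat I) - d$), and the number of $i \in I$ with $p(i,I) > p(t,\hat I) - d$ and $i \leq t$ is at most $d + |$outliers between the two ranks$| \leq d + k$. Carrying out this single clean inequality on each side, rather than a chain of three separate shifts, should land exactly on $4k + 2d$.
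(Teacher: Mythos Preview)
Your proposal is correct and follows essentially the same approach as the paper: decompose $D_t$ into the at most $2k$ outlier requests (from $I^C$ and $\hat I^C$) plus the ``boundary'' mismatches among matched requests, then bound the latter by tracking how the rank under $I$ (or $\hat I$) differs from the global index by at most $k$ and how $\pi$ shifts ranks by at most $d$, giving $d+k$ crossings per side and a total of $4k+2d$. The paper phrases the boundary count directly in global indices (showing a matched $\rho_j$ with $j\le t$ lands at index at most $t+(k+d)$ in $\hat\rho$), whereas you phrase it in rank space via $p(i,I)$ and $p(t,\hat I)$, but the two calculations are the same triangle inequality through the chain $i \mapsto p(i,I)\mapsto \pi(p(i,I))\mapsto j$.
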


\begin{proof}
    Let $\rho \in \calset$ be any request sequence.
    Since $\hat{\rho}$ is a $d$-delayed prediction with $k$ outliers, there are two sub-sequences $I, \hat{I} \subset [T]$ of length $T' \geq T-k$ and a permutation $\pi \in \permT(T')$ such that $\pi$ is $d$-close to the identity permutation ${\sf id}$ and $\pi(\rho_{I}) = \hat{\rho}_{\hat{I}}$.
    Let $\rho_{I^C}$ and $\hat{\rho}_{\hat{I}^C}$ be the sub-sequence of at most $k$ outlier requests in $\rho$ and $\hat{\rho}$, respectively.

    Consider a time step $t \in I$ with request $\rho_t$.
    Let $t' = p(t, I) \leq t$.
    Since $\pi$ and ${\sf id}$ are $d$-close, we have from
        $d \geq |\pi^{-1}(\pi(t')) - {\sf id}^{-1}(\pi(t'))| = |t' - \pi(t')| $
    that
    \begin{equation*}
        t' - d ~\leq~ \pi(t') ~\leq~ t' + d~.
    \end{equation*}
    From $ t' \in [t-k,t]$, %
    we that the index $\pi(t') \in [t-(k+d),t+d]$. 
    Since, $\pi(\rho_{I}) = \hat{\rho}_{\hat{I}}$ and $t \in I$, we note that $\rho_t \in \rho$ is shuffled to index $\pi(t')$ in $\hat{\rho}_{\hat{I}}$. %
    As there are at most $k$ outliers, $\rho_t$ is shuffled to index at most $\pi(t') + k \leq t + (k + d)$ in $\hat{\rho}$.
    Since outliers can only increase the index that $\rho_t$ is shuffled to in $\hat{\rho}$, we can also conclude that $\rho_t$ is shuffled to index at least $\pi(t')$. %
    We are now ready to bound $|D_t|$.
    
    First, we bound $\rho_{\leq t} \setminus \hat{\rho}_{\leq t}$.
    Consider a request at index $j \leq t$.
    At most $k$ indices in $[t]$ can be in $I^C$.
    Suppose $j \in I \cap [t]$ and let $j' = p(j, I)$ so that $j - k \leq j' \leq j$.
    From our above argument, the request $\rho_j$ is shuffled to at most index $j + (k + d) \leq t + (k + d)$ in $\hat{\rho}$.
    Therefore, at most $(k + d)$ requests in $I \cap [t]$ can be shuffled to index larger than $t$.
    Combined with the at most $k$ requests in $I^C$, we can, thus, bound $\rho_{\leq t} \setminus \hat{\rho}_{\leq t}$ to have size $(2k + d)$.
    
    Next, we bound $\hat{\rho}_{\leq t} \setminus \rho_{\leq t}$.
    Consider an index $\hat{j} \leq t$.
    Suppose $\hat{j} \in \hat{I} \cap [t]$, noting that most $k$ indices in $[t]$ can be in $\hat I^C$.
    Then, denote by $\hat{j}' = p(\hat{j}, \hat{I})$ the position of $\hat{j}$ in $\hat{I}$ and let $j'$ be such that $\hat{j}' = \pi(j')$ for some $j' = p(j, I)$ where $j$ is the index such that $j' = p(j,I)$.
    Since $\rho$ and $\hat{\rho}$ have at most $k$ outliers, we can upper bound,
    
    \begin{equation*}
        j ~\leq~ j' + k ~\leq~ \hat{j}' + (k + d) ~\leq~ \hat{j} + (k + d) ~\leq~ t + (k + d)
    \end{equation*}
    
    Since we want to bound $\hat{\rho}_{\leq t} \setminus \rho_{\leq t}$, we consider only the indices $j \geq t + 1$, of which there are at most $(k + d)$ indices satisfying $t + 1 \leq j \leq t + (k + d)$.
    Combined with the at most $k$ indices in $\hat{I}^C$, we can bound $\hat{\rho}_{\leq t} \setminus \rho_{\leq t}$ to have size $(2k + d)$.  
\end{proof}

Furthermore, the difference set can be maintained efficiently.

\begin{lemma}
    \label{lemma:d-k-sym-diff-update}
    For all $t \in [T]$, the symmetric difference $D_{t-1}$ can be updated to $D_t$ in $\tO{1}$ time.
    The update bound can be made $O(1)$ expected time. %
\end{lemma}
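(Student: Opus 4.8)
The plan is to maintain an explicit representation of $D_t$ by a dictionary keyed by requests, and to reduce the multiset symmetric difference to a single signed-multiplicity function so that appending one request to each of the two sequences touches only $O(1)$ entries. First I would replace the pair of multisets $\rho_{\leq t}$ and $\hat{\rho}_{\leq t}$ by the function $c_t\colon \domain\cup\queries \to \Z$ defined by $c_t(x) = (\text{number of occurrences of }x\text{ in }\rho_{\leq t}) - (\text{number of occurrences of }x\text{ in }\hat{\rho}_{\leq t})$. The symmetric difference is then recovered from $c_t$ alone: a request $x$ lies in $\rho_{\leq t}\setminus\hat{\rho}_{\leq t}$ with multiplicity $c_t(x)$ when $c_t(x)>0$, and in $\hat{\rho}_{\leq t}\setminus\rho_{\leq t}$ with multiplicity $-c_t(x)$ when $c_t(x)<0$. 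So it suffices to maintain the set $\{(x, c_t(x)) : c_t(x)\neq 0\}$, for which I would use a dictionary (a hash table, or a balanced search tree) keyed by the $O(\log|\domain\cup\queries|)$-bit request identifiers, together with a doubly linked list of its entries with back-pointers, so that $D_t$ can also be enumerated in time proportional to its size.

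Next I would observe that the transition from step $t-1$ to step $t$ appends exactly one request to each sequence, i.e.\ $\rho_{\leq t} = \rho_{\leq t-1}\uplus\{\rho_t\}$ and $\hat{\rho}_{\leq t} = \hat{\rho}_{\leq t-1}\uplus\{\hat{\rho}_t\}$ as multisets. Hence $c_t$ and $c_{t-1}$ agree everywhere except possibly at $\rho_t$ and $\hat{\rho}_t$, where $c_t(\rho_t) = c_{t-1}(\rho_t)+1$ and $c_t(\hat{\rho}_t) = c_{t-1}(\hat{\rho}_t)-1$; if $\rho_t=\hat{\rho}_t$ these two changes cancel and $c_t\equiv c_{t-1}$. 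Updating $D_{t-1}$ to $D_t$ therefore amounts to at most two dictionary operations: for each affected request $x$, look up its current count, add $\pm 1$, then insert $x$ if its count was $0$, delete $x$ if its new count is $0$, or overwrite the stored value otherwise, updating the linked list in $O(1)$ per change via the back-pointers. Correctness follows by induction on $t$ with base case $D_0 = \emptyset$ (both prefixes empty), using that the invariant ``the dictionary equals $\{(x,c_t(x)) : c_t(x)\neq 0\}$'' is preserved by the update described.

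For the running time, each step performs a constant number of dictionary operations on keys of bit length $O(\log|\domain\cup\queries|)$, which is polylogarithmic in the instance size. Implementing the dictionary as a balanced binary search tree gives $O(\log|\domain\cup\queries|) = \tO{1}$ worst-case time per step, and implementing it as a hash table gives $O(1)$ expected time per step, which proves both halves of the lemma.

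This is a bookkeeping argument rather than one with a genuine obstacle, but the two points I would be careful about are: (i) using the signed-count viewpoint, since a naive set-difference representation does not update cleanly once multiplicities are present — appending a single element can both remove a ``$\rho$-side'' copy and, one step later, create a ``$\hat{\rho}$-side'' copy of the same request; and (ii) pinning down the model of computation, namely that requests are representable by $O(\log|\domain\cup\queries|)$-bit words on a word-RAM, which is exactly what makes the dictionary operations cost $\tO{1}$ deterministically, and $O(1)$ in expectation with hashing.
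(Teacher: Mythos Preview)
Your proposal is correct and follows the same core observation as the paper: at step $t$, the symmetric difference can change only at the two requests $\rho_t$ and $\hat{\rho}_t$, so a constant number of dictionary operations suffice. The paper's proof is a two-line sketch of exactly this; your version is more careful in that you make the signed-multiplicity representation explicit (which is indeed needed to handle repeated requests cleanly) and you actually justify the $O(1)$ expected-time claim via hashing, which the paper leaves implicit.
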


\begin{proof}
    Let $\rho_t$ be the actual update and $\hat{\rho}_t$ be the predicted update.
    If $\rho_t = \hat{\rho}_t$, then $D_t$ requires no modification, so we may assume $\rho_t \neq \hat{\rho}_t$.
    Clearly $D_{t - 1}$ and $D_t$ differ by at most two elements (specifically $\rho_t, \hat{\rho}_t$) and we can easily update $D_t$ in $\tO{1}$ time. %
\end{proof}

\subsection{Two Algorithms for the \texorpdfstring{$\striangleF$}{s-triangle} Problem}
\label{sec:striangle-delay-alg}

Our first example of algorithms with predictions is the $\striangleF$ problem.
In the $\striangleF$ problem, the algorithm is required to maintain the number of triangles that contain a fixed vertex $s$.
In Section \ref{sec:striangle-lb-delay}, we showed conditional lower bounds for the $\striangleF$ problem given bounded delay predictions.
We now prove that our lower bound is almost optimal.

As a warm-up, we recall how to solve the $\striangleF$ problem in the standard online setting.

\subsubsection*{Warmup: \texorpdfstring{$\striangleF$}{} Algorithm Without Predictions}

For completeness, we start by revisiting a well-known algorithm for the $\striangleF$ problem.
In our algorithms with bounded delay predictions for the $\striangleF$ problem (Theorem \ref{thm:striangle-delay-alg}), we use the online algorithm (without predictions) as a sub-routine in the preprocessing phase to compute the counts under the predicted update sequence.

\begin{theorem}
\label{thm:striangle-inter-alg}
There is an algorithm which solves the $\striangleF$ problem with $O(n^2)$ preprocessing time, $O(n)$ time per update, and $O(1)$ time per query.

There is also an algorithm which solves the $\striangleF$ problem with $O(n^2)$ preprocessing time, $O(1)$ time per update, and $O(n^2)$ time per query.
\end{theorem}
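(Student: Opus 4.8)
The plan is to give two elementary data structures, both built on top of the $n\times n$ adjacency matrix $A$ of the current graph (so that the edge-indicator $\chi(\cdot,\cdot)$ can be evaluated in $O(1)$ time) together with a Boolean indicator array for the neighbourhood $N(s)$ of the fixed vertex $s$. The starting point is the identity used already in Theorem~\ref{thm:striangle-lb-worst}, namely $\striangleF=\sum_{\{u,v\}}\chi(s,u)\,\chi(s,v)\,\chi(u,v)$, which counts exactly the unordered pairs $u,v\in N(s)$ with $uv\in E$. Hence in the preprocessing phase we read in $G_0$, fill $A$ in $O(n^2)$ time and the indicator array of $N(s)$ in $O(n)$ time; for the first data structure we additionally compute the initial count $T_s$ by iterating over all pairs of vertices in $N(s)$ and checking $A[u][v]$, which costs $O(|N(s)|^2)=O(n^2)$. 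This establishes the $O(n^2)$ preprocessing bound in both statements.

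For the query-slow data structure (second statement) each update simply flips the corresponding bit of $A$, and, if the flipped edge $\{x,y\}$ is incident to $s$, toggles membership of the other endpoint in the indicator array of $N(s)$; this is $O(1)$. A query recomputes $\striangleF$ from scratch by summing $\chi(u,v)$ over all pairs $u,v\in N(s)$, taking $O(n^2)$ time. Correctness is immediate from the displayed identity.

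For the update-slow data structure (first statement) we instead maintain the running count $T_s$ and answer each query by returning $T_s$ in $O(1)$ time. All work is in the update, which flips some edge $\{x,y\}$ and splits into two cases. If neither endpoint is $s$, the only triangle through $s$ that can change is $\{s,x,y\}$, which is created or destroyed exactly when $x,y\in N(s)$; we read the old value $A[x][y]$ to learn the sign, adjust $T_s$ by $\pm1$ when applicable, and flip $A[x][y]$, all in $O(1)$. If one endpoint is $s$, say the edge is $\{s,v\}$, the triangles through $s$ that change are precisely $\{s,v,w\}$ for $w\in N(s)\setminus\{v\}$ with $vw\in E$; we scan every vertex $w$, count those with $w\in N(s)$ and $A[v][w]=1$, and add this count to $T_s$ on an insertion, or subtract it on a deletion (computing the count before removing $v$ from $N(s)$), then flip $A[s][v]$ and update the indicator array. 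This scan costs $O(n)$ and dominates the update.

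Correctness of both data structures follows from the identity together with the fact that a single edge flip affects only the terms in which it participates. The one place that needs care is the $s$-incident update case, where one must read the neighbourhood $N(s)$ in the state that does not include (for an insertion) or that excludes (for a deletion) the endpoint $v$ itself; since the graph has no self-loops this is automatic, so this bookkeeping is the main — and only mild — obstacle, and the rest is routine verification of the stated time bounds.
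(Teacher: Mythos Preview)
Your proposal is correct and follows essentially the same approach as the paper: maintain the adjacency matrix, and for the query-optimized variant maintain a running count of $s$-triangles that is updated in $O(1)$ time for edges not incident to $s$ and in $O(n)$ time (by scanning all potential third vertices) for edges incident to $s$. Your treatment is in fact slightly more careful than the paper's about the order of operations when updating $N(s)$ on an $s$-incident flip.
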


\begin{proof}
First, for any graph $G$, we can count the number of $\striangleF$'s in the graph $G$ in $O(n^2)$ time by considering all $n^2$ triples of vertices $(s, a, b)$ for $a, b \in V$.
Note that this gives automatically the second update-optimized algorithm.
In the pre-processing step, we use $O(n^2)$ time to save the adjacency matrix of the initial graph.
At each update, we update the adjacency matrix in constant time, and compute $\striangleF$ from scratch in $O(n^2)$ time at each query.

We now describe the first query-optimized algorithm.
In the preprocessing step, we save the adjacency structure of the initial graph $G$ and compute the initial number of $\striangleF$'s in $O(n^2)$ time.
Store this count in a variable $c$.
Consider now an update.

{\bf Case 1: Any update of an edge not adjacent to $s$}

Let the update be to the edge $(u,v)$ (either inserting or deleting the edge). 
Note that edge $(u, v)$ can only participate in the $\striangleF$ of three vertices $(s, u, v)$.
Therefore, we need to increment (resp. decrement) $c$ if and only if $(s, u), (s, v)$ are both edges in the graph $G$.
This requires $O(1)$ time.

{\bf Case 2: Any update of an edge adjacent to $s$} 

Let the update be to the edge $(s, u)$. 
$(s, u)$ participates in the at most $n - 2$ triangles $(s, u, v)$ where $(s, v), (u, v) \in E$.
We can count the number of such $v$, denoted $c_u$ in $O(n)$ time.
It suffices to increment $c$ by $c_u$ if $(s, u)$ is an insertion, and decrement $c$ by $c_u$ if $(s, u)$ is a deletion.

Since the adjacency structure can be updated in $O(1)$ time, any update can be processed in $O(n)$ time while any query can be answered in $O(1)$ time by returning $c$.
\end{proof}

Next, we explore the $\striangleF$ problem under our various prediction models, presenting tight upper bounds.

\subsubsection*{An Algorithm for Bounded Delay Predictions with Outliers}

We begin with an upper bound for the $\striangleF$ problem taking advantage of predictions with bounded delay and outliers.
This immediately implies an algorithm given predictions with bounded delay.

\begin{theorem}
    \label{thm:striangle-delay-alg}
    Let $T$ be polynomial in $n$.
    Let $\hat{\rho} = (\hat{\rho}_1, \dotsc, \hat{\rho}_T)$ be a $d$-delayed prediction with $k$ outliers for the $\striangleF$ problem (Definition \ref{def:delay-predictions-outliers}). 
    
    There is an algorithm solving the $\striangleF$ problem given $\hat{\rho}$ with polynomial preprocessing time $P(n)$, update time $U(n) = O(d + k)$, and query time $Q(n) = O(1)$.

    There is an algorithm solving the $\striangleF$ problem given $\hat{\rho}$ with polynomial preprocessing time $P(n)$, update time $U(n) = O(1)$, and query time $Q(n) = O((d + k)^2)$.
\end{theorem}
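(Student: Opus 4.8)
The plan is to exploit that the predicted request sequence $\hat{\rho}$ can be fully executed during preprocessing, using the warm-up online algorithm (Theorem~\ref{thm:striangle-inter-alg}) as a subroutine, and to store enough intermediate information so that at each online time step $t$ we can reconstruct the answer from the precomputed answer at the ``corresponding'' predicted step, after applying only the small symmetric difference between $\rho_{\leq t}$ and $\hat{\rho}_{\leq t}$. By Lemma~\ref{lemma:d-k-sym-diff-size} this symmetric difference $D_t$ has size $O(d+k)$, and by Lemma~\ref{lemma:d-k-sym-diff-update} it can be maintained in $\tO{1}$ time per step. So the online work reduces to: (i) maintain $D_t$; and (ii) correct a stored triangle count for the small set of edge discrepancies in $D_t$. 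For the query-optimized variant ($U=O(d+k)$, $Q=O(1)$), I would in preprocessing run the $O(n)$-update $\striangleF$ algorithm on $\hat{\rho}$, storing after each predicted prefix the value $c(\hat{\rho}_{\leq \tau})$ (the $\striangleF$ count on the predicted graph), together with the predicted adjacency structure at each step. Actually we only need the count indexed by the ``matched'' predicted position: given the online prefix $\rho_{\leq t}$, the true graph $G_t$ differs from some predicted graph $\hat{G}_{\tau(t)}$ by exactly the edge-flips in $D_t$ (where $\tau(t)$ is chosen so that $\hat{\rho}_{\leq \tau(t)}$ is the matched prefix; $|\tau(t)-t|=O(d+k)$). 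We then apply the $O(d+k)$ flips in $D_t$ one at a time to the stored count $c(\hat{G}_{\tau(t)})$, using the $O(1)$ incremental update rule from Case~1 of Theorem~\ref{thm:striangle-inter-alg} for edges not adjacent to $s$, but the $O(n)$ rule from Case~2 for edges adjacent to $s$ — which is too slow, so this needs care: I would instead precompute, for each predicted graph and each vertex $u$, the quantity $c_u$ = number of common neighbors of $s$ and $u$, so that correcting an $s$-adjacent flip also costs $O(1)$ given these tables. The total correction cost is then $O(d+k)$ per update and $O(1)$ per query (just return the corrected count).

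For the update-optimized variant ($U=O(1)$, $Q=O((d+k)^2)$), I would in preprocessing store the predicted adjacency matrices (or enough to reconstruct them) and the $\striangleF$ counts. On each update we only do the $\tO{1}$ maintenance of $D_t$ and store the new request; we defer all triangle-count correction to query time. At a query, we take the precomputed count $c(\hat{G}_{\tau(t)})$ and apply the $O(d+k)$ flips in $D_t$; but now we cannot afford the precomputed $c_u$ tables for the $s$-adjacent corrections ... actually we can still precompute $c_u$ for the predicted graphs, but after applying discrepancy flips the common-neighbor counts change. The clean way: a flip of edge $e=\{a,b\}$ changes $\striangleF$ by $\pm\chi(s,a)\chi(s,b)$ if $s\notin e$, and by $\pm|\,N(s)\cap N(u)\,|$ if $e=\{s,u\}$. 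When processing the $O(d+k)$ flips of $D_t$ in sequence, each subsequent common-neighbor query must account for the $\le d+k$ earlier flips, costing $O(d+k)$ each, hence $O((d+k)^2)$ total — which matches the claimed bound. So here the quadratic term arises precisely from the interaction among the discrepancy edges.

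The main technical obstacle I expect is the bookkeeping of the ``matched'' index $\tau(t)$: Lemma~\ref{lemma:d-sub-sequence-containment} / Lemma~\ref{lemma:d-k-sym-diff-size} give $\hat{\rho}_{\leq t-d-\Theta(k)} \subseteq \rho_{\leq t} \subseteq \hat{\rho}_{\leq t+d+\Theta(k)}$ and bound $|D_t|$, but to actually recover the answer we must identify a concrete predicted prefix whose graph differs from $G_t$ by exactly $D_t$'s flips (counting multiplicities correctly, since an edge flipped an even number of times in the discrepancy contributes nothing). I would handle this by maintaining $D_t$ as a multiset of pending flips relative to a chosen reference prefix (e.g. $\hat{\rho}_{\leq t}$ itself, updated greedily as in Lemma~\ref{lemma:d-k-sym-diff-update}), cancelling matched pairs; correctness then follows because the underlying \emph{graph} $G_t$ depends only on the parity of flips of each edge, and $D_t$ by construction records exactly the edges of odd discrepancy parity. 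A secondary subtlety is outlier requests that are queries rather than updates (they don't affect the graph, so they can be ignored for counting) versus outlier updates (which land in $D_t$ and are corrected like any other discrepancy edge). Once these are pinned down, the two running-time bounds follow by the accounting sketched above, and the polynomial preprocessing bound is immediate since $T=\poly(n)$ and each of the $T$ predicted steps costs at most $O(n^2)$ in the subroutine plus $O(n)$ for the $c_u$ table.
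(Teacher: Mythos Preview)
Your high-level plan---precompute on $\hat{\rho}$, maintain the symmetric difference $D_t$, and correct the stored count using only the $O(d+k)$ discrepancy edges---matches the paper, and your update-optimized variant ($U=O(1)$, $Q=O((d+k)^2)$) is essentially correct: applying the $O(d+k)$ flips in sequence, where each $s$-adjacent flip rescans the $O(d+k)$ earlier flips to adjust the common-neighbor count, gives the claimed quadratic bound. The paper organizes this differently (it partitions $s$-triangles by how many non-$s$ vertices lie in $V_{D_t}$ rather than replaying flips), but the arithmetic is the same.

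Your query-optimized variant ($U=O(d+k)$, $Q=O(1)$), however, has a real gap. You propose to ``apply the $O(d+k)$ flips in $D_t$ one at a time to the stored count $c(\hat{G}_{\tau(t)})$'', claiming each $s$-adjacent flip costs $O(1)$ via the precomputed $c_u$ table. But $c_u$ is computed with respect to $\hat{G}_{\tau(t)}$, not with respect to the intermediate graph after applying earlier flips from $D_t$; the very interaction you correctly account for in the other variant is silently dropped here. As described, each $s$-adjacent flip actually costs $O(d+k)$ to correct, so replaying all of $D_t$ at every update costs $O((d+k)^2)$, not $O(d+k)$.

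The paper avoids this by \emph{not} recomputing from $\hat{c}_{\tau(t)}$ at each step. It maintains the true count $c$ incrementally across updates, so at time $t$ there is exactly \emph{one} new flip $\rho_t$ to process. If $\rho_t=(s,x)$, the change to $c$ is $\pm s(x,t,\rho)$, and the key observation is that the true sensitivity $s(x,t,\rho)$ can differ from the precomputed predicted sensitivity $s(x,t,\hat{\rho})$ only through vertices in $V_{D_t}$ (endpoints of discrepancy edges). Hence the error term $\sensediff(x)=s(x,t,\rho)-s(x,t,\hat{\rho})$ is computable in a single $O(|V_{D_t}|)=O(d+k)$ scan. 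That is the missing idea: process one flip per update against the \emph{true} running count, and spend the $O(d+k)$ budget on correcting a single sensitivity lookup rather than on replaying all of $D_t$.
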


Observe that if the number of outliers is small (i.e. $k = O(d)$), then we have a query-optimized algorithm that has $U(n) = O(d)$ and $Q(n) = O(1)$ and a update-optimized algorithm that has $U(n) = O(1)$ and $Q(n) = O(d^2)$.
We now introduce a definition that will be useful for the $\striangleF$ problem.

\begin{definition}
    \label{def:striangle-sensitivity}
    Let $G$ be a dynamic graph with update sequence $\rho$.
    For a given vertex $u \in V$ and timestep $\tau$, the %
    {\bf sensitivity of vertex $u$ at time $\tau$ with respect to sequence $\rho$}, is 
    
    \[ s_{\striangleF}(u, \tau, \rho) := \Big|\set{~v \in V \setminus \set{s, u} ~:~ \set{u,v},\set{v,s} \in E_\tau(\rho) ~}\Big| ~. \]
    
    That is, if the edge $\set{s, u}$ is flipped in the $\tau$-th update, then the number of $\striangleF$'s in graph $G$ changes by the value $s_{\striangleF}(u, \tau, \rho)$.
    For convenience, the sensitivity may be denoted $s(u, \tau, \rho)$ or $s(u, \tau)$ when the update sequence $\rho$ is clear.
\end{definition}

\subsubsection*{The Query-Optimized Algorithm}
In the query-optimized algorithm, we maintain the following data structures:

\begin{enumerate}
    \item $c$ the count of $\striangleF$ in the current graph $G$.
    We answer each query in $O(1)$ time by returning $c$.

    \item The current graph $G = (V, E_t)$ where $E_t = E_t(\rho)$.

    \item The predicted graph $\hat{G} = (V, E_t(\hat{\rho}))$.
    
    \item $D_t = \rho_{\leq t} \Delta \hat{\rho}_{\leq t} = E_t(\rho) \Delta E_t(\hat{\rho})$ the symmetric difference of the predicted and actual update sequence (including multiplicity), or alternatively the symmetric difference of the edge sets $E_t(\rho), E_t(\hat{\rho})$.
    Observe that the second equality follows from the fact that the initial graph $G_0$ is known. 
    
    \item $V_{D_t} = \set{v \in V \given \textrm{$v$ is incident to some edge in $D_t$}}$ the set of vertices containing all endpoints of edges in $D_t$.
    
    \item The {\bf predicted sensitivities} $s(v, t, \hat{\rho})$ for all vertices $v$ and time steps $t$.
    For each vertex $v$, this is stored as a sequence of tuples $S(v) = (t, m)$ where for each $t$ with $s(v, t, \hat{\rho}) \neq s(v, t-1, \hat{\rho})$, there is an entry $(t, s(v, t, \hat{\rho}))$.
    The list $S(v)$ is stored in increasing order of $t$.
    For a given $v \in V, t \in [T]$, we can easily compute $s(v, t, \hat{\rho})$ by searching for the largest key $t_0$ such that $t_0 \leq t$ in $S(v)$ in $\tO{1}$ time.
\end{enumerate}

First, we bound the size of the above data structures.

\begin{lemma}
    \label{lemma:striangle-sym-diff-size}
    At any time step $t$, $|D_t|, |V_{D_t}| = O(d + k)$.
    The predicted sensitivities require space~$O(n T)$. 
\end{lemma}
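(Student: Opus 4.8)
The claim has two parts, and the first one ($|D_t|, |V_{D_t}| = O(d+k)$) is essentially immediate from earlier work, while the second ($O(nT)$ space for the predicted sensitivities) is a direct counting argument. The plan is to dispatch each part in turn.

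For the bound on $|D_t|$: I would simply invoke Lemma~\ref{lemma:d-k-sym-diff-size}, which already states that for a prediction that is $d$-delayed with $k$ outliers, the symmetric difference $D_t = \rho_{\leq t} \Delta \hat{\rho}_{\leq t}$ has size at most $4k + 2d = O(d+k)$ at every time step $t \in [T]$. (Here I am using the identification, noted in the data-structure description, that the multiset symmetric difference of the update prefixes equals the symmetric difference of the edge sets $E_t(\rho) \Delta E_t(\hat\rho)$, since the initial graph $G_0$ is fixed and each edge flip toggles membership; because flips are cyclic of order $2$, the parity of the number of occurrences of each edge-flip request in $\rho_{\leq t}$ versus $\hat\rho_{\leq t}$ determines whether that edge lies in the set symmetric difference, and any edge contributing to the set difference must contribute to the multiset difference too.) For $|V_{D_t}|$: each edge in $D_t$ has exactly two endpoints, so $|V_{D_t}| \leq 2|D_t| \leq 8k + 4d = O(d+k)$.

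For the space bound on the predicted sensitivities: the data structure stores, for each vertex $v \in V$, the list $S(v)$ of breakpoints $(t, s(v,t,\hat\rho))$ at which the predicted sensitivity $s(v, t, \hat\rho)$ changes as $t$ ranges over $[T]$. There are $n$ vertices, and for each vertex the list $S(v)$ has at most $T$ entries (one per time step, in the worst case), each of size $O(1)$ machine words. Hence the total space is $O(nT)$. I would also remark (or silently use) that $T$ is polynomial in $n$ by hypothesis of Theorem~\ref{thm:striangle-delay-alg}, so this is polynomial; but the statement as written only asks for $O(nT)$, so no further work is needed.

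The only mild subtlety — and the one place I would be careful — is making the passage from the multiset symmetric difference of request prefixes (which is what Lemma~\ref{lemma:d-k-sym-diff-size} literally bounds) to the set symmetric difference of edge sets fully rigorous, since an edge flipped an even number of times more in one prefix than the other still contributes to the multiset difference but not to the edge-set difference. This only makes $|D_t|$ smaller, so the inequality direction we need is preserved; I would state this one-line observation explicitly and then conclude. Everything else is routine.
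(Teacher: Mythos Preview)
Your proposal is correct and follows essentially the same argument as the paper: invoke Lemma~\ref{lemma:d-k-sym-diff-size} for $|D_t|$, use the two-endpoints observation for $|V_{D_t}|$, and count $n$ vertices times at most $T$ breakpoints each for the sensitivity storage. Your added remark distinguishing the multiset symmetric difference (what Lemma~\ref{lemma:d-k-sym-diff-size} bounds) from the edge-set symmetric difference (what is needed here) is more careful than the paper's own proof, which simply asserts the equality in the data-structure description; since the edge-set difference can only be smaller, your inequality direction is the right one and the argument goes through.
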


\begin{proof}
    The bound on $D_t$ follows as a consequence of Lemma \ref{lemma:d-k-sym-diff-size}.
    Since each edge has two endpoints, we can bound the size of $V_{D_t}$ by $O(d + k)$.

    To bound the size of each sequence of predicted sensitivities, it suffices to note that there are $n$ vertices each with at most $T$ distinct time steps in which the sensitivity may change.
\end{proof}

Next, we claim that we can efficiently update each data structure.

\begin{lemma}
    \label{lemma:striangle-sym-diff-update}
    Let $D_t$ and $V_{D_t}$ denote the data structures $D, V_D$ at time step $t$ respectively.
    Then, $D_t, V_{D_t}$ can be updated from $D_{t - 1}, V_{D_{t - 1}}$ in $\tO{1}$ time.
\end{lemma}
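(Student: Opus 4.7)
The plan is to leverage Lemma~\ref{lemma:d-k-sym-diff-update} for the maintenance of $D_t$ and to extend the bookkeeping to $V_{D_t}$ by storing a per-vertex reference counter. Both $D$ and $V_D$ will be represented as hash sets so that insertion, deletion, and membership tests cost $\tO{1}$ time. Using the identification $D_t = E_t(\rho)\,\Delta\,E_t(\hat{\rho})$ established in Lemma~\ref{lemma:d-k-sym-diff-size}, the step from $D_{t-1}$ to $D_t$ amounts to toggling $D$-membership of the edge flipped by $\rho_t$ and of the edge flipped by $\hat{\rho}_t$: when $\rho_t$ flips an edge $e$, the set $E_t(\rho)$ differs from $E_{t-1}(\rho)$ exactly in $e$, so $e$'s membership in the symmetric difference flips; the argument is symmetric for $\hat{\rho}_t$. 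In particular, if $\rho_t = \hat{\rho}_t$ the two toggles cancel and $D_t = D_{t - 1}$, so the routine must handle this case correctly by performing toggles in order rather than a single net update.

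To maintain $V_{D_t}$, I would store, for each $v \in V$, a counter $c_v := |\set{ e \in D_t \given v \in e}|$, together with a hash set containing exactly those $v$ with $c_v > 0$. Whenever the step above inserts an edge $\set{u,v}$ into $D$, increment $c_u$ and $c_v$, and insert the corresponding vertex into $V_{D_t}$ whenever its counter transitions from $0$ to $1$; whenever an edge $\set{u,v}$ is removed from $D$, decrement $c_u$ and $c_v$ and remove a vertex from $V_{D_t}$ whenever its counter drops from $1$ to $0$. Each toggled edge touches exactly two endpoints, so this costs $O(1)$ per edge toggle, and at most two edges are toggled per time step, yielding $\tO{1}$ total.

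There is no real obstacle: the only thing to verify is that the counters $c_v$ remain a faithful invariant of $D_t$ throughout, which follows by a straightforward induction on $t$ using the toggle semantics described above. The $\tO{1}$ factor comes exclusively from hash-set operations, so the bound can be replaced by $O(1)$ expected time using standard hashing, matching the improvement already noted in Lemma~\ref{lemma:d-k-sym-diff-update}.
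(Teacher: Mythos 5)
Your proof is correct and takes essentially the same approach as the paper: observe that at most two edges (the flips of $\rho_t$ and $\hat{\rho}_t$) toggle membership in $D$ per step, and therefore at most four vertices are affected in $V_D$. You go further than the paper, however, by spelling out the per-vertex reference counter $c_v$ that is needed to decide in $O(1)$ time whether removing an edge from $D$ should also remove its endpoints from $V_D$; the paper merely asserts that the update can be done "easily" without giving this mechanism, and without some such counter the naive check would cost $O(|D|) = O(d+k)$ per deleted edge. So your write-up is a slightly more rigorous version of the paper's argument.
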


\begin{proof}
    Let $e$ be the given update and $\hat{e}$ be the predicted update.
    If $e = \hat{e}$, then $D$ (and therefore $V_D$) require no modification, so we may assume $e \neq \hat{e}$.
    Clearly $D_{t}, D_{t - 1}$ differ by at most 2 elements (specifically $e, \hat{e}$) and $V_{D_t}, V_{D_{t - 1}}$ differ by at most 4 elements (their endpoints).
    We can easily update these data structures in $\tO{1}$ time.
\end{proof}

We are now ready to present the query-optimized algorithm.

\begin{algorithm}[H]

\SetKwInOut{Input}{Input}
\Input{Initial graph $G_0$ and $d$ delayed predictions $\hat{\rho}$ with $k$ outliers}

\BlankLine

Compute initial sensitivities $s(u, 0) \gets \#\set{v \notin \set{s, u} \given (u, v), (v, s) \in E_0}$ for all $u \neq s$

Initialize $S(v) \gets \set{(0, s(v, 0))}$ for all $v \in V$ and $v \neq s$.

Compute $c \gets \striangleF(G_0)$

Initialize $D \gets \emptyset$, $V_D \gets \emptyset$

\For{$t = 1$ to $T$}{
    \If{$\hat{\rho}_t = (s, u)$}{
        \For{$v \neq u$ and $(u, v) \in E_t(\hat{\rho})$}{
            $S(v) \gets S(v) \circ (t, s(v, t - 1, \hat{\rho}) + e)$ where $e = \begin{cases}
                +1 & \textrm{if $(s, u)$ inserted} \\
                -1 & \otherwise
            \end{cases}$
        }
    }
    \If{$\hat{\rho}_t = (u, v)$}{
        \If{$(s, v) \in E_t(\hat{\rho})$}{
            $S(u) \gets S(u) \circ (t, s(u, t - 1, \hat{\rho}) + e)$ where $e = \begin{cases}
                +1 & \textrm{if $(u, v)$ inserted} \\
                -1 & \otherwise
            \end{cases}$
        }
        \If{$(s, u) \in E_t(\hat{\rho})$}{
            $S(v) \gets S(v) \circ (t, s(v, t - 1, \hat{\rho}) + e)$ where $e = \begin{cases}
                +1 & \textrm{if $(u, v)$ inserted} \\
                -1 & \otherwise
            \end{cases}$
        }
    }
}

\caption{$\mathbf{UPreprocess\striangleF}(G_0, \hat{\rho})$}
\label{alg:update-striangle-preprocess}
\end{algorithm}

\begin{algorithm}[H]

\SetKwInOut{Input}{Input}
\Input{Current graph $G_{t - 1}$, current update $(u_t, v_t)$, $d$-delayed predictions $\hat{\rho}$ with $k$ outliers, and update history $\rho_{<t}$.}

\BlankLine

Update $G$ (resp. $\hat{G}$) by flipping edge $(u_t, v_t)$ (resp. $\hat{\rho}_t$)

Update $D \gets \rho_{\leq t} \Delta \hat{\rho}_{\leq t}$ and $V_D$ according to Lemma \ref{lemma:striangle-sym-diff-update}

\If{$s \notin (u_t, v_t)$}{
    \Return $c \gets c + e$ where $e = \begin{cases}
        +1 & (s, u_t), (s, v_t), (u_t, v_t) \in E_{t} \\
        -1 & (s, u_t), (s, v_t) \in E_{t}, (u_t, v_t) \notin E_{t} \\
        0 & \otherwise
    \end{cases}$ 
}
    
\If{$\set{s, x} = \set{u_t, v_t}$}{
    $\sensediff(x) \gets 0$
    
    \For{$v \in V_D \setminus \set{s, x}$}{
        $\sensediff(x) \gets \sensediff(x) + e$ where $e = \begin{cases}
            +1 & (x, v), (s, v) \in E_t \andT (x, v), (s, v) \notin E_t(\hat{\rho}) \\
            -1 & (x, v), (s, v) \in E_t(\hat{\rho}) \andT (x, v), (s, v) \notin E_t \\
            0 &\otherwise
        \end{cases}$
    }
    
    \Return $c \gets c + b (s(x, t, \hat{\rho}) + \sensediff(x))$ where $b = +1$ if $(s, x)$ inserted and $-1$ otherwise.
}

\caption{$\mathbf{UUpdate\striangleF}(G_{t - 1}, (u_t, v_t), \hat{\rho}, \rho_{\leq t})$}
\label{alg:update-striangle-update}
\end{algorithm}

Given a query, we return $c$ in $O(1)$ time.

We now provide the proof for the query-optimized algorithm.

\paragraph{Update Correctness}

\begin{proof}
    In each query, we return $c$ in constant time.
    It therefore suffices to show that after each request, $c$ contains the correct value $\striangleF(G_t)$.
    We then proceed by induction on $t$.
    Clearly in the base case $c = \striangleF(G_0)$ after Algorithm \ref{alg:update-striangle-preprocess}.

    We now prove the inductive case.
    As in Theorem \ref{thm:striangle-inter-alg}, if the update edge $(u_t, v_t)$ does not include $s$, we increment $c$ if the triangle $(s, u_t, v_t)$ is added, decrement $c$ if $(s, u_t, v_t)$ is deleted, and leave $c$ unchanged otherwise.
    If the edge update is of the form $(s, x)$, it suffices to show $\sensediff(x) = s(x, t, \rho) - s(x, t, \hat{\rho})$.
    Recall that $s(x, t, \rho)$ is the number of vertices $v \notin \set{s, x}$ such that $(s, v), (v, x) \in E_t(\rho)$.
    Thus, unless $v \in V_D$, the 2 edge path $(s, v, x)$ cannot exist in only one of $E_t(\rho)$ and $E_t(\hat{\rho})$.
    Therefore, it suffices to check only vertices $v \in V_D \setminus \set{s, x}$.
    For each such vertex, we increment (resp. decrement) $\sensediff(x)$ if the path $(s, v, x)$ exists in $E_t(\rho)$ but not $E_t(\hat{\rho})$ (resp. $E_t(\hat{\rho})$ but not $E_t(\rho)$).
\end{proof}

\paragraph{Update Time}

\begin{proof}
    The preprocessing algorithm computes $S(v)$ for all $v \in V$ in $O(nT)$ time.
    Initializing $c$ requires $O(n^2)$ time (Theorem \ref{thm:striangle-inter-alg}) and initializing $D, V_D$ requires $O(1)$ time.
    Overall, this is $O(nT + n^2) = \poly(n)$ time.

    The update algorithm updates $G, \hat{G}, D, V_D$ in $\tO{1}$ time.
    If $s \notin (u_t, v_t)$, $c$ is updated and returned in $O(1)$ time.
    Otherwise, $\sensediff(x)$ is computed in $O(d + k)$ time 
    and $c$ is again updated and returned in $O(1)$ time.

    The query algorithm of returning $c$ requires $O(1)$ time.
\end{proof}

In the update-optimized algorithm, we maintain the following data structures:

\begin{enumerate}
    \item $\set{\hat{c}_t}_{t \in [T]}$ where $\hat{c}_t = \striangleF(V, E_t(\hat{\rho}))$ where $\hat{c}_t$ is the number of s-triangles in the predicted graph $E_t(\hat{\rho})$.
    
    \item The current graph $G = (V, E_t)$ where $E_t = E_t(\rho)$.

    \item The predicted graph $\hat{G} = (V, E_t(\hat{\rho}))$.

    \item The predicted sensitivities $s(v, t, \hat{\rho})$ for all vertices $v$ and time steps $t$.
    We store this in the data structure $S(v)$ as in the query-optimized algorithm.
\end{enumerate}

\begin{algorithm}[H]

\SetKwInOut{Input}{Input}
\Input{Initial graph $G$, $d$-delayed predictions $\hat{\rho}$ with $k$ outliers.}

\BlankLine

Compute initial sensitivities $s(u, 0) \gets \#\set{v \notin \set{s, u} \given (u, v), (v, s) \in E_0}$ for all $u \neq s$

Initialize $S(v) \gets \set{(0, s(v, 0))}$ for all $v \in V$ and $v \neq s$.

Compute $\hat{c}_0 \gets \striangleF(G_0)$.
\For{$t = 1$ to $T$}{
    Compute $\hat{c}_t \gets \striangleF(V, E_t(\hat{\rho}))$ using $O(n)$ update algorithm of Theorem \ref{thm:striangle-inter-alg}
    
    \If{$\hat{\rho}_t = (s, u)$}{
        \For{$v \neq u$ and $(u, v) \in E_t(\hat{\rho})$}{
        
            $S(v) \gets S(v) \circ (t, s(v, t - 1, \hat{\rho}) + e)$ where $e = \begin{cases}
                +1 & \textrm{if $(s, u)$ inserted} \\
                -1 & \otherwise
            \end{cases}$ 
        }
    }
    \If{$\hat{\rho}_t = (u, v)$}{
        \If{$(s, v) \in E_t(\hat{\rho})$}{
        
             $S(u) \gets S(u) \circ (t, s(u, t - 1, \hat{\rho}) + e)$ where $e = \begin{cases}
                +1 & \textrm{if $(u, v)$ inserted} \\
                -1 & \otherwise
            \end{cases}$ 
        }
        \If{$(s, u) \in E_t(\hat{\rho})$}{
        
            $S(v) \gets S(v) \circ (t, s(v, t - 1, \hat{\rho}) + e)$ where $e = \begin{cases}
                +1 & \textrm{if $(u, v)$ inserted} \\
                -1 & \otherwise
            \end{cases}$ 
        }
    }
}

\caption{$\mathbf{QPreprocessing\striangleF}(G_0, \hat{\rho})$}
\label{alg:query-striangle-preprocess}
\end{algorithm}

Given an update, we update $G \gets (V, E_t)$ and $\hat{G} \gets (V, E_t(\hat{\rho}))$ in $O(1)$ time. 

\begin{algorithm}[H]

\SetKwInOut{Input}{Input}\SetKwInOut{Output}{Output}
\Input{Current graph $G_{t}$, $d$-delayed predictions $\hat{\rho}_t$ with $k$ outliers.}
\Output{$c = \striangleF(G_{t})$}

\BlankLine

Initialize $\cdiff \gets 0$

Let $V_D \setminus \set{s} = \set{v_1, v_2, \dotsc, v_m}$ for $m = O(d + k)$

\For{$v_i$ with $1 \leq i \leq m$}{
    Initialize $\sensediff(v_i) \gets 0$
    \label{line:q-query:init-sensitivity}
    
    \For{$v_j$ with $i \neq j$}{
        $\cdiff \gets \cdiff + e$ where $e = \begin{cases}
            +\frac{1}{2} & (s, v_j), (v_j, v_i), (v_i, s) \in E_t \andT (s, v_j), (v_j, v_i), (v_i, s) \notin E_t(\hat{\rho}) \\
            -\frac{1}{2} & (s, v_j), (v_j, v_i), (v_i, s) \in E_t(\hat{\rho}) \andT (s, v_j), (v_j, v_i), (v_i, s) \notin E_t \\
            0 & \otherwise
        \end{cases}$
        \label{line:q-query:d-triangle}
        
        $\sensediff(v_i) \gets \sensediff(v_i) + e$ where $e = \begin{cases}
            -1 & (s, v_j), (v_j, v_i) \in E_t(\hat{\rho}) \\
            0 & \otherwise
        \end{cases}$
        \label{line:q-query:lower-sensitivity}
    }
    
    $\cdiff \gets \cdiff + b (s(v_i, t, \hat{\rho}) + \sensediff(v_i))$ where $b = \begin{cases}
        +1 & (s, v_i) \in E_t \andT (s, v_i) \notin E_t(\hat{\rho}) \\
        -1 & (s, v_i) \in E_t(\hat{\rho}) \andT (s, v_i) \notin E_t \\
        0 & \otherwise
    \end{cases}$
    \label{line:q-query:adjust-sensitivity}
}

\Return $c \gets \hat{c}_t + \cdiff$

\caption{$\mathbf{QQuery\striangleF}(G_{t}, \hat{\rho}, \rho_{\leq t})$}
\label{alg:query-striangle-query}
\end{algorithm}

\subsubsection*{The Update-Optimized Algorithm}
We now give the proof of the second algorithm, completing the proof of Theorem \ref{thm:striangle-delay-alg}.

\paragraph{Query Correctness}

Our algorithm counts all the $\striangleI$'s the appear in exactly one of $E_t, E_t(\hat{\rho})$.
Consider an $\striangleI$ with vertex set $\set{s, a, b}$.
If $a, b \notin V_D$, then $\set{s, a, b}$ is a triangle in $E_t$ if and only if it is a triangle in $E_t(\hat{\rho})$.
In Line \ref{line:q-query:d-triangle}, we count all triangles with $a, b \in V_D$.
In Line \ref{line:q-query:adjust-sensitivity}, we count all triangles with $a \in V_D, b \notin V_D$, where we adjust the pre-computed sensitivity in Line \ref{line:q-query:lower-sensitivity} to avoid double counting.
We first prove an intermediate lemma showing that Line \ref{line:q-query:lower-sensitivity} correctly removes the $\striangleI$'s with only one (non-$s$) vertex in $V_D$.

\begin{lemma}
    \label{lemma:q-striangle:sensitivity-lemma}
    In Line \ref{line:q-query:adjust-sensitivity}, 
    \begin{equation*}
        s(v_i, t, \hat{\rho}) - \sensediff(v_i)
    \end{equation*}
    is the number of vertices $u \notin V_D \cup \set{s, v_i}$ such that the two edge path $(s, u, v_i) \in E_t$.
    This is also the number of vertices $u \notin V_D$ such that the two edge path $(s, u, v_i)$ is in $E_t(\hat{\rho})$.
\end{lemma}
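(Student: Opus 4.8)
The plan is to unpack the definition of the predicted sensitivity $s(v_i, t, \hat{\rho})$ and show that $\sensediff(v_i)$, as computed in Line~\ref{line:q-query:lower-sensitivity}, is exactly the ``over-count'' that must be subtracted. First I would recall that by Definition~\ref{def:striangle-sensitivity}, $s(v_i, t, \hat{\rho})$ counts the vertices $u \in V \setminus \set{s, v_i}$ with $\set{u, v_i}, \set{v_i, s}$... wait, more precisely with $\set{s, u}, \set{u, v_i} \in E_t(\hat{\rho})$, i.e. the number of two-edge paths $(s, u, v_i)$ present in the \emph{predicted} graph. Partition this set of witnesses $u$ according to whether $u \in V_D$ or $u \notin V_D$. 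The vertices $u \in V_D \setminus \set{s, v_i}$ contributing such a predicted path are precisely those counted with weight $-1$ in Line~\ref{line:q-query:lower-sensitivity}: the loop ranges over $v_j \in V_D \setminus \set{s}$, $j \neq i$, and decrements $\sensediff(v_i)$ exactly when $\set{s, v_j}, \set{v_j, v_i} \in E_t(\hat{\rho})$. Hence $-\sensediff(v_i)$ equals the number of $u \in V_D \setminus \set{s, v_i}$ with $(s, u, v_i)$ a path in $E_t(\hat{\rho})$, and therefore $s(v_i, t, \hat{\rho}) - (-(-\sensediff(v_i))) = s(v_i, t, \hat{\rho}) - \sensediff(v_i)$ is the number of witnesses $u \notin V_D$. (Since $v_i \notin V_D$ is never a witness for itself and $s \notin V_D$, these exclusions are automatic.)

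For the second assertion, I would invoke the key structural fact underlying the whole update-optimized algorithm: if a vertex $u \notin V_D$, then every edge incident to $u$ is present in $E_t$ if and only if it is present in $E_t(\hat{\rho})$ — this is because $D_t = \rho_{\leq t} \Delta \hat{\rho}_{\leq t} = E_t(\rho) \Delta E_t(\hat{\rho})$ and $V_{D_t}$ collects all endpoints of edges in $D_t$, so any edge not touching $V_D$ lies in neither or both edge sets. In particular, for $u \notin V_D$ the path $(s, u, v_i)$ uses edge $\set{u, v_i}$ which does not touch $V_D$ (as $u \notin V_D$), so its membership is the same in $E_t$ and $E_t(\hat{\rho})$; the other edge $\set{s, v_i}$ may differ, but the \emph{status of $\set{s, v_i}$} is handled separately by the factor $b$ in Line~\ref{line:q-query:adjust-sensitivity}, and for the purposes of this lemma we are comparing the two counts of $u$-witnesses \emph{at the same fixed configuration of $\set{s, v_i}$}. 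Hence the set of $u \notin V_D$ with $(s, u, v_i) \in E_t$ coincides with the set of $u \notin V_D$ with $(s, u, v_i) \in E_t(\hat{\rho})$, giving the claimed equality.

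The main obstacle I anticipate is bookkeeping around what exactly is being compared: the raw quantity $s(v_i, t, \hat{\rho})$ counts predicted two-edge paths \emph{including the edge} $\set{s, v_i}$, whereas in the query algorithm the factor $b$ already accounts for the discrepancy in $\set{s, v_i}$, so one must be careful that Lemma~\ref{lemma:q-striangle:sensitivity-lemma} is stated (and used) at the level of ``number of $u$'s such that the two-edge path exists'' rather than ``number of $\striangleI$'s'', and that the restriction $u \notin V_D \cup \set{s, v_i}$ in the statement is consistent with the loop bounds. Once that is pinned down, the argument is a short counting identity plus one appeal to the $V_D$-locality of the symmetric difference.
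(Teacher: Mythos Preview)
Your overall approach matches the paper's: decompose $s(v_i, t, \hat{\rho})$ according to whether the middle vertex $u$ lies in $V_D$, identify $-\sensediff(v_i)$ with the count of $V_D$-witnesses in the predicted graph, and then use that edges incident to a vertex outside $V_D$ agree in $E_t$ and $E_t(\hat{\rho})$. This is exactly what the paper does, only in the reverse order (it first argues the two counts coincide for $u \notin V_D$, then computes one of them).

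However, the second half of your write-up contains slips you should fix. First, $v_i \in V_D$ by construction (the $v_i$ enumerate $V_D \setminus \set{s}$), so your parenthetical ``$v_i \notin V_D$'' is false; the exclusion $u \neq v_i$ is handled by the loop condition $j \neq i$, not by $v_i$ being outside $V_D$. Second, the two-edge path $(s, u, v_i)$ consists of the edges $\set{s, u}$ and $\set{u, v_i}$, not $\set{s, v_i}$; and \emph{both} of these edges are incident to $u$, so once $u \notin V_D$ both edges automatically have the same status in $E_t$ and $E_t(\hat{\rho})$. There is no ``other edge that may differ'' to worry about, and your claim that $\set{u, v_i}$ ``does not touch $V_D$'' is wrong (its endpoint $v_i$ is in $V_D$) --- the correct reason is simply that it is incident to $u$. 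The paper's proof just says ``since $u \notin V_D$, the edge set incident to $u$ is identical in $E_t, E_t(\hat{\rho})$'' and concludes directly; your detour through individual edges is where the confusion entered.
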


\begin{proof}
    First, since $u \notin V_D$, the edge set incident to $u$ is identical in $E_t, E_t(\hat{\rho})$.
    Thus, the number of vertices $u \notin V_D$ such that $(s, u, v_i)$ is a path is the same regardless of the choice of edge set $E_t, E_t(\hat{\rho})$.

    Recall that $s(v_i, t, \hat{\rho})$ is the number of vertices $u \notin \set{s, v_i}$ such that the path $(s, u, v_i)$ is in $E_t(\hat{\rho})$.
    In Line \ref{line:q-query:adjust-sensitivity}, $\sensediff(v_i) = -m$ where $m$ is the number of vertices $v_j \in V_D$ such that $(s, v_j, v_i) \in E_t(\hat{\rho})$.
    In particular, $s(v_i, t, \hat{\rho}) - \sensediff(v_i)$ is exactly the number of vertices $u \notin V_D \cup \set{s, v_i}$ such that $(s, u, v_i)$ is in $E_t(\hat{\rho})$, proving the claim.
\end{proof}

We now prove the correctness of the algorithm.

\begin{proof}
    It suffices to show $\cdiff = \striangleF(E_t) - \striangleF(E_t(\hat{\rho}))$.
    By analyzing Algorithm \ref{alg:query-striangle-query}, we see that in Line \ref{line:q-query:d-triangle}, $\cdiff$ accumulates $+1$ for every triangle $(s, a, b)$ in $E_t$ and not in $E_t(\hat{\rho})$ and $-1$ for every triangle in $E_t(\hat{\rho})$ and not $E_t$ if $a, b \in V_D$.
    Since we iterate over both orderings of pairs $(i, j)$, we accumulate $\frac{1}{2}$ in each iteration.

    In Line \ref{line:q-query:adjust-sensitivity} $\cdiff$ increases (resp. decreases) by $(s(v_i, t, \hat{\rho}) - \sensediff(v_i))$ if $(s, v_i) \in E_t \setminus E_t(\hat{\rho})$ (resp. $(s, v_i) \in E_t(\hat{\rho}) \setminus E_t$).
    By Lemma \ref{lemma:q-striangle:sensitivity-lemma}, this is the number of two edge paths $(s, u, v_i)$ in $E_t$ and $E_t(\hat{\rho})$ with $u \notin V_D$.
    Therefore, if $(s, v_i) \in E_t \setminus E_t(\hat{\rho})$ (resp. $(s, v_i) \in E_t(\hat{\rho}) \setminus E_t$), this is precisely the number of $\striangleI$'s with vertex set $\set{s, u, v_i}$ with $u \notin V_D$ in $E_t$ but not $E_t(\hat{\rho})$ (resp. $E_t(\hat{\rho})$ but not $E_t$).
    Otherwise, if $(s, v_i)$ is in both or none of $E_t, E_t(\hat{\rho})$, the number of $\striangleI$'s with vertex set $\set{s, u, v_i}$ with $u \notin V_D$ is the same in both $E_t, E_t(\hat{\rho})$.
    
    Finally, if $a, b \notin V_D$, then $\set{s, a, b}$ is a triangle in $E_t$ if and only if it is a triangle $E_t(\hat{\rho})$ so that the number of $\striangleI$'s in $E_t, E_t(\hat{\rho})$ with vertex set $\set{s, a, b}$ is identical when $a, b \notin V_D$.

    We have thus shown that $\cdiff = \striangleF(E_t) - \striangleF(E_t(\hat{\rho}))$, concluding the proof.
\end{proof}

\paragraph{Query Time}
\begin{proof}
    Algorithm \ref{alg:query-striangle-preprocess} requires $O(n^2 + nT)$ time, following similar arguments to Theorem \ref{thm:striangle-inter-alg} and the analysis of Algorithm \ref{alg:update-striangle-preprocess}.
    
    We now examine Algorithm \ref{alg:query-striangle-query}.
    Within each loop, updating $\cdiff, \sensediff$ requires $O(1)$ time by examining the adjacency structures of $G, \hat{G}$.
    From Lemma \ref{lemma:striangle-sym-diff-size}, we have $|V_D| = O(d + k)$ so that Algorithm \ref{alg:query-striangle-query} requires time $O((d + k)^2)$.
\end{proof}

We have shown that given $d$ delayed predictions with $k$ outliers for $d + k = O(n^{1 - \eps})$, it is possible to design an algorithm that beats the conditional lower bounds for algorithms without predictions.
However, since the OMv Conjecture holds against any algorithm with polynomial preprocessing time, we can therefore conclude it is hard to make good predictions for the $\striangleF$ problem.
Using Theorem \ref{thm:striangle-delay-alg} we claim that no polynomial time prediction algorithm can yield truly sub-linear delay.

\begin{proposition}
    Under the OuMv conjecture, no polynomial time algorithm can output $\hat{\rho}$ that is a $d$ delayed prediction with $k$ outliers for the $\striangleF$ problem if $d + k = O(n^{1 - \eps})$ for $\eps > 0$.
\end{proposition}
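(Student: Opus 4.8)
The plan is to reduce the statement to the conditional lower bound of Theorem~\ref{thm:striangle-lb-worst} by composing the hypothetical prediction algorithm with the upper bound of Theorem~\ref{thm:striangle-delay-alg}. I would argue by contradiction: assume there is a constant $\eps > 0$ and a polynomial-time algorithm $\mathcal{P}$ that, given the initial graph of the OuMv-to-$\striangleF$ reduction of Theorem~\ref{thm:striangle-lb-worst}, outputs a request sequence $\hat{\rho}$ that is a $d$-delayed prediction with $k$ outliers, with $d + k = O(n^{1-\eps})$, for the set $\calset$ of all request sequences that this reduction produces over $n$-round OuMv instances. Recall that the reduction encodes an arbitrary $n$-dimensional OuMv instance into a $\striangleF$ instance on $2n+1$ vertices using $O(n^2)$ edge-flip updates and $n$ queries (so $T = |\rho| = \poly(n)$), and the query answers reconstruct all products $\vec{u}_k^\top M \vec{v}_k$; in particular every such $\rho$ lies in $\calset$.

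Next I would assemble a prediction-free dynamic algorithm $\mathcal{B}$ for $\striangleF$: in its preprocessing phase (which may take polynomial time) $\mathcal{B}$ runs $\mathcal{P}$ to obtain the polynomial-length sequence $\hat{\rho}$, and then invokes the algorithm of Theorem~\ref{thm:striangle-delay-alg} with $\hat{\rho}$ as its prediction --- note this algorithm requires no knowledge of the error parameters $d, k$. Since $\hat{\rho}$ is, by assumption, a $d$-delayed prediction with $k$ outliers for every sequence in $\calset$, and every online sequence the reduction emits lies in $\calset$, the algorithm of Theorem~\ref{thm:striangle-delay-alg} correctly answers each update in time $O(d+k) = O(n^{1-\eps})$ and each query in time $O(1)$. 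Running $\mathcal{B}$ on the reduction of Theorem~\ref{thm:striangle-lb-worst} then costs, after polynomial preprocessing, total time
\[
    n^2 \cdot O(n^{1-\eps}) + n \cdot O(1) \;=\; O(n^{3-\eps}) \;=\; \tto{n^3},
\]
while its answers solve the underlying OuMv instance; this contradicts Theorem~\ref{thm:striangle-lb-worst} under the OuMv conjecture, so no such $\mathcal{P}$ can exist. Using instead the update-optimized variant of Theorem~\ref{thm:striangle-delay-alg}, with update time $O(1)$ and query time $O((d+k)^2) = O(n^{2-2\eps})$, gives total time $n^2 \cdot O(1) + n \cdot O(n^{2-2\eps}) = O(n^{3-2\eps}) = \tto{n^3}$, the same contradiction.

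The one delicate point --- the step I would spell out with care --- is the quantifier structure behind ``$\hat{\rho}$ is a prediction for the $\striangleF$ problem'': what is needed is that the single sequence $\hat{\rho}$ output by $\mathcal{P}$ during preprocessing has delay $d$ and at most $k$ outliers with respect to \emph{every} request sequence the reduction can later emit, because the OuMv query vectors $\{(\vec{u}_k, \vec{v}_k)\}$, and hence the actual request sequence $\rho$, are revealed only online whereas $\hat{\rho}$ is fixed up front. Once this is made precise, the composition is immediate; conceptually, the point is that a polynomial-time prediction oracle is no more powerful than the polynomial preprocessing already granted to a dynamic algorithm, so it cannot circumvent an OuMv-based lower bound whose hardness already survives arbitrary polynomial preprocessing.
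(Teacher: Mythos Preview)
Your proposal is correct and follows essentially the same approach as the paper: run the hypothetical prediction algorithm during preprocessing, then invoke Theorem~\ref{thm:striangle-delay-alg} to obtain a prediction-free dynamic algorithm whose update and query times violate the OuMv-based lower bound. The paper's own proof is just a terse two-sentence version of exactly this argument (it cites Theorem~\ref{thm:striangle-delay-alg} and says the resulting $U(n)=O(d+k)$, $Q(n)=O(1)$ bounds contradict the OuMv conjecture), so your more explicit routing through Theorem~\ref{thm:striangle-lb-worst} and your careful remark on the quantifier structure only add clarity, not a different idea.
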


\begin{proof}
    Suppose such a polynomial time algorithm exists.
    Then, we run this algorithm in the preprocessing step.
    Using Theorem \ref{thm:striangle-delay-alg}, we have an algorithm with polynomial preprocessing time, update time $U(n) = O(d + k)$, and query time $Q(n) = O(1)$
    , contradicting the OuMv conjecture.
\end{proof}

\subsection{Subgraph Connectivity}

Recall that in subgraph connectivity problem, the algorithm is given a fixed graph $G$ and a subset of vertices $S$.
Each update adds or removes a vertex from the set $S$ and each query asks for some pair of vertices $u, v \in S$ (in variants of the problem one of both of $u, v$ can be a fixed vertex) whether $u, v$ are connected on the subgraph $G[S]$ induced by $S$.
We give an upper bound for the subgraph connectivity problem using predictions with bounded delay.

\paragraph*{From Delay Aware to Delay Agnostic Algorithms.}

Generally, we do not expect an algorithm to be aware of the quality that the prediction will have for the online request sequence.
Indeed, consistency and robustness should both hold without the algorithm being given the quality of its prediction in addition to the prediction itself.
However, in the setting of dynamic algorithms Lemmas~\ref{lemma:d-k-sym-diff-size} and \ref{lemma:d-k-sym-diff-update} imply that with no extra cost, an algorithm can compute (up to a constant factor) the quality of the predictions it has seen so far.
Using this observation, we can design algorithms that are not only given a prediction $\hat{\rho}$, but a guarantee $d$ that $\hat{\rho}$ is $d$ delayed.
Then, taking in the current guess for the delay parameter, we can choose an appropriate parameter for the delay-aware algorithm to handle the request of the current time step.
We exhibit this transformation for the subgraph connectivity problem.

\subsubsection*{Warmup: Update-Optimized Algorithm with parameter $d$ known}

We begin with an algorithm that is not only given a $d$ delayed prediction $\hat{\rho}$, but $d$ such that $\hat{\rho}$ is guaranteed to be at most $d$ delayed.
In Section \ref{sec:d-aware-to-d-agnostic-subconn}, we give a transformation to a $d$ agnostic algorithm.

\begin{lemma}
    \label{lemma:subgraph-connectivty-delay-promise-alg}
    Let $T$ by polynomial in $n$.
    Let $\hat{\rho} = (\hat{\rho}_1, \hat{\rho}_2, \dotsc, \hat{\rho}_T)$ be a $d$-delayed prediction for the subgraph connectivity problem.
    
    There is an algorithm solving the subgraph connectivity problem given $d$ and $\hat{\rho}$ with polynomial preprocessing time $P(n)$, update time $U(n) = \tO{1}$ and query time $Q(n) = O(d^2)$.
\end{lemma}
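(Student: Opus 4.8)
The plan is to precompute, for the predicted request sequence $\hat{\rho}$, the connected components of $G$ restricted to the predicted active set at each time step, but with a small ``lookahead window'' of vertices deliberately removed, and then to answer each actual query by patching this precomputed structure with the $O(d)$ vertices on which the true and the predicted instance disagree. Write $S_t$ (resp.\ $\hat{S}_t$) for the active vertex set after the first $t-1$ requests of the online sequence $\rho$ (resp.\ of $\hat{\rho}$), and let $W_t$ be the set of vertices occurring as a vertex-toggle update in the window $\hat{\rho}_{[t-d,\,t+d]}$; then $|W_t| = O(d)$ and, crucially, $W_t$ depends only on $\hat{\rho}$ and $d$, hence is available during preprocessing. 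The structural fact that drives everything, an easy extension of Lemma~\ref{lemma:d-sub-sequence-containment} (in the spirit of Lemma~\ref{lemma:d-k-sym-diff-size}), is that $S_t \mathbin{\triangle} \hat{S}_t \subseteq W_t$: a vertex whose toggle-parity differs between the length-$(t-1)$ prefixes of $\rho$ and $\hat{\rho}$ must have an occurrence shifted across position $t$ by $\pi$, hence it appears in $\hat{\rho}$ within distance $d$ of position $t$. It follows that the induced subgraphs $G[\hat{S}_t \setminus W_t]$ and $G[S_t \setminus W_t]$ are literally the \emph{same} graph, which I will call the \emph{backbone} $B_t$.

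\textbf{Preprocessing (polynomial).} For every $t \in [T]$ (there are only polynomially many) compute $\hat{S}_t$, the window $W_t$, the backbone $B_t = G[\hat{S}_t \setminus W_t]$ with its connected-component labelling $\mathrm{comp}_t$, and, for each $w \in W_t$, the set $\mathrm{labels}_t(w)$ of component labels of $B_t$ that are $G$-adjacent to $w$; finally store the symmetric relation $A_t$ on $W_t$ with $A_t(w,w') = 1$ iff $\mathrm{labels}_t(w) \cap \mathrm{labels}_t(w') \neq \emptyset$ or $\{w,w'\} \in E(G)$. Each backbone component structure is a BFS/DFS in $O(n+m)$ time, each $\mathrm{labels}_t(w)$ is a neighbourhood scan, and $A_t$ is computed by bucketing window vertices by component label (each bucket has size $\le |W_t|$, so all $O(d^2)$ pairs are marked in time $O(d^2 n)$ per step); all of these, over all $T$ steps, are polynomial in time and space. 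Online, on request $\rho_t$ the algorithm maintains only $S_t$ and the set $S_t \cap W_t$: the window $W_t$ slides by one position and $\rho_t$ toggles at most one vertex, so this costs $\tilde{O}(1)$ per update, exactly as in Lemma~\ref{lemma:d-k-sym-diff-update}.

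\textbf{Query (time $O(d^2)$).} For a query $(u,v)$ at time $t$, observe that $G[S_t]$ is precisely the backbone $B_t$ with the vertices of $S_t \cap W_t$ and their incident $G$-edges added back. I build an auxiliary graph $H$ whose nodes are the vertices of $S_t \cap W_t$ together with, for each of $u,v$ that lies in $B_t$, the component node $\mathrm{comp}_t(u)$ resp.\ $\mathrm{comp}_t(v)$ (if $u \in S_t \cap W_t$ it is already a node, and if $u \notin S_t$ the query answer is trivially negative). Add an $H$-edge between window-vertex nodes $w,w'$ iff $A_t(w,w')$; between a window-vertex node $w$ and a component node $\mathrm{comp}_t(x)$ iff $\mathrm{comp}_t(x) \in \mathrm{labels}_t(w)$; and identify the component nodes of $u$ and $v$ iff their labels coincide. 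Each test is an $O(1)$ table lookup, so $H$, with $O(d)$ nodes and $O(d^2)$ edges, is built in $O(d^2)$ time, and a BFS in $H$ answers the query in $O(d^2)$ time. Correctness is a path-splitting argument: any $u$--$v$ path in $G[S_t]$ decomposes at the window vertices it visits, each maximal sub-path with internal vertices in $B_t$ lies in a single backbone component and hence yields the corresponding $H$-edge, and conversely every $H$-edge lifts to a path in $G[S_t]$.

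\textbf{Main obstacle.} The naive approach of contracting every component of $G[\hat{S}_t]$ and splicing in the $O(d)$ changed vertices fails for two reasons: vertex \emph{removals} shatter precomputed components into possibly many pieces, and a changed vertex may have degree $\Theta(n)$, so it is adjacent to $\Theta(n)$ distinct components; either effect blows the reduced graph up to size $\Omega(n)$. Pre-removing the entire window $W_t$ during preprocessing defuses the first issue (the shattering happens offline, inside the backbone), and precomputing $\mathrm{labels}_t(\cdot)$ and $A_t$ defuses the second (at query time we never enumerate the components a high-degree vertex touches; we only ever test two \emph{named} components). The delicate points to verify carefully are therefore (i) that $W_t$ can be taken of size $O(d)$ while still containing $S_t \mathbin{\triangle} \hat{S}_t$, and (ii) that the collection of per-timestep tables is simultaneously of polynomial size; the remaining steps are routine.
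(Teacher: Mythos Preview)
Your proof is correct and follows essentially the same approach as the paper. Both arguments precompute, for each time step $t$, a ``stable'' vertex set (your backbone $\hat S_t\setminus W_t$, the paper's permanent set $P_t$) that is guaranteed to sit inside the true $S_t$, together with enough connectivity information about this stable subgraph to allow any query to be answered by a BFS/DFS on an $O(d)$-vertex auxiliary graph built from the unstable vertices. The only cosmetic differences are that the paper stores pairwise backbone-connectivity $C(u,v,t)$ for active pairs directly (and therefore folds the query vertices into the active set by excluding queried vertices from $P_t$), whereas you store component labels plus the adjacency sets $\mathrm{labels}_t(\cdot)$ and the derived relation $A_t$, handling query vertices that land in the backbone via explicit component nodes; both organizations yield the same $O(d^2)$ query and $\tilde O(1)$ update bounds.
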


We show that there exists an algorithm with polynomial preprocessing time that allows the update and query time to overcome the conditional lower bound imposed on purely dynamic algorithms.

We begin with some useful definitions.
Let $T$ be an integer and $\rho = (\rho_1, \rho_2, \dotsc, \rho_T)$ be a sequence of requests on some graph $G$ with initial vertex set $S_0$.
For any time $t$, let $S_t$ denote the set of vertices in $S$ after the first $t$ requests.
Let $1 \leq t_1 < t_2 \leq T$ be two time steps.
Consider the interval $[t_1, t_2] \subset [T]$.
A node $v$ is {\bf permanent in $\rho$ from $t_1$ to $t_2$} if $v \in S_t$ for all $t \in [t_1, t_2]$ and $v$ is not part of any query between time steps $t_1$ and $t_2$.
A node $v$ is {\bf active in $\rho$ from $t_1$ to $t_2$} if any request $\rho_t$ changes the membership of $v$ in $S_t$ or queries $v$ for $t_1 \leq t \leq t_2$.
Note that for any $t \in [t_1, t_2]$, every vertex in $S_t$ (which includes every queried vertex) must either be permanent or active.

We will also require the dynamic connectivity algorithm of Kapron, King, and Mountjoy, computing all pairs connectivity with worst case polylogarithmic update and query time.

\begin{theorem}[\cite{DBLP:conf/soda/KapronKM13}]
    \label{thm:kkm-connectivity}
    There is an algorithm supporting the following operations in $\tO{1}$ time:
    \begin{enumerate}
        \item $\mathbf{UPDATE}(u, v)$: Insert or remove edge $(u, v)$
        \item $\mathbf{QUERY}(a, b)$: Answer if $a, b$ are connected in the graph $G$
    \end{enumerate}
    The algorithm is always correct if $a, b$ are connected and correct with high probability when $a, b$ are not connected.
\end{theorem}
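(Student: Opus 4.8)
The plan is to reconstruct the Kapron--King--Mountjoy data structure for fully dynamic connectivity and sketch why it attains polylogarithmic worst-case time per operation with the claimed one-sided error. The backbone is a spanning forest $F$ of $G$, each tree of which is stored as an Euler-tour tree (a balanced search tree over the Euler tour), so that a query $\mathbf{QUERY}(a,b)$ reduces to checking whether $a$ and $b$ lie in the same tree, answerable in $O(\log n/\log\log n)$ time, and so that subtree-aggregate information can be maintained under link/cut in $\tO{1}$ time. Non-tree edges are stored in a hash table keyed by a uniformly random $\Theta(\log n)$-bit \emph{name}; inserting or deleting a non-tree edge only touches names and aggregates and costs $\tO{1}$. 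The only nontrivial case is deleting a tree edge $e$, which splits a tree $T$ into $T_1, T_2$ and requires finding a \emph{replacement} non-tree edge across the cut $(T_1, T_2)$, or certifying (with high probability) that none exists.

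The key tool I would use for replacement search is cut-set sketching by XOR of random names. For a vertex $v$ let $\chi(v)$ be the XOR of the names of all non-tree edges incident to $v$; then for a vertex set $S$, $\bigoplus_{v \in S}\chi(v)$ equals the XOR of the names of exactly the non-tree edges with one endpoint in $S$, since edges internal to $S$ are counted twice and cancel. If the cut $(T_1,T_2)$ contains exactly one non-tree edge, this XOR is precisely that edge's name and we recover it by a hash-table lookup; if it is empty the XOR is $0$. To handle cuts of arbitrary size, maintain $O(\log n)$ sampled copies of the non-tree edge set, where copy $i$ retains each edge with probability $2^{-i}$ (consistently, via hashing each name to a geometric level); a cut with $k$ edges has, at level $i \approx \log k$, exactly one surviving edge with constant probability, so $\Theta(\log n)$ independent bundles of these level-sketches recover a replacement with probability $1 - n^{-c}$ whenever one exists. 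Maintaining all these XOR-aggregates requires organizing the smaller side $T_1$ (found by walking outward from both endpoints of $e$ in lockstep) so that $\bigoplus_{v \in T_1}\chi_i(v)$ can be read off in polylogarithmic time --- this is exactly the subtree-aggregate the Euler-tour-tree layer supports. On a successful recovery we link the replacement edge into $F$; the whole deletion is thus $\tO{1}$.

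For the error analysis I would argue one-sidedness as follows. Every edge ever inserted into $F$ is obtained by a hash-table lookup of a recovered name, so with high probability it is a genuine edge of $G$; under this high-probability event $F \subseteq G$ holds at all times, and moreover the replacement search succeeds whenever a replacement exists, so $F$ is a \emph{spanning} forest of $G$ throughout. Hence a query returns ``connected'' whenever $a,b$ are connected in $G$ --- it is always correct on connected pairs --- while the only way it can err is that some recovered name collides with the name of an unrelated real edge, causing a spurious link that merges two genuinely separate components; a union bound over the $\poly(n)$ operations and the $\Theta(\log n)$ recoveries each performs keeps the total failure probability $n^{-\Omega(1)}$, giving correctness with high probability on disconnected pairs.

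I expect the main obstacles to be twofold. First, proving that the geometric-sampling hierarchy isolates a single cut edge with the claimed probability \emph{uniformly over all cut sizes}, so that the spanning invariant holds with high probability rather than merely in expectation; this needs the level-sketch bundles and a careful Chernoff/union argument rather than the clean exact-recovery case. Second, de-amortizing to worst case: a naive implementation resolves a tree-edge deletion in one burst, but a genuine worst-case polylogarithmic bound requires spreading the cut-walk and sketch updates across the levels of a balanced hierarchy and charging the work so that no single operation exceeds $\tO{1}$. Both points are handled in \cite{DBLP:conf/soda/KapronKM13}, whose construction I would follow.
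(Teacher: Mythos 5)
The paper does not actually prove this statement: it is imported verbatim from Kapron--King--Mountjoy by citation and used as a black box, so there is no in-paper argument to compare yours against. Judged on its own terms, your reconstruction has the right skeleton of the KKM construction --- Euler-tour trees for the forest, XOR-of-random-names cut sketches, geometric sampling levels with $\Theta(\log n)$ independent repetitions to isolate a single cut edge, and a level hierarchy to obtain worst-case rather than amortized bounds --- and you correctly flag the two places where the real technical work lies.

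The genuine gap is in your error analysis, which is internally inconsistent. You assert that $F \subseteq G$ holds (under a high-probability event) and simultaneously that the failure mode is ``a spurious link that merges two genuinely separate components''; these cannot coexist, since a link that is a real edge of $G$ cannot join two distinct components of $G$. In the actual construction every name recovered from a sketch is validated against the stored edge table before it is linked, so $F \subseteq G$ holds \emph{deterministically}; consequently every ``connected'' answer is correct with certainty, and the algorithm can never err on a pair that is truly disconnected. What is only probabilistic is the \emph{spanning} property: a replacement search may fail to recover a replacement edge that exists, leaving $F$ non-spanning, so a truly connected pair may be misreported as disconnected --- with probability $1/\mathrm{poly}(n)$ after the union bound over operations and levels. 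Your ``always correct on connected pairs'' is derived by conditioning on exactly this high-probability event, so it is only a w.h.p.\ guarantee, and your claimed failure mode points the error in the wrong direction. (Note that the theorem statement you were handed also has the two cases swapped relative to what KKM prove; for the way the paper uses the result --- inside polynomial-time preprocessing for subgraph connectivity --- w.h.p.\ correctness in either direction suffices, but a proof should establish the direction the construction actually yields: exact on disconnected pairs, w.h.p.\ on connected ones.)
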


Note that applying the above algorithm $n$ times given a vertex insertion/deletion gives a conditionally optimal algorithm for the pure online case, as discussed in \cite{HenzingerKNS15}.

In both algorithms, we maintain the following data structures.

\begin{enumerate}
    \item $S_t$, the current set of vertices in $S$.

    \item For all $t \in T$ and $u, v \in A_t$, let $C(u, v, t)$ denote whether $u, v$ are connected in the graph $\hat{H}(u, v, t) = G[P_t \cup \set{u, v}]$, the subgraph induced by $P_t \cup \set{u, v}$, where $P_t$ is the set of permanent vertices in $\hat{\rho}$ from $t - d$ to $t + d$ and $A_t$ is the set of active vertices in $\hat{\rho}$ from $t - d$ to $t + d$.
    This is precomputed in the preprocessing phase and only queried in the dynamic phase.
    
    \item $Q_t = S_t \setminus P_t$, the set of vertices in $S_t$ that are not permanent in $\hat{\rho}$ from $t - d$ to $t + d$.
    To maintain $Q_t$ efficiently, note that $P_t$ can change by at most 1 vertex with each update and 2 vertices with each query.
    In particular, a vertex is added to $P_t$ when it is inserted into $S$ at time step $t - d$ and will remain in $S$ without being queried until time step $t + d$, while a vertex is removed from $P_t$ only when it is removed from $S$ or queried at time step $t + d$.
    We can therefore maintain $P_t$ as a sequence of insertions and deletions and maintain $Q_t$ in $O(1)$ time.
\end{enumerate}

The set $Q_t$ represents the possibly unexpected vertices in the set $S_t$ due to the error of the prediction.
We claim that the size of this set of vertices depends on the quality of the prediction.

\begin{lemma}
    \label{lemma:subconn-error-set-size}
    Suppose $\hat{\rho}$ is a $d$-delayed prediction.
    For $t \in [T]$, let $P_t$ be the set of permanent vertices in $\hat{\rho}$ from $t - d$ to $t + d$.
    Let $S_t$ be the set of vertices in $S$ after the first $t$ true request in the sequence $\rho$.
    Then, $S_t = P_t \cup Q_t$ where $|Q_t| \leq 4d + 2$.

    Furthermore, in each time step $S_t, Q_t, P_t$ each change by at most three vertices.
\end{lemma}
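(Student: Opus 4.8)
The plan is to prove the three assertions in order: the set identity $S_t = P_t \cup Q_t$ (which, since $Q_t := S_t \setminus P_t$, is exactly the inclusion $P_t \subseteq S_t$), then the size bound $|Q_t| \le 4d+2$, and finally the per-step stability claim. The whole argument rests on Lemma~\ref{lemma:d-sub-sequence-containment} together with a parity bookkeeping for vertex toggles.

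First I would set up the bookkeeping. Each subgraph-connectivity update toggles the membership of a single vertex in $S$, so for any vertex $v$ and prefix length $\tau$ the membership $v\in S_\tau$ is determined by the parity of the number of toggle-$v$ requests among $\rho_1,\dotsc,\rho_\tau$ (relative to whether $v\in S_0$), and likewise $v\in\hat{S}_\tau$ is determined by the corresponding count in $\hat{\rho}_{\le\tau}$. Since $\hat{\rho}$ is $d$-delayed, Lemma~\ref{lemma:d-sub-sequence-containment} gives the multiset containments $\hat{\rho}_{\le t-d}\subseteq\rho_{\le t}\subseteq\hat{\rho}_{\le t+d}$, hence for every $v$,
\[
\#\{\tau\le t-d: \hat{\rho}_\tau\text{ toggles }v\}\;\le\;\#\{\tau\le t: \rho_\tau\text{ toggles }v\}\;\le\;\#\{\tau\le t+d: \hat{\rho}_\tau\text{ toggles }v\}.
\]
Now take $v\in P_t$. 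Being permanent from $t-d$ to $t+d$ in $\hat{\rho}$ forces $v\in\hat{S}_\tau$ for all $\tau$ in that window, which is possible only if none of $\hat{\rho}_{t-d+1},\dotsc,\hat{\rho}_{t+d}$ toggles $v$; therefore the left and right counts above coincide, so the middle count (the toggle count of $v$ in $\rho_{\le t}$) equals them. Consequently $v\in S_t$ iff $v\in\hat{S}_{t-d}$, and $v\in\hat{S}_{t-d}$ holds because $v$ is permanent; hence $P_t\subseteq S_t$ and $S_t=P_t\cup Q_t$.

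For the size bound I would localize the disagreement between $S_t$ and $P_t$ to the ``touched'' set $G' := \{\,v : v$ is the toggle target of, or a query participant of, some $\hat{\rho}_\tau$ with $\tau\in[t-d,t+d]\,\}$. If $v\notin G'$ then $v$ is not toggled by any request in the window, so by the sandwich inequality its toggle count is identical in $\hat{\rho}_{\le t-d}$, in $\rho_{\le t}$, and in $\hat{\rho}_{\le t+d}$; thus $v\in S_t$ iff $v\in\hat{S}_{t-d}$ iff $v\in\hat{S}_\tau$ for all $\tau$ in the window, and since $v$ is also not queried there this is equivalent to $v\in P_t$. Hence $S_t\triangle P_t\subseteq G'$, so in particular $Q_t=S_t\setminus P_t\subseteq G'$. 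The window $[t-d,t+d]$ consists of $2d+1$ requests, each naming at most two vertices (a pair in an all-pairs query, fewer otherwise), so $|G'|\le 2(2d+1)=4d+2$, which gives $|Q_t|\le 4d+2$.

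For the ``furthermore'' I would track the effect of advancing $t\to t+1$: the actual set $S$ changes only because of $\rho_{t+1}$, by at most one vertex (none if $\rho_{t+1}$ is a query); and the window $[t-d,t+d]$ defining $P$ loses its single leftmost request $\hat{\rho}_{t-d}$ and gains the single new rightmost request $\hat{\rho}_{t+d+1}$, so the only vertices whose permanence status can flip are those named by these boundary requests. A short case analysis over these boundary requests bounds the number of flips, and hence the change in each of $S_t, P_t, Q_t$, by three; this is the least delicate part. The main obstacle is the localization step above: one has to be disciplined about the interleaving of the prefixes of $\rho$ and $\hat{\rho}$ and about parities of toggle counts, in order to pin the disagreement between $S_t$ and $P_t$ down to exactly the vertices touched inside the length-$(2d+1)$ window; once that is in place, both the counting and the per-step stability are routine.
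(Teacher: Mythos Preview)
Your argument is correct and follows essentially the same route as the paper: both use Lemma~\ref{lemma:d-sub-sequence-containment} to sandwich $\rho_{\le t}$ between $\hat{\rho}_{\le t-d}$ and $\hat{\rho}_{\le t+d}$, deduce $P_t\subseteq S_t$, then localize $Q_t$ to the vertices touched by the $2d+1$ requests in the window and count. Your explicit parity bookkeeping for toggle counts makes the localization step a bit more transparent than the paper's phrasing, but the underlying idea is identical; likewise, your ``boundary requests of the shifting window'' analysis for the per-step change matches the paper's brief treatment.
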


\begin{proof}
    First, we claim $P_t \subset S_t$.
    Let $v \in P_t$.
    Consider the true update sequence $(\rho_1, \dotsc, \rho_t)$.
    Since $\hat{\rho}$ is $d$-delayed, we have 
    \begin{equation*}
        (\hat{\rho}_1, \dotsc, \hat{\rho}_{t - d}) ~\subset~ (\rho_1, \dotsc, \rho_t) ~\subset~ (\hat{\rho}_1, \dotsc, \hat{\rho}_{t + d})~.
    \end{equation*}
    Since $v \in P_t$, $v \in S$ after $\hat{\rho}_{\leq t - d}$ and is not modified in the sub-sequence $\hat{\rho}_{[t - d, t + d]}$, so that $v \in S$ after $\rho_{\leq t}$ and $v \in S_t$.
    
    Next, we show that $|Q_t| \leq 5d$.
    Consider a vertex $w \in S_t$.
    If $w \notin P_t$, then there is some update in $\hat{\rho}_{[t - d, t + d]}$ that inserts $w$ into $S_t$, or some query involving $w$ in $\hat{\rho}_{[t - d, t + d}]$.
    Note that there are at most $2d + 1$ such requests, and each can include at most 2 vertices, so that $|Q_t| \leq 4d + 2$.

    By definition, $S_t$ changes by one vertex in each time step.
    Above we have argued that $P_t$ changes by at most 2 vertices in each time step.
    Combining, $Q_t$ changes by at most 3 vertices in each time step, and so can be maintained in $\tO{1}$ time.
\end{proof}

In the update-optimized algorithm, given a query pair $(u, v)$, we will query for connectivity in the graph with vertex set $Q_t \cup \set{u, v}$, with edges encoding pairs of vertices with a path between them.
Since the size of this graph is small, we are able to compute connectivity queries efficiently.

\begin{lemma}
    \label{lemma:subconn-equivalence}
    Let $t \in [T]$ be a time step.
    Let $u, v$ be two vertices in $S_t = P_t \cup Q_t$.
    Let $H_t$ denote the graph with vertex set $Q_t \cup \set{u, v}$ and $(a, b) \in E(H_t)$ if either $(a, b) \in E(G_t)$ is an edge in the original graph, or $C(a, b, t) = 1$, that is there is a path from $a, b$ with all internal vertices in $P_t$.
    Then, $u, v$ are connected in $G[S_t]$ if and only if $u, v$ are connected in $H_t$.
\end{lemma}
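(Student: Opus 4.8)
The plan is to show that the two connectivity relations coincide by establishing containments in both directions, using the fact that every vertex of $S_t$ is either permanent (in $P_t$) or active (in $Q_t$). The key observation is that a path in $G[S_t]$ can be decomposed into maximal runs of internal permanent vertices separated by active vertices, and each such run is exactly what the precomputed value $C(\cdot,\cdot,t)$ encodes. Concretely, suppose first that $u,v$ are connected in $H_t$. Every edge $(a,b)$ of $H_t$ is either a genuine edge of $G_t$ (hence of $G[S_t]$, since $a,b\in S_t$) or witnesses, via $C(a,b,t)=1$, a path from $a$ to $b$ in $G[P_t\cup\{a,b\}]\subseteq G[S_t]$ (using $P_t\subseteq S_t$ from Lemma~\ref{lemma:subconn-error-set-size} and $a,b\in S_t$). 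Concatenating these sub-paths along the $H_t$-path connecting $u$ and $v$ yields a walk, hence a path, from $u$ to $v$ in $G[S_t]$.

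For the converse, suppose $u,v$ are connected in $G[S_t]$ via a path $\pi = (u = x_0, x_1, \dots, x_\ell = v)$. Since $S_t = P_t \cup Q_t$, I would mark each internal vertex of $\pi$ as either permanent or active. Let $x_{i_0}=u, x_{i_1}, \dots, x_{i_r}=v$ be the subsequence of vertices of $\pi$ lying in $Q_t\cup\{u,v\}$, in the order they appear on $\pi$ (so $u$ and $v$ are always included even if they happen to lie in $P_t$; this only helps). Between two consecutive such vertices $x_{i_j}$ and $x_{i_{j+1}}$, all intermediate vertices of $\pi$ lie in $P_t$, so the sub-path gives a path from $x_{i_j}$ to $x_{i_{j+1}}$ in $G[P_t\cup\{x_{i_j},x_{i_{j+1}}\}]=\hat H(x_{i_j},x_{i_{j+1}},t)$; moreover $x_{i_j},x_{i_{j+1}}\in A_t$ because they are either $u,v$ themselves (which are queried at time $t$, hence active) or lie in $Q_t$ (which consists exactly of non-permanent, i.e.\ active, vertices of $S_t$). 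Hence $C(x_{i_j},x_{i_{j+1}},t)=1$, so $(x_{i_j},x_{i_{j+1}})\in E(H_t)$. Chaining these edges gives a path from $u$ to $v$ in $H_t$.

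The only subtlety worth care — and the step I expect to be the main obstacle — is pinning down that $u$ and $v$ themselves satisfy the definitions used in the precomputation: they must lie in $A_t$ so that $C(u,v,t)$ (and the relevant intermediate values) are actually computed, and $u,v$ must not have been moved out of scope. Here one uses that a query at time $t$ involves $u,v$, so by definition they are active in $\hat\rho$ from $t-d$ to $t+d$, and $u,v\in S_t$ by hypothesis; Lemma~\ref{lemma:subconn-error-set-size} guarantees $P_t\subseteq S_t$ so the induced subgraphs $\hat H(\cdot,\cdot,t)$ are genuinely subgraphs of $G[S_t]$. One should also note that if an edge $(x_{i_j},x_{i_{j+1}})$ of the chosen sub-path is in fact a single edge of $G_t$ with no internal permanent vertices, this is still covered since such an edge is directly in $E(H_t)$ by the first clause of its definition. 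With these points checked, both containments hold and the equivalence follows.
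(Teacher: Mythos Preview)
Your proposal is correct and follows essentially the same approach as the paper: both directions argue exactly as you do, by replacing each $H_t$-edge with a path in $G[S_t]$ (forward) and by extracting the subsequence of $Q_t\cup\{u,v\}$-vertices along a $G[S_t]$-path and observing that consecutive pairs are joined by $H_t$-edges (converse). Your additional care about $A_t$-membership goes slightly beyond what the lemma strictly requires---the statement defines $C(a,b,t)=1$ directly as the existence of a path through $P_t$, so the precomputation issue is an algorithmic concern rather than part of the lemma's correctness---but it is accurate and helpful for the surrounding algorithm.
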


\begin{proof}
    Suppose $u, v$ are connected in $H_t$.
    It suffices to show that every edge in $H_t$ has endpoints that are connected in $G[S_t]$.
    An edge in $H_t$ is either an edge in the original graph or a path with all internal vertices in $P_t$.
    In the former case, note that every vertex in $H_t$ is in $S_t$, so that the edge is present in $G[S_t]$.
    In the latter case, again note that all vertices in the path are in $S_t$, and every edge is in the original graph.

    Suppose $u, v$ are connected in $G[S_t]$.
    Consider one path between $u, v$ and consider the subsequence of vertices on this path consisting of vertices in $Q_t \cup \set{u, v}$.
    This path exists in $H_t$ as any vertex not in this subsequence is necessarily in $P_t$, a case which is covered by the additional edges inserted.
\end{proof}

Therefore, to check connectivity in $G[S_t]$, it suffices to check connectivity in the smaller graph $H_t$.
We now present our algorithm with $\tO{1}$ update time and $O(d^2)$ query time.

\begin{algorithm}[H]

\SetKwInOut{Input}{Input}
\Input{Fixed graph $G$ with initial vertex set $S_0$.
$d$-delayed predictions $\hat{\rho}$.}

\BlankLine

\For{$t = 1$ to $T$}{
    Compute $P_{t} = \set{v \given v \textrm{ is permanent in $\hat{\rho}$ from $\max(1, t - d)$ to $\min(T, t + d)$}}$

    Compute $A_{t} = \set{v \given v \textrm{ is active in $\hat{\rho}$ from $\max(1, t - d)$ to $\min(T, t + d)$}}$

    Let $\hat{H}_{t}$ be the subgraph $G[P_{t}]$ induced by $P_{t}$

    \For{$u, v \in A_{t}$}{
        Run DFS on $\hat{H}(u, v, t) \gets G[P_t \cup \set{u, v}]$ and store in $C(u, v, t)$ a 1 if $u, v$ are connected in $\hat{H}(u, v, t)$ and 0 otherwise.
    }
}

\Return $\set{P_t}_{t}, \set{A_t}_{t}, \set{C(u, v, t)}_{u, v, t} $

\caption{$\mathbf{PromisePreprocessingSubConn}(G, S_0, \hat{\rho}, d)$}
\label{alg:promise-subconn-preprocess}
\end{algorithm}

We preprocess the input as described by Algorithm \ref{alg:promise-subconn-preprocess}.
Given an update, we maintain the sets $S_t, Q_t$ in $\tO{1}$ time.
To maintain $Q_t$ given $S_t$, note that the sets of permanent vertices $\set{P_t}$ can be maintained as a sequence of insertions and deletions, and we can process these insertions and deletions in $\tO{1}$ time.
In the query step, we will construct the graph $H_t$ as described in Lemma \ref{lemma:subconn-equivalence} and compute connectivity in this graph.

\begin{algorithm}[H]

\SetKwInOut{Input}{Input}\SetKwInOut{Output}{Output}
\Input{Fixed graph $G$ with current vertex set $S_t$,
$d$-delayed predictions $\hat{\rho}$.}
\Output{YES if $u, v$ are connected in $G[S_t]$, NO otherwise}

\BlankLine

Construct $H_t$ with vertex set $Q_t \cup \set{u, v}$ and edge set $G[Q_t \cup \set{u, v}]$ and additional edges $(a, b)$ where $C(a, b, t) = 1$.

Run DFS from $u$ in $H_t$ and return YES $u, v$ are connected and NO otherwise.

\caption{$\mathbf{PromiseQuerySubConn}(G, S_t, \hat{\rho}, \rho_{\leq t}, d)$}
\label{alg:promise-subconn-query}
\end{algorithm}

Correctness follows immediately from Lemma \ref{lemma:subconn-equivalence}.
We now conclude the proof by analyzing the time complexity of our algorithm.
In the preprocessing step, for each $t$, we can compute $P_t, A_t \in O(d) = O(n)$ time, and construct $\hat{H}_t$ in $O(d^2) = O(n^2)$ time.
Using Theorem \ref{thm:kkm-connectivity}, we can maintain $\hat{H}_t$ with $O(n)$ updates and compute $C(u, v, t)$ with $O(d^2 n)$ updates and $O(d^2) = O(n^2)$ queries.
Overall, this requires $\tO{T n^3}$ time in the preprocessing phase.
For each query, we construct a graph on $O(d)$ vertices and compute a DFS in $O(d^2)$ time.
Since $Q_t \subset A_t$, the additional edges can each be added in constant time.

\subsubsection{Generalization to $d$ Agnostic Algorithm}
\label{sec:d-aware-to-d-agnostic-subconn}

We now design an algorithm that is not given $d$ in the preprocessing phase.

For each $d = 2^{d'}$ where $1 \leq d' \leq \ceil{\log n}$, we maintain the set $S_{t} \setminus P_{t, d}$ where $P_{t, d}$ denotes the set of permanent vertices in $\hat{\rho}$ from $t - d$ to $t + d$.
As before, this set can be maintained in $\tO{1}$ time.
Since there are $O(\log n)$ such sets, we can complete an update in $\tO{1}$ time.

We also need to check if $P_{t, d} \subseteq S_t$.
We claim that this can be maintained for all $d$ in $\tO{1}$ time.
Recall that $P_{t, d}$ changes by at most 2 vertices in every time step (and these vertices must be involved at the $t + d$ request) and $S_t$ changes by at most 1 vertex in every time step.
Then, at each time step, we can simply query for membership in $S_t$ for any vertex added into $P_{t, d}$ and query for membership in $P_{t, d}$ for any vertex removed from $S_t$, requiring only $\tO{1}$ time.
Again, for all $O(\log n)$ values of $d$, this requires $\tO{1}$ time.

For $d$ satisfying $P_{t, d} \subseteq S_t$, let $Q_{t, d} = S_t \setminus P_{t, d}$ and let $d^*$ be the smallest $d^*$ such that $|Q_{t, d}| \leq 4d + 2$.
If $d^*$ does not exist, then $\hat{\rho}$ is not a $d$-delayed prediction for $d \leq n$ and we can afford to answer the query using the full graph $G[S_t]$ in $O(n^2)$ time.
Otherwise, $\hat{\rho}$ is $d^*$ delayed but not $\frac{d^*}{2}$ delayed.
If $\hat{\rho}$ where $\frac{d^*}{2}$ delayed, then $|Q_{t, d^*/2}| \leq 2 d^* + 2$ by Lemma \ref{lemma:subconn-error-set-size}.
Since $\hat{\rho}$ is $d^*$ delayed, we can construct $H_t$ on the vertex set $Q_{t, d^*}$ of size $O(d^*)$ and follow the algorithm of Lemma \ref{lemma:subgraph-connectivty-delay-promise-alg}.
We now state the final theorem, providing the algorithm and proof.

\begin{theorem}
    \label{thm:subgraph-connectivty-delay-alg}
    Let $T$ by polynomial in $n$.
    Let $\hat{\rho} = (\hat{\rho}_1, \hat{\rho}_2, \dotsc, \hat{\rho}_T)$ be a $d$-delayed prediction for the subgraph connectivity problem.
    
    There is an algorithm solving the subgraph connectivity problem given $\hat{\rho}$ with polynomial preprocessing time $P(n)$, update time $U(n) = \tO{1}$ and query time $Q(n) = O(d^2)$.
\end{theorem}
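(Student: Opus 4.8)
The plan is to run the delay-aware algorithm of Lemma~\ref{lemma:subgraph-connectivty-delay-promise-alg} in parallel for all geometric guesses $d = 2^{d'}$ with $1 \leq d' \leq \bigCeil{\log n}$, and at each online step commit to the smallest guess that is still ``consistent'' with the prefix seen so far. In the preprocessing phase I would invoke Algorithm~\ref{alg:promise-subconn-preprocess} once for each such guess $d$, storing the resulting tables $\{P_{t,d}\}_t$, $\{A_{t,d}\}_t$, and $\{C(a,b,t)\}$; since each invocation costs $\tO{Tn^3}$ and there are only $O(\log n)$ guesses, the total preprocessing is $\tO{Tn^3}=\poly(n)$ because $T=\poly(n)$.

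During the online phase I maintain, for every guess $d$, the set $Q_{t,d}=S_t\setminus P_{t,d}$ together with a counter recording how many vertices of $P_{t,d}$ currently lie outside $S_t$ (so the flag ``$P_{t,d}\subseteq S_t$'' is just this counter being $0$). As noted before the theorem, $S_t$ changes by one vertex per step and $P_{t,d}$ by at most two (both determined by the requests at positions $t\pm d$), so each $Q_{t,d}$ and each counter can be updated in $\tO{1}$ time; over all $O(\log n)$ guesses this is still $\tO{1}$ per update, giving $U(n)=\tO{1}$. To answer a query $(u,v)$ at time $t$, let $d^\star$ be the smallest guess $d=2^{d'}$ with $P_{t,d}\subseteq S_t$ and $|Q_{t,d}|\le 4d+2$. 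If such a $d^\star$ exists, I run the query procedure of Lemma~\ref{lemma:subgraph-connectivty-delay-promise-alg} with parameter $d^\star$: build $H_t$ on vertex set $Q_{t,d^\star}\cup\{u,v\}$, whose edges are the edges of $G$ inside that set together with the precomputed shortcut edges $\{(a,b):C(a,b,t)=1\}$, and run a DFS from $u$. If no guess up to $d=n$ passes the test, I instead answer directly from $G[S_t]$ in $O(n^2)$ time.

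For correctness, observe that whenever $d^\star$ exists the two tests are exactly the hypotheses that make $H_{t}$ equivalent to $G[S_t]$ by Lemma~\ref{lemma:subconn-equivalence} applied with parameter $d^\star$ (only $P_{t,d^\star}\subseteq S_t$ is needed for the equivalence, and the $C(\cdot,\cdot,t)$ lookups are valid because $Q_{t,d^\star}\subseteq A_{t,d^\star}$, which holds as in the analysis of the delay-aware algorithm). For the running time, note that if $\hat{\rho}$ is genuinely $d$-delayed for the (unknown) true value $d\le n$, then by Lemma~\ref{lemma:subconn-error-set-size} the guess $\hat d:=2^{\lceil\log_2 d\rceil}\in[d,2d]$ satisfies both $P_{t,\hat d}\subseteq S_t$ and $|Q_{t,\hat d}|\le 4\hat d+2$, so $d^\star$ exists and $d^\star\le\hat d\le 2d$; hence the DFS runs on $O(|Q_{t,d^\star}|)=O(d^\star)=O(d)$ vertices and costs $O((d^\star)^2)=O(d^2)$. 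The fallback branch is only reached when $d>n$, in which case $O(n^2)=O(d^2)$ as well, so $Q(n)=O(d^2)$ unconditionally.

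The main obstacle I expect is justifying that selecting $d^\star$ purely from the ``$P_{t,d}\subseteq S_t$ and $|Q_{t,d}|\le 4d+2$'' tests — with no access to the true $d$ — both (i) licenses the use of Lemma~\ref{lemma:subconn-equivalence}/Lemma~\ref{lemma:subgraph-connectivty-delay-promise-alg} with parameter $d^\star$ even though $\hat{\rho}$ need not be exactly $d^\star$-delayed (this requires re-checking that the correctness of the promise algorithm hinges only on $P_{t,d^\star}\subseteq S_t$, not on $d^\star$ being a valid delay bound), and (ii) overshoots the true delay by at most a factor $2$, which is precisely what Lemma~\ref{lemma:subconn-error-set-size} buys. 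The remaining bookkeeping — that $S_t$, each $P_{t,d}$ viewed as an insertion/deletion stream, each $Q_{t,d}$, and each containment counter are maintained in $\tO{1}$ amortized time, and that constructing $H_t$ and adding the $O(d^\star)$ shortcut edges costs $O((d^\star)^2)$ — is routine given the discussion preceding the theorem.
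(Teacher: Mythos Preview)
Your proposal is correct and follows essentially the same approach as the paper: run the promise algorithm of Lemma~\ref{lemma:subgraph-connectivty-delay-promise-alg} in parallel for the $O(\log n)$ geometric guesses $d=2^{d'}$, maintain $Q_{t,d}$ and the containment test $P_{t,d}\subseteq S_t$ in $\tO{1}$ per guess, and at query time select the smallest passing $d^\star$ (bounded by $2d$ via Lemma~\ref{lemma:subconn-error-set-size}) before falling back to a brute-force DFS on $G[S_t]$ when none exists. You even flag explicitly the subtlety that correctness of Lemma~\ref{lemma:subconn-equivalence} hinges only on $P_{t,d^\star}\subseteq S_t$ rather than on $d^\star$ being a true delay bound, which is exactly what the paper's proof uses.
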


\begin{algorithm}[H]

\SetKwInOut{Input}{Input}
\Input{Fixed graph $G$ with initial vertex set $S_0$.
$d$-delayed predictions $\hat{\rho}$.}

\BlankLine

\For{$d = 2^{d'}$ where $d' = 1$ to $\ceil{\log n}$}{
    $\set{P_{t, d}}, \set{A_{t, d}}, C_d \gets \mathbf{PromisePreprocessingSubConn}(G, S_0, \hat{\rho}, d)$ 
}

\caption{$\mathbf{PreprocessingSubConn}(G, S_0, \hat{\rho})$}
\label{alg:subconn-preprocess}
\end{algorithm}

We preprocess the input as described by Algorithm \ref{alg:subconn-preprocess}.
Given an update, we maintain the sets $S_t, Q_{t, d}$ for all $d$ where $P_{t, d} \subset S_t$ in $O(\log n)$ time, since there are $O(\log n)$ values for $d$ and we can maintain each in constant time.
In the query step, we first compute a valid value of $d$ and construct the graph $H_t$ as before.

\begin{algorithm}[H]

\SetKwInOut{Input}{Input}\SetKwInOut{Output}{Output}
\Input{Fixed graph $G$ with current vertex set $S_t$,
$d$-delayed predictions $\hat{\rho}$.}
\Output{YES if $u, v$ are connected in $G[S_t]$, NO otherwise}

\BlankLine

Let $d^* \gets \min \set{d \given P_{t, d} \subset S_t \andT |Q(t, d)| \leq 4d + 2}$

\If{$d^* = \infty$}{
    Run DFS on $G[S_t]$ and return YES if $u, v$ are connected and NO otherwise.
}\Else{
    \Return $\mathbf{PromiseQuerySubConn}(G, S_t, \hat{\rho}, \rho_{\leq t}, d^*)$
}

\caption{$\mathbf{QuerySubConn}(G, S_t, \hat{\rho}, \rho_{\leq t})$}
\label{alg:subconn-query}
\end{algorithm}

We now prove Theorem \ref{thm:subgraph-connectivty-delay-alg}.

\begin{proof}
    First we show correctness.
    By Lemma \ref{lemma:subconn-error-set-size}, $d^* \leq 2d$.
    Since $P_{t, d^*} \subseteq S_t$, we apply Lemma \ref{lemma:subconn-equivalence} and observe that $u, v$ are connected in $H_t$ if and only if $u, v$ are connected in $G[S_t]$.
    When $d^* = \infty$, we compute connectivity on the graph $G[S_t]$ which is trivially correct.

    To analyze the preprocessing time, we note that Algorithm \ref{alg:promise-subconn-preprocess} required $O(T n^{3})$ time.
    Since we run this algorithm for $O(\log n)$ values of $d$, then the preprocessing algorithm required $O(T n^{3} \log n)$ time.

    To analyze Algorithm \ref{alg:subconn-query}, note that finding $d^*$ requires $O(\log n)$ time, as we can iterate over the values of $|Q(t, d)|$.
    In fact, $d^*$ can be maintained in the updating step by noting the size of $Q_{t, d}$ after the set is updated.
    Given that $d^* \leq 2d$, we construct the graph $H_t$ and run DFS in $O(d^2)$ time, as desired.
    If $d^* = \infty$, then we have $d = \Omega(n)$, and we can run DFS on the full graph $G[S_t]$ in $O(m) = O(n^2) = O(d^2)$ time.
\end{proof}

\subsection{Transitive Closure}

Following a similar approach as the subgraph connectivity problem, we obtain a (conditionally) optimal algorithm for the all pairs reachability (transitive closure) problem with constant update time and query time $O(d^2)$.
For simplicity, we again begin with an algorithm that is given the prediction delay $d$ as additional input, and use a similar transformation to design an algorithm with only prediction $\hat{\rho}$ as input.

Let $G_t$ denote the graph at time step $t$ with edge set $E_t$.
Given a request sequence $\rho$ and time steps $t_1, t_2$, an edge $(u, v)$ is {\bf permanent in $\rho$ from $t_1$ to $t_2$} if $(u, v) \in E_t$ for all $t \in [t_1, t_2]$.
A vertex $u$ is {\bf active in $\rho$ from $t_1$ to $t_2$} if $u$ is incident to any edge update in the request sequence $\rho_{[t_1, t_2]}$ or if $u$ is part of any query in the request sequence $\rho_{[t_1, t_2]}$.
We maintain the following data structures at each time step $t$.

\begin{enumerate}
    \item $G_t$, the current graph $G$. 
    This is maintained in $O(1)$ time by updating one bit in the adjacency matrix.

    \item For all $t \in T$ and $u, v \in A_t$, let $C(u, v, t)$ denote $v$ is reachable from $u$ in the graph $\hat{H}(t) = (V, P_t)$, where $A_t$ is the set of active vertices in $\hat{\rho}$ from $t - d$ to $t + d$ and $P_t$ is the set of permanent edges in $\hat{\rho}$ from $t - d$ to $t + d$.
    This is precomputed in the preprocessing phase and only queried in the dynamic phase.

    \item $F_t = E_t \setminus P_t$, the set of edges that are not permanent in $\hat{\rho}$ from $t - d$ to $t + d$.
    This can be maintained in $\tO{1}$ time as $E_t$ changes only by the edge specified in $\rho_t$, while $P_t$ can add one edge from the update $\hat{\rho}_{t - d}$ and lose one edge from the update $\hat{\rho}_{t + d}$.
    Furthermore, $P_t$ can in fact be computed in the preprocessing phase, so that maintaining $F_t$ can be accomplished by maintaining $E_t$ and checking which $O(1)$ edges are inserted and removed from $P_t$.
\end{enumerate}

In our algorithm, we will maintain the edge set $F_t$.
Let $V(F_t)$ denote all vertices incident to at least one edge in $F_t$.
On a given query $(u, v)$, we construct the graph $H_t$ with vertex set $V(F_t) \cup \set{u, v}$ with all edges in the induced subgraph augmented by an edge for every pair $(a, b)$ for which $b$ is reachable from $a$ using only edges in $P_t$.
On this graph we run DFS to check for reachability. 
To guarantee the performance of the algorithm, we require that $F_t$ is not too large.

\begin{lemma}
    \label{lemma:reachability-error-set}
    $P_t \subset E_t$ and $|F_t| \leq 2d + 1$
\end{lemma}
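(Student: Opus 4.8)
\textbf{Proof proposal for Lemma~\ref{lemma:reachability-error-set}.}
The plan is to mirror the argument for Lemma~\ref{lemma:subconn-error-set-size}, adapted to the edge-update (transitive closure) model rather than the vertex-update (subgraph connectivity) model. First I would establish the containment $P_t \subseteq E_t$. Recall that $\hat{\rho}$ is a $d$-delayed prediction, so by Lemma~\ref{lemma:d-sub-sequence-containment} we have $\hat{\rho}_{\leq t-d} \subseteq \rho_{\leq t} \subseteq \hat{\rho}_{\leq t+d}$. Take any edge $(u,v) \in P_t$. By definition $(u,v)$ is permanent in $\hat{\rho}$ from $t-d$ to $t+d$, meaning $(u,v) \in E_\tau(\hat{\rho})$ for all $\tau \in [t-d, t+d]$; in particular the edge is present after $\hat{\rho}_{\leq t-d}$ and is not flipped by any request in $\hat{\rho}_{[t-d, t+d]}$. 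Since the true prefix $\rho_{\leq t}$ contains all of $\hat{\rho}_{\leq t-d}$ and is contained in $\hat{\rho}_{\leq t+d}$, every flip of $(u,v)$ that occurs in $\rho_{\leq t}$ already occurs in $\hat{\rho}_{\leq t-d}$ (there are none in the window $[t-d,t+d]$), so the parity of flips of $(u,v)$ in $\rho_{\leq t}$ matches that in $\hat{\rho}_{\leq t-d}$, hence $(u,v) \in E_t$. This gives $P_t \subseteq E_t$ and therefore $F_t = E_t \setminus P_t$ is well-defined as a set of ``extra'' edges.

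Next I would bound $|F_t|$. Let $(a,b) \in F_t = E_t \setminus P_t$. Since $(a,b) \in E_t$ but $(a,b)$ is not permanent in $\hat{\rho}$ from $t-d$ to $t+d$, there must be some request in $\hat{\rho}_{[t-d, t+d]}$ that flips $(a,b)$ --- otherwise, using $\hat{\rho}_{\leq t-d} \subseteq \rho_{\leq t} \subseteq \hat{\rho}_{\leq t+d}$ together with the absence of flips of $(a,b)$ in the window, the membership of $(a,b)$ in $E_t$ would be forced to equal its membership after $\hat{\rho}_{\leq t-d}$, which (again by no flips in the window) equals its membership throughout $[t-d,t+d]$, contradicting non-permanence. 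The window $\hat{\rho}_{[t-d, t+d]}$ contains at most $2d+1$ requests, and each request is a single edge flip, so it flips at most one edge. Hence there are at most $2d+1$ distinct edges that are flipped in this window, giving $|F_t| \leq 2d+1$.

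I do not expect any serious obstacle here; the lemma is essentially the edge-update analogue of Lemma~\ref{lemma:subconn-error-set-size}, and the only subtlety is being careful that a request in the edge model touches exactly one edge (so the $2d+1$ window size translates directly to a bound of $2d+1$ on $|F_t|$, with no factor-two blow-up from two endpoints as in the vertex model). The one place to state cleanly is the parity/sandwich argument: because the window $[t-d,t+d]$ of the prediction is exactly the region of uncertainty, any edge whose flip-history in $\rho_{\leq t}$ differs from its permanent status must have been flipped inside that window, and conversely edges untouched in the window are pinned down by $\hat{\rho}_{\leq t-d} \subseteq \rho_{\leq t} \subseteq \hat{\rho}_{\leq t+d}$. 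Once that observation is in place, both parts of the lemma follow immediately.
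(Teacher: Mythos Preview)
Your proposal is correct and follows essentially the same approach as the paper: use Lemma~\ref{lemma:d-sub-sequence-containment} to sandwich $\rho_{\leq t}$ between $\hat{\rho}_{\leq t-d}$ and $\hat{\rho}_{\leq t+d}$, deduce $P_t \subseteq E_t$ from the absence of flips in the window, and bound $|F_t|$ by noting that any non-permanent edge in $E_t$ must be flipped by one of the at most $2d+1$ single-edge requests in $\hat{\rho}_{[t-d,t+d]}$. Your write-up is in fact slightly more careful than the paper's, which abbreviates the second claim to ``$F_t \subset A_t$ and $|A_t|\leq 2d+1$.''
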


\begin{proof}
    By Lemma \ref{lemma:d-sub-sequence-containment}, if $\hat{\rho}$ is a $d$-delayed prediction, then for all $t$, $\hat{\rho}_{\leq t - d} \subset \rho_{\leq t} \subset \hat{\rho}_{\leq t + d}$.
    If $e \in P_t$, then $e \in E_{t'}(\hat{\rho})$ for all $t' \in [t - d, t + d]$ so that no update in $\hat{\rho}_{[t - d, t + d]}$ flips the edge $e$.
    Therefore, $e \in E_t$.

    For the second claim, we observe that $F_t \subset A_t$ and $|A_t| \leq 2d + 1$ since there are $2d + 1$ updates between $t - d$ and $t + d$ and each update can involve only one edge.
\end{proof}

To guarantee correctness, we require that the graph $H_t$ accurately encodes reachability relationships between vertices.

\begin{lemma}
    \label{lemma:reachability-d-equivalence}
    Let $u, v \in V(H_t)$.
    Then, $v$ is reachable from $u$ in $H_t$ if and only if $v$ is reachable from $u$ in $G_t$.
\end{lemma}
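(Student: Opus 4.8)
The plan is to show the equivalence of reachability in $H_t$ and $G_t$ by two implications, using the fact that $H_t$ is constructed precisely to "contract" the permanent-edge structure $P_t$ while keeping all vertices incident to non-permanent edges $F_t$ (plus the two queried vertices $u,v$) explicit. Recall that by Lemma~\ref{lemma:reachability-error-set} we have $P_t \subseteq E_t$, so $E_t = P_t \cup F_t$ with $P_t$ a subset of the permanent edges and $F_t$ the remainder.

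First I would prove the "only if" direction: if $v$ is reachable from $u$ in $H_t$, then $v$ is reachable from $u$ in $G_t$. It suffices to argue that every edge of $H_t$ corresponds to a path in $G_t$. An edge $(a,b) \in E(H_t)$ is of one of two types: either it is an edge of the induced subgraph $G_t[V(F_t)\cup\set{u,v}]$, hence literally an edge of $G_t$; or it is an added "shortcut" edge witnessing that $b$ is reachable from $a$ using only edges in $P_t \subseteq E_t$, hence a path in $G_t$. Concatenating the paths along a $u$-$v$ walk in $H_t$ gives a $u$-$v$ walk in $G_t$, so $v$ is reachable from $u$ in $G_t$.

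Next I would prove the "if" direction: suppose $v$ is reachable from $u$ in $G_t$ via some directed path $P = (u = w_0, w_1, \dotsc, w_\ell = v)$. Consider the subsequence of $P$ consisting of those $w_i$ that lie in $V(F_t)\cup\set{u,v}$; call them $w_{i_0} = u, w_{i_1}, \dotsc, w_{i_r} = v$ in order along $P$. For each consecutive pair $w_{i_j}, w_{i_{j+1}}$, the sub-path of $P$ strictly between them uses only internal vertices \emph{not} in $V(F_t)$; since every edge of $G_t$ that is incident to a vertex outside $V(F_t)$ must be a permanent edge (any non-permanent edge lies in $F_t$, and then both its endpoints are in $V(F_t)$), this sub-path uses only edges of $P_t$ except possibly its first and last edge. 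A careful version of this: actually every edge entirely strictly inside the segment has both endpoints outside $V(F_t)$ or touches $w_{i_j}, w_{i_{j+1}}$; reorganizing slightly, the relevant claim is that $w_{i_{j+1}}$ is reachable from $w_{i_j}$ using only permanent edges, which is recorded by the shortcut edge $(w_{i_j}, w_{i_{j+1}})$ in $H_t$ whenever both endpoints are active (which they are, as $u,v$ and vertices in $V(F_t)\subseteq A_t$ are all active). Hence $w_{i_0} \to w_{i_1} \to \dotsb \to w_{i_r}$ is a path in $H_t$, establishing reachability of $v$ from $u$ in $H_t$.

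The main obstacle I anticipate is the boundary bookkeeping in the "if" direction: making precise that a sub-path of $P$ whose internal vertices avoid $V(F_t)$ truly uses only edges of $P_t$, and in particular handling the first and last edges of such a segment (which are incident to $w_{i_j}$ or $w_{i_{j+1}}$ and might themselves lie in $F_t$). The clean way around this is to note that if such an edge were in $F_t$, then its other endpoint would be in $V(F_t)$ and would itself appear in the subsequence, contradicting consecutiveness of $w_{i_j}, w_{i_{j+1}}$; so in fact the entire segment between two consecutive selected vertices, \emph{including} its first and last edges, lies in $P_t$, and the shortcut edge applies directly. Everything else is a routine concatenation argument, and the reachability check by DFS on $H_t$ is correct by this equivalence.
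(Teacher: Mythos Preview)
Your proposal is correct and follows essentially the same approach as the paper: both directions are handled by the same edge-by-edge/segment-by-segment argument, taking the subsequence of path vertices in $V(F_t)\cup\{u,v\}$ and showing that each segment between consecutive selected vertices lies entirely in $P_t$ because any edge with an endpoint outside $V(F_t)$ cannot be in $F_t$. Your explicit treatment of the boundary edges and of the requirement that the endpoints lie in $A_t$ (so that $C(\cdot,\cdot,t)$ is defined) is slightly more careful than the paper's version, but the argument is the same.
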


\begin{proof}
    Suppose $v$ is reachable from $u$ in $H_t$ with the path $(u, w_1, w_2, \dotsc, w_{l - 1}, v)$.
    Each edge is either an edge in the induced subgraph and therefore an edge in $G_t$ or an auxiliary edge $(w_{i}, w_{i + 1})$ added for $w_{i + 1}$ reachable from $w_{i}$ in using edges in $P_t \subset E_t$.
    Therefore, there is a path from $u$ to $v$ in $G_t$ as well.

    Suppose $v$ is reachable from $u$ in $G_t$ with a path $(u, w_1, w_2, \dotsc, w_{l - 1}, v)$.
    Consider the subsequence $\set{w_{i_j}}$ of vertices in $V(H_t)$, noting that $u, v$ are in this subsequence.
    It suffices to show that for all $j$, the edge $(w_{i_j}, w_{i_{j + 1}})$ is in $H_t$.
    Suppose not, then $i_{j + 1} > i_{j} + 1$, otherwise the edge is in the induced subgraph.
    Consider the subpath from $w_{i_j}$ to $w_{i_{j + 1}}$.
    Every edge is in $E_t$ but not the induced subgraph on $V(F_t) \cup \set{u, v}$, as at least one endpoint is not in $V(F_t) \cup \set{u, v}$.
    Then, since each edge is not in $F_t$, this edge must be in $P_t$, so that $C(w_{i_j}, w_{i_{j + 1}}, t)$ is true and the edge exists in $H_t$.
\end{proof}

Equipped with these tools, we state our update-optimized algorithm and the resulting theorem.

\begin{algorithm}[H]

\SetKwInOut{Input}{Input}
\Input{Initial graph $G_0$.
$d$-delayed predictions $\hat{\rho}$.}

\BlankLine

\For{$t = 1$ to $T$}{
    Compute $P_t = \set{e \given e \textrm{ is permanent in $\hat{\rho}$ from $\max(1, t - d)$ to $\min(T, t + d)$}}$

    Compute $A_t = \set{v \given v \textrm{ is active in $\hat{\rho}$ from $\max(1, t - d)$ to $\min(T, t + d)$}}$

    Let $\hat{H}_t = (V, P_t)$ be the subgraph with edge set $P_t$.

    \For{$u, v \in A_t$}{
        Run DFS on $\hat{H}_t \gets (V, P_t)$ and store in $C(u, v, t)$ a 1 if $v$ is reachable from $u$ in $\hat{H}_t$ and 0 otherwise.
    }
}

\caption{$\mathbf{PromisePreprocessingTC}(G_0, \hat{\rho}, d)$}
\label{alg:promise-transitive-closure-preprocess}
\end{algorithm}

We preprocess the input as described by Algorithm \ref{alg:promise-transitive-closure-preprocess}.
Given an update, we maintain $G_t, F_t$ in $O(1)$ time, as discussed above.
In the query step, we will construct the graph $H_t$ and compute connectivity in this graph.

\begin{algorithm}[H]

\SetKwInOut{Input}{Input}\SetKwInOut{Output}{Output}
\Input{Current graph $G_t$,
$d$-delayed predictions $\hat{\rho}$.}
\Output{YES if $v$ is reachable from $u$ in $G_t$, NO otherwise}

\BlankLine

Construct $H_t$ with vertex set $V(F_t) \cup \set{u, v}$ and edge set $G[V(F_t) \cup \set{u, v}]$ and additional edges $(a, b)$ where $C(a, b, t) = 1$.

Run DFS from $u$ in $H_t$ and return YES if $v$ is reachable from $u$ and NO otherwise.

\caption{$\mathbf{PromiseQueryTC}(G_t, \hat{\rho}, \rho_{\leq t}, d)$}
\label{alg:promise-transitive-closure-query}
\end{algorithm}

\begin{lemma}
    \label{lemma:transitive-closure-delay-promise-alg}
    Let $T$ by polynomial in $n$.
    Let $\hat{\rho} = (\hat{\rho}_1, \hat{\rho}_2, \dotsc, \hat{\rho}_T)$ be a $d$-delayed prediction for the transitive closure problem.

    There is an algorithm solving the transitive closure problem given $\hat{\rho}$ and $d$ with polynomial preprocessing time $P(n)$, update time $U(n) = \tO{1}$ and query time $Q(n) = O(d^2)$.
\end{lemma}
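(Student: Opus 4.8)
The plan is to verify that the algorithm already laid out --- preprocessing via $\mathbf{PromisePreprocessingTC}$, updates maintaining $G_t$ and $F_t$, and queries via $\mathbf{PromiseQueryTC}$ --- meets all three bounds. \emph{Correctness} is the easy part: on a query $(u,v)$ at time $t$ we build $H_t$ on vertex set $V(F_t)\cup\{u,v\}$ with the induced edges of $G_t$ plus one auxiliary edge $(a,b)$ for every precomputed $C(a,b,t)=1$, and Lemma~\ref{lemma:reachability-d-equivalence} says $v$ is reachable from $u$ in $H_t$ iff $v$ is reachable from $u$ in $G_t$. One only has to check that every lookup $C(a,b,t)$ used by the query is actually defined, i.e.\ that $V(F_t)\cup\{u,v\}\subseteq A_t$: the edges of $F_t$ are touched inside the window $\hat\rho_{[t-d,t+d]}$ (this is part of the proof of Lemma~\ref{lemma:reachability-error-set}), and since $\hat\rho$ is $d$-delayed the query $(u,v)$ itself occurs in $\hat\rho_{[t-d,t+d]}$ by Lemma~\ref{lemma:d-sub-sequence-containment}, so $u,v\in A_t$ as well.

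For the \emph{preprocessing time} I would argue as follows. For each of the $T$ time steps we compute $P_t$ and $A_t$ from the window $\hat\rho_{[t-d,t+d]}$ in $O(d)$ time (from scratch, or amortized $O(1)$ by sliding the window as $t$ advances), build $\hat H_t=(V,P_t)$, and for each of the at most $(2d+1)^2=O(d^2)$ pairs $u,v\in A_t$ (using $|A_t|\le 2d+1$, as in Lemma~\ref{lemma:reachability-error-set}) run one DFS on $\hat H_t$ in $O(n+m)=O(n^2)$ time to fill in $C(u,v,t)$. This gives $O\big(T\,d^2\,(n+m)\big)$ total, which is polynomial in $n$ since $T=\poly(n)$ and $d\le n$.

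For the \emph{update time}: flipping the edge named by $\rho_t$ (and by $\hat\rho_t$) in the adjacency matrix of $G_t$ costs $O(1)$; and $F_t=E_t\setminus P_t$ changes only because $E_t$ gained or lost the single edge $\rho_t$ and $P_t$ gained or lost $O(1)$ edges --- the family $\{P_t\}_t$ is fixed during preprocessing, so $P_t\triangle P_{t-1}$ is known --- hence $F_t$ is kept in $\tO{1}$ time with a dictionary on edges. For the \emph{query time}: $|F_t|\le 2d+1$ by Lemma~\ref{lemma:reachability-error-set}, so $|V(F_t)\cup\{u,v\}|=O(d)$; enumerating its induced edges together with the $O(d^2)$ candidate auxiliary edges, each settled by an $O(1)$ lookup into the table $C(\cdot,\cdot,t)$, and then running DFS on the resulting $H_t$, all cost $O(d^2)$.

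I expect the only genuine subtlety --- and hence the main obstacle --- to be the bookkeeping guaranteeing that $H_t$ has the claimed size and that exactly the reachability values it needs are available: that $F_t$ is small and can be maintained online, and that $V(F_t)\cup\{u,v\}$ sits inside the active set $A_t$ for which $C(\cdot,\cdot,t)$ was precomputed. Both points are isolated in Lemmas~\ref{lemma:reachability-error-set} and~\ref{lemma:d-sub-sequence-containment} together with the fact that $\{P_t\}_t$ is precomputable, so once these are in hand the three running-time claims follow by the direct counting above, mirroring the subgraph-connectivity argument of Lemma~\ref{lemma:subgraph-connectivty-delay-promise-alg}.
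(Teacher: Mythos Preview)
Your proposal is correct and follows essentially the same approach as the paper: correctness via Lemma~\ref{lemma:reachability-d-equivalence}, update time by maintaining $G_t$ and $F_t$ in $\tO{1}$, and query time $O(d^2)$ from $|V(F_t)\cup\{u,v\}|=O(d)$ (Lemma~\ref{lemma:reachability-error-set}). Two minor differences: you explicitly verify $V(F_t)\cup\{u,v\}\subseteq A_t$ (a point the paper leaves implicit), and your preprocessing does one DFS per \emph{pair} in $A_t\times A_t$ for $O(T d^2 n^2)$ total, whereas the paper runs one DFS per \emph{source} in $A_t$ for $O(T d\, n^2)$---but since only polynomial preprocessing is claimed, this is immaterial.
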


\begin{proof}
    Correctness of the algorithm follows from Lemma \ref{lemma:reachability-d-equivalence}.
    To examine the running time, in the preprocessing step we can compute $A_t, P_t$ in $\tO{T}$ time for all $t \in [T]$ by computing the change from $P_{t - 1}$ (resp. $A_{t - 1}$) to $P_t$ (resp. $A_t$) in $\tO{1}$ time.
    Computing reachability from $|A_t| = O(d)$ sources requires $O(n^2 d)$ time for each $t$, so that Algorithm \ref{alg:promise-transitive-closure-preprocess} requires time $O(T n^2 d)$.

    Algorithm \ref{alg:promise-transitive-closure-query} constructs a graph on $O(d)$ vertices (Lemma \ref{lemma:reachability-error-set}) and runs DFS, which requires at most $O(d^2)$ time.
\end{proof}

We use a similar transformation as in the subgraph connectivity problem to obtain an algorithm that does not require $d$ as input.
In the preprocessing algorithm, we run Algorithm \ref{alg:promise-transitive-closure-preprocess} for $O(\log n)$ values of $d = 2^{d'}$ where $1 \leq d' \leq \ceil{\log n}$, saving the data structures $\set{P_{t, d}} \set{A_{t, d}}$ and $\set{C_d(u, v, t)}$.
In the update step, we maintain in $\tO{1}$ time the dynamic edge set $E_t$ as well as the sets $F_{t, d} = E_t \setminus P_{t, d}$ for all $d$ such that $P_{t, d} \subset E_t$.
In each update, we also maintain the minimum $d^*$ such that $P_{t, d} \subset E_t$ and $|F_t| = |E_t \setminus P_{t, d}| \leq 2d + 1$.
Given a query, we return $\mathbf{PromiseQueryTC(G_t, \hat{\rho}, \rho_{\leq t}, d^*)}$, which is correct by Lemma \ref{lemma:reachability-d-equivalence} and $P_{t, d^*} \subset E_t$.

Furthermore, we run Algorithm \ref{alg:promise-transitive-closure-preprocess} $O(\log n)$ times in the preprocessing phase, so that the preprocessing time remains polynomial.
By Lemma \ref{lemma:reachability-error-set}, $d^* \leq 2d$ where $d$ is the delay of the prediction $\hat{\rho}$, so that each query requires $O(d^2)$ time.
If $d^* = \infty$, then $d = \Omega(n)$ and we can run DFS on the full graph $G_t$ in $O(m) = O(n^2) = O(d^2)$ time.
The above discussion yields the following theorem.

\begin{theorem}
    \label{thm:transitive-closure-delay-alg}
    Let $T$ by polynomial in $n$.
    Let $\hat{\rho} = (\hat{\rho}_1, \hat{\rho}_2, \dotsc, \hat{\rho}_T)$ be a $d$-delayed prediction for the transitive closure problem.

    There is an algorithm solving the transitive closure problem given $\hat{\rho}$ with polynomial preprocessing time $P(n)$, update time $U(n) = \tO{1}$ and query time $Q(n) = O(d^2)$.
\end{theorem}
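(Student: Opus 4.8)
The plan is to lift the delay-aware algorithm of Lemma~\ref{lemma:transitive-closure-delay-promise-alg} to a delay-agnostic one by running it at $O(\log n)$ geometrically spaced guesses for the delay and, at query time, reading off the answer from the smallest guess that is \emph{certified} to be large enough. This mirrors the transformation already used for subgraph connectivity in Theorem~\ref{thm:subgraph-connectivty-delay-alg}.

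In preprocessing, I would run Algorithm~\ref{alg:promise-transitive-closure-preprocess} for each $d = 2^{d'}$ with $1 \le d' \le \ceil{\log n}$, storing the families $\set{P_{t,d}}$, $\set{A_{t,d}}$ and $\set{C_d(u,v,t)}$. Since there are $O(\log n)$ values of $d$ and each run is polynomial by Lemma~\ref{lemma:transitive-closure-delay-promise-alg}, the total preprocessing time stays polynomial. In the online phase, at each step I maintain the current edge set $E_t$ (one bit in the adjacency matrix), and, for every $d$, the precomputed permanent-edge set $P_{t,d}$, the flag $P_{t,d} \subseteq E_t$, and---whenever that flag holds---the residual set $F_{t,d} = E_t \setminus P_{t,d}$. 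Because $E_t$ changes by one edge per step and each $P_{t,d}$ changes by only the $O(1)$ edges touched by $\hat{\rho}_{t-d}$ and $\hat{\rho}_{t+d}$, each of these $O(\log n)$ quantities is updated incrementally in $\tO{1}$ time, so the update time is $\tO{1}$. I also maintain the smallest index $d^*$ with $P_{t,d^*} \subseteq E_t$ and $|F_{t,d^*}| \le 2d^* + 1$, setting $d^* = \infty$ if no such $d \le n$ exists; this is read off in $O(\log n)$ time from the maintained sizes.

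On a query $(u,v)$, if $d^* < \infty$ I return $\mathbf{PromiseQueryTC}(G_t, \hat{\rho}, \rho_{\leq t}, d^*)$, and otherwise I run a DFS on the full graph $G_t$. Correctness of the first branch is immediate from Lemma~\ref{lemma:reachability-d-equivalence}, whose only hypothesis is $P_{t,d^*} \subseteq E_t$, which is exactly the defining condition of $d^*$; the second branch is trivially correct. For the running time, if $\hat{\rho}$ is genuinely $d$-delayed then Lemma~\ref{lemma:reachability-error-set} gives $P_{t,d} \subseteq E_t$ and $|F_{t,d}| \le 2d + 1$, so the minimal certified guess obeys $d^* \le 2d$; Algorithm~\ref{alg:promise-transitive-closure-query} then builds $H_t$ on $O(d^*) = O(d)$ vertices and runs DFS in $O((d^*)^2) = O(d^2)$ time. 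When $d^* = \infty$ we must have $d = \Omega(n)$, so the full-graph DFS costs $O(m) = O(n^2) = O(d^2)$, within the claimed bound.

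The main obstacle is not a single inequality but the bookkeeping that keeps the update time $\tO{1}$ across all $O(\log n)$ scales at once: one must verify that each membership test ``$P_{t,d} \subseteq E_t$'' and each residual size $|F_{t,d}|$ can be maintained incrementally, using that per step only $O(1)$ edges enter or leave $P_{t,d}$ and only one edge changes in $E_t$, and that $d^*$ (equivalently, the query-time guarantee) can likewise be tracked during the update. Everything else is a direct reuse of the promise-version analysis together with Lemmas~\ref{lemma:reachability-error-set} and~\ref{lemma:reachability-d-equivalence}.
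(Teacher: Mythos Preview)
Your proposal is correct and follows essentially the same approach as the paper: run the delay-aware promise algorithm at $O(\log n)$ geometric guesses for $d$, maintain $E_t$, the flags $P_{t,d}\subseteq E_t$, and the residuals $F_{t,d}$ incrementally in $\tO{1}$ per scale, pick the smallest certified $d^*$ (bounded by $2d$ via Lemma~\ref{lemma:reachability-error-set}), and answer queries with $\mathbf{PromiseQueryTC}$ or a full DFS when $d^*=\infty$. Your exposition is in fact slightly more explicit than the paper's about the per-step bookkeeping that keeps the update cost $\tO{1}$ across all scales.
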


\subsection{Shortest Paths}

With a slightly more careful analysis, we can obtain a similar result for all pairs shortest paths.
Keeping the same definitions as with transitive closure, we maintain the following data structures at each time step $t$.

\begin{enumerate}
    \item $G_t$, the current graph $G$ maintained in $O(1)$ time.

    \item For all $t \in T$ and $u, v \in A_t$, let $D(u, v, t)$ denote the distance from $u$ to $v$ in the graph $\hat{H}_t = (V, P_t)$.

    \item $F_t = E_t \setminus P_t$, the set of edges in $G_t$ that are not permanent in $\hat{\rho}$ from $t - d$ to $t + d$, maintained in $O(1)$ time.
\end{enumerate}

Our algorithm again maintains the set $F_t$.
By an identical argument to Lemma \ref{lemma:reachability-error-set}, we have $E_t = P_t \cup F_t$ where $|F_t| \leq 2d + 1$.
At each query $(u, v)$, construct the graph $H_t$ with vertex set $V(F_t) \cup \set{u, v}$ and edge set consisting of the induced subgraph $G_t[V(F_t) \cup \set{u, v}]$ and edge $(a, b)$ with weight $D(a, b, t)$.
If $(a, b)$ is already an edge in the induced subgraph, set the weight of $(a, b)$ to be the minimum of the edge weight and the distance $D(a, b, t)$.
We now require the following lemma for correctness.

\begin{lemma}
    \label{lemma:shortest-path-equivalence}
    Let $t$ be a time step.
    Let $(u, v)$ be the query at $\rho_t$.
    Let $H_t$ be the graph with vertex set $V(H_t) = V(F_t) \cup \set{u, v}$ and edge set $G[V(H_t)]$ and edge $(a, b)$ with weight $D(a, b, t)$ for all pairs of vertices $a, b$ where $D(a, b, t) < \infty$ is finite.

    Then, $\distT_G(u, v) = \distT_{H_t}(u, v)$.
\end{lemma}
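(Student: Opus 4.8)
The plan is to show the two distances are equal by proving each is at most the other, exactly mirroring the structure of Lemma~\ref{lemma:reachability-d-equivalence} for transitive closure, but tracking edge weights. First I would recall the key structural facts: by the analog of Lemma~\ref{lemma:reachability-error-set}, $\hat{\rho}$ being $d$-delayed gives $E_t = P_t \cup F_t$ with $|F_t| \leq 2d+1$, so $P_t \subseteq E_t$; every edge of $P_t$ is genuinely present in $G_t$ with its true weight, and $V(H_t) = V(F_t) \cup \set{u,v}$ contains all endpoints of the ``non-permanent'' edges. The quantity $D(a,b,t)$ is, by definition, the shortest-path distance from $a$ to $b$ in $\hat{H}_t = (V, P_t)$, hence an \emph{upper bound} on $\distT_{G_t}(a,b)$ since $P_t \subseteq E_t$ (with correct weights).

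For the inequality $\distT_{G_t}(u,v) \leq \distT_{H_t}(u,v)$: I would take any $u$-$v$ path $\pi$ in $H_t$ and replace each edge. An edge of $H_t$ is either an edge of the induced subgraph $G_t[V(H_t)]$ — which is a real edge of $G_t$ of the same weight (or smaller, if we took the min with $D(a,b,t)$, and $D(a,b,t)$ is itself realized by a real walk in $G_t$) — or a ``shortcut'' edge $(a,b)$ of weight $D(a,b,t)$, which is realized by an actual $a$-$b$ walk in $G_t$ through $P_t$-edges of total weight $D(a,b,t)$. Concatenating these realizing walks yields a $u$-$v$ walk in $G_t$ of total weight $\leq$ the weight of $\pi$; hence $\distT_{G_t}(u,v)$ is at most $\distT_{H_t}(u,v)$.

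For the reverse inequality $\distT_{H_t}(u,v) \leq \distT_{G_t}(u,v)$: take a shortest $u$-$v$ path $(u = w_0, w_1, \dotsc, w_\ell = v)$ in $G_t$. Consider the subsequence $w_{i_0} = u, w_{i_1}, \dotsc, w_{i_r} = v$ of those vertices lying in $V(H_t)$. For each consecutive pair $w_{i_j}, w_{i_{j+1}}$, the sub-path between them in $G_t$ has all intermediate vertices \emph{outside} $V(F_t) \cup \set{u,v}$; hence every edge of that sub-path has at least one endpoint outside $V(F_t)$, so it cannot lie in $F_t$ (whose edges have both endpoints in $V(F_t)$), so it lies in $P_t$. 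Therefore $w_{i_{j+1}}$ is reachable from $w_{i_j}$ within $\hat{H}_t$, and in fact the sub-path's weight is $\geq D(w_{i_j}, w_{i_{j+1}}, t)$, which is at least the weight of the $H_t$-edge $(w_{i_j}, w_{i_{j+1}})$ (the weight there is a minimum that includes $D(w_{i_j}, w_{i_{j+1}}, t)$, and if a direct induced edge exists its weight also upper-bounds that segment since it's a single-edge path in $G_t$). Summing over $j$ gives a $u$-$v$ path in $H_t$ of weight at most $\distT_{G_t}(u,v)$. Combining the two inequalities proves $\distT_{G_t}(u,v) = \distT_{H_t}(u,v)$.

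The main obstacle — and the place where the shortest-path version is genuinely more delicate than the transitive-closure version — is handling the interaction between ``real'' induced-subgraph edges and the shortcut edges $D(a,b,t)$ correctly: one must ensure that taking the minimum of the two weights does not lose a shorter path, and that the path-surgery in the reverse direction never double-counts or overestimates a segment. I would be careful to argue, for each contracted segment, that \emph{both} candidate weights in $H_t$ (the direct induced edge, if present, and the $P_t$-distance $D(a,b,t)$) are valid upper bounds on the corresponding $G_t$ sub-path's weight, and that both are realizable by genuine $G_t$-walks — this symmetry is what makes the equality, rather than a one-sided inequality, go through. A secondary subtlety is the case $u$ or $v$ equal to some $w_{i_j}$ with a zero-length segment, and the case where $\distT_{G_t}(u,v) = \infty$, both of which are handled trivially by the reachability analysis of Lemma~\ref{lemma:reachability-d-equivalence}.
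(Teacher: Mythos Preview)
Your proposal is correct and follows essentially the same approach as the paper's proof: both directions are handled identically, by expanding $H_t$-edges into $G_t$-walks for $\distT_{G_t}(u,v) \leq \distT_{H_t}(u,v)$, and by contracting a shortest $G_t$-path to the subsequence of $V(H_t)$-vertices and arguing the intermediate segments lie entirely in $P_t$ for the reverse inequality. Your treatment is in fact slightly more careful than the paper's about the interaction between induced edges and shortcut edges (the min-weight convention) and about degenerate cases.
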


\begin{proof}
    First, we argue that $\distT_{G_t}(u, v) \leq \distT_{H_t}(u, v)$.
    This follows as every edge in $H_t$ is either an edge in $G_t$ or an edge with weight equivalent to a path in $G_t$.
    Then, it suffices to verify that $\distT_{H_t}(u, v) \leq \distT_{G_t}(u, v)$.
    Consider a shortest path $P(u, v) = (u, w_1, \dotsc, w_{l - 1}, v)$ in $G_t$.
    We take the subsequence $P_{H} = (w_{i_j})$ of vertices $w_{i_j} \in V(H_t)$.
    It suffices to show that $P_H$ is a path in $H_t$ with weight at most $\distT_{G_t}(u, v)$, the total weight of path $P(u, v)$.
    We consider two cases.
    If $i_{j + 1} = i_j + 1$ then the edge $(w_{i_j}, w_{i_{j + 1}}) \in E(H_t)$ as an edge in the induced subgraph.
    Otherwise, the intermediate vertices are not in $V(H_t)$ and therefore not in $V(F_t)$.
    Since at least one endpoint of each edge is not in $F_t$, all the edges in $E_t$ between $w_{i_j}, w_{i_{j + 1}}$ are in $P_t$.
    In particular, $\wt(w_{i_j}, w_{i_{j + 1}}) = D(a, b, t)$ is at most the weight of the subpath in $G_t$, proving the desired statement.
\end{proof}

Following a similar approach as transitive closure, we obtain the following theorem.

\begin{lemma}
    \label{lemma:all-pairs-shortest-path-delay-promise-alg}
    Let $T$ by polynomial in $n$.
    Let $\hat{\rho} = (\hat{\rho}_1, \hat{\rho}_2, \dotsc, \hat{\rho}_T)$ be a $d$-delayed prediction for the all pairs shortest path problem on weighted, directed graphs.

    There is an algorithm solving the all pairs shortest path problem on weighted digraphs given $\hat{\rho}$ and $d$ with polynomial preprocessing time $P(n)$, update time $U(n) = \tO{1}$ and query time $Q(n) = O(d^2)$.
\end{lemma}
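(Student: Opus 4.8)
The plan is to mirror the transitive closure argument of Lemma~\ref{lemma:transitive-closure-delay-promise-alg}, replacing Boolean reachability by distances and DFS by a single-source shortest-path computation, and invoking Lemma~\ref{lemma:shortest-path-equivalence} in place of Lemma~\ref{lemma:reachability-d-equivalence}. In the preprocessing phase, for every $t \in [T]$ I would compute the set $P_t$ of edges permanent in $\hat{\rho}$ from $\max(1, t-d)$ to $\min(T, t+d)$ and the set $A_t$ of vertices active in that window; both change by $O(1)$ between consecutive $t$, so all of them are produced in $\tO{T}$ total time as sequences of insertions and deletions. For each $t$ and each active source $a \in A_t$ I run a shortest-path computation on $\hat{H}_t = (V, P_t)$ and store $D(a, v, t) = \distT_{\hat{H}_t}(a, v)$ for all $v$. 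Since $|A_t| = O(d)$ and each such computation costs $O(n^2)$, the total preprocessing time is $O(T n^2 d) = \poly(n)$.

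For updates, I maintain the adjacency matrix of $G_t$ in $O(1)$ time per request and maintain $F_t = E_t \setminus P_t$; since $P_t$ is already computed in the preprocessing phase and changes by $O(1)$ edges per step while $E_t$ changes by one edge per step, $F_t$ is maintainable in $\tO{1}$ time. By the distance analog of Lemma~\ref{lemma:reachability-error-set} (which follows verbatim from Lemma~\ref{lemma:d-sub-sequence-containment}, exactly as stated in the text preceding Lemma~\ref{lemma:shortest-path-equivalence}), we have $P_t \subseteq E_t$ and $|F_t| \leq 2d + 1$, so the vertex set $V(F_t)$ has $O(d)$ vertices.

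On a query $(u, v)$, I build the graph $H_t$ of Lemma~\ref{lemma:shortest-path-equivalence} on vertex set $V(F_t) \cup \set{u, v}$: take the induced subgraph $G_t[V(F_t) \cup \set{u, v}]$ and, for every pair $a, b$ of these vertices with $D(a, b, t) < \infty$, add an edge $(a, b)$ of weight $D(a, b, t)$, keeping the minimum of the two weights if $(a, b)$ is already an edge. All vertices of $H_t$ are active at time $t$ (the endpoints of $F_t$ edges are active, and $u, v$ are queried hence active), so every required distance $D(a, b, t)$ was precomputed. By Lemma~\ref{lemma:shortest-path-equivalence}, $\distT_{G_t}(u, v) = \distT_{H_t}(u, v)$, and I return this by a single-source shortest-path computation from $u$ in $H_t$; as $H_t$ has $O(d)$ vertices and $O(d^2)$ edges, this takes $O(d^2)$ time, giving $U(n) = \tO{1}$ and $Q(n) = O(d^2)$.

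The one point requiring care beyond the transitive-closure proof is that the shortcut edges must carry the right weights and that the shortest-path computation on $H_t$ is valid: one must confirm that $H_t$ has non-negative edge weights (so Dijkstra applies, as in the standard setting for these OMv-based problems), and that replacing a subpath through permanent-edge regions by a single shortcut of weight $D(a, b, t)$ neither over- nor underestimates the true distance. This is precisely the content of Lemma~\ref{lemma:shortest-path-equivalence}, so once it is invoked the remaining work is routine bookkeeping of data-structure maintenance costs; I expect this invocation, together with verifying that the needed distances are among the precomputed active-source distances, to be the only nontrivial step.
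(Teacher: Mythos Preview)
Your proposal is correct and follows essentially the same approach as the paper: preprocess by computing $P_t$, $A_t$, and the distances $D(a,b,t)$ in $\hat{H}_t=(V,P_t)$ via Dijkstra for active vertices, maintain $F_t$ in $\tO{1}$ per update, and on a query build $H_t$ on $V(F_t)\cup\{u,v\}$ and invoke Lemma~\ref{lemma:shortest-path-equivalence} plus a single Dijkstra in $O(d^2)$. Your explicit check that all vertices of $H_t$ are active (so the needed $D(a,b,t)$ are precomputed) and your remark on non-negative weights are both apt and slightly more careful than the paper's own presentation.
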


The algorithm and proof of correctness are given below.

\begin{algorithm}[H]

\SetKwInOut{Input}{Input}
\Input{Initial graph $G_0$.
$d$-delayed predictions $\hat{\rho}$.}

\BlankLine

\For{$t = 1$ to $T$}{
    Compute $P_t = \set{e \given e \textrm{ is permanent in $\hat{\rho}$ from $\max(1, t - d)$ to $\min(T, t + d)$}}$

    Compute $A_t = \set{v \given v \textrm{ is active in $\hat{\rho}$ from $\max(1, t - d)$ to $\min(T, t + d)$}}$

    Let $\hat{H}_t = (V, P_t)$ be the subgraph with edge set $P_t$.

    \For{$u, v \in A_t$}{
        Run Dijkstra's on $\hat{H}_t \gets (V, P_t)$ and store in $D(u, v, t) \gets \distT_{\hat{H}_t}(u, v)$
    }
}

\caption{$\mathbf{PromisePreprocessingAPSP}(G_0, \hat{\rho}, d)$}
\label{alg:promise-apsp-preprocess}
\end{algorithm}

We preprocess the input as described by Algorithm \ref{alg:promise-apsp-preprocess}.
Given an update, we maintain $G_t, F_t$ in $\tO{1}$ time.
In the query step, we will construct the graph $H_t$ and compute the shortest path distance in this graph.

\begin{algorithm}[H]

\SetKwInOut{Input}{Input}\SetKwInOut{Output}{Output}
\Input{Current graph $G_t$,
$d$-delayed predictions $\hat{\rho}$.}
\Output{Distance from $u$ to $v$ in $G_t$.}

\BlankLine

Construct $H_t$ with vertex set $V(F_t) \cup \set{u, v}$ and edge set $G[V(F_t) \cup \set{u, v}]$ and additional edges $(a, b)$ with weight $D(a, b, t)$ where $D(a, b, t) < \infty$ is finite.

Run Dijkstra's from $u$ in $H_t$ and return $\distT_{H_t}(u, v)$

\caption{$\mathbf{PromiseQueryAPSP}(G_t, \hat{\rho}, \rho_{\leq t}, d)$}
\label{alg:promise-apsp-query}
\end{algorithm}

\begin{proof}
    Correctness follows from Lemma \ref{lemma:shortest-path-equivalence}.
    The preprocessing algorithm requires $O(T n^2 d)$ as each invocation of Dijkstra's algorithm requires $O(n^2)$ time.
    The query algorithm requires $O(d^2)$ time as we invoke Dijkstra's algorithm on a graph with $d$ vertices.
\end{proof}

Again, we apply the same transformation as with transitive closure and subgraph connectivity to design an algorithm that requires input $\hat{\rho}$ only.
In the preprocessing algorithm, we run Algorithm \ref{alg:promise-apsp-preprocess} for $O(\log n)$ values of $d = 2^{d'}$ where $1 \leq d' \leq \ceil{\log n}$, saving the data structures $\set{P_{t, d}} \set{A_{t, d}}$ and $\set{D_d(u, v, t)}$.
In the update step, we maintain in $O(1)$ time the dynamic edge set $E_t$ as well as the sets $F_{t, d} = E_t \setminus P_{t, d}$ for all $d$ such that $P_{t, d} \subset E_t$.
In each update, we also maintain the minimum $d^*$ such that $P_{t, d} \subset E_t$ and $|F_t| = |E_t \setminus P_{t, d}| \leq 2d + 1$.
Given a query, we return $\mathbf{PromiseQueryAPSP(G_t, \hat{\rho}, \rho_{\leq t}, d^*)}$, which is correct by Lemma \ref{lemma:shortest-path-equivalence} and $P_{t, d^*} \subset E_t$.

Furthermore, we run Algorithm \ref{alg:promise-apsp-preprocess} $O(\log n)$ times in the preprocessing phase, so that the preprocessing time remains polynomial.
Since $|F_t| \leq 2d + 1$, $d^* \leq 2d$ where $d$ is the delay of the prediction $\hat{\rho}$, so that each query requires $O(d^2)$ time.
If $d^* = \infty$, then $d = \Omega(n)$ and we can run Dijkstra on the full graph $G_t$ in $O(m) = O(n^2) = O(d^2)$ time.
The above discussion yields the following theorem.

\begin{theorem}
    \label{thm:apsp-delay-alg}
    Let $T$ by polynomial in $n$.
    Let $\hat{\rho} = (\hat{\rho}_1, \hat{\rho}_2, \dotsc, \hat{\rho}_T)$ be a $d$-delayed prediction for the all pairs shortest path problem on weighted digraphs.

    There is an algorithm solving the all pairs shortest path problem on weighted digraphs given $\hat{\rho}$ with polynomial preprocessing time $P(n)$, update time $U(n) = \tO{1}$ and query time $Q(n) = O(d^2)$.
\end{theorem}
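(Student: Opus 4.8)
The plan is to repeat, essentially verbatim, the $d$-agnostic transformation already used for transitive closure (Theorem~\ref{thm:transitive-closure-delay-alg}) and subgraph connectivity (Theorem~\ref{thm:subgraph-connectivty-delay-alg}), now applied on top of the delay-aware shortest-path routine of Lemma~\ref{lemma:all-pairs-shortest-path-delay-promise-alg}. Recall that Lemma~\ref{lemma:all-pairs-shortest-path-delay-promise-alg} gives, for a known delay bound $d$, a data structure with polynomial preprocessing, $\tO{1}$ update time, and $O(d^2)$ query time, built from the permanent-edge sets $P_{t}$, active-vertex sets $A_{t}$, precomputed distances $D(u,v,t)$, and the maintained set $F_t = E_t \setminus P_t$, with correctness guaranteed by Lemma~\ref{lemma:shortest-path-equivalence} and the bound $|F_t| \le 2d+1$ by the analogue of Lemma~\ref{lemma:reachability-error-set}. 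Since the true delay $d$ of the prediction $\hat\rho$ is not given, I would run the promise construction at all dyadic scales and let the algorithm discover, at query time, a scale that certifiably works.

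Concretely, in preprocessing I would invoke $\mathbf{PromisePreprocessingAPSP}(G_0,\hat\rho,d)$ for each $d = 2^{d'}$ with $1 \le d' \le \ceil{\log n}$, storing the families $\set{P_{t,d}}$, $\set{A_{t,d}}$, and $\set{D_d(u,v,t)}$. This is $O(\log n)$ invocations of a polynomial-time algorithm, hence still polynomial preprocessing. In the update step I would maintain the dynamic edge set $E_t$ (one bit per step) and, for every scale $d$ with $P_{t,d} \subseteq E_t$, the set $F_{t,d} = E_t \setminus P_{t,d}$; since $E_t$ changes by one edge per step and each $P_{t,d}$ changes by only $O(1)$ edges per step (one entering from $\hat\rho_{t-d}$, one leaving from $\hat\rho_{t+d}$, plus the query offsets), and because $P_{t,d}$ can be precomputed, each $F_{t,d}$ is maintainable in $\tO{1}$ time, for a total of $\tO{\log n} = \tO{1}$ per update across all scales. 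Alongside this I would track $d^\star := \min\set{d : P_{t,d}\subseteq E_t \ \text{and}\ |F_{t,d}| \le 2d+1}$, which can be recomputed in $O(\log n)$ time (or incrementally maintained) from the sizes of the $F_{t,d}$.

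On a query $(u,v)$ I would simply call $\mathbf{PromiseQueryAPSP}(G_t,\hat\rho,\rho_{\le t},d^\star)$, unless $d^\star = \infty$, in which case I would run Dijkstra directly on $G_t$. For correctness: when $d^\star < \infty$ we have $P_{t,d^\star}\subseteq E_t$, so Lemma~\ref{lemma:shortest-path-equivalence} applies with $d = d^\star$ and the returned value equals $\distT_{G_t}(u,v)$; when $d^\star = \infty$ the direct Dijkstra call is trivially correct. For the running time: if $\hat\rho$ is genuinely $d$-delayed, then $|F_{t,d}| \le 2d+1$ already holds at the stored scale above $d$, so $d^\star \le 2d$, and the query costs $O((d^\star)^2) = O(d^2)$; and if $d^\star = \infty$ then no stored scale up to $n$ worked, forcing $d = \Omega(n)$, so the brute-force query at cost $O(m) = O(n^2) = O(d^2)$ is within budget. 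The only point that needs care — and the step I expect to be the main (though modest) obstacle — is confirming that maintaining all $O(\log n)$ copies of $F_{t,d}$ and the threshold $d^\star$ really stays within $\tO{1}$ amortized-free, worst-case update time; this follows from the $O(1)$-per-step bound on the symmetric difference of $P_{t,d}$ and $E_t$, exactly as in the transitive-closure argument, so I would present it by direct appeal to that bookkeeping.
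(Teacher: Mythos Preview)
Your proposal is correct and follows essentially the same argument as the paper: run the delay-aware APSP preprocessing (Lemma~\ref{lemma:all-pairs-shortest-path-delay-promise-alg}) at all $O(\log n)$ dyadic scales, maintain $F_{t,d}=E_t\setminus P_{t,d}$ and the threshold $d^\star$ in $\tO{1}$ per update, and answer each query via $\mathbf{PromiseQueryAPSP}$ at scale $d^\star$ (falling back to Dijkstra on $G_t$ when $d^\star=\infty$), with correctness by Lemma~\ref{lemma:shortest-path-equivalence} and $d^\star\le 2d$ by the analogue of Lemma~\ref{lemma:reachability-error-set}. The bookkeeping concern you flag is handled exactly as in the transitive-closure case, so there is nothing to add.
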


\subsection{Erickson's Maximum Value Problem}

Next, we showcase the optimality of our lower bounds for a non-graph problem.
Recall that in Erickson's Problem, the algorithm is given an initial matrix $M_0$.
Each update increments either a row or a column by 1.
Each query asks for the maximum value in the matrix $M_t$.

\begin{theorem}
    \label{thm:ericksons-delay-alg}
    Let $T$ by polynomial in $n$.
    Let $\hat{\rho} = (\hat{\rho}_1, \hat{\rho}_2, \dotsc, \hat{\rho}_T)$ be a $d$-delayed prediction for Erickson's Problem.
    
    There is an algorithm solving Erickson's problem given $\hat{\rho}$ with polynomial preprocessing time $P(n)$, update time $U(n) = \tO{d}$ and query time $Q(n) = \tO{1}$.

    There is an algorithm solving Erickson's problem given $\hat{\rho}$ with polynomial preprocessing time $P(n)$, update time $U(n) = \tO{1}$ and query time $Q(n) = \tO{d^2}$.
\end{theorem}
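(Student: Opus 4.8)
The plan is to exploit the rank‑two structure of Erickson's update effects. Writing $r_i(t)$ (resp.\ $c_j(t)$) for the number of row‑$i$ (resp.\ column‑$j$) increments among the first $t$ requests, we have $M_t[i][j] = M_0[i][j] + r_i(t) + c_j(t)$, hence $M_t[i][j] = \hat{M}_t[i][j] + \delta^r_i(t) + \delta^c_j(t)$, where $\hat{M}_t$ is the predicted matrix, $\delta^r_i(t) = r_i(t)-\hat r_i(t)$ and $\delta^c_j(t) = c_j(t)-\hat c_j(t)$. By Lemmas~\ref{lemma:d-k-sym-diff-size} and \ref{lemma:d-k-sym-diff-update} (with $k=0$), the symmetric difference $D_t = \rho_{\leq t}\Delta\hat{\rho}_{\leq t}$ has size $O(d)$ and is maintainable in $\tO{1}$ time per request; since every element of $D_t$ is a single row‑ or column‑increment, the sets $R_t = \{i:\delta^r_i(t)\neq0\}$ and $C_t=\{j:\delta^c_j(t)\neq0\}$, together with the offsets themselves, also have size $O(d)$ and are maintainable in $\tO{1}$ time from the actual counts $r(t),c(t)$. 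In preprocessing I would run the trivial online algorithm on $\hat{\rho}$ to obtain $\hat r(t),\hat c(t)$ (hence $\hat{M}_t$) for all $t$, and for every $t$ and every row $i$ the sorted order of $\{M_0[i][j]+\hat c_j(t)\}_j$, and symmetrically the sorted order of every predicted column; all of this is polynomial since $T=\poly(n)$, and none of it depends on $d$.

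For the update‑optimized algorithm I maintain only the current answer $\mu_t=\max M_t$, returning it in $O(1)$ time on a query. Since every update adds $1$ to an entire row or column, $M_t\geq M_{t-1}$ entrywise and no entry grows by more than $1$, so $\mu_t\in\{\mu_{t-1},\mu_{t-1}+1\}$; moreover if $\rho_t$ increments row $i_0$ then $\mu_t=\max(\mu_{t-1},\,\mathrm{rowmax}_{i_0}(t-1)+1)$, so $\mu_t=\mu_{t-1}+1$ precisely when $\mathrm{rowmax}_{i_0}(t-1)=\mu_{t-1}$ (the column case is symmetric). It therefore suffices to evaluate one quantity $\mathrm{rowmax}_{i_0}(t-1)=\delta^r_{i_0}(t-1)+\max_j(\hat{M}_{t-1}[i_0][j]+\delta^c_j(t-1))$; splitting the inner maximum over $j\in C_{t-1}$ (done directly, $O(d)$ terms) and $j\notin C_{t-1}$ (read off the precomputed sorted row list of $\hat{M}_{t-1}$, scanning past at most $|C_{t-1}|$ excluded columns) gives $\tO{d}$ per update, with no advance knowledge of $d$.

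For the query‑optimized algorithm, updates only maintain $D_t,r(t),c(t),R_t,C_t$ in $\tO{1}$ time, and a query recomputes $\mu_t=\max_{i,j}(\hat{M}_t[i][j]+\delta^r_i(t)+\delta^c_j(t))$ by splitting on whether $i\in R_t$ and whether $j\in C_t$, using $\delta^r_i(t)=0$ for $i\notin R_t$ and $\delta^c_j(t)=0$ for $j\notin C_t$. Three of the four blocks are cheap: $R_t\times C_t$ is $O(d^2)$ entries read directly; the $R_t\times\overline{C_t}$ block (and symmetrically $\overline{R_t}\times C_t$) reduces, for each of the $O(d)$ rows of $R_t$, to $\max_{j\notin C_t}\hat{M}_t[i][j]$, read off the sorted row (resp.\ column) list in $O(d)$. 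The remaining block $\max_{i\notin R_t,\,j\notin C_t}\hat{M}_t[i][j]$ is the crux, and is where I expect the real work: the naive ``scan entries of $\hat{M}_t$ from the top until one avoids $R_t$ and $C_t$'' can cost $\Theta(dn)$. Instead I would precompute, for each $t$, a segment tree over the columns whose node $v$ (spanning a contiguous column range) stores the vector $\mathrm{cmax}_v[i]=\max_{j\in\mathrm{range}(v)}\hat{M}_t[i][j]$ together with a sorted copy of its $n$ entries; a query decomposes $[n]\setminus C_t$ into $O(|C_t|)$ intervals and hence $O(|C_t|\log n)$ canonical nodes, and for each such node reads off $\max_{i\notin R_t}\mathrm{cmax}_v[i]$ by scanning the top $|R_t|+1$ of its sorted list, for a total of $\tO{d^2}$. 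This tree costs $O(n^2)$ space and $\tO{n^2}$ build time per $t$, hence $\poly(n)$ overall. Combining the four blocks gives query time $\tO{d^2}$ and update time $\tO{1}$, and correctness of both algorithms follows from the identity $M_t[i][j]=\hat{M}_t[i][j]+\delta^r_i(t)+\delta^c_j(t)$ and the size bounds of Lemma~\ref{lemma:d-k-sym-diff-size}.
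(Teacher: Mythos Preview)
Your proposal is correct. For the $\tO{d}$-update, $\tO{1}$-query algorithm, your approach coincides with the paper's: both maintain the running maximum and, on an increment of row $i_0$, recompute $\max_j M_t[i_0][j]$ by correcting the $O(d)$ column entries with nonzero offset against precomputed per-row data (the paper uses a binary heap per row and time step, you use a sorted list and scan past excluded columns; the arithmetic is identical).

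For the $\tO{1}$-update, $\tO{d^2}$-query algorithm, your route differs from the paper's in the treatment of the block $\overline{R_t}\times\overline{C_t}$. The paper observes that the set of rows and columns \emph{not touched} by $\hat\rho$ in the window $[t-d,t+d]$ (the ``permanent'' rows and columns) depends only on $\hat\rho$ and $d$, not on $\rho$; it therefore simply precomputes, for each $t$ and each $d=2^{d'}$, the single number $\hat p_{t,d}=\max\{\hat M_t[i][j]:i,j\text{ permanent}\}$, and at query time reads off $\hat p_{t,d^*}$ in $O(1)$, handling the remaining cells via $O(d)$ corrected row and column heaps. Your segment-tree construction is heavier but more general: it answers $\max_{i\notin R,\,j\notin C}\hat M_t[i][j]$ for \emph{arbitrary} small sets $R,C$, not just the complements of the permanent sets. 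Both achieve $\tO{d^2}$ query time and polynomial preprocessing; the paper's version is conceptually simpler, while yours is automatically $d$-agnostic without the doubling search for $d^*$.
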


The trivial algorithm simply maintains the $n \times n$ dynamic matrix, processing an update in $O(n)$ time and storing the maximum value, which is returned in $O(1)$ time.
Alternatively, we can store an increment in $O(1)$ time (e.g. maintaining an array recording the number of increments for each row and column) and construct the matrix, computing the maximum in $O(n^2)$ time.

We now give some useful definitions that will help us design efficient algorithms with prediction.
Consider an initial matrix $M_0$ and request sequence $\rho$.
Let $M_t$ the current state of the matrix after $t$ requests.
For $i, j \in [n]$, let $r(\rho, t, i), c(\rho, t, j)$ be functions denoting the number of times row $i$ and column $j$ have been incremented at time step $t$, so that value of entry $(i, j)$ at time step $t$ is $M_t[i][j] = M_0[i][j] + r(\rho, t, i) + c(\rho, t, j)$.
There are at most $d$ rows and $d$ columns where the number of predicted increments does not match the number of actual increments, a set of errors which we can maintain efficiently.
For a given time step $t$, let $E_R(t) \subset [n]$ denote the set of rows where $r(\rho, t, i) \neq r(\hat{\rho}, t, i)$ and $E_C(t) \subset [n]$ denote the set of columns where $c(\rho, t, j) \neq c(\hat{\rho}, t, j)$.

As with subgraph connectivity, given a predicted request sequence $\rho$ and time steps $t_1, t_2$, a row $i$ is {\bf permanent in $\rho$ from $t_1$ to $t_2$} if $i$ is not incremented in $\rho_{[t_1, t_2]}$. 
A row $i$ is {\bf active in $\rho$ from $t_1$ to $t_2$} if $i$ is incremented in $\rho_{[t_1, t_2]}$.
Permanent and active columns are defined similarly.

\begin{lemma}
    \label{lemma:ericksons-error-sets}
    Let $\hat{\rho}$ be a prediction that is $d$ delayed with $k$ outliers.
    For all $t \in [T]$, $|E_R(t)|, |E_C(t)| = O(d + k)$.
    Given $E_R(t - 1), E_C(t - 1)$, the sets $E_R(t), E_C(t)$ can be maintained in $\tO{1}$ time per request.
    It is possible to also maintain $r(\rho, t, i) - r(\hat{\rho}, t, i)$ for all $i \in E_R(t)$ and $c(\rho, t, j) - c(\hat{\rho}, t, j)$ for all $j \in E_C(t)$.
\end{lemma}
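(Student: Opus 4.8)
The plan is to handle the three claims (size bound, efficient maintenance of the error sets, and maintenance of the per-row/per-column error counts) separately, using the basic delay-with-outliers containment facts from Lemmas \ref{lemma:d-sub-sequence-containment} and \ref{lemma:d-k-sym-diff-size} together with the observation that, in Erickson's problem, each update is a single row-increment or column-increment. First I would set up notation: a row $i$ contributes to $E_R(t)$ exactly when the number of ``increment row $i$'' operations among the first $t$ true requests differs from the number among the first $t$ predicted requests, i.e. when $r(\rho,t,i)\neq r(\hat\rho,t,i)$. Since each such difference is witnessed by an increment operation that lies in the symmetric difference $D_t = \rho_{\leq t}\,\Delta\,\hat\rho_{\leq t}$ (counted with multiplicity), the number of rows with $r(\rho,t,i)\neq r(\hat\rho,t,i)$ is at most the number of row-increment operations appearing in $D_t$, hence at most $|D_t|$. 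By Lemma \ref{lemma:d-k-sym-diff-size}, $|D_t|\leq 4k+2d$, which gives $|E_R(t)|=O(d+k)$, and the identical argument with column-increments gives $|E_C(t)|=O(d+k)$. (One should remark that the ``$k$ outliers'' version of $D_t$ is exactly the one treated in Lemma \ref{lemma:d-k-sym-diff-size}, so no new counting is needed.)

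Next I would address the incremental maintenance. At step $t$ the algorithm sees the true request $\rho_t$ and the predicted request $\hat\rho_t$. If $\rho_t$ increments row $i_\rho$ then $r(\rho,\cdot,i_\rho)$ goes up by one, so $r(\rho,t,i_\rho)-r(\hat\rho,t,i_\rho)$ changes by $+1$; similarly $\hat\rho_t$ either increments a row $i_{\hat\rho}$ (changing that difference by $-1$) or increments a column (affecting a column difference). Thus at each time step at most two row-differences and two column-differences change, and each changes by $\pm1$. Maintaining a dictionary keyed by row index (respectively column index) whose entries are exactly the nonzero differences $r(\rho,t,i)-r(\hat\rho,t,i)$ lets us, in $O(1)$ dictionary operations (so $\tO{1}$ time, or $O(1)$ expected with hashing), update these values, insert a key when a difference becomes nonzero, and delete a key when it returns to zero. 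The key set of this dictionary \emph{is} $E_R(t)$ (resp. $E_C(t)$), so $E_R(t),E_C(t)$ are maintained in $\tO{1}$ time per request, and the stored values are precisely the quantities $r(\rho,t,i)-r(\hat\rho,t,i)$ for $i\in E_R(t)$ and $c(\rho,t,j)-c(\hat\rho,t,j)$ for $j\in E_C(t)$ required by the last sentence of the lemma. Initializing the dictionaries is trivial since at $t=0$ all differences are zero.

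The main obstacle — though a mild one — is making sure the size bound is stated against the right object: one must verify that a row with $r(\rho,t,i)\neq r(\hat\rho,t,i)$ genuinely forces a row-increment operation into $D_t$ \emph{counted with multiplicity}, i.e. that $D_t$ is the multiset symmetric difference and not the set symmetric difference, so that several extra increments of the same row are all accounted for; this is exactly how $D_t$ is defined in Lemma \ref{lemma:d-k-sym-diff-size} (``including multiplicities''), so the bound $|D_t|\le 4k+2d$ directly upper-bounds $|E_R(t)|+|E_C(t)|$ and in particular each of $|E_R(t)|,|E_C(t)|$ individually. A second, even smaller point is that the lemma as phrased for the pure $d$-delay case (the heading in this subsection) is the $k=0$ specialization, and nothing changes. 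With these observations the three claims follow, completing the proof.
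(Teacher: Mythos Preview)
Your proposal is correct and matches the paper's proof essentially line for line: the size bound is obtained by observing that every row (resp.\ column) in $E_R(t)$ (resp.\ $E_C(t)$) is witnessed by a request in the multiset symmetric difference $D_t$ and then invoking Lemma~\ref{lemma:d-k-sym-diff-size}, and the maintenance claim follows because each step touches at most two difference values, which are updated in a dictionary in $\tO{1}$ time with insertion/deletion exactly when a value crosses zero. The only minor inaccuracy is the phrase ``at most two row-differences and two column-differences change''; in fact at most two differences total change (one from $\rho_t$, one from $\hat\rho_t$), but this does not affect the argument.
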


\begin{proof}
    The bound on $|E_R(t)|, |E_C(t)|$ follow Lemma \ref{lemma:d-k-sym-diff-size} as $E_R(t) \cup E_C(t) \subset \rho_{\leq t} \Delta \hat{\rho}_{\leq t}$, the symmetric difference between the request sequences, which was shown to be of size at most $O(d + k)$.

    It remains to show that the sets can be maintained efficiently dynamically.
    Suppose we have $r(\rho, t - 1, i) - r(\hat{\rho}, t - 1, i)$ for all $i$ and $c(\rho, t - 1, j) - c(\hat{\rho}, t - 1, j)$ for all $j$.
    Given the predicted request $\hat{\rho}_t$ and true request $\rho_t$, we update at most 2 values in $\tO{1}$ time, leaving the remaining difference values as they remain unchanged.
    If an updated difference becomes 0, we remove this index from $E_R$ or $E_C$ as appropriate.
    In an updated difference becomes non-zero, we add this index to $E_R$ or $E_C$ as appropriate.
    This can be done in $\tO{1}$ time.
\end{proof}

Now, if we let $\hat{M}_t$ denote the matrix under the predicted request sequence $\hat{\rho}$ after the $t$-th time step, $M_t[i][j] = \hat{M}_t[i][j]$ if $i \notin E_R(t)$ and $j \notin E_C(t)$, so that the maximum entry among these $i, j$ remains the same.
For a fixed $i \in E_R(t)$ and among all $j \notin E_C(t)$, then the \emph{relative} difference between entries in the $i$-th row are the same in $M_t$ and $\hat{M}_t$, even if the absolute values are different.
Therefore, if we maintain a data structure (for example a heap) that keeps the predicted maximum values for the $i$-th row, then we need to only correct up to $O(d)$ entries to find the new maximum of the $i$-th row.
Only if both $i, j \in E_R(t), E_C(t)$ do we need to completely recalculate the maximum, but there are at most $O(d^2)$ such entries.

\paragraph{Query-Optimized $\tO{d + k}$ Update Algorithm}

We will in fact show a query-optimized algorithm that requires $d$ delay with $k$ outlier predictions.
By choosing $k = 0$, this gives the query-optimized algorithm of Theorem \ref{thm:ericksons-delay-alg}.
In our algorithm, we maintain the following data structures:

\begin{enumerate}
    \item $I_R, I_C: [n] \rightarrow \N$, two size $n$ list storing how many times each row and column is incremented. 
    This can be maintained in $O(1)$ time per update.

    \item $E_R, E_C$, the error sets described above.
    For each $i, j \in [n]$, we also maintain $r(\rho, t, i) - r(\hat{\rho}, t, i)$ and $c(\rho, t, j) - c(\hat{\rho}, t, j)$.
    This can be maintained in $\tO{1}$ time by Lemma \ref{lemma:ericksons-error-sets}.

    \item For each row $i$ and time step $t$, a binary heap $H_R(i, t)$ with column indices as keys and $\hat{M}_t[i][j]$ as the values.
    For each column $j$ and time step $t$, a binary heap $H_C(j, t)$ with row indices as keys $\hat{M}_t[i][j]$ as the values.
    This is pre-computed in the preprocessing phase.
    During the $t$-th request, there may be some modifications to $H_R(i, t)$ and $H_C(j, t)$.
    After the $t$-th request, the heaps can be discarded.

    \item The current maximum value $c$ and indices $i^*, j^*$ such that $M_t[i^*][j^*] = c$.
\end{enumerate}

We now describe our algorithm.

\begin{algorithm}[H]

\SetKwInOut{Input}{Input}
\Input{Initial Matrix $M_0$.
$d$-delayed with $k$ outliers predictions $\hat{\rho}$.}

\BlankLine

$I_R[i], I_C[j] \gets 0$ for all $i, j \in [n]$

$E_R, E_C \gets \emptyset$

$c \gets \max_{i, j} M_0[i][j]$ and $i^*, j^*$ such that $M_0[i^*][j^*] = c$.

\For{$t = 1$ to $T$}{
    Compute $\hat{M}_t$ from $\hat{M}_{t - 1}$ by incrementing the row or column specified by $\hat{y}_t$

    Construct binary heaps $H_R(i, t)$ for all $i \in [n]$ and $H_C(j, t)$ for all $j \in [n]$.
}

\caption{$\mathbf{UPreprocessingErickson}(M_0, \hat{\rho})$}
\label{alg:u-erickson-preprocess}
\end{algorithm}

For the update algorithm, we assume without loss of generality that the request $\rho_t$ increments row $i$.
An analogous algorithm exists if $\rho_t$ increments column $j$.
Let $i_t$ denote the row index incremented at time $t$.

\begin{algorithm}[H]

\SetKwInOut{Input}{Input}\SetKwInOut{Output}{Output}
\Input{Current matrix $M_t$, update $i_t$, $d$ delayed predictions $\hat{\rho}$, and request history $\rho_{\leq t}$}
                                                           
\BlankLine

Update $I_R[i_t] \gets I_R[i_t] + 1$

Update $E_R$ according to Lemma \ref{lemma:ericksons-error-sets}

\If{$i_t = i^*$}{
    $c \gets c + 1$, $i^* \gets i^*$ and $j^* \gets j^*$
}\Else{
    \lFor{$j \in E_C$}{Update key $j$ of $H_R(i_t, t)$ with value $\hat{M}_t [i_t][j] + (c(\rho, t, j) - c(\hat{\rho}, t, j))$}
    \label{line:erickson-u-u:update-heap}
    
    $j_0, c_0 \gets \max(H_R(i_t, t))$ maximum value with index $j_0$ and value $c_0$

    $c_1 \gets c_0 + (r(\rho, t, i_t) - r(\hat{\rho}, t, i_t))$
    \label{line:erickson-u-u:row-max}

    \If{$c_1 > c$}{
        $c \gets c_1$, $i^* \gets i_t$ and $j^* \gets j_0$
    }\Else{
        $c \gets c$, $i^* \gets i^*$ and $j^* \gets j^*$
    }
}

\caption{$\mathbf{UUpdateErickson}(M_t, i_t, \hat{\rho}, \rho_{< t})$}
\label{alg:u-erickson-update}
\end{algorithm}

On a given query, we simply return $c$ in $O(1)$ time.
In the following Lemma, we claim that Algorithm \ref{alg:u-erickson-update} maintains the correctness of the data structures defined above.

\begin{lemma}
    \label{lemma:update-erickson-correct}
    After Algorithm \ref{alg:u-erickson-update}, the data structures $I_R, I_C, E_R, E_C, c, i^*, j^*$ contain the desired values.
\end{lemma}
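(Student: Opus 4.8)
### Proof Proposal for Lemma \ref{lemma:update-erickson-correct}

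The plan is to verify correctness of each data structure after a single call to Algorithm \ref{alg:u-erickson-update}, proceeding by induction on the time step $t$ and assuming all data structures are correct after time $t-1$. I would split the verification into the easy bookkeeping structures ($I_R, I_C, E_R, E_C$) and the harder part ($c, i^*, j^*$).

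First I would handle the bookkeeping. The increment counters $I_R, I_C$ are trivially correct: the only change at time $t$ is that row $i_t$ is incremented once, and the algorithm performs exactly $I_R[i_t] \gets I_R[i_t]+1$, leaving everything else untouched. For the error sets $E_R, E_C$ together with the stored differences $r(\rho,t,i)-r(\hat{\rho},t,i)$ and $c(\rho,t,j)-c(\hat{\rho},t,j)$, correctness is exactly the content of Lemma \ref{lemma:ericksons-error-sets}, which guarantees that given the correct sets and differences at time $t-1$, the update procedure invoked in the algorithm produces the correct sets and differences at time $t$ in $\tO{1}$ time. So this part is immediate from the inductive hypothesis and the cited lemma.

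The main work is showing $c = \max_{i,j} M_t[i][j]$ with $M_t[i^*][j^*] = c$. The key identity is that for all $i,j$ we have $M_t[i][j] = \hat{M}_t[i][j] + \big(r(\rho,t,i)-r(\hat{\rho},t,i)\big) + \big(c(\rho,t,j)-c(\hat{\rho},t,j)\big)$, and the correction terms are nonzero only for $i \in E_R(t)$ or $j \in E_C(t)$. I would argue by cases on the update $\rho_t$ (incrementing row $i_t$). If $i_t = i^*$: every entry of row $i^*$ increases by exactly $1$, no other entry changes, so the old maximizer remains a maximizer and $c \gets c+1$ is correct. If $i_t \neq i^*$: the maximum over entries \emph{not} in row $i_t$ is unchanged (those entries either did not change, or changed but were not larger than the old max — here I need the observation that only row $i_t$ changed this step), so it suffices to recompute $\max_j M_t[i_t][j]$ and compare to the retained $c$. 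To recompute the row maximum, note that $M_t[i_t][j] = \hat{M}_t[i_t][j] + (c(\rho,t,j)-c(\hat{\rho},t,j)) + (r(\rho,t,i_t)-r(\hat{\rho},t,i_t))$; the last term is a constant additive shift over the whole row, so the argmax is unaffected by it. For the column-correction term, it vanishes unless $j \in E_C(t)$, so after Line \ref{line:erickson-u-u:update-heap} patches exactly those $O(d+k)$ heap entries, the heap $H_R(i_t,t)$ stores the values $\hat{M}_t[i_t][j] + (c(\rho,t,j)-c(\hat{\rho},t,j))$ for all $j$, whose maximum $c_0$ is extracted in Line \ref{line:erickson-u-u:row-max}; adding the row shift gives $c_1 = \max_j M_t[i_t][j]$. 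Comparing $c_1$ with the retained $c$ and updating $(c,i^*,j^*)$ accordingly yields the global maximum.

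The main obstacle I anticipate is the bookkeeping subtlety around \textbf{within-step heap modifications}: Line \ref{line:erickson-u-u:update-heap} temporarily overwrites $O(d+k)$ keys of $H_R(i_t,t)$ so that it reflects $M_t$ rather than $\hat M_t$ restricted to row $i_t$, and one must be careful that (a) these modifications are \emph{discarded} after the request (as stated in the data-structure description) so the invariant "$H_R(i,t)$ stores $\hat M_t$ values" is restored for the next access, and (b) the set $E_C(t)$ used in Line \ref{line:erickson-u-u:update-heap} is the \emph{updated} error set (after the $E_C$ update), so that a column whose discrepancy changed this step is patched correctly. A clean way to phrase the invariant is: immediately before any query is answered, $c,i^*,j^*$ are correct and all persistent structures reflect the true history $\rho_{\le t}$; the heaps are "scratch" structures reset to their predicted state between requests. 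With that invariant stated, each case above closes routinely, completing the induction.
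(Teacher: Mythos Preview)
Your proposal is correct and follows essentially the same approach as the paper's proof: induction on $t$, handling $I_R,I_C,E_R,E_C$ via Lemma~\ref{lemma:ericksons-error-sets}, then splitting on $i_t=i^*$ versus $i_t\neq i^*$ and using the identity $M_t[i][j]=\hat M_t[i][j]+(r(\rho,t,i)-r(\hat\rho,t,i))+(c(\rho,t,j)-c(\hat\rho,t,j))$ to see that after patching the $E_C$-indexed keys the heap yields the correct row maximizer up to a row-constant shift. Your additional remarks about discarding heap modifications and using the post-update $E_C$ are sound sanity checks the paper leaves implicit; conversely, your aside ``those entries either did not change, or changed but were not larger than the old max'' is slightly muddled---for $i\neq i_t$ the entries simply do not change, which is all you need.
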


\begin{proof}
    We assume that the data structures are maintained correctly up to the $t - 1$ time step.
    By definition, $I_C$ does not change since we have a row update and only $I_R[i_t]$ increments as this is the row updated.
    $E_R, E_C$ are maintained according to Lemma \ref{lemma:ericksons-error-sets}.
    If $i_t = i^*$, then since $M_{t - 1}[i^*][j^*]$ was the maximum value of $M_{t - 1}$, $|M_t - M_{t - 1}|_{\infty} = 1$ and $M_t[i^*][j^*] = M_{t - 1}[i^*][j^*] + 1$, then $c + 1$ is the maximum value and this is achieved by $i^*, j^*$.
    Therefore, we can assume $i_t \neq i^*$ in the following.
    
    We begin by noting the following equalities which hold for all $i, j \in [n]$.

    \begin{align*}
        M_t[i][j] &= M_0[i][j] + r(\rho, t, i) + c(\rho, t, j) \\
        \hat{M}_t[i][j] &= M_0[i][j] + r(\hat{\rho}, t, i) + c(\hat{\rho}, t, j) \\
        M_t[i][j] &= \hat{M}_t[i][j] + (r(\rho, t, i) - r(\hat{\rho}, t, i)) + (c(\rho, t, i) - c(\hat{\rho}, t, j))
    \end{align*}

    After the preprocessing step, we have that $H_R(i_t, t)$ contains the values $\hat{M}[i_t][j]$ for all $j \in [n]$.
    After Line \ref{line:erickson-u-u:update-heap} then, we have $H_R(i_t, t)$ contains for all $j$,
    \begin{equation*}
        \hat{M}_t[i_t][j] + (c(\rho, t, j) - c(\hat{\rho}, t, j)) = M_t[i][j] - (r(\rho, t, i_t) - r(\hat{\rho}, t, i_t))
    \end{equation*}

    In particular, since the error term does not depend on $j$,
    \begin{equation*}
        j_0 = \arg \max_{j} M_t[i_t][j] - (r(\rho, t, i_t) - r(\hat{\rho}, t, i_t)) = \arg \max_{j} M_t[i_t][j]
    \end{equation*}

    so that after Line \ref{line:erickson-u-u:row-max}, $c_1 \gets \max_j M_t[i_t][j]$ and this maximum is attained at $j_0$.
    Since $M_{t - 1}[i][j] = M_t[i][j]$ for all $i \neq i_t$, we have that the maximum value is either $c$ or $M_t[i_t][j_0]$.
    Comparing the two and updating $c, i^*, j^*$ accordingly completes the proof of the Lemma.
\end{proof}

Given the above lemma, correctness easily follows as we handle queries by returning $c$.

\begin{lemma}
    \label{lemma:erickson-update-time}
    Algorithm \ref{alg:u-erickson-preprocess} requires $O(T n^2)$ time.
    Algorithm \ref{alg:u-erickson-update} requires $O((d + k) \log n)$ time.
\end{lemma}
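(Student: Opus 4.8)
The plan is to bound the running times of the two algorithms separately, in each case charging the work to the operations actually performed and using Lemma~\ref{lemma:ericksons-error-sets} to control the size of the error sets.

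For Algorithm~\ref{alg:u-erickson-preprocess} I would first note that the initialization costs $O(n)$ to zero out $I_R, I_C$, $O(1)$ for the empty error sets, and $O(n^2)$ to scan $M_0$ for its maximum entry and a witnessing pair $(i^*, j^*)$. The dominant cost is the loop over $t = 1, \dots, T$. Maintaining a single working copy of the predicted matrix, obtaining $\hat M_t$ from $\hat M_{t-1}$ amounts to incrementing the $n$ entries of one row or column, hence $O(n)$ per step. Building the $2n$ binary heaps $H_R(1,t), \dots, H_R(n,t)$ and $H_C(1,t), \dots, H_C(n,t)$ from $\hat M_t$ costs $O(n)$ each by bottom-up heapify, so $O(n^2)$ per step. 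Summing over $t$ gives $O(n^2) + T \cdot O(n^2) = O(Tn^2)$, which is polynomial since $T = \poly(n)$.

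For Algorithm~\ref{alg:u-erickson-update} I would argue line by line, assuming without loss of generality that $\rho_t$ increments a row $i_t$ (the column case is symmetric, with the roles of $H_R, E_R$ and $H_C, E_C$ swapped). Updating $I_R[i_t]$ is $O(1)$; by Lemma~\ref{lemma:ericksons-error-sets}, maintaining $E_R$ together with the stored differences $r(\rho,t,i)-r(\hat\rho,t,i)$ and $c(\rho,t,j)-c(\hat\rho,t,j)$ costs $\tO{1}$, and that lemma also gives $|E_C(t)| = O(d+k)$. In the case $i_t = i^*$ the algorithm only adjusts three scalars, so $O(1)$. Otherwise, the loop of Line~\ref{line:erickson-u-u:update-heap} performs, for each $j \in E_C$, one key-value update of the heap $H_R(i_t,t)$, at $O(\log n)$ each, for a total of $O((d+k)\log n)$; reading the root of $H_R(i_t,t)$ is $O(1)$, forming $c_1$ uses the already-maintained difference in $O(1)$, and the final comparison updating $c, i^*, j^*$ is $O(1)$. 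The total is therefore $O((d+k)\log n)$, absorbing the $\tO{1}$ bookkeeping term.

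The only delicate point, which I would state explicitly, is that the heap $H_R(i_t,t)$ is modified destructively in Line~\ref{line:erickson-u-u:update-heap}: since these heaps are indexed by the time step $t$ and each is used at most once --- during request $t$ --- no reset is required and later requests are unaffected, so Line~\ref{line:erickson-u-u:update-heap} is genuinely a sequence of $|E_C|$ heap updates rather than a heap rebuild. This observation (already implicit in Lemma~\ref{lemma:update-erickson-correct}) is what licenses the $O((d+k)\log n)$ bound, and I expect it to be the main thing one must be careful about; no further obstacle is anticipated.
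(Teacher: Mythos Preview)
Your proposal is correct and follows essentially the same approach as the paper's proof: both bound the preprocessing by charging $O(n^2)$ per time step to the heap constructions and bound the update by the $|E_C| = O(d+k)$ heap key-updates at $O(\log n)$ each. Your explicit remark that the time-indexed heap $H_R(i_t,t)$ is used only once, so destructive modification is safe, is a worthwhile clarification that the paper leaves implicit.
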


\begin{proof}
    To examine Algorithm \ref{alg:u-erickson-preprocess}, note that $I_R, I_C, E_R, E_C$ can be computed in $O(n)$ time while $c, i^*, j^*$ can be computed in $O(n^2)$ time.
    Then, for each time step $t$, we compute $\hat{M}_t$ in $O(n)$ time following the trivial algorithm, while each heap construction requires $O(n)$ time.
    Since we construct $n$ heaps, this requires $O(n^2)$ time overall.

    Examining Algorithm \ref{alg:u-erickson-update}, updating $I_R$ and $E_R$ require $\tO{1}$ time.
    If $i_t = i^*$, the entire algorithm requires $\tO{1}$ time.
    Otherwise, we update $O(d)$ key values in the heap $H_R(i_t, t)$, requiring $O((d + k) \log n)$ time.
    Extracting $j_0, c_0$ and the remaining steps can be completed in $O(\log n)$ time.
\end{proof}

\paragraph{Update-Optimized $\tO{d^2}$ Query Algorithm}
To design an update-optimized algorithm, we observe that given $d$ delayed predictions $\hat{\rho}$, for all time steps $t$,
\begin{equation*}
        \hat{\rho}_{\leq t - d} \subseteq \rho_{\leq t} \subseteq \hat{\rho}_{\leq t + d}
\end{equation*}
by Lemma \ref{lemma:d-sub-sequence-containment}.

Again, for all $1 \leq d' \leq \ceil{\log n}$, let $d = 2^{d'}$.
We will maintain the following data structures.

\begin{enumerate}
    \item $I_R, I_C, E_R, E_C, \set{H_R(i, t)}_{i, t}, \set{H_C(j, t)}_{j, t}$ as in the query-optimized algorithm.
    \item $D_{R, d}^{(-)}$, an $n$-dimensional array containing $r(\rho, t, i) - r(\hat{\rho}, t - d, i)$ for all $i \in [n]$, $d = 2^{d'}$
    \item $D_{R, d}^{(+)}$, an $n$-dimensional array containing $r(\hat{\rho}, t + d, i) - r(\rho, t, i)$ for all $i \in [n]$, $d = 2^{d'}$
    \item $D_{C, d}^{(-)}$, an $n$-dimensional array containing $c(\rho, t, j) - c(\hat{\rho}, t - d, j)$ for all $j \in [n]$, $d = 2^{d'}$
    \item $D_{C, d}^{(+)}$, an $n$-dimensional array containing $c(\hat{\rho}, t + d, j) - c(\rho, t, j)$ for all $j \in [n]$, $d = 2^{d'}$
    \item $B_{R, d} = \set{i \in [n] \given D_{R, d}^{(-)}[i], D_{R, d}^{(+)}[i] \geq 0}$ and $B_{C, d} = \set{j \in [n] \given D_{C, d}^{(-)}[j], D_{C, d}^{(+)}[j] \geq 0}$ for all $d = 2^{d'}$
    \item $B_{R, d}^{(>)} = \set{i \in [n] \given D_{R, d}^{(-)}[i] > 0}$ and $B_{C, d}^{(>)} = \set{j \in [n] \given D_{C, d}^{(-)}[j] > 0}$ for all $d = 2^{d'}$
\end{enumerate}

In the preprocessing step, initialize $I_R, I_C, E_R, E_C$ and compute the binary heaps in the update-optimized algorithm, as well as a partial maximum $\hat{p}_t$, consisting of the maximum entry where both the row and column indices are permanent in $\hat{\rho}$ from $t - d$ to $t + d$.
We also initialize the data structures $D_{R, d}^{(-)}, D_{R, d}^{(+)}, D_{C, d}^{(-)}, D_{C, d}^{(+)}, B_{R, d}, B_{C, d}, B_{R, d}^{(>)}, B_{C, d}^{(>)}$ to their initial values.
We do this for all values of $d$.

\begin{algorithm}[H]

\SetKwInOut{Input}{Input}
\Input{Initial Matrix $M_0$.
$d$-delayed predictions $\hat{\rho}$.}

\BlankLine

$I_R[i], I_C[j] \gets 0$ for all $i, j \in [n]$

$E_R, E_C \gets \emptyset$

\For{$d = 2^{d'}$ with $d' = 1$ to $\ceil{\log n}$}{
    $D_{R, d}^{(-)}, D_{C, d}^{(-)} \gets \vec{0}$

    $D_{R, d}^{(+)}[i] \gets r(\hat{\rho}, d, i)$ for all $i \in [n]$

    $D_{R, d}^{(+)}[j] \gets c(\hat{\rho}, d, j)$ for all $j \in [n]$

    $B_{R, d}, B_{C, d} \gets [n]$

    $B_{R, d}^{(>)}, B_{C, d}^{(>)} \gets \emptyset$
}

\For{$t = 1$ to $T$}{
    Compute $\hat{M}_t$ from $\hat{M}_{t - 1}$ by incrementing the row or column specified by $\hat{y}_t$

    Construct binary heaps $H_R(i, t)$ for all $i \in [n]$ and $H_C(j, t)$ for all $j \in [n]$

    \For{$d = 2^{d'}$ with $d' = 1$ to $\ceil{\log n}$}{
        $P_R(t, d) \gets \set{i \given \textrm{row $i$ permanent in $\hat{\rho}$ from $t - d$ to $t + d$}}$
    
        $P_C(t, d) \gets \set{j \given \textrm{column $j$ permanent in $\hat{\rho}$ from $t - d$ to $t + d$}}$

        Compute $\hat{p}_{t, d} \gets \max \set{\hat{M}_t[i][j] \given i \in P_R(t, d), j \in P_C(t, d)}$
    }
}

\caption{$\mathbf{QPreprocessingErickson}(M_0, \hat{\rho})$}
\label{alg:q-erickson-preprocess}
\end{algorithm}

Consider now an update.
Our first step is to maintain $d^*$, the smallest $d$ satisfying $\hat{\rho}_{\leq t - d} \subset \rho_{\leq t} \subset \hat{\rho}_{\leq t + d}$.

First, we describe how to maintain the data structures initialized in Algorithm \ref{alg:q-erickson-preprocess}.
Note $I_R, I_C, E_R, E_C$ can be maintained in $\tO{1}$ time by Lemma \ref{lemma:erickson-update-time}.
Fix a single $d$.
The arrays $D_{R, d}^{(-)}, D_{R, d}^{(+)}, D_{C, d}^{(-)}, D_{C, d}^{(+)}$ can be maintained in $O(1)$ time by maintaining the appropriate array entries.
Membership in $B_{R, d}, B_{C, d}$ can only change when an array entry crosses between $0, -1$.
When this occurs (say $D_{R, d}^{(-)}[i] = 0$) we check the corresponding entry in its associated array (in this case $D_{R, d}^{(+)}[i]$) and update membership in $B_{R, d}$ as appropriate.
Using, for example a hash table, $B_{R, d}, B_{C, d}$ can be maintained in $O(1)$ time.
A similar argument maintains $B_{R, d}^{(>)}, B_{C, d}^{(>)}$ in $O(1)$ time.
Thus, we maintain the above data structures for all $d$ in $\tO{1}$ time.

Furthermore, let $d^*$ be the smallest value $d$ such that $|B_{R, d}| = |B_{C, d}| = n$.
Whenever $\hat{\rho}_{\leq t - d} \subseteq \rho_{\leq t} \subseteq \hat{\rho}_{\leq t + d}$ holds, we have $|B_{R, d}| = |B_{C, d}| = n$, so that we may bound $d^* \leq 2d$, if the prediction $\hat{\rho}$ is $d$ delayed.
Furthermore, $d^*$ can be maintained with no additional cost per update by checking the size of $B_{R, d}, B_{C, d}$ after updating the hash tables.

Given a query, we use the following algorithm.

\begin{algorithm}[H]

\SetKwInOut{Input}{Input}\SetKwInOut{Output}{Output}
\Input{Matrix $M_t$, $d$ delayed predictions $\hat{\rho}$, request history $\rho_{\leq t}$, empirical delay $d^*$}
\Output{$\max_{i, j} M_t[i][j]$}
                                                           
\BlankLine

\If{$d^* = \infty$}{
    Construct matrix $M_t[i][j] \gets M_0[i][j] + I_R(i) + I_C(j)$

    \Return $\max \set{M_t[i][j]}$
}

$c \gets \hat{p}_{t, d^*}$
\label{line:q-erickson:perm-max}

\For{$i \not\in P_R(t, d^*)$}{
    \For{$j \not\in P_C(t, d^*)$}{
        Update key $j$ of $H_R(i, t)$ with value $\hat{M}_t[i][j] + (c(y, t, j) - c(\hat{y}, t, j))$

        Update key $i$ of $H_C(j, t)$ with value $\hat{M}_t[i][j] + (r(y, t, i) - r(\hat{y}, t, i))$
        \label{line:q-erickson:update-heap}
    }
}

\lFor{$i \not\in P_R(t, d^*)$}{
    $c \gets \max(c, (r(y, t, i) - r(\hat{y}, t, i)) + \max(H_R(i, t)))$
}
\label{line:q-erickson:row-err-max}

\lFor{$j \not\in P_C(t, d^*)$}{
    $c \gets \max(c, (c(y, t, j) - c(\hat{y}, t, j)) + \max(H_C(j, t)))$
}
\label{line:q-erickson:col-err-max}

\Return $c$

\caption{$\mathbf{QQueryErickson}(M_t, \hat{\rho}, \rho_{\leq t}, d^*)$}
\label{alg:q-erickson-query}
\end{algorithm}

We now prove that Algorithm \ref{alg:q-erickson-query} returns the correct value $c$.

\begin{proof}
First, we argue that for all $(i, j) \in P_R(t, d^*) \times P_C(t, d^*)$, the predicted matrix $\hat{M}_t[i][j] = M_t[i][j]$.
This follows as no update in $\hat{\rho}_{[t - d^*, t + d^*]}$ increments the $i$-th row or the $j$-th column.
Therefore, 
\begin{equation*}
    \hat{p}_{t, d^*} = \max \set{M_t[i][j] \given i \in P_R(t, d^*), j \in P_C(t, d^*)}
\end{equation*}
and $c$ is set to this value in Line \ref{line:q-erickson:perm-max}.

Next, we argue that after Line \ref{line:q-erickson:update-heap}, each heap $H_R(i, t)$ and $H_C(j, t)$ contain the correct maximum index.
In particular, we claim that the maximum index $j$ in $H_R(i, t)$ is exactly $\arg \max_{j} M_t[i][j]$.
After preprocessing, each entry $H_R(i, t)$ is 
\begin{equation*}
    \hat{M}_t[i][j] = M_t[i][j] - (r(\rho, t, i) - r(\hat{\rho}, t, i)) - (c(\rho, t, j) - c(\hat{\rho}, t, j))
\end{equation*}
For all $j \in P_C(t, d^*)$, note $c(\rho, t, j) = c(\hat{\rho}, t, j)$.
Then, after Line \ref{line:q-erickson:update-heap}, we have,
\begin{equation*}
    \hat{M}_t[i][j] = M_t[i][j] - (r(\rho, t, i) - r(\hat{\rho}, t, i))
\end{equation*}
and the claim follows as the error does not depend on $j$.
In particular, in Line \ref{line:q-erickson:row-err-max}, we update $c$ with maximum of $\max_{j} M_t[i][j]$ for all $i \not\in P_R(t, d^*)$.
Following a similar argument, Line \ref{line:q-erickson:col-err-max} updates $c$ with the maximum of $\max_{i} M_t[i][j]$ for all $j \not\in P_C(t, d^*)$.
Note that this covers all $i, j \in [n] \times [n]$, proving the correctness of the algorithm.
\end{proof}

We now analyze the time complexity of the above algorithms.

\begin{proof}
    The preprocessing algorithm requires preprocessing time $O(T n^2 \log n)$ as constructing the matrices $\hat{M}_t$ and heaps requires $O(n^2)$ time.
    For each $n$, maintaining $P_R(t, d), P_C(t, d)$ requires $O(1)$ time while computing $\hat{p}_{t, d}$ requires $O(n^2)$ time.
    We conclude by noting there are $\log n$ values of $d$.

    Above, we have argued that the update algorithm requires $\tO{1}$ time.
    
    Consider now the query algorithm.
    We claim $|[n] \setminus P_R(t, d^*)|, |[n] \setminus P_C(t, d^*)| = O(d)$.
    This follows as $d^* \leq 2d$ and there are at most $O(d^*) = O(d)$ active rows and columns from $t - 2d$ to $t + 2d$.
    Therefore, computing the double for loop in Line \ref{line:q-erickson:update-heap} requires $O(d^2 \log n) = \tO{d^2}$ time.
    In Lines \ref{line:q-erickson:row-err-max} and \ref{line:q-erickson:col-err-max} we extract the maximum from $O(d)$ heaps in $O(d \log n) = \tO{d}$ time.
\end{proof}

\clearpage
\bibliographystyle{beta}
\bibliography{references}

\clearpage
\appendix

\section{Lower Bounds for OMv with Predictions}
\label{sec:omv-pred-lb-proofs}

We give the proofs for lower bounds against algorithms with prediction solving the OMv problem.

\omvhamminglb*

\begin{proof}
    Consider an $S$-OMv instance $M$ of size $n \times n$ with $n$ vectors $\vec{v}_1, \dots , \vec{v}_{n}$ such that $\supp(\vec{v}_i) \subset S$ for some index-set $S \subset [n]$ of size $|S| \leq \Delta$. 
    Observe that the $n$-dimensional vector $\vec{0}$ has $L_1$-distance, and, thus, EH-distance, at most $\Delta$ to $\vec v_i$ for every $i$.

    Now consider an algorithm with prediction $\innerAlg$ that takes amortized time $Q(n, \Delta)$ per round, after polynomial preprocessing time.
    Computing $M \vec{v}_i$ for all $i$ with $1 \leq i \leq n$ solves the original $S$-OMv problem in time $n \cdot Q(n, \Delta)$. 
    Note that the prediction $(\vec{0}, \dotsc, \vec{0})$ has EH distance at most $\Delta$ and can be passed to $\innerAlg$ for the preprocessing, after which the dynamic algorithm runs in time $O(n Q(n, \Delta))$.
    
    According to the Theorem~\ref{thm:sparse-oumv-hardness}, there is no dynamic algorithm running in $\tto{n^2\Delta}$ time.
    This implies that here is no algorithm with amortized time $Q(n, \Delta) = \tto{n\Delta}$ per round, if the OMv conjecture (Conjecture~\ref{thm:omv-conjecture}) is true.
\end{proof}

Alternatively, we can consider a prediction model which reveals some information about each round.
Let $\calset$ be a set of vector sequences.
We will say a prediction sequence $(\hat{v}_1, \dotsc, \hat{v}_n)$ is {\bf $b$-bit accurate for $\calset$} if each $\hat{v}_k \in \set{0, 1, *}^{n}$ has at least $b$ indices with $\hat{v}_k[i] \in \set{0, 1}$ and for all $(\vec{v}_1, \dotsc, \vec{v}_n) \in \calset$, we have $\vec{v}_k[i] = \hat{v}_k[i]$ for all $i, k \in [n]$ where $\hat{v}_k[i] \neq *$.
Intuitively, $\hat{v}_k$ specifies at least $b$ bits of vector $\vec{v}_k$.
Trivially, there is an algorithm with $O(n^{\omega})$ preprocessing time computing each product $M \vec{v}_k$ in $Q(n, b) = O(n(n - b))$ time.
To see this, define the vectors $(\vec{v}_1', \dotsc, \vec{v}_n')$ as follows.

\begin{equation*}
    \vec{v}_k'[i] = \begin{cases}
        1 & \hat{v}_k[i] = 1 \\
        0 & \otherwise
    \end{cases}
\end{equation*}
and in the preprocessing step compute $M \vec{v}_k'$ for all $k \in [n]$ using fast matrix multiplication, which requires $O(n^{\omega})$ time.
Given a vector update, define $\bar{v}_k = \vec{v}_k - \vec{v}_k'$ so that $|\supp(\bar{v}_k)| \leq n - b$.
Then, compute $M \bar{v}_k$ in $O(n(n - b))$ time and compute $M \vec{v}_k = M \vec{v}'_k + M \bar{v}_k$.
We show that this is almost optimal.

\begin{restatable}{theorem}{omvbitaccuratelb}
    \label{thm:OMv-bit-accurate-lb}
    Let $b \in [n]$.
    There is no algorithm with $b$-bit accurate predictions for the OMv problem with amortized time $Q(n, b)= \tto{n (n - b)}$ per round, if the OMv-conjecture is true.
\end{restatable}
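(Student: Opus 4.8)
The plan is to mirror the proof of Theorem~\ref{thm:OMv-hamming-LB}, reducing from the sparse $S$-OMv problem of Conjecture~\ref{thm:sparse-omv-conjecture} (which by Theorem~\ref{thm:sparse-oumv-hardness} follows from the OMv conjecture) instead of from EH-bounded OMv. The point driving the reduction is that a vector with support contained in a known set $S$ is automatically ``described'' by a $b$-bit accurate prediction that reveals exactly the coordinates outside $S$, i.e.\ the guaranteed-zero coordinates, without ever committing to the entries inside $S$.

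Concretely, first I would start from an $S$-OMv instance: an $n\times n$ Boolean matrix $M$, an index set $S\subseteq[n]$ with $|S|\le n-b$ revealed during preprocessing, and online query vectors $\vec v_1,\dots,\vec v_n$ with $\supp(\vec v_k)\subseteq S$. From $S$ alone I build the prediction $\hat v_k\in\{0,1,*\}^{n}$ with $\hat v_k[i]=0$ for $i\notin S$ and $\hat v_k[i]=*$ for $i\in S$, the same for every $k$. Since $|S|\le n-b$, this reveals $n-|S|\ge b$ coordinates, and since $\supp(\vec v_k)\subseteq S$ we have $\vec v_k[i]=0=\hat v_k[i]$ on every revealed coordinate; hence $(\hat v_1,\dots,\hat v_n)$ is a $b$-bit accurate prediction for the singleton set $\{(\vec v_1,\dots,\vec v_n)\}$, and it is computable in $O(n^2)$ time during preprocessing. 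Then, supposing toward a contradiction that there is an algorithm $\innerAlg$ with $b$-bit accurate predictions, polynomial preprocessing, and amortized round time $Q(n,b)=\tto{n(n-b)}$, I would feed $\innerAlg$ the matrix $M$ and this prediction: it outputs all products $M\vec v_k$ in total time $n\cdot Q(n,b)$ after polynomial preprocessing, and thus solves the $S$-OMv instance. The $\tto{}$-bookkeeping is routine from Definition~\ref{def:tto}: a per-round bound of $\tto{n(n-b)}$ over $n$ rounds is $\tto{n^2(n-b)}$. Taking $t\in(0,1)$ with $n^t=n-b$, so that $|S|\le n^t$ as demanded by the $S$-OMv problem, this contradicts the $S$-OMv hardness of Conjecture~\ref{thm:sparse-omv-conjecture}, which (under the OMv conjecture) forbids $\tto{n^{2}\cdot n^{t}}$-time algorithms.

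The step I expect to require the most care is aligning the sparsity parameter of $S$-OMv with the quantity $n-b$ and ensuring the regime is nondegenerate: for $b$ extremely close to $n$ (so $n-b=O(1)$) the statement should be read through the sparse conjecture with the corresponding small exponent, exactly as Theorem~\ref{thm:OMv-hamming-LB} implicitly treats $\Delta$. Apart from that, the argument is a direct translation of the EH-distance lower bound, with ``reveal the zero coordinates of a sparse vector'' playing the role of ``the all-zero vector lies within EH-distance $\Delta$ of a $\Delta$-sparse vector.''
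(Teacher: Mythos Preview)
Your proposal is correct and follows essentially the same route as the paper's own proof: reduce from $S$-OMv with $|S|\le n-b$, construct the $b$-bit accurate prediction that reveals $0$ on all coordinates outside $S$ and $*$ inside $S$, and derive a contradiction to the sparse-OMv hardness of Theorem~\ref{thm:sparse-oumv-hardness}. The only difference is cosmetic---you spell out the $\tto{}$ arithmetic and the choice of $t$ with $n^t=n-b$ a bit more explicitly than the paper does.
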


\begin{proof}
    Consider an $S$-OMv instance $M$ of size $n \times n$ with $n$ vectors $\vec{v}_1, \dots , \vec{v}_{n}$ such that $\supp(\vec{v}_i) \subset S$ for some index-set $S \subset [n]$ of size $|S| \leq n - b$. 
    Observe that the prediction $(\hat{v}, \dotsc, \hat{v})$ is $b$-bit accurate by defining $\hat{v}$ as follows.
    
    \begin{equation*}
        \hat{v}[i] = \begin{cases}
            0 & i \notin S \\
            * & \otherwise
        \end{cases}
    \end{equation*}

    Now consider an algorithm with prediction $\innerAlg$ that takes amortized time $Q(n, b)$ per round, after polynomial preprocessing time.
    Computing $M \vec{v}_k$ for all $k$ with $1 \leq k \leq n$ solves the original $S$-OMv problem in time $n \cdot Q(n, b)$. 
    Note that the prediction $(\hat{v}, \dotsc, \hat{v})$ is $b$-bit accurate and can be passed to $\innerAlg$ for the preprocessing, after which the dynamic algorithm runs in time $O(n Q(n, \Delta))$.
    
    According to the Theorem~\ref{thm:sparse-oumv-hardness}, there is no dynamic algorithm running in $\tto{n^2(n - b)}$ time.
    This implies that here is no algorithm with amortized time $Q(n, b) = \tto{n (n - b)}$ per round, if the OMv conjecture (Conjecture~\ref{thm:omv-conjecture}) is true.
\end{proof}

\section{Concurrent Work}
\label{sec:concurrent-work}

In an independent work, van den Brand, Forster, Nazari, and Polak \cite{arxiv23} design a variety of dynamic graph algorithms with predictions, considering both edge and vertex updates.
\cite{arxiv23} considers three prediction models: 1) the bounded delay model, 2) a $\ell_1$ variant of the bounded delay model, and 3) a fully dynamic prediction model where the predictions are not given during the preprocessing time, but instead a vertex's deletion time is predicted upon the insertion of that vertex.
\cite{DBLP:journals/corr/abs-2307-08890} applies the third model to edge updates, giving a generic framework for designing robust \emph{fully} dynamic algorithms that achieve the worst-case update time of \emph{partially} dynamic algorithms when the predicted deletion times are accurate.

\cite{arxiv23} design several algorithms with prediction relying on fast matrix multiplication. 
We describe a few that are related to our work below.
\begin{enumerate}
    \item A partially dynamic algorithm for dynamic transitive closure under edge updates with $d$ bounded delay predictions.
    We obtain the same update ($\tO{1}$) and query time $\tO{d^2)}$ (Theorem \ref{thm:transitive-closure-delay-alg}) with a combinatorial fully dynamic algorithm, with worse preprocessing time.

    \item A partially dynamic algorithm for $(1 + \eps)$ approximate all pairs shortest path under edge updates with bounded delay predictions.
    We obtain the same update ($\tO{1}$) and query time $\tO{d^2)}$ (Theorem \ref{thm:apsp-delay-alg}) for exact APSP with a combinatorial fully dynamic algorithm, with worse preprocessing time.

    \item A fully dynamic algorithm for triangle detection and single source reachability (among other problems) under vertex updates with bounded delay predictions.
    They obtain constant query time and update time $\tO{n^{\omega - 1} + n \min(d_i, n)}$ where $d_i$ denotes the delay of the $i$-th vertex update.
    For $\striangleF$, we obtain the following two algorithms (both with worse preprocessing time than the algorithm of \cite{arxiv23}).
    \begin{enumerate}
        \item An update optimized algorithm with constant update time and $O((d + k)^2)$ query time.
        \item A query optimized algorithm with constant query time and update time $O(d + k)$.
    \end{enumerate}
    Note that our algorithms additionally support predictions that are $d$ delayed with $k$ outliers.
    However, the performance of our algorithms depend on the largest delay of any update, while the performance of \cite{arxiv23} depends only on the delay of the current update.
\end{enumerate}

\cite{arxiv23} also give a lower bound for partially dynamic algorithms with prediction for the transitive closure and approximate all pairs shortest path problems.
We give a similar result for locally reducible problems in the partially dynamic setting (Definition \ref{def:partially-dynamic-locally-reducible} and Theorem \ref{thm:locally-reducible-lower-partially-dynamic}).

\end{document}